\newtheorem{theorem}{Theorem}
\newtheorem{corollary}{Corollary}
\newtheorem{lemma}{Lemma}
\newtheorem{definition}{Definition}
\theoremstyle{remark}
\newtheorem{remark}{Remark}
\newcommand{\rvec}[1]{\pmb{#1}}
\newcommand{\dis}[2]{\mathsf{d}\left(#1,#2\right)}
\newcommand{\rdf}{\mathbb{R}}
\newcommand{\prob}[1]{\mathbb{P}\left[#1\right]}
\newcommand{\myE}[1]{\mathbb{E}\left[#1\right]}
\newcommand{\var}[1]{\mathrm{Var}\left[#1\right]}
\newcommand{\TS}{\mathcal{T}}
\newcommand{\LRB}[1]{\left(#1\right)}
\newcommand{\lrbb}[1]{\left\{#1\right\}}
\newcommand{\dmax}{d_{\mathrm{max}}}
\newcommand{\lrabs}[1]{\left |#1\right |}
\newcommand{\bec}{C_{\textsf{BE}}}
\begin{document}
\title{The Dispersion of the Gauss-Markov Source}
\author{Peida Tian,~\IEEEmembership{Student Member,~IEEE,}
Victoria Kostina,~\IEEEmembership{Member,~IEEE}
\thanks{P. Tian and V. Kostina are with the Department of Electrical Engineering, California Institute of Technology (e-mail: \hbox{\{ptian, vkostina\}@caltech.edu}). This research was supported in part by the National Science Foundation (NSF) under Grant CCF-1566567 and CCF-1751356. A preliminary version~\cite{tian2018dispersion} of this paper was published in the IEEE International Symposium on Information Theory, Vail, CO, USA, June 2018.}
\thanks{Copyright \textsuperscript{\textcopyright}2017 IEEE. Personal use of this material is permitted.  However, permission to use this material for any other purposes must be obtained from the IEEE by sending a request to pubs-permissions@ieee.org.}
}
\maketitle
\begin{abstract}
The Gauss-Markov source produces  $U_i = aU_{i-1} + Z_i$ for $i\geq 1$, where  $U_0 = 0$, $|a|<1$ and $Z_i\sim\mathcal{N}(0, \sigma^2)$ are i.i.d. Gaussian random variables. We consider lossy compression of a block of $n$ samples of the Gauss-Markov source under squared error distortion. We obtain the Gaussian approximation for the Gauss-Markov source with excess-distortion criterion for any distortion $d>0$, and we show that the dispersion has a reverse waterfilling representation. This is the \emph{first} finite blocklength result for lossy compression of \emph{sources with memory}. We prove that the finite blocklength rate-distortion function $R(n,d,\epsilon)$ approaches the rate-distortion function $\mathbb{R}(d)$ as $R(n,d,\epsilon) = \mathbb{R}(d) + \sqrt{\frac{V(d)}{n}}Q^{-1}(\epsilon) + o\left(\frac{1}{\sqrt{n}}\right)$, where $V(d)$ is the dispersion, $\epsilon \in (0,1)$ is the excess-distortion probability, and $Q^{-1}$ is the inverse $Q$-function. We give a reverse waterfilling integral representation for the dispersion $V(d)$, which parallels that of the rate-distortion functions for Gaussian processes. Remarkably, for all $0 < d\leq  \frac{\sigma^2}{(1+|a|)^2}$, $R(n,d,\epsilon)$ of the Gauss-Markov source coincides with that of $Z_i$, the i.i.d. Gaussian noise driving the process, up to the second-order term. Among novel technical tools developed in this paper is a sharp approximation of the eigenvalues of the covariance matrix of $n$ samples of the Gauss-Markov source, and a construction of a typical set using the maximum likelihood estimate of the parameter $a$ based on $n$ observations.   
\end{abstract}

\begin{IEEEkeywords}
Lossy source coding, Gauss-Markov source, dispersion, finite blocklength regime, rate-distortion theory, sources with memory, achievability, converse, autoregressive processes, covering in probability spaces, parameter estimation.
\end{IEEEkeywords}
\section{Introduction}
In rate-distortion theory~\cite{shannon1948}~\cite{shannon1959coding}, a source, modeled as a discrete stochastic process $\left\{U_i\right\}_{i = 1}^{\infty}$,  produces a random vector $\rvec{U} \triangleq (U_1,..., U_n)^\top$ and the goal is to represent $\rvec{U}$ by the minimum number of reproduction vectors $\rvec{V}$ such that the distortion is no greater than a given threshold $d$. For any such set of reproduction vectors, the associated rate is defined as the ratio between the logarithm of the number of vectors and $n$. The rate quantifies the minimum number of bits per symbol needed to describe the source with distortion $d$. 

Numerous studies have been pursued since the seminal paper~\cite{shannon1959coding}, where Shannon first proved the rate-distortion theorem for the discrete stationary memoryless sources (DMS) and then sketched the ideas on generalizing to continuous alphabets and the stationary ergodic sources. Shannon's rate-distortion theorem shows that the minimum rate needed to describe a DMS within distortion $d$ is given by the rate-distortion function (RDF) $\mathbb{R}(d)$, which is computed as a solution to a (single-letter) minimal mutual information convex optimization problem. Goblick~\cite{goblick1969coding} proved a coding theorem for a general subclass of strongly mixing stationary sources showing that the RDF is equal to the limit of $n$-letter minimal mutual information. That  limit has exponential computational complexity in general. Computable expressions for the RDF of sources with memory are known only in the following special cases. Gray~\cite{gray1970information} showed a closed-form expression for the RDF for a binary symmetric Markov source with bit error rate distortion in a low distortion regime. For higher distortions, Jalali and Weissman~\cite{jalali2007newbounds} recently showed upper and lower bounds allowing one to compute the rate-distortion function in this case with desired accuracy. Gray~\cite{gray1971markov} showed a lower bound to the rate-distortion function of finite-state finite-alphabet Markov sources with a balanced distortion measure, and the lower bound becomes tight when $d\in (0,d_c]$ for critical distortion $d_c$. For the mean squared error distortion measure (MSE),  Davisson~\cite{davisson1972rate}, and also Kolmogorov~\cite{kolmogorov1956shannon}, derived the rate-distortion function for stationary Gaussian processes by applying a unitary transformation to the process to decorrelate it and applying reverse waterfilling to the decorrelated Gaussians~\cite{cover2012elements}. Berger~\cite{berger1971rate} and Gray~\cite{gray1970information}, in separate contributions in the late 60's and early 70's, derived the MSE rate-distortion function for Gaussian autoregressive sources. See~\cite{berger1998lossy} for a detailed survey on the development of coding theorems for more general sources.

All of the above mentioned work~\cite{shannon1948, shannon1959coding, goblick1969coding, gray1970information, jalali2007newbounds, gray1971markov, davisson1972rate, cover2012elements, berger1971rate} apply to the operational regime where the coding length $n$ grows without bound. Asymptotic coding theorems are important since they set a clear boundary between the achievable and the impossible. However, practical compression schemes are of finite blocklength. A natural, but challenging, question to ask is: for a given coding blocklength $n$, what is the minimum rate to compress the source with distortion at most $d$? Answering this question \emph{exactly} is hard. An easier question is that of second-order analysis, which studies the dominating term in the gap between RDF and the finite blocklength minimum rate.   

In rate-distortion theorems, the landscape of second-order analyses consists of two criteria: average distortion and excess distortion. The average distortion constraint posits that the average distortion should be at most $d$, while excess distortion constraint requires that the probability of distortion exceeding $d$ be at most $\epsilon$. For average distortion criterion, Zhang, Yang and Wei~\cite{zhang1997redundancy} proved that for i.i.d. finite alphabet sources, the leading term in the gap $R(n,d) - \mathbb{R}(d)$ is  $\frac{\log n}{2n}$\footnote{This statement is translated from~\cite{zhang1997redundancy}, where the equivalent result was stated in terms of distortion-rate function.}, where $R(n,d)$ denotes the minimum rate compatible with average distortion $d$ at coding length $n$. Later, Yang and Zhang~\cite{yang1999redundancy} extended the achievability result of~\cite{zhang1997redundancy} to abstract sources.

For lossy compression of i.i.d. sources under excess distortion criterion, the minimum achievable finite blocklength rate admits the following expansion~\cite{kostina2012fixed},~\cite{ingber2011dispersion}, known as the Gaussian approximation: 
\begin{align}
R(n,d,\epsilon) = \mathbb{R}(d) + \sqrt{\frac{V(d)}{n}}Q^{-1}(\epsilon) + O\left(\frac{\log n}{n}\right),
\label{gaussian_approx}
\end{align}
where $V(d)$ is referred to as the source dispersion and $Q^{-1}(\cdot)$ denotes the inverse $Q$-function. Extensions of the result in~\eqref{gaussian_approx} to joint source-channel coding~\cite{kostina2013lossy}~\cite{wang2011dispersion} and  multiterminal source coding~\cite{watanabe2017second}~\cite{tan2014dispersions} have also been studied.

The dispersion of lossy compression of \emph{sources with memory} is unknown. In the context of variable-length lossy compression with guaranteed distortion, Kontoyiannis~\cite[Th. 6, Th. 8]{kontoyiannis2000pointwise} established a connection between the number of bits needed to represent $n$ given samples produced by an arbitrary source, and the logarithm of the reciprocal of distortion $d$-ball probability. Unfortunately computation of that probability has exponential in $n$ complexity. In contrast, the dispersions of lossless compression of sources with memory and of channel coding over channels with memory are known in some cases. The second-order expansion of the minimum encoded length in the lossless compression of Markov sources is computed in~\cite{yushkevich1953limit,kontoyiannis2014lossless}. Polyanskiy et al. found the channel dispersion of the Gilbert-Elliott channel in~\cite[Th. 4]{polyanskiy2011dispersionGE}.

In this paper, we derive an expansion of type~\eqref{gaussian_approx} on $R(n,d,\epsilon)$ for the Gauss-Markov source, one of the simplest models for sources with memory. We show that the dispersion $V(d)$ for the Gauss-Markov source is equal to the limiting variance of the $\mathsf{d}$-tilted information, and has a reverse waterfilling representation. We show that the dispersion $V(d)$ for low distortions is the same as that of the i.i.d. Gaussian noise driving the process, and becomes smaller for distortions above a critical value $d_c$,  which extend the corresponding result of Gray~\cite[Eq. (24)]{gray1970information} to the nonasymptotic regime. Section~\ref{sec:model} presents the problem formulation. The main results and proof techinques are presented in Section~\ref{sec:MainResults}. Our proofs of converse and achievability are presented in Sections~\ref{sec:converse} and~\ref{sec:achievability}, respectively. The converse proof generalizes to the Gaussian autoregressive processes~\cite{gray1970information}, but the achievability proof does not. In proving the converse and achievability, we develop several new tools including a nonasymptotic refinement of Gray's result~\cite[Eq. (19)]{gray1970information} on the eigenvalue distribution of the covariance matrix of the Gauss-Markov source. This refinement relies on a sharp bound on the differences of eigenvalues of two sequences of tridiagonal matrices, proved using the Cauchy interlacing theorem and the Gershgorin circle theorem from matrix theory. In proving achievability, we derive a maximum likelihood estimator of the parameter $a$ of the Gauss-Markov source and bound the estimation error using the Hanson-Wright inequality~\cite[Th. 1.1]{rudelson2013hanson}. Our key tool in the achievability proof is the construction of a typical set based on the maximum likelihood estimator. Finally, we conclude in Section~\ref{sec:conclusion} with brief discussions on some open problems. Fig.~\ref{fig:roadmap} in Appendix~\ref{app:roadmap} presents a roadmap containing the relations of all theorems, corollaries and lemmas in this paper. 

\emph{Notations:} Throughout, lowercase (uppercase) boldface letters denote vectors (random vectors) of length $n$. We omit the dimension when there is no ambiguity, i.e. $\pmb{u}\equiv u^n \equiv (u_1,\ldots,u_n)^\top$ and $\pmb{U}\equiv U^n \equiv (U_1,\ldots,U_n)^\top$. We write $U$ for $U^{\infty}$. For a random variable $X$, we use $\mathbb{E}[X]$ and $\var{X}$ to denote its mean and variance, respectively. We write matrices using sans serif font, e.g. matrix $\mathsf{A}$, and we write $\|\mathsf{A}\|_{F}$ and $\|\mathsf{A}\|$ to denote the Frobenius and operator norms of $\mathsf{A}$, respectively. The trace of $\mathsf{A}$ is denoted by $\text{tr}(\mathsf{A})$. For a vector $\rvec{v}$, we denote by $\|\rvec{v}\|_{p}$ the $\ell_p$-norm of $\rvec{v}$ ($p = 1$ or $p = 2$ in this paper). We also denote the sup norm of a function $F$ by $\|F\|_{\infty} \triangleq \sup_{x\in \mathcal
D}~|F(x)|$, where $\mathcal{D}$ denotes the domain of $F$. We use the standard $O(\cdot),~o(\cdot)$ and $\Theta(\cdot)$ notations to characterize functions according to their asymptotic growth rates. Namely, let $f(n)$ and $g(n)$ be two functions on $n$, then $f(n) = O(g(n))$ if and only if there exists positive real number $M$ and $n_0\in\mathbb{N}$ such that  $| f(n)|\leq M|g(n)|$ for any $n\geq n_0$; $f(n) = o(g(n))$ if and only if $\lim_{n\rightarrow\infty}f(n)/g(n) = 0$; $f(n) = \Theta(g(n))$ if and only there exist positive constants $c_1, c_2$ and $n_0\in\mathbb{N}$ such that $c_1g(n)\leq f(n)\leq c_2g(n)$ for any $n\geq n_0$. For any positive integer $m$, we denote by $[m]$ the set of intergers $\left\{1,2,...,m\right\}$. We denote by $\mathbbm{1}\left\{\cdot\right\}$ the indicator function. We use $n!!$ to denote the double factorial of $n$. The imaginary unit is denoted by $\mathrm{j}$. All exponents and logarithms are base $e$.
\section{Problem Formulation}
\label{sec:model}

\subsection{Operational definitions}
In single-shot lossy compression, we consider source and reproduction alphabets $\mathcal{X}$ and $\mathcal{Y}$, and a given source distribution $P_X$ over $\mathcal{X}$. The distortion measure is a mapping $\mathsf{d}(\cdot, \cdot)\colon \mathcal{X} \times \mathcal{Y} \mapsto [0, +\infty)$. An encoder $\mathsf{f}$ is a mapping $\mathsf{f}\colon\mathcal{X}\mapsto [M]$, and a decoder is $\mathsf{g}\colon [M]\mapsto \mathcal{Y}$. The image set of a decoder $\mathsf{g}$ is referred to as a codebook consisting of $M$ codewords $\{\mathsf{g}(i)\}_{i = 1}^M$. Given distortion threshold $d>0$ and excess-distortion probability $\epsilon \in (0,1)$, an $(M, d, \epsilon)$ code consists of an encoder-decoder pair $(\mathsf{f}, \mathsf{g})$ such that $\prob{\mathsf{d}(X, \mathsf{g}(\mathsf{f}(X))) > d} \leq \epsilon$. The nonasymptotic fundamental limit of lossy compression is the minimum achievable code size for a given distortion threshold $d$ and an excess-distortion probability $\epsilon \in (0,1)$:
\begin{align}
M^\star(d, \epsilon) \triangleq \min\left\{M: ~\exists \text{ an } (M, d, \epsilon)\text{ code} \right\}. \label{def:Mstar}
\end{align} 

In this paper, $\mathcal{X} = \mathcal{Y} = \mathbb{R}^n$, and the distortion measure is  the mean squared error (MSE) distortion: $\forall \pmb{u},~\pmb{v}\in\mathbb{R}^n$, 
\begin{align}
\mathsf{d}(\pmb{u}, \rvec{v}) \triangleq \frac{1}{n} \|\pmb{u} - \pmb{v}\|_2^2.
\end{align}
We refer to the set $\mathcal{B}(\rvec{x}, d)$, defined below, as a distortion $d$-ball centered at $\rvec{x}$:
\begin{align}
\mathcal{B}(\rvec{x}, d) \triangleq \left\{\rvec{x}'\in\mathbb{R}^n:  \dis{\rvec{x}}{\rvec{x}'}\leq d\right\}.
\label{def:d_ball}
\end{align}
We consider the Gauss-Markov source $\left\{U_i\right\}_{i = 1}^{\infty}$, which satisfies the following difference equation:
\begin{align}
U_i =  a U_{i-1} + Z_i, \quad \forall i\geq 1, 
\label{def:GaussMarkov}
\end{align}
and $U_0 = 0$. Here, $a\in [0,1)$ is the gain\footnote{Note that if $a\in (-1,0]$ in~\eqref{def:GaussMarkov}, then $\left\{(-1)^iU_i\right\}_{i = 0}^{\infty}$ is a Gauss-Markov source with nonnegative gain $-a$ and the same innovation variance. Thus restricting to $0\leq a<1$ is without loss of generality.}, and  $Z_i$'s are independently and identically distributed (i.i.d.) $\mathcal{N}(0, \sigma^2)$ that form the \emph{innovation} process. We adopt~(\ref{def:GaussMarkov}) as the simplest model capturing information sources with memory: gain $a$ determines how much memory (as well as the growth rate), and the innovation $Z_i$ represents new randomness being generated at each time step. In statistics, the Gauss-Markov source~\eqref{def:GaussMarkov} is also known as the first-order Gaussian autoregressive (AR) process.  

For a fixed blocklength $n\in\mathbb{N}$, a distortion threshold $d>0$ and an excess-distortion probability $\epsilon \in (0,1)$, an $(n, M, d, \epsilon)$ code consists of an encoder $\mathsf{f}_n\colon \mathbb{R}^n\mapsto [M]$ and a decoder $\mathsf{g}_n\colon [M] \mapsto \mathbb{R}^n$ such that $\mathbb{P}\left[\dis{\rvec{U}}{\mathsf{g}_n(\mathsf{f}_n(\rvec{U}))} \geq d \right] \leq \epsilon$, where $\rvec{U} = (U_1, \ldots, U_n)^\top$ denotes the source vector. The rate associated with an $(n, M, d, \epsilon)$ code is $R\triangleq \frac{\log M}{n}$. The nonasymptotic operational fundamental limit, that is, the minimum achievable code size at blocklength $n$, distortion $d>0$ and excess-distortion probability $\epsilon \in (0,1)$, is 
\begin{align}
M^\star(n,d, \epsilon) \triangleq \min\left\{M: \exists \text{ an }(n, M, d,\epsilon)\text{ code} \right\}, 
\end{align} 
and the corresponding minimum source coding rate is 
\begin{align}
R(n,d,\epsilon) \triangleq \frac{\log M^\star(n,d,\epsilon)}{n}. \label{Rstar}
\end{align}
The objective of this paper is to characterize $R(n, d, \epsilon)$ for the Gauss-Markov source.

\subsection{Informational definitions}
\label{subsec:sourcemodel}

The problem of characterizing the operational fundamental limit $M^\star(d, \epsilon)$ in~\eqref{def:Mstar} is closely related to the rate-distortion function (RDF) $\mathbb{R}_{X}(d)$ of the source $X$, which is defined as the solution to the following convex optimization problem~\cite{shannon1959coding}:  
\begin{align}
\mathbb{R}_{X}(d)  \triangleq  & \inf_{P_{Y|X}\colon \mathbb{E}[\mathsf{d}(X, Y)] \leq d} ~I(X; Y),  \label{eqn:cvx_rdf}
\end{align}
where the infimum is over all conditional distributions $P_{Y|X}\colon\mathcal{X} \mapsto \mathcal{Y}$ such that the expected distortion is less than or equal $d$, and $I(X; Y)$ denotes the mutual information between $X$ and $Y$. In this paper, we assume that 
\begin{enumerate}
\item $\mathbb{R}_X(d)$ is differetiable with respect to $d$;

\item there exists a minimizer in~\eqref{eqn:cvx_rdf}.
\end{enumerate}
The pair $(X, Y^\star)$ is referred to as the \emph{RDF-achieving pair} if $P_{Y^\star|X}$ is the minimizer in~\eqref{eqn:cvx_rdf}.  For any $x\in \mathcal{X}$, the $\mathsf{d}$-tilted information $\jmath_{X}(x, d)$ in $x$, introduced in~\cite[Def. 6]{kostina2012fixed}, is 
\begin{align}
\jmath_X(x, d) \triangleq -\lambda^\star d - \log \mathbb{E}\exp\left(-\lambda^\star\mathsf{d}(x, Y^\star)\right), \label{def:dtilted}
\end{align}
where $\lambda^\star$ is the negative slope of the curve $\mathbb{R}_X(d)$ at distortion $d$: 
\begin{align}
\lambda^\star\triangleq -\mathbb{R}'_X(d). \label{eqn:def_slope}
\end{align}
The $\mathsf{d}$-tilted information $\jmath_X(X, d)$ has the property that 
\begin{align}
\mathbb{R}_{X}(d) = \mathbb{E}[\jmath_X(X, d)].\label{relation:mean_rdf}
\end{align}
When $\mathcal{X}$ is a finite set,~\eqref{relation:mean_rdf} follows immediately from the Karush–Kuhn–Tucker (KKT) conditions for the optimization problem~\eqref{eqn:cvx_rdf}, see~\cite[Th. 9.4.1]{gallager1968information} and~\cite[Eq. (2.5.16)]{berger1971rate}. Csisz{\'a}r showed the validity of~\eqref{relation:mean_rdf} when $\mathcal{X}$ is an abstract probability space~\cite[Corollary, Lem. 1.4, Eqs. (1.15), (1.25), (1.27)-(1.32)]{csiszar1974extremum}, see Appendix~\ref{app:just} for a concise justification. For more properties of the $\mathsf{d}$-tilted information, see~\cite[Eq. (17)-(19)]{kostina2012fixed}.

Next, we introduce the conditional relative entropy minimization (CREM) problem, which plays a key role in our development. Let $P_X$ and $P_Y$ be probability distributions defined on alphabets $\mathcal{X}$ and $\mathcal{Y}$, respectively. For any $d>0$, the CREM problem is defined as 
\begin{align}
\mathbb{R}(X, Y, d) \triangleq   \inf_{P_{F|X}\colon \mathbb{E}[\mathsf{d}(X, F)] \leq d} ~D(P_{F|X} || P_Y|P_X), \label{eqn:cvx_crem}
\end{align}
where $F$ is a random variable taking values in $\mathcal{Y}$, and $D(P_{F|X} || P_Y|P_X)$ is the conditional relative entropy: 
\begin{align}
D(P_{F|X} || P_Y|P_X) \triangleq \int D(P_{F|X=x} || P_Y)dP_X(x). \label{def:conditionalentropy}
\end{align}
A well-known fact in the study of lossy compression is that the CREM problem~\eqref{eqn:cvx_crem} is related to the RDF~\eqref{eqn:cvx_rdf} as
\begin{align}
 \mathbb{R}_X(d) = \inf_{P_Y}~\mathbb{R}(X, Y, d), \label{rel:BA}
\end{align}
where the infimization is over all probability distributions $P_Y$ of the random variables $Y$ over $\mathcal{Y}$ that are independent of $X$; and the equality in~\eqref{rel:BA} is achieved when $P_Y$ is the $Y^\star$-marginal of the RDF-achieving pair $(X, Y^\star)$, see~\cite[Eq. (10.140)]{cover2012elements} and~\cite[Th. 4]{blahut1972computation} for the finite alphabets $\mathcal{X}$;~\cite[Eq. (3.3)]{yang1999redundancy} and~\cite[Eq. (13)]{kontoyiannis2000pointwise} for abstract alphabets $\mathcal{X}$. The property~\eqref{rel:BA} is a foundation of the Blahut–Arimoto algorithm, which computes iterative approximations to $\rdf_X(d)$ by alternating between inner and outer infimizations in~\eqref{rel:BA}. The CREM problem is also important in nonasymptotic analyses of lossy compressors, see~\cite[Eq. (3.3)]{yang1999redundancy},~\cite[Eq. (13)]{kontoyiannis2000pointwise} and~\cite[Eq. (27)]{kostina2012fixed}. Operationally, it relates to the mismatched-codebooks problem, that is, lossy compression of source $P_X$ using random codewords drawn from $P_{Y}$~\cite[Th. 12]{dembo2002source}. Similar to~\eqref{def:dtilted}, $\forall x\in\mathcal{X}, ~\delta >0,~ d>0$, the \emph{generalized tilted information} $\Lambda_Y(x, \delta, d)$, defined in~\cite[Eq. (28)]{kostina2012fixed}, is
\begin{align}
\Lambda_Y(x, \delta, d) \triangleq -\delta d - \log \mathbb{E}\exp\left(-\delta \mathsf{d}(x, Y)\right). \label{def:generalizeddtilted}
\end{align}
The optimizer $P_{F^\star|X}$ of~\eqref{eqn:cvx_crem} satisfies the following condition: $\forall x\in\mathcal{X}, y\in \mathcal{Y}$, 
\begin{align}
\log\frac{dP_{F^\star|X}(y|x)}{dP_{Y}(y)} = \Lambda_Y(x, \delta^\star, d) - \delta^\star\mathsf{d}(x, y) + \delta^\star d, \label{relation:generaltilted}
\end{align}
where
\begin{align}
\delta^\star \triangleq - \mathbb{R}'(X, Y, d).\label{CREM_slope}
\end{align}
When $\mathcal{X}$ and $\mathcal{Y}$ are discrete, \eqref{relation:generaltilted} can be verified by the KKT conditions for the optimization problem~\eqref{eqn:cvx_crem}. For abstract alphabets $\mathcal{X}$, see~\cite[Th. 2]{dembo2002source} and~\cite[Property 1]{yang1999redundancy} for an exposition. By comparing~\eqref{def:dtilted} and~\eqref{def:generalizeddtilted} and using the relation~\eqref{rel:BA}, we see that
\begin{align}
\jmath_X(x, d) = \Lambda_{Y^\star}(x, \lambda^\star, d), \label{rel:dtiltedandtilted}
\end{align}
where $\lambda^\star$ is in~\eqref{eqn:def_slope}.

For the Gauss-Markov source defined in~\eqref{def:GaussMarkov}, its $n$-th order rate-distortion function $\rdf_{\rvec{U}} (n,d)$ is defined by replacing $X$ by $\rvec{U}$ in~\eqref{eqn:cvx_rdf} and then normalizing by $n$:
\begin{align}
\rdf_{\rvec{U}} (n,d) \triangleq \frac{1}{n}\inf_{\substack{ P_{\rvec{V}|\rvec{U}}: \\ \mathbb{E}\left[\dis{\rvec{U}}{\rvec{V}}\right]\leq d }} I(\rvec{U}; \rvec{V}). \label{eqn:nthorderOp}
\end{align}
The rate-distortion function $\mathbb{R}_U(d)$ for the Gauss-Makov source~\eqref{def:GaussMarkov} is
\begin{align}
\rdf_{U}  (d) \triangleq \limsup_{n\rightarrow\infty} ~\rdf_{\rvec{U}}  (n,d).\label{def:RDF}
\end{align}
It immediately follows from~\eqref{relation:mean_rdf} that 
\begin{align}
\rdf_{U}  (d)  =   \limsup_{n\rightarrow\infty} \frac{1}{n}\myE{\jmath_{\rvec{U}}(\rvec{U}, d)}, \label{limiting_expectation_relation}
\end{align}
where $\jmath_{\rvec{U}}(\rvec{U}, d)$ is the $\mathsf{d}$-tilted information random variable defined in~\eqref{def:dtilted}, that is,
\begin{align}
\jmath_{\rvec{U}}(\rvec{u}, d) = -\lambda^\star nd - \log\mathbb{E}\left[\exp\left(-\lambda^\star n \dis{\rvec{u}}{\rvec{V}^\star}\right)\right],
\label{def:d_tilted_def}
\end{align}
where $(\rvec{U}, \rvec{V}^\star)$ forms a RDF-achieving pair in~\eqref{eqn:nthorderOp} and 
\begin{align}
\lambda^\star = -\rdf'_{\rvec{U}} (n,d).
\label{lambda_double_star}
\end{align}
The variance of the $\mathsf{d}$-tilted information is important in capturing the second-order performance of the best source code. Define 
\begin{align}
\mathbb{V}_{\rvec{U}}(n, d) &\triangleq \var{\jmath_{\rvec{U}}(\rvec{U}, d)},\\
\mathbb{V}_U(d) &\triangleq \limsup_{n\rightarrow\infty}\frac{1}{n}\mathbb{V}_{\rvec{U}}(n, d). \label{def:info_dispersion}
\end{align}
The quantity $\mathbb{V}_U(d)$ is referred to as the \emph{informational dispersion}, in contrast to the operational dispersion $V_{U}(d)$ defined in the next subsection. Reverse waterfilling solutions for rate-distortion functions of the Gauss-Markov source were well-known, see~\cite[Eq. (15)]{gray1970information} for $\mathbb{R}_{\rvec{U}}(n, d)$, ~\cite[Eq. (22)]{gray1970information} for $\mathbb{R}_U(d)$ and our discussions in Section~\ref{subsec:RelatedWorks} below. In this paper, we derive similar parametric expressions for both $\mathbb{V}_{\rvec{U}}(n, d)$ and $\mathbb{V}_U(d)$.

\subsection{Operational fundamental limits}
\label{subsec:OpDis}

In terms of coding theorems, the equality between $R(d)$ (the minimum achievable source coding rate under average distortion criterion when the blocklength $n$ goes to infinity) and $\rdf_{U}(d)$ (the informational rate-distortion function defined in~\eqref{def:RDF}) has been established, e.g.~\cite[Th. 6.3.4]{berger1971rate} and~\cite[Th. 2]{gray1970information}. For the Gauss-Markov source, by the Markov inequality, the achievability result under average distortion criterion, e.g.~\cite[Th. 2]{gray1970information}, can be converted into an achievability result under the excess distortion criterion. A matching converse follows from Kieffer's strong converse~\cite[Th. 1]{kieffer1991strong} for the stationary ergodic sources. Therefore, for any $d>0$ and $\epsilon\in (0,1)$, we have 
\begin{align}
\lim_{n\rightarrow\infty}R(n,d,\epsilon) = \rdf_{U} (d), \label{eqn:limit_rnd}
\end{align} 
where $R(n, d, \epsilon)$ is defined in~\eqref{Rstar} and $\rdf_{U} (d)$ in~\eqref{def:RDF}.

The main result of this paper is the following Gaussian approximation for the minimum achievable rate $R(n, d, \epsilon)$ in lossy compression of the Gauss-Markov source~\eqref{def:GaussMarkov}:
\begin{align}
R(n, d, \epsilon) = \rdf_{U}  (d) + \sqrt{\frac{\mathbb{V}_U(d)}{n}}Q^{-1}(\epsilon) + o\left(\frac{1}{\sqrt{n}}\right), \label{mainresult:preview}
\end{align}
where $Q^{-1}(\cdot)$ denotes the inverse $Q$-function, and the term $o(\cdot)$ will be refined in Theorems~\ref{thm:converse} and~\ref{thm:achievability} in Sections~\ref{sec:converse} and~\ref{sec:achievability} below. Our main result~\eqref{mainresult:preview} is a nonasymptotic refinement of~\eqref{eqn:limit_rnd}, implying that the convergence rate in the limit~\eqref{eqn:limit_rnd} is of order $\frac{1}{\sqrt{n}}$ with the optimal constant factor given in~\eqref{mainresult:preview}. Formally, the rate-dispersion function $V_U(d)$, introduced in~\cite[Def. 7]{kostina2012fixed} and simply referred to as \emph{(operational) dispersion}, is
\begin{align}
V_U(d) \triangleq \lim_{\epsilon\rightarrow 0}\lim_{n\rightarrow\infty} n\left(\frac{R(n,d,\epsilon) - \mathbb{R}_{U}(d)}{Q^{-1}(\epsilon)}\right)^2. 
\label{def:op_dispersion}
\end{align} 
Equivalently, our main result~\eqref{mainresult:preview}  establishes the equality between the operational and informational dispersions for the Gauss-Markov source:
\begin{align}
V_U(d) = \mathbb{V}_U(d).
\end{align}

\subsection{Related work}
\label{subsec:RelatedWorks}

The $n$-th order RDF $\mathbb{R}_{\rvec{U}}(n, d)$ defined in~\eqref{eqn:nthorderOp} for the Gauss-Markov source is given by the reverse waterfilling~\cite[Eq. (17)]{gray1970information} and~\cite[Eq. (6.3.34)-(6.3.36)]{berger1971rate}:
\begin{align}
\mathbb{R}_{\rvec{U}}(n,d) &= \frac{1}{n} \sum_{i = 1}^n \max\left(0, \frac{1}{2}\log \frac{\sigma_i^2}{\theta_n}\right), \label{eqn:para_r_n}\\
d &= \frac{1}{n}\sum_{i = 1}^n\min(\theta_n, \sigma_i^2), \label{eqn:para_d}
\end{align}
where $\sigma_i^2$'s are the eigenvalues of the covariance matrix (see the discussions in Section~\ref{subsec:pfoutline} below): 
\begin{align}
\mathsf{\Sigma}_{\rvec{U}}\triangleq \mathbb{E}[\rvec{U}\rvec{U}^\top], \label{SigmaU}
\end{align}
and $\theta_n > 0$ is the water level matched to $d$ at blocklength $n$. The rate-distortion function $\mathbb{R}_U(d)$ for the Gauss-Markov source~\eqref{def:GaussMarkov} is obtained by passing to the limit of infinite $n$ in~\eqref{eqn:para_r_n} and \eqref{eqn:para_d} via invoking the limiting theorems on the eigenvalues of the covariance matrix $\mathsf{\Sigma}_{\rvec{U}}$~\cite[Eq. (22)]{gray1970information} and~\cite[Th. 6.3.2]{berger1971rate}, given by 
\begin{align}
\mathbb{R}_{U}(d)&= \frac{1}{2\pi} \int_{-\pi}^\pi \max\left[0, \frac{1}{2}\log \frac{S(w)}{\theta}\right]~dw,\label{eqn:para_rate_inf}\\
d &=  \frac{1}{2\pi}\int_{-\pi}^\pi \min\left[\theta, S(w)\right]~dw\label{eqn:para_d_inf},
\end{align}
where the power spectrum of the Gauss-Markov source~\eqref{def:GaussMarkov} is given by 
\begin{align}
S(w) = \frac{\sigma^2}{g(w)}, \label{spectrum}
\end{align}
and the function $g$ is defined as 
\begin{align}
g(w) \triangleq 1 + a^2 -2a\cos(w), \quad \forall w\in [-\pi, \pi]. \label{exp:g}
\end{align}
We refer to~\eqref{eqn:para_r_n}-\eqref{eqn:para_d} as the $n$-th order reverse waterfilling, and to~\eqref{eqn:para_rate_inf}-\eqref{eqn:para_d_inf} as the limiting reverse waterfilling. Fig.~\ref{fig:RWF} depicts the limiting reverse waterfilling~\eqref{eqn:para_d_inf}. 

Results similar to~\eqref{eqn:para_r_n}-\eqref{eqn:para_d_inf} hold for the stationary Gaussian processes~\cite[Eq.~(17)-(18)]{kolmogorov1956shannon}, as well as for the higher-order Gaussian AR processes (not necessarily stationary)~\cite[Eq. (22)]{gray1970information}. We discuss the subtle differences between the rate-distortion functions of the (asymptotically) stationary and nonstationary Gaussian AR processes in Section~\ref{sec:conclusion} below. This paper considers the (asymptotically) stationary Gauss-Markov sources, i.e., \eqref{def:GaussMarkov} with $|a| <1$.  

The converse results in this paper extend partly to the higher-order Gaussian AR processes, studied by Gray~\cite{gray1970information} and Berger~\cite[Sec. 6.3.2]{berger1971rate}. The Gaussian AR process is~\cite[Eq. (1)]{gray1970information}
\begin{align}
U_i = \sum_{\ell = 1}^{i} a_\ell U_{i-\ell} + Z_i,\quad i\geq 1,
\label{eqn:GeneralGaussianAR}
\end{align}
and $U_i = 0$ for $i\leq 0$, where $Z_i$'s are i.i.d. $\mathcal{N}(0,\sigma^2)$, and the real constants $a_\ell$'s satisfy~\cite[Eq. (10)]{gray1970information}
\begin{align}
\sum_{\ell = 0}^{\infty} |a_\ell| < \infty. \label{condition_general}
\end{align}
The Gauss-Markov source in~\eqref{def:GaussMarkov} is a special case of~\eqref{eqn:GeneralGaussianAR} with $a_1 = a$ and $a_\ell = 0$ for $\ell\geq 2$. The following relation between the rate-distortion functions of the Gaussian AR process $\{U_i\}_{i = 1}^{+\infty}$ in~\eqref{eqn:GeneralGaussianAR} and the i.i.d. Gaussian process $\{Z_i\}_{i = 1}^{+\infty}$ is due to Gray~\cite[Eq. (24)]{gray1970information}:
\begin{align}
\begin{cases}
\mathbb{R}_{U}(d) = \mathbb{R}_{Z}(d), & 0 <  d\leq d_c, \\
\mathbb{R}_{U}(d) >  \mathbb{R}_{Z}(d), & d_c < d \leq \dmax,
\end{cases}
\label{eqn:rel_UZ}
\end{align}
where $d_c$ is referred to as the \emph{critical distortion}, defined as $d_c \triangleq \theta_{\min}$, where 
\begin{align}
\theta_{\min} \triangleq \min_{w\in [-\pi,\pi]} S(w).
\label{def:general_d_c}
\end{align}
Accordingly, denote the maximum value of $S(w)$ over the inverval $[-\pi, \pi]$ as 
\begin{align}
\theta_{\max} \triangleq \max_{w\in [-\pi,\pi]} S(w). \label{theta_max}
\end{align}
In~\eqref{eqn:rel_UZ}, $\dmax$ is the \emph{maximum distortion} achievable in~\eqref{eqn:para_d_inf} (that is, when $\theta \geq \theta_{\max}$):
\begin{align}
\dmax \triangleq \frac{1}{2\pi}\int_{-\pi}^{\pi}~S(w)~dw, 
\label{def:dmax}
\end{align}
and $\mathbb{R}_{Z}(d)$ is the RDF for i.i.d. Gaussian sources derived by Shannon~\cite[Equation below Fig. 9]{shannon1959coding}:
\begin{align}
\mathbb{R}_{Z}(d) = \max\left(0, \frac{1}{2}\log \frac{\sigma^2}{d}\right ).  \label{RDF:iidGaussian}
\end{align}
The power spectrum $S(w)$ of the Gaussian AR process is~\cite[Eq. (21)]{gray1970information}
\begin{align}
S(w) = \sigma^2 \left|\sum_{\ell = 0}^{+\infty}a_\ell e^{-\mathrm{j}\ell w}\right|^{-2}, \quad w\in [-\pi, \pi].
\end{align}
Equality in~\eqref{eqn:rel_UZ} is a deep result stating that in a range of low distortions, the asymptotic rate-distortion tradeoff of a Gaussian AR process and that of its driving innovation process are the same. See Fig.~\ref{fig:RDF} for an illustration of~\eqref{eqn:rel_UZ} in the special case of a Gauss-Markov source with $a = 0.5$.

The critical distortion $d_c$ and the maximum distortion $\dmax$ can be understood pictorially as follows. In Fig.~\ref{fig:RWF} and equivalently in~\eqref{eqn:para_d_inf}, as the water level $\theta$ rises from 0 to $\theta_{\min}$,  the minimum on the right side of~\eqref{eqn:para_d_inf} equals $\theta$, meaning that $d = \theta$ for $0 < \theta \leq \theta_{\min}$ (equivalently, $0\leq d\leq d_c$). As the water level $\theta$ rises further, lower parts of the spectrum $S(w)$ start to play a role in~\eqref{eqn:para_d_inf}. When the water level $\theta$ rises above the peak in Fig.~\ref{fig:RWF}: $\theta \geq \theta_{\max}$, the distortion $d$ in~\eqref{eqn:para_d_inf} remains as $\dmax$. In the case of the Gauss-Markov source, from~\eqref{spectrum}, it is easy to see that $d_c$ and $\dmax$ are given by 
\begin{align}
d_c &= \frac{\sigma^2}{(1+|a|)^2}, \label{critical_distortion} \\
\dmax &= \frac{\sigma^2}{1-a^2}. \label{d_max}
\end{align}
Note that $\dmax$ in~\eqref{d_max} equals the stationary variance of the source (Appendix~\ref{app:int_dmax}), i.e.,
\begin{align}
\dmax = \lim_{n\rightarrow\infty}\var{U_n}.
\label{variance:U_lim}
\end{align}
For the nonstationary Gauss-Markov sources ($|a| \geq 1$), $\dmax = +\infty$.

\begin{figure*}[t]
    \centering
        \includegraphics[width=0.7\textwidth, height = 0.49\textwidth]{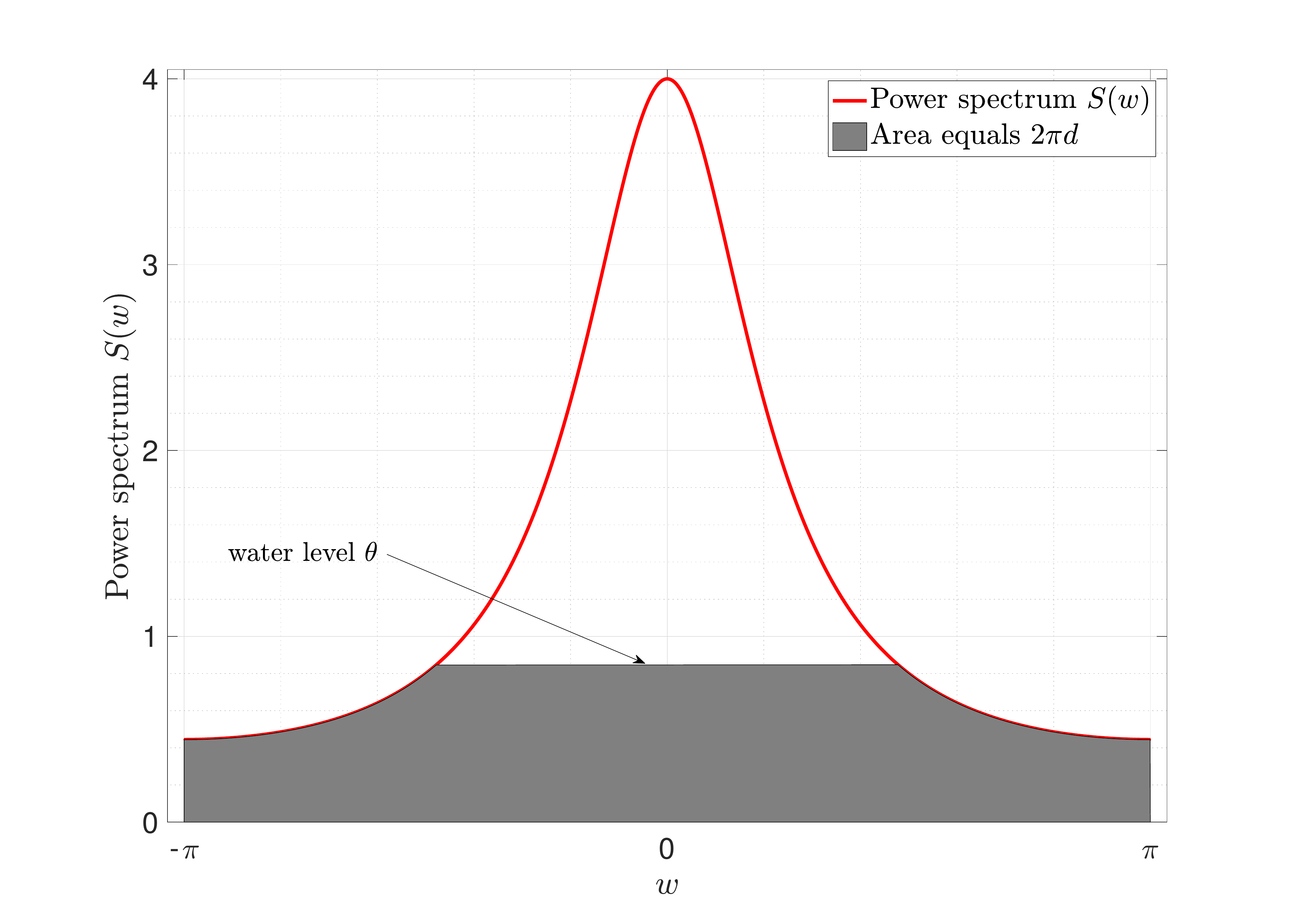}
        \caption{Reverse waterfilling~\eqref{eqn:para_d_inf} for $a = 0.5$: the water level $\theta$ is chosen such that the shaded area equals $2\pi d$.}
        \label{fig:RWF}
\end{figure*}

\begin{figure*}[t]
       \centering
        \includegraphics[width=0.7\textwidth, height = 0.49\textwidth]{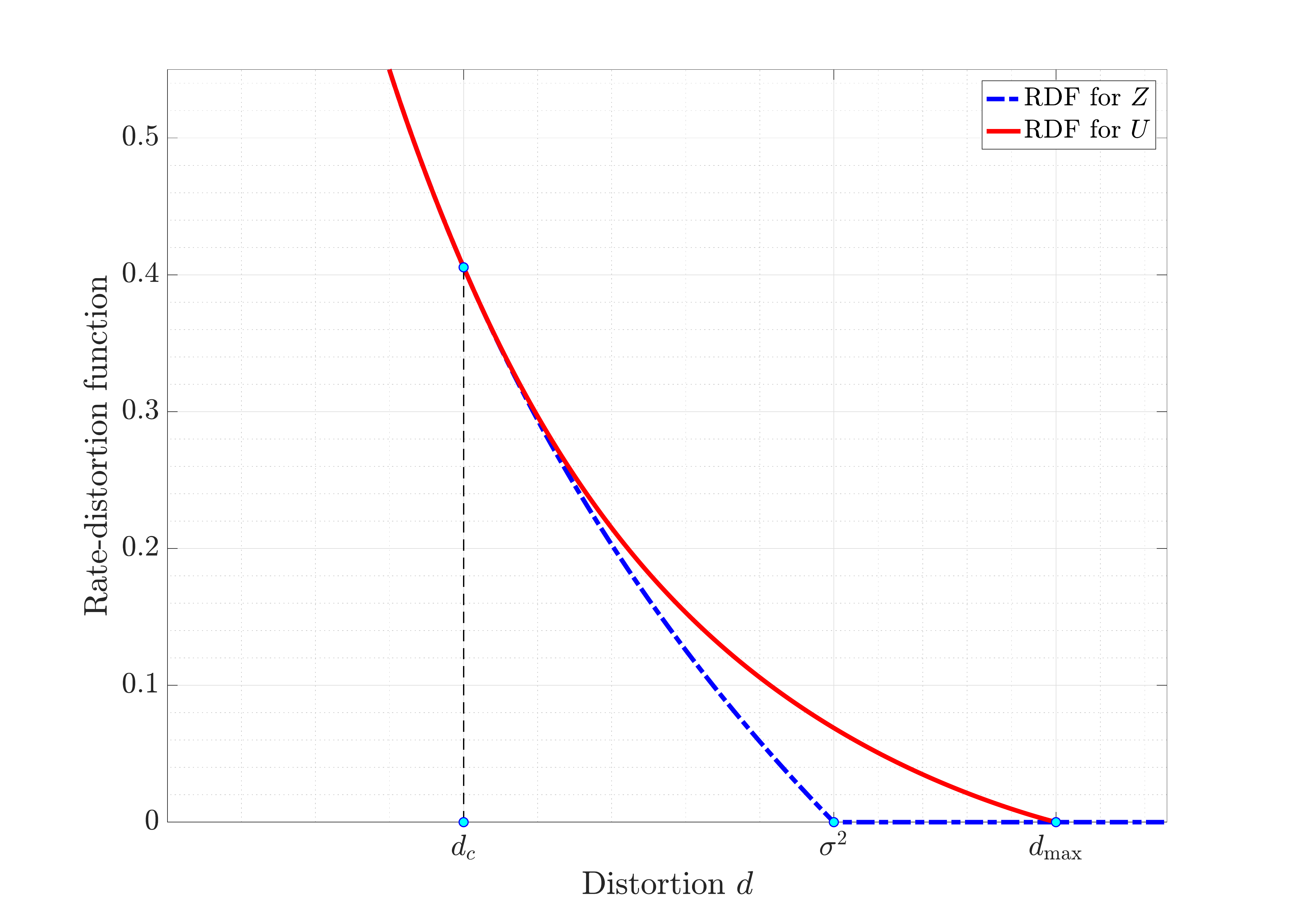}
        \caption{The rate-distortion functions for the Gauss-Markov source $\left\{U_i\right\}$ with $a = 0.5$ and $\sigma^2 = 1$, and for the innovation process $\lrbb{Z_i}_{i = 1}^{\infty}$, $Z_i\sim\mathcal{N}(0,1)$ driving that source.}
        \label{fig:RDF}
\end{figure*}


\section{Main Results}
\label{sec:MainResults}

\subsection{Preliminary: decorrelation}
\label{subsec:pfoutline}

We first make a simple but important observation on the equivalence between lossy compression of the Gauss-Markov sources and parallel independent Gaussian sources. For any $n\in \mathbb{N}$, the random vector $\rvec{U} = (U_1, \ldots, U_n)^\top$ generated by the model~\eqref{def:GaussMarkov} follows the multivariate Gaussian distribution $\mathcal{N}(\rvec{0}, \mathsf{\Sigma}_{\rvec{U}})$, where $\mathsf{\Sigma}_{\rvec{U}} = \sigma^2 (\mathsf{A}^\top\mathsf{A})^{-1}$ is its covariance matrix and $\mathsf{A}$ is an $n\times n$ lower triangular matrix with $\det \mathsf{A} = 1$: 
\begin{align}
\mathsf{A} \triangleq \begin{bmatrix}
1  & 0  &  0  & \ldots & 0 \\
-a & 1  & 0   & \ldots & 0\\
0  & -a & 1   & \ldots & 0\\
\vdots & \ddots & \ddots & \ddots & \vdots\\
0 & \ldots & 0 & -a & 1
\end{bmatrix}.
\label{def:A}
\end{align}
Since~\eqref{def:GaussMarkov} can be rewritten as $\rvec{Z} = \mathsf{A}\rvec{U}$, and $\rvec{Z}\sim\mathcal{N}(\mathbf{0}, \sigma^2 \mathsf{I})$, the covariance matrix of $\rvec{U}$ is given by
\begin{align}
\mathsf{\Sigma}_{\rvec{U}} =  \mathbb{E}[\mathsf{A}^{-1}\rvec{Z}\rvec{Z}^\top(\mathsf{A}^{-1})^\top] = \sigma^2 (\mathsf{A}^\top\mathsf{A})^{-1}.
\label{eqn:cov_u}
\end{align}
We refer to the random vector $\rvec{X}$ as the \emph{decorrelation} of $\rvec{U}$:
\begin{align}
\rvec{X} \triangleq \mathsf{S}^\top\rvec{U}, 
\label{def:decorrelation}
\end{align}
where $\mathsf{S}$ is the unitary matrix in the eigendecomposition of the positive definite matrix $(\mathsf{A}^\top\mathsf{A})^{-1}$:
\begin{align}
(\mathsf{A}^\top\mathsf{A})^{-1} & = \mathsf{S}\mathsf{\Lambda} \mathsf{S}^\top, 
\label{eqn:B_eigen_decom} \\
\mathsf{\Lambda} &= \mathrm{diag}\LRB{\frac{1}{\mu_1},...,\frac{1}{\mu_n}},\label{diag:B}
\end{align}
where $0< \mu_1 \leq  \ldots \leq \mu_n$ are the eigenvalues of $\mathsf{A}^\top\mathsf{A}$. From~\eqref{eqn:cov_u} and~\eqref{eqn:B_eigen_decom}, it is clear that $\rvec{X}\sim\mathcal{N}(\rvec{0},  \sigma^2\mathsf{\Lambda})$, i.e., $X_1,\ldots, X_n$ are independent zero-mean Gaussian random variables with variances $\sigma_i^2$'s being the eigenvalues of $\mathsf{\Sigma}_{\rvec{U}}$:
\begin{align}
 \sigma_i^2 \triangleq \frac{\sigma^2}{\mu_i}, \quad i\in [n].
 \label{eig:sigma_i}
\end{align}

Since they are related via the unitary transformation $\mathsf{S}$ which preserves the geometry of the underlying Euclidean space, $\rvec{U}$ and $\rvec{X}$ are equivalent in terms of their fundamental limits. Indeed, any $(n, M, d, \epsilon)$ code for $\rvec{U}$ (recall the definition in~Section~\ref{subsec:sourcemodel} above) can be transformed, via $\mathsf{S}$, into an $(n, M, d, \epsilon)$ code for $\rvec{X}$, and vice versa; therefore, the finite blocklength minimum achievable rates $R(n,d,\epsilon)$ for $\rvec{U}$ and $\rvec{X}$ are the same. Since $I(\mathsf{S}\rvec{X}; \mathsf{S}\rvec{Y}) = I(\rvec{X}; \rvec{Y})$ and $\mathbb{E}\|\mathsf{S}\rvec{X} - \mathsf{S}\rvec{Y}\|_2^2 =\mathbb{E} \|\rvec{X} - \rvec{Y}\|_2^2$,  their $n$-th order and limiting rate-distortion functions are the same: $\forall n\in\mathbb{N},~d\in (0, \dmax)$, we have $\mathbb{R}_{\rvec{X}}(n, d) = \mathbb{R}_{\rvec{U}}(n, d),$ and hence $\mathbb{R}_{X}(d) = \mathbb{R}_{U}(d).$ By the same transformation, it is easy to verify that this equivalence also extends to the $\mathsf{d}$-tilted information: $\forall \rvec{u}\in\mathbb{R}^n$, let 
\begin{align}
\rvec{x} \triangleq \mathsf{S}^\top \rvec{u},\label{eqn:xu}
\end{align}
then 
\begin{align}
\jmath_{\rvec{U}}(\rvec{u}, d) = \jmath_{\rvec{X}}(\rvec{x}, d). \label{eqiv:DTUX}
\end{align}
Due to the above equivalence, we will refer to both $\rvec{U}$ and its decorrelation $\rvec{X}$ in our analysis. Decorrelation is a well-known tool, which was used to find the rate-distortion functions for the general Gaussian AR processes~\cite{gray1970information}.

\subsection{Gaussian approximations for the Gauss-Markov sources}
\label{subsec:GA}
We now formally state the main contributions of this paper. 
\begin{theorem} 
For the Gauss-Markov source in~\eqref{def:GaussMarkov} with $a\in [0, 1)$, fix any excess-distortion probability $\epsilon\in (0,1)$ and distortion threshold $d\in (0,\dmax)$, where $\dmax$ is defined in~\eqref{d_max}. The minimum achievable source coding rate for the Gauss-Markov source in~(\ref{def:GaussMarkov}) satisfies
\begin{align}
R(n,d,\epsilon) = \rdf_{U} (d) + \sqrt{\frac{V_U(d)}{n}}Q^{-1}(\epsilon) + o\left(\frac{1}{\sqrt{n}}\right),
\label{eqn:main_GA}
\end{align}
where $\rdf_{U}(d)$ is the rate-distortion function of the Gauss-Markov source, given in~\eqref{eqn:para_rate_inf}; and the operational dispersion $V_U(d)$, defined in~\eqref{def:op_dispersion}, is given by
\begin{align}
V_U(d) = \frac{1}{4\pi}\int_{-\pi}^{\pi} \min\left[1, \left(\frac{S(w)}{\theta}\right)^2\right]~dw, \label{eqn:wf_dis}
\end{align}
where $\theta > 0$ is the water level matched to the distortion $d$ via~\eqref{eqn:para_d_inf}, and the power spectrum $S(w)$ is in~\eqref{spectrum}.
\label{thm:main_thm_intro}
\end{theorem}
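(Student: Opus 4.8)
The plan is to prove \eqref{eqn:main_GA}--\eqref{eqn:wf_dis} by establishing matching converse and achievability bounds, each accurate to $o(1/\sqrt n)$, and then identifying the limiting variance of the $\mathsf{d}$-tilted information with the reverse-waterfilling integral \eqref{eqn:wf_dis}. By the decorrelation equivalence \eqref{def:decorrelation}--\eqref{eqiv:DTUX}, it suffices to work with the independent (but non-identically distributed, and non-stationary) Gaussian vector $\rvec X\sim\mathcal N(\rvec 0,\sigma^2\mathsf\Lambda)$, whose coordinate variances $\sigma_i^2=\sigma^2/\mu_i$ are the eigenvalues of $\mathsf\Sigma_{\rvec U}$. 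The first task is therefore an eigenvalue analysis: I would show that the empirical distribution of $\{\sigma_i^2\}_{i=1}^n$ converges, with an explicit rate, to the distribution of $S(W)$ for $W$ uniform on $[-\pi,\pi]$ (this is the nonasymptotic refinement of Gray's \cite[Eq.~(19)]{gray1970information} advertised in the introduction), using the Cauchy interlacing theorem and the Gershgorin circle theorem to compare the tridiagonal matrix $\mathsf A^\top\mathsf A$ with the circulant matrix whose eigenvalues are exactly $g(2\pi k/n)$. This lets me pass from the $n$-th order reverse waterfilling \eqref{eqn:para_r_n}--\eqref{eqn:para_d} to the limiting one \eqref{eqn:para_rate_inf}--\eqref{eqn:para_d_inf} with controlled error, and in particular to show $\theta_n=\theta+o(1/\sqrt n)$ and $\rdf_{\rvec U}(n,d)=\rdf_U(d)+o(1/\sqrt n)$.

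Next I would compute the $\mathsf d$-tilted information explicitly for the decorrelated source. For a product of independent Gaussians under MSE, the RDF-achieving pair factorizes, so $\jmath_{\rvec X}(\rvec x,d)=\sum_{i:\sigma_i^2>\theta_n}\bigl(\tfrac12\log\tfrac{\sigma_i^2}{\theta_n}+\tfrac{x_i^2-\sigma_i^2}{2\sigma_i^2}+\tfrac{\theta_n}{2\sigma_i^2}\cdot(\text{correction})\bigr)$ plus the contribution of the low-variance coordinates; carrying out the Gaussian integral in \eqref{def:d_tilted_def} gives a clean closed form. Taking the variance under $X_i\sim\mathcal N(0,\sigma_i^2)$ yields $\mathbb V_{\rvec U}(n,d)=\tfrac14\sum_{i=1}^n\min\bigl(1,(\sigma_i^2/\theta_n)^2\bigr)+(\text{lower order})$, and the eigenvalue convergence from the previous step turns $\tfrac1n$ times this sum into the integral \eqref{eqn:wf_dis}, i.e. $\mathbb V_U(d)$ equals the claimed $V_U(d)$. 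This already pins down what the answer must be; the remaining work is to show the operational quantity achieves it.

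For the converse I would use a one-shot bound of the type in \cite{kostina2012fixed}, lower-bounding $R(n,d,\epsilon)$ in terms of the quantile of $\jmath_{\rvec U}(\rvec U,d)$, then apply the Berry--Esseen theorem to the sum-of-independent-terms expression for $\jmath_{\rvec X}$ to get $R(n,d,\epsilon)\ge \rdf_U(d)+\sqrt{\mathbb V_U(d)/n}\,Q^{-1}(\epsilon)+O(\log n/n)$; I must check a uniform bound on third absolute moments of the summands (safe since they are quadratics in Gaussians with variances bounded away from $0$ and $\infty$), and absorb the $\theta_n\to\theta$ and eigenvalue errors. This part extends to the higher-order AR processes. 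The achievability direction is the hard part and does not extend: the obstacle is that the RDF-achieving reproduction distribution for $\rvec U$ depends on the unknown covariance structure (equivalently on $a$), so a ``universal'' random codebook does not directly work. The plan here follows the introduction's sketch: build a maximum-likelihood estimate $\hat a$ of $a$ from the $n$ observations, show via the Hanson--Wright inequality \cite{rudelson2013hanson} that $|\hat a-a|$ is small with overwhelming probability, define a typical set on which the decorrelating transform and the water level are accurately known, and then run a covering argument (using the CREM characterization \eqref{eqn:cvx_crem}--\eqref{relation:generaltilted} and the generalized tilted information \eqref{def:generalizeddtilted}) with codewords drawn from the estimated optimal output distribution, paying only a negligible rate penalty for describing the typical set and for the estimation slack. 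Controlling the interplay between the estimation error, the eigenvalue perturbation, and the second-order covering exponent — so that all errors are genuinely $o(1/\sqrt n)$ — is where the technical heavy lifting lies.

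Finally, combining the converse and achievability bounds gives \eqref{eqn:main_GA} with the two-sided $o(1/\sqrt n)$ term, and the variance computation gives \eqref{eqn:wf_dis}; the identity $V_U(d)=\mathbb V_U(d)$ then follows by definition \eqref{def:op_dispersion}. The low-distortion coincidence with the i.i.d. Gaussian noise (the ``Remarkably'' claim in the abstract) is an immediate corollary: for $d\le d_c=\sigma^2/(1+|a|)^2$ one has $\theta\ge\theta_{\max}\ge S(w)$ for all $w$, so the integrand in \eqref{eqn:wf_dis} is $(S(w)/\theta)^2$ and, together with \eqref{eqn:rel_UZ}, $\rdf_U(d)$ and $V_U(d)$ reduce to the i.i.d. Gaussian formulas — but I would state that as a separate corollary rather than fold it into this proof.
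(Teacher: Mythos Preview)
Your overall architecture matches the paper's: decorrelation, nonasymptotic eigenvalue control via Cauchy interlacing and Gershgorin (though the paper compares $\mathsf A^\top\mathsf A$ with the tridiagonal \emph{Toeplitz} matrix $\mathsf W_n$ whose eigenvalues are $g(i\pi/(n+1))$, not a circulant --- $\mathsf A^\top\mathsf A$ differs from $\mathsf W_n$ in a single entry, which is what makes interlacing bite), explicit $\mathsf d$-tilted information and its variance (your factor $\tfrac14$ should be $\tfrac12$, since $\mathrm{Var}(\chi_1^2)=2$), the Kostina--Verd\'u general converse with Berry--Esseen, and for achievability the random-coding bound combined with a lossy AEP established via an MLE-based typical set and Hanson--Wright.

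There are, however, two genuine issues. First, you mischaracterize the role of the MLE in the achievability argument. The parameter $a$ is \emph{not} unknown --- it is a fixed model parameter known to the code designer --- and the codebook in the paper is drawn from the \emph{true} $P_{\rvec Y^\star}$ in~\eqref{Ystar}, not an estimated one. The MLE $\hat a(\rvec u)$ is purely an analysis device: it plays the role that the empirical distribution (types) plays in the i.i.d.\ proof of the lossy AEP. To lower-bound $P_{\rvec Y^\star}(\mathcal B(\rvec x,d))$ via Lemma~\ref{lemma:lowerbound} one needs, for each source outcome $\rvec x$, a proxy distribution $P_{\hat{\rvec X}(\rvec x)}$ so that the CREM slope $\hat\lambda^\star(\rvec x)$ is close to $\lambda^\star$ and the conditional shell probability in~\eqref{eqn:lower_bound} is not too small; the paper builds this proxy from $\hat a(\rvec u)$ via~\eqref{eqn:XhatVar}--\eqref{eqn:Xhat}, and the typical set of Definition~\ref{def:TS} is where these approximations are certified. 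Framing it as ``universal coding with an estimated codebook'' would steer you toward the wrong construction and the wrong error accounting. Second, your final paragraph has the direction reversed: for $d\le d_c$ the water level satisfies $\theta\le\theta_{\min}\le S(w)$ for all $w$ (not $\theta\ge\theta_{\max}$), so $S(w)/\theta\ge 1$ everywhere, the integrand in~\eqref{eqn:wf_dis} is identically $1$, and $V_U(d)=\tfrac12=V_Z(d)$ --- which is exactly Corollary~\ref{coro:comparison}.
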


The proof of Theorem~\ref{thm:main_thm_intro} is in given in Sections~\ref{sec:converse} (converse) and~\ref{sec:achievability} (achievability).
 
Pleasingly, the new reverse waterfilling solution for the dispersion in~\eqref{eqn:wf_dis} parallels the classical reverse waterfilling representation of the rate-distortion function in~\eqref{eqn:para_rate_inf}. Furthermore, just like their rate-distortion functions (recall~\eqref{eqn:rel_UZ}), the dispersions of the Gauss-Markov source $U$ in~\eqref{def:GaussMarkov} and its innovation process $Z$ are comparable:
\begin{corollary}
Let $V_{U}(d)$ and $V_{Z}(d)$ be the dispersions of the Gauss-Markov source~\eqref{def:GaussMarkov} and the memoryless Gaussian source $\lrbb{Z_i}_{i = 1}^{\infty}$, respectively, then 
\begin{align}
\begin{cases}
V_{U}(d) = V_{Z}(d), & 0< d \leq d_c, \\
V_{U}(d) <  V_{Z}(d), & d_c< d< \sigma^2. \\
\end{cases}
\label{comparison:dispersions}
\end{align}
\label{coro:comparison}
\end{corollary}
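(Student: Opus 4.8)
The plan is to read off both dispersions directly from the reverse-waterfilling formula \eqref{eqn:wf_dis} of Theorem~\ref{thm:main_thm_intro} and to compare them pointwise in $d$. I would start with the innovation process: since $\{Z_i\}_{i=1}^{\infty}$ is the Gauss-Markov source \eqref{def:GaussMarkov} with $a=0$, its power spectrum is the constant $S_Z(w)\equiv\sigma^2$ and $\dmax=\sigma^2$; for every $0<d<\sigma^2$ the limiting reverse waterfilling \eqref{eqn:para_d_inf} forces the matched water level to be $\theta_Z=d<\sigma^2$, so $\min[1,(\sigma^2/\theta_Z)^2]=1$ identically in $w$ and \eqref{eqn:wf_dis} gives $V_Z(d)=\frac{1}{4\pi}\int_{-\pi}^{\pi}1\,dw=\frac12$ on all of $(0,\sigma^2)$ (the classical i.i.d.\ Gaussian dispersion).

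Next I would establish the sharper statement that $V_U(d)\le\frac12$ for every admissible $d$, with equality exactly when $d\le d_c$. Since the integrand in \eqref{eqn:wf_dis} never exceeds $1$, the bound $V_U(d)\le\frac12$ is immediate, and equality holds iff $\min[1,(S(w)/\theta)^2]=1$ for Lebesgue-almost every $w$, i.e.\ iff $S(w)\ge\theta$ a.e. By continuity of $S$ and the definition $\theta_{\min}=\min_{w}S(w)=d_c$ (cf.\ \eqref{def:general_d_c}, \eqref{critical_distortion}) this is equivalent to $\theta\le\theta_{\min}$. It then remains to translate $\theta\le\theta_{\min}$ into a condition on $d$: the map $\theta\mapsto\frac{1}{2\pi}\int_{-\pi}^{\pi}\min(\theta,S(w))\,dw$ is continuous, strictly increasing on $(0,\theta_{\max})$, and equals $\theta$ as long as $\theta\le\theta_{\min}$, so $\theta\le\theta_{\min}\iff d\le d_c$. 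This yields $V_U(d)=\frac12=V_Z(d)$ for $0<d\le d_c$, the first line of \eqref{comparison:dispersions}.

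For the strict-inequality regime $d_c<d<\sigma^2$ (which is empty when $a=0$, so I may assume $a>0$; note $\sigma^2<\dmax=\sigma^2/(1-a^2)$, so this range lies inside $(0,\dmax)$ and Theorem~\ref{thm:main_thm_intro} applies), the translation above gives $\theta>\theta_{\min}=S(\pm\pi)$, hence by continuity of $S$ the set $A\triangleq\{w\in[-\pi,\pi]:S(w)<\theta\}$ is a nonempty open set and in particular has positive Lebesgue measure. Splitting the integral in \eqref{eqn:wf_dis} over $A$ and its complement,
\begin{align}
V_U(d)=\frac{1}{4\pi}\int_{A}\left(\frac{S(w)}{\theta}\right)^2 dw+\frac{1}{4\pi}\int_{[-\pi,\pi]\setminus A}1\,dw<\frac{1}{4\pi}\int_{-\pi}^{\pi}1\,dw=\frac12=V_Z(d),
\end{align}
which is the second line of \eqref{comparison:dispersions}.

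I expect no serious obstacle here: the corollary is essentially a one-line consequence of the waterfilling representation \eqref{eqn:wf_dis}. The only points that deserve care are the monotone dependence of the matched water level $\theta$ on the distortion $d$ (needed to interchange ``$\theta\le\theta_{\min}$'' with ``$d\le d_c$''), and the fact that $\{w:S(w)<\theta\}$ has positive measure as soon as $\theta>\theta_{\min}$ --- which uses the continuity and non-constancy of the spectrum $S$, i.e.\ $a>0$, precisely the regime in which the strict inequality is claimed.
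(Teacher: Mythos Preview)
Your proposal is correct and follows essentially the same route as the paper: both read the comparison directly from the reverse-waterfilling representation \eqref{eqn:wf_dis}, noting that the integrand is identically $1$ precisely when $\theta\le\theta_{\min}$, i.e.\ when $d\le d_c$. The only minor difference is that the paper cites \cite{ingber2011dispersion,kostina2012fixed} for $V_Z(d)=\tfrac12$, whereas you recover it by specializing \eqref{eqn:wf_dis} to $a=0$; your argument is thus slightly more self-contained, and you are more explicit about the positive-measure step needed for the strict inequality.
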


\begin{proof}
From~\cite[Th. 2]{ingber2011dispersion} and~\cite[Th. 40]{kostina2012fixed}, we know that the dispersion of the memoryless Gaussian source is
\begin{align}
V_Z(d) = \frac{1}{2}, \quad \forall d\in (0, \sigma^2), \label{dispersion:GaussianIID}
\end{align}  
which is also shown in Fig.~\ref{fig:dispersion}. By the definition of $d_c$ in~\eqref{critical_distortion} and the discussion around~\eqref{critical_distortion} and~\eqref{d_max}, we see that~\eqref{eqn:wf_dis} satisfies
\begin{align}
\min\left[1, \left(\frac{\sigma^2}{\theta g(w)}\right)^2\right] ~
\begin{cases} 
= 1, & \text{ if }d\in (0, d_c], \\
< 1, & \text{ if }d\in (d_c, \dmax),  
\end{cases}
\end{align}
from which Corollary~\ref{coro:comparison} follows.
\end{proof}

Corollary~\ref{coro:comparison} parallels Gray's result~\eqref{eqn:rel_UZ}~\cite[Eq. (24)]{gray1970information} for the rate-distortion functions of $U$ and $Z$, and they together imply that for $d\in(0, d_c]$, the fundamental limits of lossy compression of the Gauss-Markov source and the i.i.d. Gaussian source $\left\{Z_i\right\}_{i=1}^{\infty}$ are the same, up to the second-order term. For $d\in (d_c, \sigma^2)$, the Gauss-Markov source is harder to compress in the limit of $n$ going to infinity since $\rdf_U(d)>\rdf_Z(d)$, but the Gauss-Markov source approaches its asymptotic fundamental limit faster since $ V_U(d) < V_Z(d)$. See the discussions following Theorem~\ref{thm:limiting_variance} below for an intuitive explanation.

The dispersions for $a = 0$ and $a = 0.5$ are plotted in Fig.~\ref{fig:dispersion}, where the dotted line (for $a = 0,\sigma^2 = 1$) recovers the dispersion result \eqref{dispersion:GaussianIID} in~\cite{kostina2012fixed,ingber2011dispersion} for the i.i.d. Gaussian sources $\left\{Z_i\right\}_{i=1}^{\infty}$, as expected. The solid line ( for $a = 0.5, \sigma^2 = 1$) coincides with the dotted line in the region $d\in (0,d_c]$, which means that the Gauss-Markov source has the same dispersion as its innovation process in the region of low $d$'s. For $d \in (d_c, \sigma^2)$, the dispersion of the Gauss-Markov source is smaller than that of its innovation process and decreases with $d$, as indicated by Corollary~\ref{coro:comparison}. 

Using the residue theorem from complex analysis, we also derive the coordinates of the two corner points $P_1$ and $P_2$ on the solid line (Appendix~\ref{app:cornerpoints}):
\begin{align}
P_1 = (d_c, 1/2), \quad P_2 = \left(\dmax, \frac{(1+a^2)(1-a)}{2(1+a)^3}\right).
\label{corner_points}
\end{align}
The vertical segment between $(\dmax,0)$ and $P_2$ corresponds to the case when the water level $\theta$ is above the spectrum peak~$\theta_{\max}$ in Fig.~\ref{fig:RWF}, and the dispersion $V_U(d)$ in~\eqref{eqn:wf_dis} becomes
\begin{align}
V_U(d) = \frac{1}{4\pi\theta^2}\int_{-\pi}^{\pi}  S(w)^2~dw, \label{eqn:wf_dis_above}
\end{align}
which continues decreasing as $\theta$ increases, even as the distortion $d$ remains as $\dmax$, as seen from~\eqref{eqn:para_d_inf} and~\eqref{eqn:wf_dis}.
\begin{figure*}[t]
\centering
\includegraphics[width=0.7\textwidth, height = 0.49\textwidth]{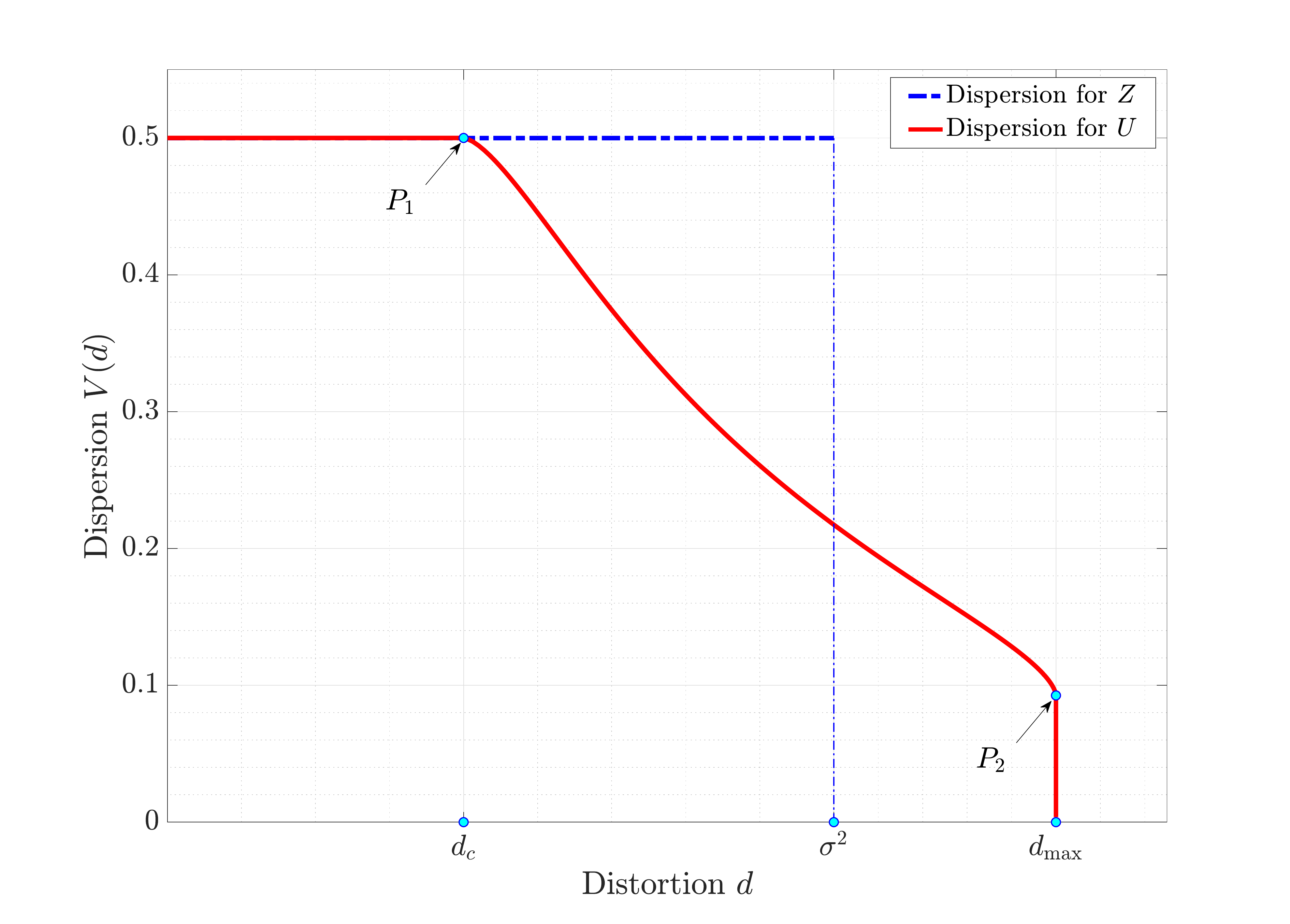}
\caption{Dispersion versus distortion: The Gauss-Markov source with $a=0$ and $\sigma^2 = 1$ degenerates to the i.i.d. Gaussian source, i.e., the innovation process $\left\{Z_i\right\}_{i=1}^{\infty}$. The dispersion of the Gauss-Markov source with $a = 0.5$ and $\sigma^2 = 1$ is given by the solid line. Two corner points on the solid line are labeled as $P_1$ (corresponding to $d_c$) and $P_2$ (corresponding to $\dmax$). } 
\label{fig:dispersion}
\end{figure*}

Theorem~\ref{thm:limiting_variance} below gives formulas for the $\mathsf{d}$-tilted information defined in~\eqref{def:d_tilted_def} and informational dispersion defined in~\eqref{def:info_dispersion}.
\begin{theorem}
For the Gauss-Markov source $U$ in~\eqref{def:GaussMarkov}, for any $d\in (0,\dmax)$ and $n\geq 1$, the $\mathsf{d}$-tilted information is given by
\begin{align}
\jmath_{\rvec{U}}(\rvec{u}, d)  = \jmath_{\rvec{X}}(\rvec{x}, d) & = \sum_{i = 1}^n \frac{\min(\theta_n,\sigma_i^2)}{2\theta_n}\left(\frac{x_i^2}{\sigma_i^2} - 1\right) +\notag \\
&~ \frac{1}{2}\log\frac{\max(\theta_n,\sigma_i^2)}{\theta_n}, \label{expression:dtilted}
\end{align}
and the informational dispersion satisfies 
\begin{align}
\mathbb{V}_U(d) = \frac{1}{4\pi}\int_{-\pi}^{\pi} \min\left[1, \left(\frac{S(w)}{\theta}\right)^2\right]~dw, \label{limiting_var}
\end{align}
where $\theta_n > 0$ is the water level matched to $d$ via the $n$-th order reverse waterfilling~\eqref{eqn:para_d}; $\theta > 0$ is the water level matched to the distortion $d$ via the limiting reverse waterfilling~\eqref{eqn:para_d_inf}; and the power spectrum $S(w)$ is defined in~\eqref{spectrum}.
\label{thm:limiting_variance}
\end{theorem}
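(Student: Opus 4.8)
The plan is to push everything through the decorrelation~\eqref{eqiv:DTUX}, which reduces the problem to a vector of independent Gaussians $X_i\sim\mathcal{N}(0,\sigma_i^2)$ with $\sigma_i^2$ as in~\eqref{eig:sigma_i}. I would first record that the RDF-achieving pair $(\rvec{X},\rvec{V}^\star)$ in~\eqref{eqn:nthorderOp} has the classical parallel-Gaussian product form: each ``active'' coordinate with $\sigma_i^2>\theta_n$ is passed through the scalar Gaussian test channel at level $\theta_n$ (output marginal $\mathcal{N}(0,\sigma_i^2-\theta_n)$), while each ``inactive'' coordinate with $\sigma_i^2\leq\theta_n$ is reproduced by the constant $0$; this is verified directly against the $n$-th order reverse-waterfilling conditions~\eqref{eqn:para_r_n}--\eqref{eqn:para_d}, and differentiating those conditions in $d$ also yields the slope $\lambda^\star=-\mathbb{R}'_{\rvec{U}}(n,d)=\frac{1}{2\theta_n}$.

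Substituting into the definition~\eqref{def:d_tilted_def}: since $\mathsf{d}(\rvec{x},\rvec{v})=\frac{1}{n}\|\rvec{x}-\rvec{v}\|_2^2$ and $\rvec{V}^\star$ is a product distribution, the expectation factorizes, giving $\jmath_{\rvec{X}}(\rvec{x},d)=-\lambda^\star n d-\sum_{i=1}^n\log\mathbb{E}\exp\left(-\lambda^\star (x_i-V_i^\star)^2\right)$. For an active coordinate the inner expectation is a one-dimensional Gaussian integral that, upon completing the square, equals $\sqrt{\theta_n/\sigma_i^2}\,\exp(-x_i^2/2\sigma_i^2)$; for an inactive coordinate it is simply $\exp(-x_i^2/2\theta_n)$. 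Combining these with $nd=\sum_i\min(\theta_n,\sigma_i^2)$ from~\eqref{eqn:para_d} and collecting terms produces~\eqref{expression:dtilted}, the $\min/\max$ notation merely encoding the active/inactive dichotomy.

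For the informational dispersion I would use that~\eqref{expression:dtilted} is a sum of functions of the independent coordinates $X_i$, in which only the terms $\frac{\min(\theta_n,\sigma_i^2)}{2\theta_n}\left(\frac{X_i^2}{\sigma_i^2}-1\right)$ are random; since $\var{X_i^2}=2\sigma_i^4$, independence yields $\mathbb{V}_{\rvec{U}}(n,d)=\var{\jmath_{\rvec{X}}(\rvec{X},d)}=\frac{1}{2}\sum_{i=1}^n\min\left(1,\frac{\sigma_i^2}{\theta_n}\right)^2$. It then remains to evaluate the limit in~\eqref{def:info_dispersion}, which is the crux of the proof. Here I would invoke the sharp eigenvalue approximation for $\mathsf{\Sigma}_{\rvec{U}}$ developed in this paper --- a nonasymptotic refinement of Gray's eigenvalue distribution result~\cite[Eq. (19)]{gray1970information} --- which in particular gives $\frac{1}{n}\sum_{i=1}^n f(\sigma_i^2)\to\frac{1}{2\pi}\int_{-\pi}^{\pi}f(S(w))\,dw$ for every bounded continuous $f$, together with the convergence $\theta_n\to\theta$ of the water levels (following from matching~\eqref{eqn:para_d} to~\eqref{eqn:para_d_inf} and the continuity and strict monotonicity, below $\theta_{\max}$, of $\theta\mapsto\frac{1}{2\pi}\int_{-\pi}^{\pi}\min(\theta,S(w))\,dw$). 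Applying this to the (bounded, Lipschitz) function $x\mapsto\min(1,x/\theta)^2$ and absorbing the perturbation caused by $\theta_n\neq\theta$ through a Lipschitz bound in the water level shows that $\frac{1}{n}\mathbb{V}_{\rvec{U}}(n,d)$ converges (so the $\limsup$ is an honest limit) to $\frac{1}{4\pi}\int_{-\pi}^{\pi}\min\left[1,(S(w)/\theta)^2\right]dw$, which is~\eqref{limiting_var}. The principal obstacle is precisely this last step: $\mathsf{\Sigma}_{\rvec{U}}=\sigma^2(\mathsf{A}^\top\mathsf{A})^{-1}$ is the inverse of an almost-Toeplitz tridiagonal matrix rather than a Toeplitz matrix, so its empirical eigenvalue distribution must be pinned down carefully, and the joint limit over both $\{\sigma_i^2\}_{i=1}^n$ and the blocklength-dependent water level $\theta_n$ requires the uniformity supplied by those refined eigenvalue bounds.
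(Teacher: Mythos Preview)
Your proposal is correct and follows essentially the same route as the paper. The paper packages the per-coordinate Gaussian computation as Lemma~\ref{lemma:para_tilted} (using the generalized tilted information $\Lambda_{Y_i^\star}$ and Lemma~\ref{lemma:slope_properties} for $\lambda^\star=\tfrac{1}{2\theta_n}$), then applies Theorem~\ref{thm:LimitingThm} to the function $t\mapsto\tfrac{1}{2}\min[1,(t/\theta)^2]$; your direct evaluation of $\mathbb{E}\exp(-\lambda^\star(x_i-V_i^\star)^2)$ by completing the square is exactly the same calculation, and you are right that passing from $\theta_n$ to $\theta$ in the limit requires the Lipschitz-in-water-level argument you sketch (the paper leaves this implicit).
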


\begin{proof}
Appendix~\ref{app:derivation_vd}.
\end{proof}

Theorem~\ref{thm:limiting_variance} computes the informational dispersion $\mathbb{V}_U(d)$ defined in~\eqref{def:info_dispersion}. The formula~\eqref{eqn:main_GA} in Theorem~\ref{thm:main_thm_intro} is an equivalent reformulation of the definition of the operational dispersion $V_U(d)$ in~\eqref{def:op_dispersion}, while~\eqref{eqn:wf_dis} together with Theorem~\ref{thm:limiting_variance} establish the equality $\mathbb{V}_U(d) = V_U(d)$. Theorem~\ref{thm:main_thm_intro} and Theorem~\ref{thm:limiting_variance} establish that for the Gauss-Markov source, the operational RDF and dispersion are given by~\eqref{limiting_expectation_relation} and~\eqref{def:info_dispersion}, respectively, providing a natural extension to the fact that in lossy compression of i.i.d. sources $\left\{X_i\right\}$, the mean $\mathbb{E}[\jmath_{X_1}(X_1,d)]$ and the variance $\var{\jmath_{X_1}(X_1,d)}$ of the single-letter $\mathsf{d}$-tilted information $\jmath_{X_1}(X_1,d)$ are equal to the RDF and the dispersion, respectively~\cite[Th. 12]{kostina2012fixed}.

Theorem~\ref{thm:limiting_variance} also provides intuition on our result in Corollary~\ref{coro:comparison} for $d>d_c$. Since $(\rvec{X},\rvec{Y}^\star)$ forms a RDF-achieving pair in $\rdf_{\rvec{X}}(n, d)$ (recall~\eqref{eqn:nthorderOp}), it is well-known~\cite[Th. 10.3.3]{cover2012elements} that $\rvec{Y}^\star$ has independent coordinates and $\forall~i\in [n]$, 
\begin{align}
Y_i^\star \sim \mathcal{N}(0, \max\LRB{\sigma_i^2 - \theta_n, 0}), \label{Ystar}
\end{align}
where $\theta_n >0 $ is the water level matched to the distortion $d$ in the $n$-th order reverse waterfilling over $\{\sigma_i^2\}_{i = 1}^n$ in~\eqref{eqn:para_d}. Since $d>d_c$, there are some $X_i$ in~\eqref{expression:dtilted} which are ``inactive", that is, $\sigma_i^2 < \theta_n$, which makes the variance of~\eqref{expression:dtilted} smaller. Geometrically, since $\rvec{X}$ concentrates inside an ellipsoid, we are covering such an ellipsoid by balls of radius $\sqrt{nd}$. The centers of these distortion $d$-balls lie on another lower dimensional ellipsoid. That lower dimensional ellipsoid is the one on which the random vector $\rvec{Y}^\star$ concentrates. For $d>d_c$, although centered at a lower dimensional ellipsoid (since $Y_i^\star \equiv 0$ for inactive $X_i$'s), these $d$-balls are large enough to also cover those ``inactive" dimensions.

\subsection{Technical tools}
\label{subsec:maintools}

\subsubsection{Eigenvalues of the covariance matrices}
\label{subsubsec:Teig}
Although decorrelation simplifies the problem by transforming a source with memory into a memoryless one, the real challenge is to study the evolution of the variances $\sigma_i^2$'s in~\eqref{eig:sigma_i}, as $n$ increases. For finite $n$, there is no closed-form expression for the eigenvalues of $\mathsf{\Sigma}_{\rvec{U}}$ for $a\in (0,1)$.\footnote{A closed-form expression for the eigenvalues of $\mathsf{\Sigma}_{\rvec{U}}$ is known only for $a = 1$~\cite[Eq. (2)]{berger1970information}.} Since the inverse of $\mathsf{\Sigma}_{\rvec{U}}$ is $\frac{1}{\sigma^2} \mathsf{A}^\top \mathsf{A}$, which is almost a Toeplitz matrix except the $(n, n)$-th entry, the limiting distribution of the eigenvalues of $\mathsf{\Sigma}_{\rvec{U}}$ can be deduced from the limiting distribution of eigenvalues of Toeplitz matrices~\cite[Eq. (19)]{gray1970information}.
\begin{theorem}[Reformulation of Gray{~\cite[Eq. (19)]{gray1970information}}]
Fix any $a\in [0,1)$. For any continuous function $F(t)$ over the interval 
\begin{align}
t\in\left[\theta_{\min}, \theta_{\max}\right],
\label{interval_t}
\end{align}
the eigenvalues $\sigma^2_i$'s of $\mathsf{\Sigma}_{\rvec{U}}$ satisfy 
\begin{align}
\lim_{n\rightarrow\infty}\frac{1}{n}\sum_{i = 1}^n F(\sigma_i^2) = \frac{1}{2\pi}\int_{-\pi}^{\pi} F\left(S(w)\right)~dw,
\label{eqn:limiting_equality}
\end{align}
where $S(w)$ is defined in~\eqref{spectrum}.
\label{thm:LimitingThm}
\end{theorem}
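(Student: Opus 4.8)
The plan is to reduce the statement, via the decorrelation of Section~\ref{subsec:pfoutline}, to the classical Szeg\H{o} eigenvalue distribution theorem for Hermitian Toeplitz matrices --- which is precisely the content of Gray's Eq.~(19). Recall that $\sigma_i^2 = \sigma^2/\mu_i$, where $\mu_1\le\cdots\le\mu_n$ are the eigenvalues of $\mathsf{A}^\top\mathsf{A}$. A direct computation from~\eqref{def:A} shows $\mathsf{A}^\top\mathsf{A}$ is tridiagonal, with $1+a^2$ on the main diagonal --- except the $(n,n)$ entry, which is $1$ --- and $-a$ on the two adjacent diagonals. Let $\mathsf{T}_n$ be the symmetric Toeplitz matrix with the same entries but $1+a^2$ in \emph{every} main-diagonal position; then $\mathsf{A}^\top\mathsf{A} = \mathsf{T}_n - a^2\,\mathbf{e}_n\mathbf{e}_n^\top$, a rank-one perturbation, where $\mathbf{e}_n$ is the last standard basis vector and $\mathsf{T}_n$ is generated by the symbol $g(w)=1+a^2-2a\cos w$ of~\eqref{exp:g}. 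Since $S(w)=\sigma^2/g(w)$ by~\eqref{spectrum}, setting $G(t)\triangleq F(\sigma^2/t)$ the claim becomes $\tfrac1n\sum_{i=1}^nG(\mu_i)\to\tfrac1{2\pi}\int_{-\pi}^{\pi}G(g(w))\,dw$.

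First I would localize the spectra. By the Gershgorin circle theorem, every eigenvalue of both $\mathsf{A}^\top\mathsf{A}$ and $\mathsf{T}_n$ lies in $[(1-a)^2,(1+a)^2]=[\min_wg(w),\max_wg(w)]$; equivalently each $\sigma_i^2$ lies in $[\theta_{\min},\theta_{\max}]$, so every term $F(\sigma_i^2)$ is well-defined and $G$ is uniformly continuous on the compact interval $[(1-a)^2,(1+a)^2]$. It then suffices to prove the limit for $G$ ranging over polynomials and to pass to a uniform limit via the Weierstrass approximation theorem (choose polynomials $p_m\to G$ uniformly and bound $|\tfrac1n\sum_iG(\mu_i)-\tfrac1{2\pi}\int G(g)\,dw|$ by $\|G-p_m\|_\infty+|\tfrac1n\sum_ip_m(\mu_i)-\tfrac1{2\pi}\int p_m(g)\,dw|+\|p_m-G\|_\infty$, then let $n\to\infty$ and $m\to\infty$).

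For polynomials it is enough to treat the monomials $G(t)=t^k$, and two facts combine. First, Szeg\H{o}'s theorem (Grenander--Szeg\H{o}; equivalently Gray's Eq.~(19) specialized to the Gauss--Markov symbol) gives $\tfrac1n\operatorname{tr}\mathsf{T}_n^k=\tfrac1n\sum_i\lambda_i(\mathsf{T}_n)^k\to\tfrac1{2\pi}\int_{-\pi}^{\pi}g(w)^k\,dw$; a self-contained derivation reduces to counting closed length-$k$ walks on $\mathbb{Z}$ with steps in $\{-1,0,+1\}$ weighted by $-a,\,1+a^2,\,-a$, which is the one genuinely computational step. Second, the corner perturbation is asymptotically invisible: using $\mathsf{A}^\top\mathsf{A}-\mathsf{T}_n=-a^2\mathbf{e}_n\mathbf{e}_n^\top$ and the telescoping identity $\mathsf{M}^k-\mathsf{N}^k=\sum_{j=0}^{k-1}\mathsf{N}^j(\mathsf{M}-\mathsf{N})\mathsf{M}^{k-1-j}$, each of the $k$ summands has rank at most one and operator norm at most $a^2(1+a)^{2(k-1)}$, so $|\operatorname{tr}(\mathsf{A}^\top\mathsf{A})^k-\operatorname{tr}\mathsf{T}_n^k|\le k\,a^2(1+a)^{2(k-1)}$, a constant independent of $n$; dividing by $n$ and letting $n\to\infty$ yields $\tfrac1n\sum_i\mu_i^k-\tfrac1n\sum_i\lambda_i(\mathsf{T}_n)^k\to0$. (Alternatively, Cauchy interlacing gives $\mu_i,\lambda_i(\mathsf{T}_n)\in[\lambda_{i-1}(\mathsf{T}_n),\lambda_i(\mathsf{T}_n)]$, and summing a modulus-of-continuity bound over $i$, together with the fact that the consecutive gaps of $\mathsf{T}_n$'s spectrum are $O(1/n)$, gives the same conclusion.) Hence $\tfrac1n\sum_i\mu_i^k\to\tfrac1{2\pi}\int g^k\,dw$ for every $k$, this extends to all polynomials by linearity and then to all continuous $G$ as above, and changing variables back, $\tfrac1{2\pi}\int_{-\pi}^{\pi}G(g(w))\,dw=\tfrac1{2\pi}\int_{-\pi}^{\pi}F(\sigma^2/g(w))\,dw=\tfrac1{2\pi}\int_{-\pi}^{\pi}F(S(w))\,dw$, which is the theorem.

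There is no deep obstacle: the statement is essentially a repackaging, through the decorrelation, of Gray's Toeplitz eigenvalue distribution result, and the only points needing care are (i) verifying via Gershgorin that the spectra stay inside $[\theta_{\min},\theta_{\max}]$, so that the restriction of $F$ to that interval causes no difficulty; and (ii) discarding the single $(n,n)$-corner discrepancy between $\mathsf{A}^\top\mathsf{A}$ and the genuine Toeplitz matrix $\mathsf{T}_n$, which the trace identity above handles cleanly. If one prefers not to invoke Szeg\H{o} as a black box, the combinatorial evaluation of $\lim_n\tfrac1n\operatorname{tr}\mathsf{T}_n^k$ is the main --- and entirely routine --- calculation.
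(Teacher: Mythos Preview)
Your proof is correct. The paper does not give its own proof of Theorem~\ref{thm:LimitingThm}: it is stated as a reformulation of Gray's result and attributed to Szeg\H{o}-type theorems (Grenander--Szeg\H{o}, Gray's Toeplitz review), so your trace-plus-Weierstrass argument is a legitimate self-contained justification of what the paper simply cites.

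It is worth noting, however, that the paper's own machinery for the \emph{stronger} Theorem~\ref{thm:n_LimitingThm} follows a different route: rather than controlling traces, Lemma~\ref{lemma:eig_approx} pins down each individual eigenvalue $\mu_i$ within $O(1/n)$ of the explicit value $\xi_i=g(i\pi/(n+1))$, via Cauchy interlacing between $\mathsf{A}^\top\mathsf{A}$ and its principal submatrix $\mathsf{W}_{n-1}$ (your $\mathsf{T}_{n-1}$) together with Gershgorin for the smallest eigenvalue. This eigenvalue-by-eigenvalue control is what delivers the $O(1/n)$ rate needed for the dispersion analysis; your trace identity $|\mathrm{tr}(\mathsf{A}^\top\mathsf{A})^k-\mathrm{tr}\,\mathsf{T}_n^k|\le k\,a^2(1+a)^{2(k-1)}$ is elegant and gives the qualitative limit cleanly, but the constant blows up with $k$, so it would not by itself yield Theorem~\ref{thm:n_LimitingThm}. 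Your parenthetical alternative via interlacing is essentially the paper's Lemma~\ref{lemma:eig_approx}, though stated a bit loosely (Cauchy interlacing compares a matrix to a principal submatrix, not two same-size matrices differing by rank one; the paper makes this precise by using $\mathsf{W}_{n-1}$ as the common principal submatrix).
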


There are more general results in the form of Theorem~\ref{thm:LimitingThm}, known as Szeg{\"o}'s theorem, see~\cite[Chap. 5]{grenander2001toeplitz} for Toeplitz forms and~\cite[Cor. 2.3]{gray2006toeplitz} for asymptotically Toeplitz matrices. In the context of rate-distortion theory, applying Theorem~\ref{thm:LimitingThm} to~\eqref{eqn:para_r_n}-\eqref{eqn:para_d} leads to~\eqref{eqn:para_rate_inf}-\eqref{eqn:para_d_inf}.

Unfortunately, Theorem~\ref{thm:LimitingThm} is insufficient to obtain the fine asymptotics in our Theorem~\ref{thm:main_thm_intro}. To derive our finite blocklength results, we need to understand the rate of convergence in~\eqref{eqn:limiting_equality}. Towards that end, we develop a nonasymptotic refinement of Theorem~\ref{thm:LimitingThm}, presented next.
\begin{theorem}[Nonasymptotic eigenvalue distribution of $\mathsf{\Sigma}_{\rvec{U}}$]
Fix any $a\in [0,1)$. For any bounded, $L$-Lipschitz and non-decreasing function $F(t)$ over the interval in~\eqref{interval_t}, and for any $n\geq 1$, the eigenvalues $\sigma_i^2$'s of $\mathsf{\Sigma}_{\rvec{U}}$ satisfy
\begin{align}
\left |\frac{1}{n}\sum_{i = 1}^n F(\sigma_i^2) - \frac{1}{2\pi}\int_{-\pi}^{\pi} F\left(S(w)\right) dw\right| \leq \frac{C_L}{n}, 
\label{eqn:n_limiting_equality}
\end{align}
where $C_L > 0 $ is a constant that depends on the Lipschitz constant $L$ and the sup norm $\|F\|_{\infty}$ of $F$, and $S(w)$ is in~\eqref{spectrum}.
\label{thm:n_LimitingThm}
\end{theorem}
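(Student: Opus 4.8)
The plan is to compare the eigenvalues $\sigma_i^2 = \sigma^2/\mu_i$ of $\mathsf{\Sigma}_{\rvec{U}}$, equivalently the eigenvalues $\mu_i$ of $\mathsf{A}^\top\mathsf{A}$, against a reference tridiagonal matrix whose eigenvalues \emph{are} available in closed form. Concretely, $\frac{1}{\sigma^2}\mathsf{\Sigma}_{\rvec{U}}^{-1} = \mathsf{A}^\top\mathsf{A}$ is the symmetric tridiagonal matrix with diagonal $(1, 1+a^2, \ldots, 1+a^2, 1+a^2)$ and off-diagonal entries all equal to $-a$; it differs from the genuine Toeplitz tridiagonal matrix $\mathsf{T}_n$ (diagonal $1+a^2$ throughout, off-diagonal $-a$) only in the $(1,1)$ entry, by the rank-one perturbation $-a^2 e_1 e_1^\top$. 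The Toeplitz matrix $\mathsf{T}_n$ has the explicit eigenvalues $1 + a^2 - 2a\cos\frac{k\pi}{n+1} = g\!\left(\frac{k\pi}{n+1}\right)$, $k\in[n]$, so that $\sigma^2/g(\cdot)$ samples the spectrum $S(w)$ exactly at the Chebyshev nodes. The rank-one (indeed norm-$a^2\le 1$) perturbation lets me invoke the Cauchy interlacing theorem to conclude that, after sorting, the eigenvalues of $\mathsf{A}^\top\mathsf{A}$ and of $\mathsf{T}_n$ interlace and hence differ in a controlled way: each eigenvalue $\mu_i$ lies between two consecutive eigenvalues of $\mathsf{T}_n$ (for a rank-one perturbation this is exact), and the Gershgorin circle theorem pins all eigenvalues of both matrices into the interval $[(1-|a|)^2,(1+|a|)^2]$, i.e. $\sigma_i^2,\ \sigma^2/g(\tfrac{k\pi}{n+1})\in[\theta_{\min},\theta_{\max}]$, which is exactly the interval~\eqref{interval_t} on which $F$ is assumed Lipschitz and bounded.

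With the eigenvalue comparison in hand, I would estimate the two error terms separately. First, $\frac{1}{n}\sum_{i=1}^n F(\sigma_i^2) - \frac{1}{n}\sum_{k=1}^n F\!\left(\sigma^2/g(\tfrac{k\pi}{n+1})\right)$: because $F\circ(\sigma^2/g)$ is monotone in $k$ (here I use that $F$ is non-decreasing and $w\mapsto \sigma^2/g(w)$ is monotone on $[0,\pi]$, so the composition has a single monotone branch after relabeling) and the two eigenvalue sequences interlace, the sum of the absolute differences telescopes and is bounded by the total variation of $F\circ(\sigma^2/g)$, which is at most $2\|F\|_\infty$; dividing by $n$ gives an $O(1/n)$ bound. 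Alternatively, using the interlacing one gets $|\sigma_i^2 - \sigma^2/g(\tfrac{i\pi}{n+1})|$ controlled, and $L$-Lipschitzness of $F$ converts this to an $O(L/n)$ bound. Second, $\frac{1}{n}\sum_{k=1}^n F\!\left(\sigma^2/g(\tfrac{k\pi}{n+1})\right) - \frac{1}{2\pi}\int_{-\pi}^\pi F(S(w))\,dw = \frac{1}{n}\sum_{k=1}^n F\!\left(S(\tfrac{k\pi}{n+1})\right) - \frac{1}{\pi}\int_0^\pi F(S(w))\,dw$ is a Riemann-sum error for the $L'$-Lipschitz integrand $F\circ S$ (with $L' \le L\cdot\|S'\|_\infty$ on $[0,\pi]$, where $S$ has bounded derivative since $g$ is bounded below by $(1-|a|)^2>0$) over a uniform partition of mesh $\frac{\pi}{n+1}$; the standard midpoint/endpoint Riemann-sum bound gives $O(L'/n)$. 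Combining, the left side of~\eqref{eqn:n_limiting_equality} is $O\big((L + L\|S'\|_\infty + \|F\|_\infty)/n\big)$, which is the claimed $C_L/n$ with $C_L$ depending only on $L$ and $\|F\|_\infty$ (and on the fixed $a,\sigma^2$).

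The main obstacle I anticipate is making the eigenvalue comparison genuinely \emph{sharp} at the level of $1/n$ rather than merely $1/\sqrt n$ or $o(1)$. A naive bound via $\|\mathsf{A}^\top\mathsf{A} - \mathsf{T}_n\| = a^2$ and Weyl's inequality only yields $|\mu_i - \text{(eig of }\mathsf{T}_n)| \le a^2$, which is useless. The point is that the perturbation is rank one, so only the \emph{interlacing} structure (not the operator-norm size) should be used: a rank-one symmetric perturbation shifts each eigenvalue past at most one neighbor, so summing $|F(\mu_i)\!-\!F(\text{matched eig})|$ telescopes against the monotone rearrangement of $F$. Getting this telescoping argument airtight — in particular, correctly matching the sorted $\mu_i$'s with the sorted eigenvalues of $\mathsf{T}_n$, and handling the sign/orientation of the monotone branch of $F\circ(\sigma^2/g)$ — is the delicate step; the Riemann-sum estimate and the Gershgorin localization are routine by comparison. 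A secondary subtlety is that the theorem only assumes $F$ non-decreasing and Lipschitz, not smooth, so all bounds must be phrased via Lipschitz constants and total variation, never via $F'$ or $F''$.
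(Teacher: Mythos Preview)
Your proposal is correct and matches the paper's approach closely. One minor slip: the entry of $\mathsf{A}^\top\mathsf{A}$ that differs from the Toeplitz tridiagonal $\mathsf{W}_n$ is the $(n,n)$-entry (equal to $1$ rather than $1+a^2$), not the $(1,1)$-entry; this is inconsequential to the argument. The paper packages the eigenvalue comparison into its Lemma~\ref{lemma:eig_approx}, obtaining the pointwise bound $0\le \xi_i-\mu_i\le 2a\pi/n$ by Cauchy interlacing with the $(n{-}1)\times(n{-}1)$ principal submatrix of $\mathsf{A}^\top\mathsf{A}$ (which is exactly $\mathsf{W}_{n-1}$), combined with $\mathsf{W}_n\succeq \mathsf{A}^\top\mathsf{A}$ and Gershgorin for $\mu_1$. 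Your rank-one interlacing $\xi_{i-1}\le\mu_i\le\xi_i$ (plus the mean-value bound $\xi_i-\xi_{i-1}\le 2a\pi/(n+1)$ and Gershgorin at $i=1$) yields the same pointwise estimate, and your telescoping alternative---bounding $\tfrac{1}{n}\sum_i|F(\sigma^2/\mu_i)-F(\sigma^2/\xi_i)|$ by $2\|F\|_\infty/n$ via monotonicity alone---is a clean variant the paper does not exploit. The Riemann-sum half is handled the same way in both.
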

\begin{proof}
Theorem~\ref{thm:n_LimitingThm} follows from Lemma~\ref{lemma:eig_approx} below and elementary analyses on Riemann sums. See Appendix~\ref{app:pf_n_LimitingThm} for details.
\end{proof}

In the course of the proof of Theorem~\ref{thm:n_LimitingThm}, we obtain the following nonasymptotic bounds on each eigenvalue $\mu_i$ of $\mathsf{A}^\top\mathsf{A}$, which is of independent interest.
\begin{lemma}[Sharp approximation of the eigenvalues of $\mathsf{A}^\top\mathsf{A}$]
Fix any $a\in [0,1)$. For any $n\in\mathbb{N}$, let $0 < \mu_1\leq\mu_2\ldots\leq \mu_n$ be the eigenvalues of $\mathsf{A}^\top\mathsf{A}$, and let
\begin{align}
\xi_i \triangleq g\left(\frac{i\pi}{n+1}\right), \label{eig:W}
\end{align}
where $g$ is in \eqref{exp:g}. Then, we have 
\begin{align}
0\leq \xi_i - \mu_i \leq \frac{2a\pi}{n}, \quad\forall~i\in [n]. \label{bound:ximu}
\end{align}
\label{lemma:eig_approx}
\end{lemma}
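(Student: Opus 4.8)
The plan is to view $\mathsf{A}^\top\mathsf{A}$ as a rank-one modification of a symmetric tridiagonal Toeplitz matrix whose spectrum is known in closed form, and then to compare the two spectra by eigenvalue interlacing. First I would compute $\mathsf{A}^\top\mathsf{A}$ directly from \eqref{def:A}: it is the $n\times n$ symmetric tridiagonal matrix with every off-diagonal entry equal to $-a$ and with diagonal entries $1+a^2, 1+a^2, \ldots, 1+a^2, 1$ (only the last diagonal entry is $1$, a consequence of $\mathsf{A}$ being lower-bidiagonal of order $n$). Let $\mathsf{T}_n$ denote the symmetric tridiagonal Toeplitz matrix of order $n$ with all diagonal entries equal to $1+a^2$ and all off-diagonal entries equal to $-a$; then $\mathsf{T}_n = \mathsf{A}^\top\mathsf{A} + a^2\,\rvec{e}_n\rvec{e}_n^\top$, a rank-one positive semidefinite perturbation of $\mathsf{A}^\top\mathsf{A}$, where $\rvec{e}_n$ is the $n$-th standard basis vector. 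The eigenvalues of $\mathsf{T}_n$ are the classical values $1+a^2-2a\cos\!\left(\frac{i\pi}{n+1}\right) = g\!\left(\frac{i\pi}{n+1}\right) = \xi_i$ for $i\in[n]$, with $g$ as in \eqref{exp:g}; since $g$ is non-decreasing on $[0,\pi]$, the $\xi_i$ are automatically sorted in increasing order, matching the ordering $\mu_1\le\cdots\le\mu_n$.

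Next I would invoke the interlacing inequalities for a rank-one positive semidefinite perturbation (a special case of the Cauchy interlacing theorem, obtainable from a bordered-matrix argument or directly from Courant--Fischer):
\[
\mu_1 \le \xi_1 \le \mu_2 \le \xi_2 \le \cdots \le \mu_n \le \xi_n.
\]
The inequalities $\mu_i\le\xi_i$ already give the left half of \eqref{bound:ximu} for every $i$. For the right half with $i\ge 2$, the inequalities $\xi_{i-1}\le\mu_i$ give
\[
\xi_i - \mu_i \;\le\; \xi_i - \xi_{i-1} \;=\; g\!\left(\tfrac{i\pi}{n+1}\right) - g\!\left(\tfrac{(i-1)\pi}{n+1}\right) \;\le\; \|g'\|_\infty\cdot\tfrac{\pi}{n+1} \;=\; \tfrac{2a\pi}{n+1} \;\le\; \tfrac{2a\pi}{n},
\]
where I used the mean value theorem together with $g'(w)=2a\sin w$, so that $\|g'\|_\infty = 2a$.

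It remains to handle $i=1$, which is the genuinely separate case because interlacing with $\mathsf{T}_n$ gives no lower bound on $\mu_1$. Here I would apply the Gershgorin circle theorem to $\mathsf{A}^\top\mathsf{A}$: the disc of an interior row is $\big[(1-a)^2,(1+a)^2\big]$, while the two boundary rows give discs with left endpoints $1+a^2-a$ and $1-a$, so every Gershgorin disc of $\mathsf{A}^\top\mathsf{A}$ has left endpoint at least $(1-a)^2$; hence $\mu_1\ge (1-a)^2$ (equivalently, $\mathsf{A}^\top\mathsf{A}-(1-a)^2\mathsf{I}$ is weakly diagonally dominant with nonnegative diagonal, hence positive semidefinite). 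Consequently
\[
\xi_1 - \mu_1 \;\le\; g\!\left(\tfrac{\pi}{n+1}\right) - (1-a)^2 \;=\; 2a\Big(1-\cos\tfrac{\pi}{n+1}\Big) \;\le\; a\Big(\tfrac{\pi}{n+1}\Big)^2 \;\le\; \tfrac{2a\pi}{n},
\]
the last inequality using $\pi n\le 2(n+1)^2$ for all $n\ge 1$. Combining the three cases yields \eqref{bound:ximu}.

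I expect the only real obstacles to be bookkeeping rather than conceptual: getting the indexing in the rank-one interlacing exactly right, and treating the boundary eigenvalue $i=1$ by hand (small $n$ must be checked separately in the Gershgorin step, but the bound there only improves). The substance is simply the recognition that $\mathsf{A}^\top\mathsf{A}$ is a rank-one (plus boundary-term) perturbation of the explicitly diagonalizable Toeplitz matrix $\mathsf{T}_n$, so that the gaps $\xi_i-\mu_i$ are controlled by consecutive spacings $\xi_i-\xi_{i-1}$, which are $O(1/n)$ because $g$ is Lipschitz.
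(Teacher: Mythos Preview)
Your proof is correct and follows essentially the same strategy as the paper: compare $\mathsf{A}^\top\mathsf{A}$ to the tridiagonal Toeplitz matrix $\mathsf{T}_n$ (the paper's $\mathsf{W}_n$) with known spectrum $\{\xi_i\}$, use an interlacing result to control $\xi_i-\mu_i$ for $i\ge 2$ by consecutive eigenvalue spacings, and handle $i=1$ separately via Gershgorin.

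The one noteworthy difference is the interlacing step. The paper observes that the leading $(n-1)\times(n-1)$ principal submatrix of $\mathsf{B}_n=\mathsf{A}^\top\mathsf{A}$ is exactly $\mathsf{W}_{n-1}$, applies Cauchy interlacing to get $\mu_i\le\xi_i^{(n-1)}\le\mu_{i+1}$, and combines this with the semidefinite ordering $\mathsf{W}_n\succeq\mathsf{B}_n$ to sandwich $\xi_{i-1}^{(n-1)}\le\mu_i\le\xi_i^{(n)}$; the bound then involves spacings $\xi_i^{(n)}-\xi_{i-1}^{(n-1)}$ between eigenvalues of matrices of different sizes. You instead invoke the rank-one positive semidefinite perturbation interlacing for $\mathsf{T}_n=\mathsf{B}_n+a^2\rvec{e}_n\rvec{e}_n^\top$ directly, obtaining $\mu_1\le\xi_1\le\mu_2\le\cdots\le\mu_n\le\xi_n$ in one stroke, so that the relevant spacing is simply $\xi_i-\xi_{i-1}$ within the same matrix. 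Your route is slightly cleaner and avoids juggling eigenvalues of $\mathsf{W}_{n-1}$ and $\mathsf{W}_n$ simultaneously; the paper's route has the mild advantage of using only the textbook Cauchy interlacing theorem rather than the rank-one version (though the latter is an immediate corollary). The Gershgorin treatment of $i=1$ and the final numerical estimates are identical in both arguments.
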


\begin{proof}
The idea in proving Lemma~\ref{lemma:eig_approx} is that $\mathsf{A}^\top\mathsf{A}$ is almost a tridiagonal Toeplitz matrix, whose eigenvalues are given by~\eqref{eig:W}. The bound~\eqref{bound:ximu} is obtained via the Cauchy interlacing theorem and the Gershgorin circle theorem. See Appendix~\ref{app:EigenApprox} for details.
\end{proof}

\begin{remark}
In view of~\eqref{eig:W} and~\eqref{bound:ximu} in Lemma~\ref{lemma:eig_approx}, we have $\forall~n \in \mathbb{N}$ and $\forall~i\in [n],$
\begin{align}
(1-a)^2\leq \mu_i\leq (1+a)^2.
\label{constant_eig_bound}
\end{align}
The key difference between the asymptotically stationary case ($a\in [0,1)$) and the nonstationary case ($a\geq 1$) is that, in the later case, $\mu_1$ decreases to zero as $n$ increases to infinity, see~\cite[Lemma]{hashimoto1980rate} and~\cite[Eq. (2)]{berger1970information}. In the asymptotically stationary case, $\mu_1$ is bounded away from zero according to~\eqref{constant_eig_bound}.
\end{remark}

\subsubsection{An estimation problem}
\label{subsubsec:Teit}

Our achievability proof relies on the analysis of the following parameter estimation problem. Given a source sequence $\rvec{u} = (u_1, \ldots, u_n)^\top$, drawn from the model~\eqref{def:GaussMarkov} with unknown $a$, the maximum likelihood estimate (MLE) of the parameter $a$ is (Appendix~\ref{app:MLEderivation})
\begin{align}
\hat{a}(\rvec{u}) = \frac{\sum_{i = 1}^{n-1} u_{i} u_{i+1}}{\sum_{i = 1}^{n-1} u_i^2}.
\label{MLEformula}
\end{align}
We show that the estimation error of the MLE decays exponentially in $n$ for any $a\in [0,1)$. 
\begin{theorem}
\label{thm:ECG}
Fix $a\in [0,1)$. Let $\eta \in (0,1)$. Then, there exists a universal constant $c>0$ and two constants $c_1, c_2 >0$ ($c_1$ and $c_2$ only depend on $a$, see~\eqref{c1c2} in Appendix~\ref{proof:thm_error_concentration} below) such that for all $n$ large enough, the estimation error of the MLE satisfies 
\begin{align}
\mathbb{P}\left[|\hat{a}(\rvec{U}) - a| > \eta \right] \leq 2\exp\left[-c\min\left( c_1 \eta^2 n, c_2 \eta n\right)\right]. \label{mle1}
\end{align}
\end{theorem}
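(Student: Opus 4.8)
The plan is to write the estimation error as a ratio of two quadratic forms in the innovation vector $\rvec{Z}=(Z_1,\dots,Z_n)^\top\sim\mathcal{N}(\rvec{0},\sigma^2\mathsf{I})$ and then control the numerator and the denominator separately by the Hanson--Wright inequality~\cite[Th. 1.1]{rudelson2013hanson}. First, substituting $U_{i+1}=aU_i+Z_{i+1}$ into~\eqref{MLEformula} gives the exact identity
\begin{align}
\hat{a}(\rvec{U})-a=\frac{N}{D},\qquad N\triangleq\sum_{i=1}^{n-1}U_iZ_{i+1},\qquad D\triangleq\sum_{i=1}^{n-1}U_i^2 .
\end{align}
Since $U_i=\sum_{j=1}^{i}a^{i-j}Z_j$ depends only on $Z_1,\dots,Z_i$, both $N$ and $D$ are quadratic forms in $\rvec{Z}$: writing $\tilde{\rvec{Z}}\triangleq(Z_1,\dots,Z_{n-1})^\top$ and $\tilde{\mathsf{A}}$ for the leading $(n-1)\times(n-1)$ principal block of $\mathsf{A}$ in~\eqref{def:A}, one has $D=\tilde{\rvec{Z}}^\top\mathsf{Q}_D\tilde{\rvec{Z}}$ with $\mathsf{Q}_D\triangleq(\tilde{\mathsf{A}}\tilde{\mathsf{A}}^\top)^{-1}$, and $N=\rvec{Z}^\top\mathsf{Q}_N\rvec{Z}$ with $\mathsf{Q}_N$ the symmetrization of the strictly off-diagonal matrix whose $(j,i+1)$ entry equals $a^{i-j}$ for $1\le j\le i\le n-1$. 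In particular $\text{tr}(\mathsf{Q}_N)=0$, so $\mathbb{E}[N]=0$, whereas $\bar{d}\triangleq\mathbb{E}[D]=\sum_{i=1}^{n-1}\sigma^2(1-a^{2i})/(1-a^2)\ge(n-1)\sigma^2$, i.e., $\bar{d}=\Theta(n)$.

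The crux is a pair of norm estimates that are uniform in $n$. For the denominator, Lemma~\ref{lemma:eig_approx} at dimension $n-1$ (equivalently~\eqref{constant_eig_bound}) shows that every eigenvalue $\mu$ of $\tilde{\mathsf{A}}^\top\tilde{\mathsf{A}}$, hence of $\tilde{\mathsf{A}}\tilde{\mathsf{A}}^\top$, satisfies $(1-a)^2\le\mu\le(1+a)^2$; thus $\|\mathsf{Q}_D\|\le(1-a)^{-2}$ and $\|\mathsf{Q}_D\|_F^2\le(n-1)(1-a)^{-4}$. For the numerator, the largest absolute row sum and column sum of $\mathsf{Q}_N$ are both at most $\sum_{k\ge0}a^k=(1-a)^{-1}$, so $\|\mathsf{Q}_N\|\le(1-a)^{-1}$, while $\|\mathsf{Q}_N\|_F^2\le\sum_{1\le j\le i\le n-1}a^{2(i-j)}\le(n-1)(1-a^2)^{-1}$. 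Summarizing, $\|\mathsf{Q}_D\|,\|\mathsf{Q}_N\|=O(1)$ and $\|\mathsf{Q}_D\|_F^2,\|\mathsf{Q}_N\|_F^2=O(n)$, with all constants depending only on $a$; this is precisely where the hypothesis $|a|<1$ enters.

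To finish, decompose the error event as $\{|\hat{a}(\rvec{U})-a|>\eta\}\subseteq\{D\le\bar{d}/2\}\cup\{|N|>\eta\bar{d}/2\}$, which is valid because $|N|=|\hat{a}-a|\,D>\eta\bar{d}/2$ on $\{|\hat{a}-a|>\eta,\,D>\bar{d}/2\}$. Hanson--Wright applied to $D$ with deviation $\bar{d}/2=\Theta(n)$ gives $\mathbb{P}[D\le\bar{d}/2]\le2\exp(-c'n)$ for some $c'=c'(a,\sigma^2)>0$, since both $(\bar{d}/2)^2/\|\mathsf{Q}_D\|_F^2$ and $(\bar{d}/2)/\|\mathsf{Q}_D\|$ are $\Theta(n)$. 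Hanson--Wright applied to the mean-zero form $N$ with deviation $s\triangleq\eta\bar{d}/2=\Theta(\eta n)$ gives $\mathbb{P}[|N|>s]\le2\exp(-c\min(s^2/\|\mathsf{Q}_N\|_F^2,\ s/\|\mathsf{Q}_N\|))\le2\exp(-c\min(c_1\eta^2n,c_2\eta n))$, because $s^2/\|\mathsf{Q}_N\|_F^2=\Theta(\eta^2n)$ and $s/\|\mathsf{Q}_N\|=\Theta(\eta n)$. Finally, since $\eta\in(0,1)$ forces $\min(c_1\eta^2n,c_2\eta n)\le\min(c_1,c_2)\,n$, after mildly shrinking the constants the term $2\exp(-c'n)$ is absorbed into $2\exp(-c\min(c_1\eta^2n,c_2\eta n))$, so the two exponentials merge into the asserted bound~\eqref{mle1} for all $n$ large enough (the phrase ``large enough'' and the prefactor $2$ cover the slack in the merging).

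I expect the main obstacle to be establishing the uniform-in-$n$ operator-norm bounds. For $\mathsf{Q}_D$ this is handed to us by Lemma~\ref{lemma:eig_approx}, but for $\mathsf{Q}_N$ one must verify that this finite, symmetrized, truncated matrix does not inflate the operator norm beyond $(1-a)^{-1}$; the row-/column-sum bound above is the clean route, although one could equally observe that $\mathsf{Q}_N$ is a principal submatrix of a symmetrized banded Toeplitz operator whose symbol has sup norm $(1-a)^{-1}$. A secondary care point is bookkeeping the $\min$ in the final bound --- tracking which of the sub-Gaussian ($\eta^2n$) and sub-exponential ($\eta n$) regimes of Hanson--Wright governs the numerator over which range of $\eta$, and checking that the lower-tail contribution from $D$ is genuinely negligible for large $n$; the Frobenius estimates and the evaluation of $\bar{d}$ are routine geometric-series computations.
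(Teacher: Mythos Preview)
Your proof is correct, but it takes a genuinely different route from the paper. The paper does \emph{not} separate $N$ and $D$; instead it observes that $\{\hat{a}(\rvec{U})-a>\eta\}=\{W(n,\eta)>0\}$ with $W(n,\eta)\triangleq N-\eta D=\rvec{Z}^\top\mathsf{Q}(n,\eta)\rvec{Z}$ a \emph{single} quadratic form, applies Hanson--Wright once with $t=-\mathbb{E}[W]=\Theta(\eta n)$, and bounds $\|\mathsf{Q}\|_F^2=\Theta(n)$ and $\|\mathsf{Q}\|=O(1)$ directly from the explicit matrix entries (via Gershgorin for the operator norm). The other tail $\{\hat{a}-a<-\eta\}$ is handled symmetrically, so the prefactor $2$ arises cleanly from the union of the two one-sided events. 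Your approach---split $N$ and $D$, apply Hanson--Wright separately, and use the inclusion $\{|N/D|>\eta\}\subseteq\{D\le\bar d/2\}\cup\{|N|>\eta\bar d/2\}$---is equally valid and perhaps conceptually tidier, and it lets you import the eigenvalue bound~\eqref{constant_eig_bound} from Lemma~\ref{lemma:eig_approx} to control $\|\mathsf{Q}_D\|$ instead of computing entries by hand. The price is the extra $\exp(-c'n)$ term from the lower tail of $D$, which you then have to fold into the final bound by shrinking $c_1,c_2$; this works because $\eta\in(0,1)$ caps $\min(c_1\eta^2,c_2\eta)$ by a constant, so the $\exp(-c'n)$ is always subdominant. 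Your remark that ``the prefactor $2$ covers the slack'' is a bit loose (you really get three exponentials, not two), but since the theorem fixes $\eta$ before taking $n$ large, the absorption goes through after a further constant adjustment. The paper's single-form approach sidesteps this bookkeeping entirely.
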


\begin{proof}
Appendix~\ref{proof:thm_ECG}.
\end{proof}

Finally, we present a strengthened version of Theorem~\ref{thm:ECG}, which is used in our achievability proof. Let $\alpha >0$ be a constant. Define $\eta_n$ as
\begin{align}
\eta_n \triangleq \sqrt{\frac{\alpha \log\log n}{n}}.\label{rhon}
\end{align}
\begin{theorem}
\label{thm:error_concentration}
Fix $a\in [0,1)$. Given a constant $\alpha>0$, let $\eta_n$ be in~\eqref{rhon}. Then, for all $n$ large enough, the estimation error of the MLE satisfies 
\begin{align}
\mathbb{P}\left[|\hat{a}(\rvec{U}) - a| > \eta_n \right] \leq \frac{2}{\left(\log n\right)^{\kappa\alpha}}, \label{mle2}
\end{align} 
where $\kappa$ is a constant given by 
\begin{align}
\kappa \triangleq \frac{c}{8(1-a^2)}, \label{kappa}
\end{align}
and $c>0$ is the constant in Theorem~\ref{thm:ECG}. 
\end{theorem}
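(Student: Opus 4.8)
The plan is to apply Theorem~\ref{thm:ECG} with the vanishing threshold $\eta = \eta_n = \sqrt{\alpha\log\log n / n}$ and then simplify the resulting bound. First I would check which of the two terms inside $\min(c_1\eta_n^2 n, c_2\eta_n n)$ dominates for large $n$. We have $c_1\eta_n^2 n = c_1\alpha\log\log n$, which grows like $\log\log n$, whereas $c_2\eta_n n = c_2\sqrt{\alpha}\sqrt{n\log\log n}$, which grows polynomially in $n$. Hence for all $n$ large enough the minimum is achieved by the first term, so $\min(c_1\eta_n^2 n, c_2\eta_n n) = c_1\alpha\log\log n$ eventually. Plugging this into~\eqref{mle1} gives, for all $n$ large enough,
\begin{align}
\mathbb{P}\left[|\hat{a}(\rvec{U}) - a| > \eta_n\right] \leq 2\exp\left(-c\, c_1\alpha\log\log n\right) = \frac{2}{(\log n)^{c c_1 \alpha}}.
\end{align}

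The remaining task is to identify the constant $c c_1$ with the $\kappa$ defined in~\eqref{kappa}, i.e.\ to show $c_1$ (the first of the two $a$-dependent constants produced in the proof of Theorem~\ref{thm:ECG}, displayed at~\eqref{c1c2} in Appendix~\ref{proof:thm_error_concentration}) can be taken to equal $\frac{1}{8(1-a^2)}$, so that $c c_1 = \kappa$. This is not really a separate argument but a bookkeeping step: one revisits the Hanson--Wright-based bound in the proof of Theorem~\ref{thm:ECG}, where the sub-Gaussian/sub-exponential parameters of the relevant quadratic forms in the Gaussian vector $\rvec{U}$ are controlled in terms of the operator and Frobenius norms of the associated matrices, and reads off that the coefficient multiplying $\eta^2 n$ in the exponent is (up to the universal constant $c$) $\frac{1}{8(1-a^2)}$; the factor $1-a^2$ enters through the stationary variance $\dmax = \sigma^2/(1-a^2)$ of the source, which bounds the operator norm of $\mathsf{\Sigma}_{\rvec{U}}$ uniformly in $n$ via~\eqref{constant_eig_bound}.

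I expect the main (and essentially only) obstacle to be this constant-tracking: one must make sure that the threshold $\eta_n\to 0$ does not interfere with the regime of validity of Theorem~\ref{thm:ECG} (which requires $\eta\in(0,1)$ and ``$n$ large enough''), and that the ``for all $n$ large enough'' qualifiers compose correctly — there is an $n$ beyond which $\eta_n<1$, an $n$ beyond which~\eqref{mle1} holds, and an $n$ beyond which the first term in the minimum dominates; the final statement holds past the maximum of these three thresholds. All of this is routine, and no new probabilistic input beyond Theorem~\ref{thm:ECG} is needed.
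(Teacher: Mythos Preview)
Your high-level structure is correct: plug in $\eta=\eta_n$, observe that the quadratic term $\eta_n^2 n = \alpha\log\log n$ is eventually the smaller of the two arguments of the minimum, and exponentiate. The gap is in the constant identification.

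You claim that the $c_1$ appearing in the statement of Theorem~\ref{thm:ECG} (defined in~\eqref{c1c2} as $c_1 = K_1^2/K_2'$ with $K_1 = \tfrac{1}{2(1-a^2)}$ and $K_2' = \tfrac{1}{1-a^2} + \tfrac{2(5+a^2)}{(1-a^2)^3}$) equals $\tfrac{1}{8(1-a^2)}$. It does not: for $a=0$ one gets $c_1 = 1/44$, not $1/8$. Black-boxing Theorem~\ref{thm:ECG} therefore yields only $\mathbb{P}[\cdot]\leq 2/(\log n)^{cc_1\alpha}$ with $cc_1<\kappa$, which is strictly weaker than~\eqref{mle2}. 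The constants $K_2', K_3'$ in~\eqref{K2p}--\eqref{K3p} were chosen to be valid uniformly over $\eta\in(0,1)$ and are too crude here.

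What the paper actually does is re-enter the proof of Theorem~\ref{thm:ECG} at the intermediate bound~\eqref{bound:general}, where the exponent still carries the $\eta$-dependent constants $K_2(\eta), K_3(\eta)$ from~\eqref{eqn:K2}--\eqref{eqn:K3}. Because $\eta_n\to 0$, these satisfy the sharper bounds $K_2\leq K_2''\triangleq \tfrac{2}{1-a^2}$ and $K_3\leq K_3''\triangleq \tfrac{2}{1-a}$ for all $n$ large enough. With $K_2''$ in place of $K_2'$ one gets $K_1^2/K_2'' = \tfrac{1}{8(1-a^2)}$ exactly, and the minimum in~\eqref{bound:general} is then $\tfrac{\eta_n^2 n}{8(1-a^2)} = \tfrac{\alpha\log\log n}{8(1-a^2)}$, yielding~\eqref{mle2} with the stated $\kappa$. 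So the missing step is not bookkeeping but a genuine sharpening of the Frobenius-norm bound that is only available because $\eta_n\to 0$; your intuition about the role of $1-a^2$ via the operator norm of $\mathsf{\Sigma}_{\rvec{U}}$ is not where the constant actually comes from.
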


\begin{proof}
Appendix~\ref{proof:thm_error_concentration}.
\end{proof}

See Section~\ref{subsec:ML_TS} for the construction of a typical set based on $\hat{a}(\rvec{u})$.
\section{Converse}
\label{sec:converse}

\begin{theorem}[Converse]
For the Gauss-Markov source~\eqref{def:GaussMarkov} with the constant $a\in [0,1)$, for any excess-distortion probability $\epsilon\in (0,1)$, and for any distortion threshold $d\in \LRB{0,\dmax}$, the minimum achievable source coding rate satisfies 
\begin{align}
R(n,d,\epsilon) \geq \rdf_{U}(d) + \sqrt{\frac{\mathbb{V}_U(d)}{n}}Q^{-1}(\epsilon) -\frac{\log n}{2n} + O\left(\frac{1}{n}\right), 
\label{eqn:converse_bd}
\end{align}
where $\rdf_{U}(d)$ is the rate-distortion function given in~\eqref{eqn:para_rate_inf}, and $\mathbb{V}_U(d)$ is the informational dispersion, defined in~\eqref{def:info_dispersion} and computed in~\eqref{limiting_var}.
\label{thm:converse}
\end{theorem}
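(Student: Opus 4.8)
\textbf{Proof plan for Theorem~\ref{thm:converse}.} The plan is to combine a one-shot converse bound expressed in terms of the $\mathsf{d}$-tilted information with a Berry--Esseen-type refinement, and then control the gap between the $n$-th order quantities (which appear naturally in the one-shot bound) and the limiting quantities $\mathbb{R}_U(d)$ and $\mathbb{V}_U(d)$ using the nonasymptotic eigenvalue estimates of Theorem~\ref{thm:n_LimitingThm} and Lemma~\ref{lemma:eig_approx}. First I would invoke a converse that holds for every $(n, M, d, \epsilon)$ code: any such code must satisfy, roughly,
\begin{align}
\log M \geq \jmath_{\rvec{U}}(\rvec{U}, d) - \log\frac{1}{\gamma}
\label{oneshotconverse}
\end{align}
with probability at least $1 - \epsilon - \gamma$ (this is the standard lossy-source converse of Kostina--Verd\'u, e.g.\ \cite[Th.~7]{kostina2012fixed}, applied to the source vector $\rvec{U}$). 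Equivalently, using \eqref{eqiv:DTUX}, one works with the decorrelated source $\rvec{X}$ and the explicit expression~\eqref{expression:dtilted}.

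The second step is to analyze the distribution of $\jmath_{\rvec{X}}(\rvec{X}, d)$ via~\eqref{expression:dtilted}. Since $X_i \sim \mathcal{N}(0, \sigma_i^2)$ are independent, each summand $\frac{\min(\theta_n,\sigma_i^2)}{2\theta_n}\big(\tfrac{X_i^2}{\sigma_i^2} - 1\big) + \tfrac12\log\tfrac{\max(\theta_n,\sigma_i^2)}{\theta_n}$ is an independent (though not identically distributed) random variable whose mean summed over $i$ and normalized by $n$ is $\mathbb{R}_{\rvec{U}}(n,d)$ and whose variance summed and normalized is $\mathbb{V}_{\rvec{U}}(n,d)$. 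Applying the Berry--Esseen theorem to this triangular array of independent summands — the third absolute moments are uniformly bounded because $\sigma_i^2 \in [\sigma^2/(1+a)^2, \sigma^2/(1-a)^2]$ by~\eqref{constant_eig_bound}, so the Lyapunov ratio is $O(1/\sqrt{n})$ — gives
\begin{align}
\mathbb{P}\left[\jmath_{\rvec{U}}(\rvec{U}, d) \leq n\mathbb{R}_{\rvec{U}}(n,d) + \sqrt{n\mathbb{V}_{\rvec{U}}(n,d)}\,t\right] = Q(-t) + O\!\left(\tfrac{1}{\sqrt{n}}\right)
\label{berryesseen}
\end{align}
uniformly in $t$. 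Choosing $t$ appropriately and $\gamma = \Theta(1/\sqrt{n})$ in~\eqref{oneshotconverse} yields $R(n,d,\epsilon) \geq \mathbb{R}_{\rvec{U}}(n,d) + \sqrt{\mathbb{V}_{\rvec{U}}(n,d)/n}\,Q^{-1}(\epsilon) - \tfrac{\log n}{2n} + O(1/n)$, where the $\tfrac{\log n}{2n}$ absorbs the $\log\frac{1}{\gamma}$ cost.

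The third step is to replace the $n$-th order quantities $\mathbb{R}_{\rvec{U}}(n,d)$ and $\mathbb{V}_{\rvec{U}}(n,d)$ by their limits $\mathbb{R}_U(d)$ and $\mathbb{V}_U(d)$ at cost $O(1/n)$. For $\mathbb{R}_{\rvec{U}}(n,d)$ this is classical but needs a nonasymptotic rate; I would apply Theorem~\ref{thm:n_LimitingThm} with $F$ taken to be a Lipschitz, bounded, non-decreasing truncation of $t\mapsto \max(0, \tfrac12\log(t/\theta))$ (and a companion argument for the water-level equation~\eqref{eqn:para_d}/\eqref{eqn:para_d_inf} to show $|\theta_n - \theta| = O(1/n)$), so that $|\mathbb{R}_{\rvec{U}}(n,d) - \mathbb{R}_U(d)| = O(1/n)$. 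The same machinery applied to the integrand $\min[1, (S(w)/\theta)^2]$ — again a bounded Lipschitz function of the eigenvalue, monotone in the relevant regime, or handled piecewise — gives $|\mathbb{V}_{\rvec{U}}(n,d) - \mathbb{V}_U(d)| = O(1/n)$, hence $\sqrt{\mathbb{V}_{\rvec{U}}(n,d)/n} = \sqrt{\mathbb{V}_U(d)/n} + O(1/n)$ (using $\mathbb{V}_U(d) > 0$ for $d \in (0,\dmax)$). Substituting these into the bound from step two gives~\eqref{eqn:converse_bd}.

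\textbf{Main obstacle.} The delicate point is step three: the functions one must integrate against the eigenvalues, $\max(0,\tfrac12\log(t/\theta))$ and $\min[1,(t/\theta)^2]$, are only piecewise smooth (the $\max$ with $0$ creates a kink at $t = \theta$) and not globally Lipschitz near $t = 0$, so Theorem~\ref{thm:n_LimitingThm} does not apply verbatim. I expect to handle this by (i) using~\eqref{constant_eig_bound} to confine all eigenvalues to a compact interval bounded away from $0$, on which these functions \emph{are} Lipschitz, and (ii) splitting each function at the kink into monotone Lipschitz pieces so that Theorem~\ref{thm:n_LimitingThm} can be applied to each piece separately, with the error in locating the kink controlled by $|\theta_n - \theta| = O(1/n)$. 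A secondary technical nuisance is verifying that the Berry--Esseen constant in~\eqref{berryesseen} is genuinely uniform despite the non-identical summands and the $n$-dependence of $\theta_n$ through the coefficients; this follows from the uniform two-sided bounds on $\sigma_i^2$ and on $\theta_n$, which keep all moments uniformly bounded and the variance $\mathbb{V}_{\rvec{U}}(n,d)$ uniformly bounded below.
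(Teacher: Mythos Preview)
Your proposal is correct and follows the same overall strategy as the paper: the general converse of \cite[Th.~7]{kostina2012fixed} (Theorem~\ref{thm:general_converse} here), Berry--Esseen applied to the decomposition~\eqref{expression:dtilted}, and Theorem~\ref{thm:n_LimitingThm} to pass from $n$-th order to limiting rate and dispersion at $O(1/n)$ cost. The one organizational difference is that the paper does not apply Berry--Esseen directly to $\jmath_{\rvec{X}}(\rvec{X},d)$ (whose summands involve the $n$-dependent water level $\theta_n$); instead it introduces an auxiliary distortion $d_n$ in~\eqref{def:dn} matched to the \emph{fixed} limiting water level $\theta$, proves a concentration inequality (Theorem~\ref{thm:tilted_info_concentrate}) showing $|\jmath_{\rvec{X}}(\rvec{X},d) - \jmath_{\rvec{X}}(\rvec{X},d_n)| = O(1)$ with probability $1 - O(1/n)$, and then applies Berry--Esseen to $\jmath_{\rvec{X}}(\rvec{X},d_n)$. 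Your route---apply Berry--Esseen first with $\theta_n$, then replace $\mathbb{R}_{\rvec{U}}(n,d)$ and $\mathbb{V}_{\rvec{U}}(n,d)/n$ by their limits via the deterministic estimate $|\theta_n - \theta| = O(1/n)$---is slightly more direct and avoids the probabilistic comparison of Theorem~\ref{thm:tilted_info_concentrate}; the water-level estimate you need is in fact established inside the paper's proof of that theorem (see~\eqref{diff:theta}). Your ``main obstacle'' is also milder than you fear: on the interval~\eqref{interval_t} guaranteed by~\eqref{constant_eig_bound}, the functions $t\mapsto\max(0,\tfrac12\log(t/\theta))$ and $t\mapsto\tfrac12\min(1,(t/\theta)^2)$ are already bounded, Lipschitz, and non-decreasing, so Theorem~\ref{thm:n_LimitingThm} applies to them verbatim, exactly as the paper does in~\eqref{approx:rate}--\eqref{approx:dis_2}.
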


We present two converse proofs in the following. The first one is a volumetric argument; while the second one relies on a general converse derived in~\cite[Th. 7]{kostina2012fixed} and a new concentration result on the $\mathsf{d}$-tilted information of the Gauss-Markov source. 

\subsection{A geometric proof}
\label{converse:geo_pf}
Geometrically, any $(n, M, d, \epsilon)$ code induces a covering of $(\mathbb{R}^n, P_{\rvec{U}}$): the union of $d$-balls centered at the codewords have probability mass at least $1-\epsilon$. Converting the underlying probability to $P_{\rvec{Z}}$ and using the symmetry of $\mathcal{N}(\rvec{0}, \sigma^2\mathsf{I})$, we obtain the following lower bound on the number of codewords $M$. The argument relies on $\det\mathsf{A} = 1$, where $\mathsf{A}$ is in~\eqref{def:A}.

\begin{theorem}
Given $\epsilon \in (0,1)$ and $d\in (0,\dmax)$, the size of any $(n,M,d,\epsilon)$ code for the Gauss-Markov source~\eqref{def:GaussMarkov} must satisfy 
\begin{align}
M \geq \left(\frac{r(n,\epsilon)}{d}\right)^{n/2},
\label{eqn:geo_bd}
\end{align}
where $r(n,\epsilon)$ is such that 
\begin{align}
\mathbb{P}(G < n\cdot r(n,\epsilon)/\sigma^2) = 1-\epsilon, \label{r_n_epsilon}
\end{align}
and $G$ is a random variable distributed according to the $\chi^2$-distribution with $n$ degrees of freedom.
\label{thm:geometric}
\end{theorem}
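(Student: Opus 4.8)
The plan is to exploit the fact that $\det \mathsf{A} = 1$, so the linear map $\rvec{u} \mapsto \mathsf{A}\rvec{u} = \rvec{z}$ is volume-preserving, and that $\rvec{Z} \sim \mathcal{N}(\rvec{0}, \sigma^2 \mathsf{I})$ is spherically symmetric. First I would fix any $(n,M,d,\epsilon)$ code $(\mathsf{f}_n, \mathsf{g}_n)$ with codewords $\rvec{v}_1, \ldots, \rvec{v}_M \in \mathbb{R}^n$. By definition, the set $\mathcal{G} \triangleq \{\rvec{u} : \mathsf{d}(\rvec{u}, \mathsf{g}_n(\mathsf{f}_n(\rvec{u}))) < d\}$ has $P_{\rvec{U}}(\mathcal{G}) \geq 1-\epsilon$, and $\mathcal{G}$ is contained in the union $\bigcup_{j=1}^M \mathcal{B}(\rvec{v}_j, d)$ of $M$ distortion-$d$ balls, each of which is a Euclidean ball of radius $\sqrt{nd}$ and hence of Lebesgue volume $\mathrm{vol}(\mathcal{B}(\rvec{v}_j,d)) = (\pi n d)^{n/2}/\Gamma(n/2+1)$, independent of the center.

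Next I would transport the measure through $\mathsf{A}$. Writing $\rvec{Z} = \mathsf{A}\rvec{U}$, for any measurable $\mathcal{S} \subseteq \mathbb{R}^n$ we have $P_{\rvec{U}}(\mathcal{S}) = P_{\rvec{Z}}(\mathsf{A}\mathcal{S})$, and since $|\det \mathsf{A}| = 1$, $\mathrm{vol}(\mathsf{A}\mathcal{S}) = \mathrm{vol}(\mathcal{S})$; in particular each image $\mathsf{A}\,\mathcal{B}(\rvec{v}_j,d)$ is an ellipsoid of the same volume $(\pi n d)^{n/2}/\Gamma(n/2+1)$. Let $\mathcal{G}' \triangleq \mathsf{A}\mathcal{G} \subseteq \bigcup_j \mathsf{A}\mathcal{B}(\rvec{v}_j,d)$, so $P_{\rvec{Z}}(\mathcal{G}') \geq 1-\epsilon$. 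The key step is now the isoperimetric/shifting observation: among all sets of a given Lebesgue volume, a ball centered at the origin maximizes the probability under the spherically-symmetric, unimodal density of $\mathcal{N}(\rvec{0},\sigma^2\mathsf{I})$. Hence if $\mathcal{B}(\rvec{0}, \rho)$ denotes the origin-centered ball with $\mathrm{vol}(\mathcal{B}(\rvec{0},\rho)) = M \cdot (\pi n d)^{n/2}/\Gamma(n/2+1)$ (an upper bound on $\mathrm{vol}(\mathcal{G}')$), then $P_{\rvec{Z}}(\mathcal{B}(\rvec{0},\rho)) \geq P_{\rvec{Z}}(\mathcal{G}') \geq 1-\epsilon$. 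Computing $\|\rvec{Z}\|_2^2/\sigma^2 \sim \chi^2_n$, the event $\rvec{Z} \in \mathcal{B}(\rvec{0},\rho)$ is $\{\|\rvec{Z}\|_2^2 \leq \rho^2\} = \{G \leq \rho^2/\sigma^2\}$ where $G \sim \chi^2_n$. By the definition of $r(n,\epsilon)$ in~\eqref{r_n_epsilon}, $\mathbb{P}(G \leq n r(n,\epsilon)/\sigma^2) = 1-\epsilon$, so $P_{\rvec{Z}}(\mathcal{B}(\rvec{0},\rho)) \geq 1-\epsilon$ forces $\rho^2/\sigma^2 \geq n r(n,\epsilon)/\sigma^2$, i.e., $\rho^2 \geq n\, r(n,\epsilon)$.

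Finally I would convert the volume equation $\mathrm{vol}(\mathcal{B}(\rvec{0},\rho)) = M(\pi n d)^{n/2}/\Gamma(n/2+1)$, which gives $(\pi \rho^2)^{n/2}/\Gamma(n/2+1) = M(\pi n d)^{n/2}/\Gamma(n/2+1)$, hence $\rho^2 = M^{2/n}\, n d$. Combining with $\rho^2 \geq n\, r(n,\epsilon)$ yields $M^{2/n}\, n d \geq n\, r(n,\epsilon)$, i.e., $M \geq (r(n,\epsilon)/d)^{n/2}$, which is~\eqref{eqn:geo_bd}. The main obstacle — and the step that needs the most care — is the spherical-rearrangement inequality asserting that the origin-centered ball maximizes Gaussian probability among sets of fixed volume; this is a standard fact (a consequence of the bathtub principle applied to the radially decreasing Gaussian density, or of the Riesz rearrangement inequality), but I would state it cleanly as a lemma and verify that measurability of $\mathcal{G}$ (hence $\mathcal{G}'$) is not an issue, since $\mathcal{G}$ is a finite union of measurable preimages intersected with balls. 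A secondary subtlety is handling the strict-versus-nonstrict inequality in the distortion constraint and the fact that $\mathrm{vol}(\mathcal{G}')$ may be strictly less than the sum of ball volumes (overlaps only help, since we upper-bound $\mathrm{vol}(\mathcal{G}')$ by $M$ times the single-ball volume), neither of which affects the final bound.
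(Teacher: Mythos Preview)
Your proposal is correct and follows essentially the same volumetric argument as the paper: transport through the volume-preserving map $\mathsf{A}$, then invoke the spherical symmetry of $\rvec{Z}$ to compare with a centered ball. The only cosmetic difference is that the paper phrases the rearrangement step in its dual form---``the centered ball $\mathcal{B}(\rvec{0},r(n,\epsilon))$ has the smallest volume among all sets of $P_{\rvec{Z}}$-probability at least $1-\epsilon$''---and then bounds $M\geq \mathrm{Vol}(\mathsf{A}\mathcal{B})/\mathrm{Vol}(\mathcal{B}(\rvec{0},d))$ directly, whereas you fix the volume and maximize the probability; the two formulations are trivially equivalent.
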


\begin{proof}[Proof of Theorem~\ref{thm:geometric}]
Appendix~\ref{app:proofGeoBd}.
\end{proof}

\begin{remark}
Theorem~\ref{thm:geometric}, which applies to the Gauss-Markov source, parallels~\cite[Th. 36]{kostina2012fixed}, which applies to the i.i.d. Gaussian source. Both proofs rely on the volumetric method, though the proof of~Theorem~\ref{thm:geometric} requires additional arguments related to linear transformations of the underlying space. Theorem~\ref{thm:geometric} yields the optimal second-order coding rate for the Gauss-Markov source only in the low distortion regime (as we will see in the proof of Theorem~\ref{thm:converse} below), while an analysis of~\cite[Th. 36]{kostina2012fixed} gives the optimal second-order coding rate for the i.i.d. Gaussian source of any distortion~\cite[Th. 40]{kostina2012fixed}.
\end{remark}

Equipped with Theorem~\ref{thm:geometric}, we are ready to prove the converse in Theorem~\ref{thm:converse} for $d\in (0,d_c]$.

\begin{proof}[Proof of Theorem~\ref{thm:converse} below the critical distortion]
Applying the Berry-Esseen Theorem in Appendix~\ref{app:classicalthms} to~\eqref{r_n_epsilon} yields
\begin{align}
r(n, \epsilon) \geq \sigma^2 \left[ 1 + \sqrt{\frac{2}{n}}Q^{-1}\left(\epsilon + \frac{\bec}{\sqrt{n}}\right) \right]. \label{lowerboundrnepsilon}
\end{align}
Plugging~\eqref{lowerboundrnepsilon} into~\eqref{eqn:geo_bd} and taking logarithms, we obtain
\begin{align}
R(n,d, \epsilon) \geq \frac{1}{2}\log\frac{\sigma^2}{d} + \sqrt{\frac{1}{2n}}Q^{-1}(\epsilon) + O\LRB{\frac{1}{n}}, 
\label{converse:low_dis}
\end{align}
where we use the Taylor expansions of $\log (1 + x)$ and the inverse $Q$-function. The converse bound~\eqref{converse:low_dis} holds for any $\epsilon\in (0, 1)$ and $d\in (0, \dmax)$. By~\eqref{eqn:rel_UZ} and~\eqref{comparison:dispersions}, we see that~\eqref{converse:low_dis} is the same as~\eqref{eqn:converse_bd} for $d\in (0,d_c]$, up to the second-order term. In addition,~\eqref{converse:low_dis} is slightly stronger than~\eqref{eqn:converse_bd} in the third-order term. For $d\in (d_c, \dmax)$,~\eqref{converse:low_dis} is not tight, even in the first order since $\rdf_{U}(d)>\frac{1}{2}\log\frac{\sigma^2}{d}$ for $d\in(d_c, \dmax)$, by~\eqref{eqn:rel_UZ} and~\eqref{RDF:iidGaussian}.
\end{proof}

\begin{remark}
The converse~\eqref{converse:low_dis} holds for the general Gaussian AR processes defined in~\eqref{eqn:GeneralGaussianAR}. The proof stays the same, except that the matrix $\mathsf{A}$ in~\eqref{def:A} is replaced by
\begin{align}
\mathsf{A} \triangleq \begin{bmatrix}
1  & 0  &  0  & \ldots & 0 \\
-a_1 & 1  & 0   & \ldots & 0\\
-a_2  & -a_1 & 1   & \ldots & 0\\
\vdots & \ddots & \ddots & \ddots & \vdots\\
-a_{n-1} & \ldots & -a_2 & -a_1 & 1
\end{bmatrix}.
\label{def:A_general}
\end{align}
\end{remark}

\subsection{Converse proof}
\label{converse:gen_pf}
The second proof is based on a general converse by Kostina and Verd{\'u}~\cite{kostina2012fixed}, restated here for convenience, and a concentration result which bounds the difference between $\jmath_{\rvec{X}}(\rvec{X}, d)$ and its approximation $\jmath_{\rvec{X}}(\rvec{X}, d_n)$, for $d_n$ defined in~\eqref{def:dn} below.
\begin{theorem}[{\cite[Th. 7]{kostina2012fixed}}]
Fix $d\in\LRB{0,\dmax}$. Any $(n, M, d, \epsilon)$ code must satisfy 
\begin{align}
\epsilon \geq \sup_{\gamma\geq 0} \mathbb{P}\left[\jmath_{\rvec{X}}(\rvec{X},d)\geq \log M +\gamma\right] - \exp(-\gamma).
\label{converse:general}
\end{align}
\label{thm:general_converse}
\end{theorem}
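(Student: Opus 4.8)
\emph{Proof strategy (following \cite[Th.~7]{kostina2012fixed}).}
The plan is to reproduce the Kostina--Verd\'u converse, specialized to the decorrelated source $\rvec{X}\sim\mathcal{N}(\rvec{0},\sigma^2\mathsf{\Lambda})$; by the equivalence of $(n,M,d,\epsilon)$ codes for $\rvec{U}$ and $\rvec{X}$ established in Section~\ref{subsec:pfoutline} together with~\eqref{eqiv:DTUX}, it suffices to argue in terms of $\rvec{X}$. Fix any $(n,M,d,\epsilon)$ code with decoder $\mathsf{g}_n$ and write its codewords as $\rvec{c}_m\triangleq\mathsf{g}_n(m)$, $m\in[M]$. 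Since a successful encoding places some $\rvec{c}_m$ within distortion $d$ of $\rvec{X}$, the no-excess-distortion event is contained in $A\triangleq\bigcup_{m=1}^M\left\{\dis{\rvec{X}}{\rvec{c}_m}\leq d\right\}$, so $1-\epsilon\leq\prob{A}$. For any $\gamma\geq 0$, split according to the size of $\jmath_{\rvec{X}}(\rvec{X},d)$:
\begin{align*}
\prob{A}\leq\prob{\jmath_{\rvec{X}}(\rvec{X},d)<\log M+\gamma}+\prob{A\cap\left\{\jmath_{\rvec{X}}(\rvec{X},d)\geq\log M+\gamma\right\}}.
\end{align*}
The whole argument then reduces to showing the second term is at most $e^{-\gamma}$: combined with $1-\epsilon\leq\prob{A}$ and rearranged, this gives $\epsilon\geq\prob{\jmath_{\rvec{X}}(\rvec{X},d)\geq\log M+\gamma}-e^{-\gamma}$, and taking the supremum over $\gamma\geq 0$ finishes the proof.

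To bound that second term, I would first apply the union bound $\mathbbm{1}\{A\}\leq\sum_{m=1}^M\mathbbm{1}\left\{\dis{\rvec{X}}{\rvec{c}_m}\leq d\right\}$, and then two elementary pointwise bounds valid for every realization, the second one using $\lambda^\star=-\mathbb{R}'_{\rvec{U}}(n,d)\geq 0$ (cf.~\eqref{lambda_double_star}):
\begin{align*}
\mathbbm{1}\left\{\jmath_{\rvec{X}}(\rvec{X},d)\geq\log M+\gamma\right\}&\leq\exp\left(\jmath_{\rvec{X}}(\rvec{X},d)-\log M-\gamma\right),\\
\mathbbm{1}\left\{\dis{\rvec{X}}{\rvec{c}_m}\leq d\right\}&\leq\exp\left(\lambda^\star n\left(d-\dis{\rvec{X}}{\rvec{c}_m}\right)\right).
\end{align*}
Multiplying these and taking expectations yields
\begin{align*}
\prob{A\cap\left\{\jmath_{\rvec{X}}(\rvec{X},d)\geq\log M+\gamma\right\}}\leq\frac{e^{-\gamma}}{M}\sum_{m=1}^M\myE{\exp\left(\jmath_{\rvec{X}}(\rvec{X},d)+\lambda^\star n\left(d-\dis{\rvec{X}}{\rvec{c}_m}\right)\right)}.
\end{align*}

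The crux, and the only point where the structure of the $\mathsf{d}$-tilted information enters, is that each of these $M$ expectations is at most $1$, uniformly over the fixed vector $\rvec{c}_m$. Let $(\rvec{X},\rvec{Y}^\star)$ be the RDF-achieving pair for $\mathbb{R}_{\rvec{X}}(n,d)$ in~\eqref{eqn:nthorderOp}, so $\rvec{Y}^\star$ has the product form~\eqref{Ystar}. Unwinding the definition~\eqref{def:d_tilted_def} of $\jmath_{\rvec{X}}$ --- equivalently, combining~\eqref{relation:generaltilted} and~\eqref{rel:dtiltedandtilted} with $\delta^\star=\lambda^\star$ --- gives, for all $\rvec{x},\rvec{y}\in\mathbb{R}^n$,
\begin{align*}
\exp\left(\jmath_{\rvec{X}}(\rvec{x},d)+\lambda^\star n\left(d-\dis{\rvec{x}}{\rvec{y}}\right)\right)=\frac{dP_{\rvec{Y}^\star|\rvec{X}=\rvec{x}}}{dP_{\rvec{Y}^\star}}(\rvec{y}).
\end{align*}
Integrating the right-hand side over $\rvec{x}\sim P_{\rvec{X}}$ and using $P_{\rvec{Y}^\star}=\int P_{\rvec{Y}^\star|\rvec{X}=\rvec{x}}\,dP_{\rvec{X}}(\rvec{x})$ then gives $\myE{\exp\left(\jmath_{\rvec{X}}(\rvec{X},d)+\lambda^\star n\left(d-\dis{\rvec{X}}{\rvec{y}}\right)\right)}\leq 1$ for every $\rvec{y}\in\mathbb{R}^n$; the Gaussian structure (mutually absolutely continuous densities, everywhere positive on the active coordinates) is what upgrades this from ``$\leq 1$ for $P_{\rvec{Y}^\star}$-almost every $\rvec{y}$'' to ``$\leq 1$ for all $\rvec{y}$''. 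Plugging back, the sum over $m$ is at most $M\cdot\tfrac{e^{-\gamma}}{M}=e^{-\gamma}$, exactly as needed.

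The only step that is not routine is this last exponential-moment property of $\jmath_{\rvec{X}}$; everything else is a union bound plus the two one-line inequalities above and some rearrangement. Verifying the displayed identity requires the KKT characterization of the RDF-achieving conditional for the parallel-Gaussian problem (the content of~\eqref{relation:generaltilted}) and tracking the $n$-normalization of $\lambda^\star$, while the passage from ``almost every $\rvec{y}$'' to ``every $\rvec{y}$'' uses continuity of the Gaussian densities; both are standard in this setting (see~\cite[Eqs.~(17)--(19)]{kostina2012fixed}), so I anticipate no genuine obstacle --- only careful bookkeeping.
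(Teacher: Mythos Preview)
The paper does not supply its own proof of this statement: it is quoted verbatim from \cite[Th.~7]{kostina2012fixed} and merely ``restated here for convenience,'' so there is nothing in the paper to compare against. Your proposal faithfully reconstructs the standard Kostina--Verd\'u argument from that reference, and the overall structure (union bound over codewords, splitting on $\jmath_{\rvec{X}}$, the exponential-moment inequality $\myE{\exp(\jmath_{\rvec{X}}(\rvec{X},d)+\lambda^\star n(d-\dis{\rvec{X}}{\rvec{y}}))}\leq 1$) is correct.

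One small point worth tightening: your justification for upgrading ``$\leq 1$ for $P_{\rvec{Y}^\star}$-a.e.\ $\rvec{y}$'' to ``$\leq 1$ for all $\rvec{y}$'' via mutual absolute continuity does not quite cover the inactive coordinates, where $Y_i^\star\equiv 0$ is a point mass and the Radon--Nikodym identity is unavailable off that point. The cleanest fix is a direct computation on each inactive coordinate: for $X_i\sim\mathcal{N}(0,\sigma_i^2)$ with $\sigma_i^2\leq\theta_n$, the Gaussian MGF formula gives
\[
\myE{\exp\!\left(-\lambda^\star(X_i-y_i)^2\right)}=\frac{1}{\sqrt{1+2\lambda^\star\sigma_i^2}}\exp\!\left(-\frac{\lambda^\star y_i^2}{1+2\lambda^\star\sigma_i^2}\right)\leq \myE{\exp(-\lambda^\star X_i^2)},
\]
with equality only at $y_i=0$. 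Combined with the Radon--Nikodym argument on the active coordinates, this yields the needed inequality for every $\rvec{y}\in\mathbb{R}^n$. This is bookkeeping rather than a gap, and is implicit in \cite[Eq.~(18)]{kostina2012fixed}.
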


The converse bound in Theorem~\ref{thm:general_converse} above provides a lower bound on $\epsilon$ for any $(n, M, d, \epsilon)$ code using the $\mathsf{d}$-tilted information, and is used to derive a converse result on the dispersion of the stationary memoryless sources in~\cite[Eq. (103)-(106)]{kostina2012fixed}. The key step in the proof of~\cite[Eq. (103)-(106)]{kostina2012fixed} is to write the $\mathsf{d}$-tilted information as a sum of $n$ i.i.d. random variables, to which the Berry-Esseen Theorem is applied.  

For the Gauss-Markov source $\rvec{X}$, using~\eqref{def:d_tilted_def},~\eqref{eqn:para_d} and~\eqref{Ystar}, we can write the $\mathsf{d}$-tilted information $\jmath_{\rvec{X}}(\rvec{X}, d)$ as a sum of $n$ independent (but not identical) random variables:
\begin{align}
\jmath_{\rvec{X}}(\rvec{X}, d) = \sum_{i = 1}^n \jmath_{X_i}(X_i, \min(\theta_n, \sigma_i^2)), \label{nsum}
\end{align}
where $\theta_n$ is given in~\eqref{eqn:para_d}. Indeed, \eqref{nsum} is further simplified to~\eqref{expression:dtilted} in the proof of~Theorem~\ref{thm:limiting_variance}. However, it is hard to conduct nonasymptotic analysis using~\eqref{nsum} since understanding the evolution of both $\theta_n$ and $\sigma_i^2$'s as $n$ grows in~\eqref{nsum} is challenging. Therefore, we approximate $\jmath_{\rvec{X}}(\rvec{X}, d)$ using 
\begin{align}
\jmath_{\rvec{X}}(\rvec{X}, d_n) = \sum_{i = 1}^n \jmath_{X_i}(X_i, \min(\theta, \sigma_i^2)), \label{ssum}
\end{align}
where
\begin{align}
d_n \triangleq \frac{1}{n} \sum_{i  = 1}^n \min(\theta, \sigma_i^2),\label{def:dn}
\end{align}
and $\theta$ is the water level matched to $d$ via the limiting reverse waterfilling~\eqref{eqn:para_d_inf}. Then, $\theta$ does not dependent on $n$ in~\eqref{ssum}. Since our Theorem~\ref{thm:n_LimitingThm} and Lemma~\ref{lemma:eig_approx} in Section~\ref{subsubsec:Teig} capture the evolution of $\sigma_i^2$'s as $n$ grows,~\eqref{ssum} is easier to analyze than~\eqref{nsum} in the nonasymptotic regime. Throughout the paper, the relations among a given distortion $d$, the water levels $\theta$, $\theta_n$, and the distortion $d_n$ defined in~\eqref{def:dn}, are $
\theta_n \stackrel{\eqref{eqn:para_d}}{\longleftrightarrow} d \stackrel{\eqref{eqn:para_d_inf}}{\longleftrightarrow} \theta \stackrel{\eqref{eqn:para_d}}{\longleftrightarrow} d_n.$ Note that there is no direct reverse waterfilling relation between $d_n$ in~\eqref{def:dn} and $\theta_n$ in~\eqref{eqn:para_d}. As shown by our concentration result Theorem~\ref{thm:tilted_info_concentrate} in the following, the approximation $\jmath_{\rvec{X}}(\rvec{X}, d_n)$ stays within a constant from $\jmath_{\rvec{X}}(\rvec{X}, d)$ with probability at least $1 - O\left(\frac{1}{n}\right)$.
 
\begin{theorem}[Approximation of the $\mathsf{d}$-tilted information]
For any $d\in \LRB{0,\dmax}$, let $\theta>0$ be the water level matched to $d$ via the limiting reverse waterfilling~\eqref{eqn:para_d_inf}. Suppose we have a sequence of distortion levels $d_n\in (0,\dmax)$ with the property that there exists a constant $h_1 > 0$ such that for all $n$ large enough, 
\begin{align}
\lrabs{d-d_n}\leq \frac{h_1}{n}.
\label{assumption:diff_concentrate}
\end{align}
Then, there exists a constant $\tilde{c}\in (0,1)$ such that for any $u>\frac{2h_1}{\tilde{c}\theta}$ and all $n$ large enough, we have 
\begin{align}
\prob{\lrabs{\jmath_{\rvec{X}}\LRB{\rvec{X},d} - \jmath_{\rvec{X}}\LRB{\rvec{X},d_n}}\leq u} \geq 1 - \frac{1}{n\left(\frac{\tilde{c}\theta u}{2h_1} - 1\right)^2}.
\end{align}
\label{thm:tilted_info_concentrate}
\end{theorem}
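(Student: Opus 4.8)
The plan is to pass to the decorrelated source $\rvec{X}$, use the closed-form expression~\eqref{expression:dtilted} for the $\mathsf{d}$-tilted information, and split the difference into a deterministic shift plus a centered fluctuation whose variance is $O(1/n)$. Let $\theta_n(t)$ denote the $n$-th order water level matched to distortion $t$ via~\eqref{eqn:para_d}, write $\theta_n\triangleq\theta_n(d)$ and $\tilde\theta_n\triangleq\theta_n(d_n)$ (in the main application where $d_n$ is given by~\eqref{def:dn} one has $\tilde\theta_n=\theta$ exactly), set $W_i\triangleq X_i^2/\sigma_i^2-1$ — independent with mean $0$ and variance $2$ since $X_i^2/\sigma_i^2\sim\chi_1^2$ — and put $b_i\triangleq \frac{\min(\theta_n,\sigma_i^2)}{2\theta_n}-\frac{\min(\tilde\theta_n,\sigma_i^2)}{2\tilde\theta_n}$. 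Because the random part of~\eqref{expression:dtilted} is linear in the $X_i^2$, subtracting two copies of~\eqref{expression:dtilted} gives
\begin{align}
\jmath_{\rvec{X}}(\rvec{X},d)-\jmath_{\rvec{X}}(\rvec{X},d_n)=A_n+\sum_{i=1}^n b_i W_i,\qquad A_n\triangleq n\big(\mathbb{R}_{\rvec{X}}(n,d)-\mathbb{R}_{\rvec{X}}(n,d_n)\big),
\end{align}
where $A_n=\myE{\jmath_{\rvec{X}}(\rvec{X},d)-\jmath_{\rvec{X}}(\rvec{X},d_n)}$ (take expectations in~\eqref{expression:dtilted} and use~\eqref{eqn:para_r_n}), and $\sum_i b_iW_i$ has mean $0$ and $\var{\sum_i b_iW_i}=2\sum_i b_i^2$.

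Next I would bound $|\theta_n-\tilde\theta_n|$, $|A_n|$, and $\sum_i b_i^2$. Since $d<\dmax$ is equivalent to $\theta<\theta_{\max}$, Theorem~\ref{thm:n_LimitingThm} and Lemma~\ref{lemma:eig_approx} yield that, for all $n$ large, $\theta_n\to\theta$, $\tilde\theta_n\to\theta$, and the active fraction $p_n\triangleq\frac1n\#\{i:\sigma_i^2>\theta_n\}$ is bounded below by a positive constant $c_0$ depending only on $a,\sigma^2,d$ (apply~\eqref{eqn:n_limiting_equality} to a bounded Lipschitz nondecreasing $F$ with $F\equiv 0$ on $(-\infty,\theta']$ and $F\equiv1$ on $[\theta_{\max},\infty)$ for some $\theta<\theta'<\theta_{\max}$, noting $\frac1{2\pi}\int F(S(w))\,dw>0$). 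As the slope of $\theta\mapsto\frac1n\sum_i\min(\theta,\sigma_i^2)$ equals $\frac1n\#\{i:\sigma_i^2>\theta\}\ge c_0$ on the interval containing $\theta_n$ and $\tilde\theta_n$, this gives $|\theta_n-\tilde\theta_n|\le c_0^{-1}|d-d_n|\le h_1/(c_0n)$. For $A_n$, the parametric reverse waterfilling~\eqref{eqn:para_r_n}--\eqref{eqn:para_d} yields $\mathbb{R}'_{\rvec{X}}(n,t)=-1/(2\theta_n(t))$ (read through one-sided derivatives / convexity at the finitely many kinks), and $\theta_n(t)\ge\theta/2$ for $t$ between $d$ and $d_n$ and $n$ large, so $|A_n|\le n\cdot\frac1\theta\cdot\frac{h_1}n=h_1/\theta$. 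For the fluctuation, $\theta\mapsto\frac12\min(1,\sigma_i^2/\theta)$ has derivative of modulus at most $\sigma_i^2/(2\theta^2)\le 2\theta_{\max}/\theta^2$ on $[\theta/2,\infty)$ (and this bound is $0$ when $d<d_c$, where every coordinate is active), so $|b_i|\le(2\theta_{\max}/\theta^2)|\theta_n-\tilde\theta_n|$ and hence $\sum_i b_i^2\le n\,(2\theta_{\max}/\theta^2)^2 h_1^2/(c_0n)^2=\big(2\theta_{\max}h_1/(\theta^2c_0)\big)^2/n$.

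Finally I would combine these via Chebyshev's inequality. Fix $\tilde c\triangleq\min\{1/2,\ \theta c_0/(\sqrt2\,\theta_{\max})\}\in(0,1)$, a constant depending only on $a,\sigma^2,d$; this choice is made precisely so that $2\sum_i b_i^2\le 4h_1^2/(\tilde c^2\theta^2 n)$ and $|A_n|\le h_1/\theta<2h_1/(\tilde c\theta)$ for $n$ large. Then for any $u>2h_1/(\tilde c\theta)$ and $n$ large, $u-|A_n|\ge u-2h_1/(\tilde c\theta)>0$, and
\begin{align}
&\prob{\lrabs{\jmath_{\rvec{X}}(\rvec{X},d)-\jmath_{\rvec{X}}(\rvec{X},d_n)}>u}\le\prob{\lrabs{\sum_{i=1}^n b_i W_i}>u-|A_n|}\le\frac{2\sum_i b_i^2}{\big(u-|A_n|\big)^2}\notag\\
&\qquad\le\frac{2\sum_i b_i^2}{\big(u-\tfrac{2h_1}{\tilde c\theta}\big)^2}\le\frac{4h_1^2}{\tilde c^2\theta^2 n\,\big(u-\tfrac{2h_1}{\tilde c\theta}\big)^2}=\frac{1}{n\big(\tfrac{\tilde c\theta u}{2h_1}-1\big)^2},
\end{align}
which is the claimed bound. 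I expect the main obstacle to be the water-level bookkeeping — establishing $\theta_n,\tilde\theta_n\to\theta$ with $|\theta_n-\tilde\theta_n|=O(1/n)$ — because this is exactly where the hypothesis $d<\dmax$ (equivalently $\theta<\theta_{\max}$) and the new nonasymptotic eigenvalue estimates (Theorem~\ref{thm:n_LimitingThm}, Lemma~\ref{lemma:eig_approx}) are genuinely needed, to pin the active-coordinate fraction away from $0$; a lesser nuisance is that $\mathbb{R}_{\rvec{X}}(n,\cdot)$ is only piecewise smooth, so the slope identity must be interpreted via one-sided derivatives or convexity.
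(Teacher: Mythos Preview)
Your proposal is correct and follows essentially the same route as the paper's proof: write $\jmath_{\rvec{X}}(\rvec{X},d)-\jmath_{\rvec{X}}(\rvec{X},d_n)$ via the explicit formula~\eqref{expression:dtilted}, bound the water-level gap $|\theta_n-\tilde\theta_n|=O(1/n)$ using that the active-coordinate fraction stays bounded below by some $\tilde c>0$ (your $c_0$), then bound the deterministic mean shift and the variance of the centered part, and finish with Chebyshev. The only differences are cosmetic bookkeeping: the paper bounds the mean directly by manipulating the $\log(\theta_2/\theta_1)$ terms, whereas you use the slope identity $\mathbb{R}_{\rvec{X}}'(n,t)=-1/(2\theta_n(t))$; and the paper bounds the variance by an analogous direct estimate, whereas you phrase it as a Lipschitz bound on $\theta\mapsto\tfrac12\min(1,\sigma_i^2/\theta)$.
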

\begin{proof}[Proof of Theorem~\ref{thm:tilted_info_concentrate}]
Appendix~\ref{app:tilted_info_concentrate}.
\end{proof}

In the rest of this section, we present the detailed proof of Theorem~\ref{thm:converse} for any $d\in (0, \dmax)$. The $\mathsf{d}$-tilted information $\jmath_{\rvec{X}}\LRB{\rvec{X},d}$ is first approximated by $\jmath_{\rvec{X}}\LRB{\rvec{X},d_n}$ defined in~\eqref{ssum}, which is a sum of independent random variables whose expectations and variances approximate the rate-distortion function $\rdf_{U}(d)$ and the informational dispersion $\mathbb{V}_U(d)$, respectively. Combining these approximation bounds and Theorem~\ref{thm:general_converse}, we obtain the converse in~\eqref{eqn:converse_bd}. The details follow.

\begin{proof}[Proof of Theorem~\ref{thm:converse}]
\label{proof:converse_2nd_pf}
Fix $d\in\LRB{0,\dmax}$. Let $\theta>0$ be the water level matched to $d$ via the limiting reverse waterfilling~\eqref{eqn:para_d_inf}. Notice that $d_n$, defined in~\eqref{def:dn}, is the distortion matched to the water level $\theta$ via the $n$-th order reverse waterfilling~\eqref{eqn:para_d} over $\sigma_i^2$'s. Comparing~\eqref{eqn:para_d} and~\eqref{def:dn}, and applying Theorem~\ref{thm:n_LimitingThm} to the function $t\mapsto\min\LRB{\theta, t}$, we deduce that there exists a constant $C_d>0$ such that for any $n\geq 1$, 
\begin{align}
\lrabs{d - d_n}\leq \frac{C_d}{n}. \label{d_d_n}
\end{align}
Let $\bar{\rvec{Y}}^\star$ be the $n$-dimensional Gaussian random vector such that $(\rvec{X}, \bar{\rvec{Y}}^\star )$ forms a RDF-achieving pair in $\rdf_{\rvec{X}}(n, d_n)$ defined in~\eqref{eqn:nthorderOp}. Note that $\bar{\rvec{Y}}^\star$ defined here is indeed different from $\rvec{Y}^\star$ in~\eqref{Ystar}, where $(\rvec{X}, \rvec{Y}^\star)$ forms a RDF-achieving pair in $\rdf_{\rvec{X}}(n, d)$. It is well-known~\cite[Th. 10.3.3]{cover2012elements} that $\bar{\rvec{Y}}^\star$ has independent coordinates and similar to~\eqref{Ystar}, 
\begin{align}
\bar{Y}_i^\star \sim \mathcal{N}(0, \max\LRB{\sigma_i^2 - \theta, 0}).
\end{align}
By the independence of $\bar{Y}_i^\star$'s,~\eqref{rel:dtiltedandtilted} and~\eqref{def:d_tilted_def}, we have 
\begin{align}
\jmath_{\rvec{X}}(\rvec{X}, d_n) = \sum_{i = 1}^n \Lambda_{\bar{Y}_i^\star}(X_i, \lambda^\star, \min\LRB{\theta, \sigma_i^2}), \label{sumofnrv}
\end{align}
where 
\begin{align}
\lambda^\star = -\rdf'_{\rvec{X}}(n, d_n) = -\rdf'({\rvec{X}}, \bar{\rvec{Y}}^\star, d_n).
\end{align}
Denote by $\mathbb{E}_i$ and $\mathbb{V}_i $ the means and the variances of $\Lambda_{\bar{Y}_i^\star}(X_i, \lambda^\star, \min\LRB{\theta, \sigma_i^2})$ (the summands in~\eqref{sumofnrv}). By the same computations leading to~\eqref{tilted_mean} and~\eqref{tilted_variance} in Appendix~\ref{app:derivation_vd}, we have
\begin{align}
\mathbb{E}_i &= \max\left(0,~\frac{1}{2}\log\frac{\sigma_i^2}{\theta}\right),\label{conversepf:mean} \\
\mathbb{V}_i &= \min\left(\frac{1}{2},~\frac{\sigma_i^4}{2\theta^2}\right). \label{conversepf:var}
\end{align} 

We now derive the approximation of $\rdf_{U}(d)$ and $\mathbb{V}_U(d)$ using the means $\mathbb{E}_i$'s and the variances $\mathbb{V}_i$'s, respectively. Applying Theorem~\ref{thm:n_LimitingThm} to the function $t\mapsto \max\left(0,\frac{1}{2}\log\frac{t}{\theta}\right)$ in~\eqref{limiting_expectation_relation} and~\eqref{conversepf:mean}, and to the function $t\mapsto \min\left(\frac{1}{2}, \frac{t^2}{2\theta^2}\right)$ in~\eqref{limiting_var} and~\eqref{conversepf:var}, we conclude that there exist two constants $c_r, c_v > 0$ (depending on $d$ only) such that
\begin{align}
\lrabs{n\rdf_{U}(d) - \sum_{i = 1}^n \mathbb{E}_i }&\leq  c_r,
\label{approx:rate}\\
\lrabs{\sqrt{ n\mathbb{V}_{U}(d) } - \sqrt{\sum_{i = 1}^n \mathbb{V}_i} } &\leq  c_v.
\label{approx:dis_2}
\end{align} 

Next, we consider the sequence of distortion levels $\left\{d_n\right\}_{n = 1}^{\infty}$, which satisfies the condition~\eqref{assumption:diff_concentrate} due to~\eqref{d_d_n}. Define the event 
\begin{align}
\mathcal{E}\triangleq \lrbb{ \jmath_{\rvec{X}}\LRB{\rvec{X}, d}\geq \jmath_{\rvec{X}}\LRB{\rvec{X}, d_n} - \frac{4C_d}{\tilde{c}\theta}}, 
\label{E_t}
\end{align}
where $\tilde{c} \in (0,1)$ is the constant in Theorem~\ref{thm:tilted_info_concentrate} and $C_d >0 $ is the constant in~\eqref{d_d_n}. Theorem~\ref{thm:tilted_info_concentrate} implies that  
\begin{align}
\prob{\mathcal{E}} \geq 1 - \frac{1}{n}.
\label{large:E_t}
\end{align}

Letting $\gamma = \frac{1}{2}\log n$ in Theorem~\ref{thm:general_converse}, we see that if an $(n, M, d, \epsilon')$-excess-distortion code exists, then
\begin{align}
\epsilon' &\geq \prob{\jmath_{\rvec{X}}(\rvec{X},d)\geq \log M + \frac{\log n}{2}} - \frac{1}{\sqrt{n}}\label{converse_gen_pf:step1} \\
&\geq \prob{\jmath_{\rvec{X}}(\rvec{X},d)\geq \log M + \frac{\log n}{2} | \mathcal{E}}\prob{\mathcal{E}}- \frac{1}{\sqrt{n}} \label{converse_gen_pf:step2}\\
&\geq \LRB{1 - \frac{1}{n}}\prob{\jmath_{\rvec{X}}(\rvec{X},d_n)\geq \log M + \frac{\log n}{2} + \frac{4C_d}{\tilde{c}\theta}}- \frac{1}{\sqrt{n}}, \label{converse_gen_pf:step3}
\end{align}
where~\eqref{converse_gen_pf:step3} is due to~\eqref{large:E_t} and~\eqref{E_t}. For any fixed $\epsilon\in (0,1)$, define $\epsilon_n$ as  
\begin{align}
\epsilon_n \triangleq \epsilon + \exp(-\gamma) + \frac{\bec}{\sqrt{n}} + \frac{1}{n}, 
\label{choose:epsilonn}
\end{align}
where $\bec$ is the constant in the Berry-Esseen Theorem in Appendix~\ref{app:classicalthms}. Then, we have $\epsilon_n \in (0,1)$ for all $n$ large enough. We choose $M$ as 
\begin{align}
\log M &\triangleq n\rdf_{U}(d) + \sqrt{n\mathbb{V}_U(d)} Q^{-1}(\epsilon_n) \notag \\ &\quad\quad - \gamma - c_r - c_v\lrabs{Q^{-1}(\epsilon_n)} - \frac{4C_d}{\tilde{c}\theta}.
\label{choice:M} 
\end{align}
From~\eqref{approx:rate},~\eqref{approx:dis_2} and~\eqref{choice:M}, we have 
\begin{align}
\log M & \leq \sum_{i = 1}^n \mathbb{E}_i + Q^{-1}(\epsilon_n)\sqrt{\sum_{i = 1}^n \mathbb{V}_i} -\gamma - \frac{4C_d}{\tilde{c}\theta}.
\label{bound:M}
\end{align}
Continuing the inequality in~(\ref{converse_gen_pf:step3}), we have 
\begin{align}
\epsilon' &\geq \LRB{1 - \frac{1}{n}} \mathbb{P}\Bigg [ \sum_{i = 1}^n \Lambda_{\bar{Y}_i^\star}(X_i, \lambda^\star, \min\LRB{\theta, \sigma_i^2}) \geq \notag \\
&\quad\quad\quad \sum_{i = 1}^n \mathbb{E}_i + Q^{-1}(\epsilon_n)\sqrt{\sum_{i = 1}^n\mathbb{V}_i} \Bigg ] - \frac{1}{\sqrt{n}} \label{converse_gen_pf:step4}\\
&\geq \LRB{1 - \frac{1}{n}}\LRB{\epsilon_n - \frac{\bec}{\sqrt{n}}} - \frac{1}{\sqrt{n}}\label{converse_gen_pf:step5} \\
&\geq \epsilon_n - \frac{1}{n} - \frac{1+\bec}{\sqrt{n}} \\
& = \epsilon,\label{converse_gen_pf:step6}
\end{align}
where~\eqref{converse_gen_pf:step4} is by~\eqref{sumofnrv} and the bound~\eqref{bound:M};~\eqref{converse_gen_pf:step5} is by the Berry-Esseen Theorem in Appendix~\ref{app:classicalthms}; and~\eqref{converse_gen_pf:step6} is by the choice of $\epsilon_n$ in~\eqref{choose:epsilonn}. Consequently, for all $n$ large enough, any $(n, M, d, \epsilon')$-excess-distortion code must satisfy $\epsilon'\geq \epsilon$, so we must have 
\begin{align}
R(n,d,\epsilon) \geq \frac{\log M}{n}.\label{final_bound}
\end{align}
Plugging~\eqref{choice:M} into~\eqref{final_bound} and applying the Taylor expansion to $Q^{-1}(\epsilon_n)$ yields~\eqref{eqn:converse_bd}.
\end{proof}
\section{Achievability}
\label{sec:achievability}
\begin{theorem}[Achievability] 
Fix any $\alpha > 0$. Consider the Gauss-Markov source defined in~\eqref{def:GaussMarkov}. For any excess-distortion probability $\epsilon\in (0,1)$, and any distortion threshold $d\in (0,\dmax)$, the minimum achievable source coding rate is bounded as
\begin{align}
R(n,d,\epsilon) \leq \rdf_{U}(d)+ \sqrt{\frac{\mathbb{V}_U(d)}{n}}Q^{-1}(\epsilon) + O\left(\frac{1}{(\log n)^{\kappa\alpha}\sqrt{n}}\right), 
\label{eqn:achievability}
\end{align}
where $\rdf_{U}(d)$ is the rate-distortion function given in~\eqref{eqn:para_rate_inf}; $\mathbb{V}_U(d)$ is the informational dispersion, defined in~\eqref{def:info_dispersion} and computed in~\eqref{limiting_var}; and $\kappa >0$ is the constant in~\eqref{kappa}.
\label{thm:achievability}
\end{theorem}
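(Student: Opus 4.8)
The plan is a random-coding (covering) argument carried out in the decorrelated domain, with the codebook tethered to an MLE-based typical set and the excess-distortion probability controlled by a Berry--Esseen analysis of the $\mathsf d$-tilted information; structurally it mirrors the second converse proof, run in the opposite direction.

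\emph{Setup.} By the equivalence in Section~\ref{subsec:pfoutline} it suffices to code the decorrelated source $\rvec X\sim\mathcal N(\rvec 0,\mathrm{diag}(\sigma_1^2,\dots,\sigma_n^2))$. As in the converse, fix the limiting water level $\theta$ matched to $d$ via~\eqref{eqn:para_d_inf} and the auxiliary distortion $d_n$ of~\eqref{def:dn}, so $|d-d_n|\le C_d/n$ by Theorem~\ref{thm:n_LimitingThm}. Using Lemma~\ref{lemma:eig_approx} to locate the threshold precisely, split the coordinates into \emph{active} ($\sigma_i^2>\theta$) and \emph{inactive} ($\sigma_i^2\le\theta$): the inactive coordinates are reproduced by $0$, costing distortion $\frac1n\sum_{i:\sigma_i^2\le\theta}\sigma_i^2$, while the active coordinates are covered by a random codebook whose codewords are drawn from $\bigotimes_{i:\sigma_i^2>\theta}\mathcal N(0,\sigma_i^2-\theta)$ (equivalently from a thin ellipsoidal shell)---this is the $\bar{\rvec Y}^\star$-law appearing in the converse. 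The governing information quantity is then $\jmath_{\rvec X}(\rvec X,d_n)=\sum_{i=1}^n\jmath_{X_i}(X_i,\min(\theta,\sigma_i^2))$, a sum of independent non-identical random variables whose per-coordinate means $\mathbb E_i$ and variances $\mathbb V_i$ are given by~\eqref{conversepf:mean}--\eqref{conversepf:var} and satisfy $\sum_i\mathbb E_i=n\rdf_U(d)+O(1)$ and $\sum_i\mathbb V_i=n\mathbb V_U(d)+O(1)$ by Theorem~\ref{thm:n_LimitingThm} (cf.~\eqref{approx:rate}--\eqref{approx:dis_2}).

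\emph{Typical set and covering.} The novel ingredient is the typical set $\TS_n\subseteq\mathbb R^n$ of source realizations $\rvec u$ with $|\hat a(\rvec u)-a|\le\eta_n$, where $\eta_n=\sqrt{\alpha\log\log n/n}$ as in~\eqref{rhon} (possibly together with mild energy-regularity constraints). Theorem~\ref{thm:error_concentration} gives $\mathbb P[\rvec U\notin\TS_n]\le 2(\log n)^{-\kappa\alpha}$, exactly the magnitude of the residual term in~\eqref{eqn:achievability}. Off $\TS_n$ the encoder declares a failure; on $\TS_n$ it may, at a cost of only $O(\log n)$ bits (an $o(1/\sqrt n)$ rate overhead), convey a quantized $\hat a(\rvec u)$ and use a codebook adapted to it, but more importantly membership in $\TS_n$ is what makes provable the \emph{uniform} covering estimate
\begin{align}
P_{\bar{\rvec Y}^\star}\bigl(\mathcal B(\rvec x,d)\bigr)\ge\exp\bigl(-\jmath_{\rvec X}(\rvec x,d_n)-O(\log n)\bigr),\qquad \forall\,\rvec x\in\mathsf S^\top\TS_n,
\end{align}
which controls the polynomial slack between the $d_n$-ball probability and the $\mathsf d$-tilted information uniformly over the typical realizations. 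Then for a random codebook of size $M$ the excess-distortion probability is at most $\mathbb P[\rvec U\notin\TS_n]+\mathbb E_{\rvec X}\bigl[\mathbbm 1\{\rvec X\in\mathsf S^\top\TS_n\}\exp(-M\,P_{\bar{\rvec Y}^\star}(\mathcal B(\rvec X,d)))\bigr]$, and splitting the exponential on whether $\jmath_{\rvec X}(\rvec X,d_n)$ exceeds $\log M-\gamma-O(\log n)$ reduces the second term to $\mathbb P[\jmath_{\rvec X}(\rvec X,d_n)>\log M-\gamma-O(\log n)]+\exp(-e^\gamma)$.

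\emph{Conclusion and main obstacle.} Applying Berry--Esseen to the independent sum $\jmath_{\rvec X}(\rvec X,d_n)$ and using the mean/variance identities above, take $\gamma=\Theta(\log n)$ and choose $\log M=n\rdf_U(d)+\sqrt{n\mathbb V_U(d)}\,Q^{-1}(\epsilon_n)+O(\log n)$ with a slightly deflated $\epsilon_n=\epsilon-O((\log n)^{-\kappa\alpha})$; then the excess-distortion probability is at most $\epsilon$ for all $n$ large, and a Taylor expansion of $Q^{-1}$ around $\epsilon$ yields~\eqref{eqn:achievability}. The step I expect to be hardest is precisely the uniform covering estimate on $\TS_n$: one must show that the product-Gaussian (or spherical-shell) codebook covers \emph{every} realization in $\TS_n$ with distortion $d$, not merely on average, controlling the slack uniformly while simultaneously handling the inactive coordinates and the discrepancy between $\theta$ and the exact water level $\theta_n$. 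This is the only place where the first-order AR structure is genuinely exploited---through the closed-form MLE $\hat a(\rvec u)=\frac{\sum u_iu_{i+1}}{\sum u_i^2}$ and the sharp eigenvalue bounds of Lemma~\ref{lemma:eig_approx}---which is why the achievability argument, unlike the converse, does not extend to higher-order Gaussian autoregressive processes.
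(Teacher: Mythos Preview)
Your proposal is correct and follows essentially the same architecture as the paper's proof: random coding (Lemma~\ref{lemma:random}), a lossy AEP bound relating $P_{\rvec Y^\star}(\mathcal B(\rvec X,d))$ to the $\mathsf d$-tilted information (Lemma~\ref{lemma:lemma_2}, proved via the MLE-typical set of Definition~\ref{def:TS} and Theorem~\ref{thm:typicalset}), Theorem~\ref{thm:tilted_info_concentrate} to pass between $\jmath_{\rvec X}(\rvec X,d)$ and $\jmath_{\rvec X}(\rvec X,d_n)$, and Berry--Esseen on the independent sum. Two small calibrations: the paper uses the $\rvec Y^\star$-law (water level $\theta_n$) rather than $\bar{\rvec Y}^\star$, and the slack in the covering estimate is $\beta_1\log^q n$ with $q>1$ (forced by the shell-probability argument in Lemma~\ref{lemma:lemma_4}), not $O(\log n)$; neither affects the final $o(1/\sqrt n)$ bound, but the polylog is genuinely needed in the lossy AEP step you correctly flagged as hardest.
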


This section presents the proof of Theorem~\ref{thm:achievability}. We first discuss how bounds on the covering number of $n$-dimensional ellipsoids can be converted into achievability results. We then proceed to present our proof of Theorem~\ref{thm:achievability}, which relies on random coding~\cite[Cor. 11]{kostina2012fixed}, and a lower bound on the probability of a distortion $d$-ball using the $\mathsf{d}$-tilted information $\jmath_{\rvec{X}}(\rvec{X}, d)$.

\subsection{Connections to covering number}
Dumer \textit{et al.}~\cite{dumer2004coverings} considered the problem of covering an ellipsoid using the minimum number of balls in $\mathbb{R}^n$, and derived lower and upper bounds on that number. Although any upper bound on the covering number implies an upper bound on $R(n,d,\epsilon)$, the upper bound on covering number in~\cite{dumer2004coverings} is not tight enough to yield the achievability direction of the Gaussian approximation~\eqref{eqn:main_GA}. We proceed to explain how to obtain a bound on $R(n,d,\epsilon)$ from the results in~\cite{dumer2004coverings}. An ellipsoid $E_{\rvec{r}}^n$ is defined by 
\begin{align}
E_{\rvec{r}}^n \triangleq \left\{\rvec{x}\in\mathbb{R}^n: \sum_{i = 1}^n\frac{x_i^2}{r_i^2}  \leq 1\right\}, \label{def:ellipsod}
\end{align}
where $\rvec{r} = (r_1,\ldots,r_n)$, and $r_i >0$ is one half of the length of the $i$-axis of $E_{\rvec{r}}^n$. We say that a subset $\mathcal{M}_d\subset\mathbb{R}^n$ is a $d$-covering\footnote{In~\cite{dumer2004coverings}, the term $\varepsilon$-covering was used instead of $d$-covering used here. They are related by $\varepsilon = \sqrt{nd}$.} of the ellipsoid $E_{\rvec{r}}^n$ if 
\begin{align}
E_{\rvec{r}}^n \subseteq \bigcup_{\rvec{y}\in\mathcal{M}_d } \mathcal{B}(\rvec{y}, d), 
\end{align}
where $\mathcal{B}(\rvec{y}, d)$ is the $d$-ball centered at $\rvec{y}$, defined in~\eqref{def:d_ball}. The covering number $\mathscr{N}(n,d)$ of an ellipsoid $E_{\rvec{r}}^n$ is defined as the size of its minimal $d$-covering. The $d$-entropy $H_d(E_{\rvec{r}}^n)$ is the logarithm of the covering number
\begin{align}
H_d(E_{\rvec{r}}^n) \triangleq \log \mathscr{N}(n,d).
\end{align}
The result in~\cite[Th. 2]{dumer2004coverings} states that 
\begin{align}
H_d(E_{\rvec{r}}^n)  = K_d + o\left(K_d\right), 
\label{bound:d_entropy}
\end{align}
where 
\begin{align}
K_d \triangleq \sum_{i: r_i^2 > nd} \frac{1}{2}\log \frac{r_i^2}{nd}.
\label{def:K_d}
\end{align}
Despite the similarity between~\eqref{def:K_d} and the reverse waterfilling~\eqref{eqn:para_r_n}, the result in~\eqref{bound:d_entropy} is not strong enough to recover even the asymptotic rate-distortion tradeoff~\eqref{eqn:para_r_n} unless $d\leq d_c$. 

In our problem, let $\rvec{X}$ be the decorrelation of $\rvec{U}$ in~\eqref{def:decorrelation}, then $X_1,\ldots,X_n$ are independent zero-mean Gaussian distributed with variances being $\sigma_i^2$ defined in~\eqref{eig:sigma_i}. The random vector $\rvec{X}$ concentrates around an ellipsoid with probability mass at least $1-\epsilon$. Applying the Berry-Esseen theorem to express $r_i$'s in~\eqref{def:K_d}, we deduce that for any $\epsilon\in (0,0.5)$ and $d\in (0,d_c]$, 
\begin{align}
R(n,d,\epsilon) \leq \frac{1}{2}\log \frac{\sigma^2}{d} + \frac{Q^{-1}(\epsilon)}{\sqrt{2n}} + o(1), \label{eqn:coveringBD}
\end{align}
where the extra $o(1)$ term comes from the $o(K_d)$ term in~\eqref{bound:d_entropy}. Due to that $o(1)$ term, the bound~\eqref{eqn:coveringBD} is first-order optimal, but not second-order optimal. Strenghthening~\eqref{bound:d_entropy} to $H_d(E_{\rvec{r}}^n)  = K_d + o\left(\sqrt{K_d}\right)$ would allow one to replace the $o(1)$ term in~\eqref{eqn:coveringBD} by $o\left(\frac{1}{\sqrt{n}}\right)$, yielding the $\leq$ (achievability) direction of the Gaussian approximation~\eqref{eqn:main_GA} in the regime of $d\in (0, d_c]$. We do not pursue this approach here. Instead, we prove~\eqref{eqn:main_GA} via the tilted information.

\subsection{Outline of the achievability proof}
\label{subsec:ML_TS}
We describe the main ideas in our achievability proof and present the details in next subsection. Our proof is inspired by the work of Kostina and Verd{\'u}~\cite[Th. 12]{kostina2012fixed}, where the same problem was addressed for the stationary memoryless sources. However, the proof there cannot be directly applied to the Gauss-Markov source. The random coding bound, stated next, provides an upper bound on the excess-distortion probability $\epsilon$ using the probability of the distortion $d$-balls.

\begin{lemma}[Random coding bound]
Let $\rvec{X}$ be the decorrelation of $\rvec{U}$ in~\eqref{def:decorrelation}. There exists an $(n,M,d,\epsilon)$ code with 
\begin{align}
\epsilon \leq \inf_{P_{\rvec{Y}}} \mathbb{E}_{\rvec{X}}\left[e^{-MP_{\rvec{Y}}(\mathcal{B}(\rvec{X},d))}\right],
\label{bound:rcb}
\end{align}
where the infimum is over all pdf's $P_{\rvec{Y}}$ on $\mathbb{R}^n$ with $\rvec{Y}$ independent of $\rvec{X}$.
\label{lemma:random} 
\end{lemma}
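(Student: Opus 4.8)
### Proof Proposal for Lemma~\ref{lemma:random} (Random Coding Bound)

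The plan is to use a standard random-coding argument over i.i.d.\ codewords, conditioning on the source realization and bounding the probability that none of the $M$ random codewords lands in the distortion $d$-ball around the source. First I would fix an arbitrary pdf $P_{\rvec{Y}}$ on $\mathbb{R}^n$ with $\rvec{Y}$ independent of $\rvec{X}$, and draw $M$ codewords $\rvec{Y}_1,\ldots,\rvec{Y}_M$ i.i.d.\ from $P_{\rvec{Y}}$, independently of $\rvec{X}$. The encoder $\mathsf{f}_n$ maps a source realization $\rvec{x}$ to any index $j$ such that $\dis{\rvec{x}}{\rvec{Y}_j}\leq d$ (declaring failure if no such index exists), and the decoder $\mathsf{g}_n$ outputs the corresponding codeword. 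Then the excess-distortion event occurs only if the encoder fails, i.e.\ if $\rvec{Y}_j\notin\mathcal{B}(\rvec{x},d)$ for all $j\in[M]$.

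The key steps are as follows. Conditioned on $\rvec{X}=\rvec{x}$, the events $\{\rvec{Y}_j\in\mathcal{B}(\rvec{x},d)\}$ are independent across $j$, each with probability $P_{\rvec{Y}}(\mathcal{B}(\rvec{x},d))$, so the conditional probability of encoder failure equals $(1-P_{\rvec{Y}}(\mathcal{B}(\rvec{x},d)))^M$. Using the elementary inequality $(1-p)^M\leq e^{-Mp}$ for $p\in[0,1]$, this is at most $e^{-MP_{\rvec{Y}}(\mathcal{B}(\rvec{x},d))}$. Averaging over $\rvec{X}$ gives that the average (over the random codebook) excess-distortion probability is at most $\mathbb{E}_{\rvec{X}}[e^{-MP_{\rvec{Y}}(\mathcal{B}(\rvec{X},d))}]$. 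Since the average over codebooks is bounded by this quantity, there exists at least one deterministic codebook — hence an $(n,M,d,\epsilon)$ code with that codebook and the minimum-distance encoder above — whose excess-distortion probability is at most $\mathbb{E}_{\rvec{X}}[e^{-MP_{\rvec{Y}}(\mathcal{B}(\rvec{X},d))}]$. Finally, since $P_{\rvec{Y}}$ was arbitrary, I take the infimum over all such $P_{\rvec{Y}}$ to obtain~\eqref{bound:rcb}. The equivalence between source vectors $\rvec{U}$ and their decorrelation $\rvec{X}=\mathsf{S}^\top\rvec{U}$, established in Section~\ref{subsec:pfoutline}, lets us state the bound in terms of $\rvec{X}$ without loss of generality, and the MSE distortion $\dis{\cdot}{\cdot}$ is invariant under the unitary map $\mathsf{S}$.

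This lemma is essentially a direct invocation of the general single-shot random-coding bound of Kostina and Verdú~\cite[Cor.~11]{kostina2012fixed}, specialized to $\mathcal{X}=\mathcal{Y}=\mathbb{R}^n$ with the MSE distortion and the source $\rvec{X}$; so the only work is to verify that the measurability and integrability conditions are met, which is immediate here since $P_{\rvec{Y}}$ is a pdf and $\rvec{X}$ has a density. I do not anticipate a genuine obstacle in this step — the content of the achievability argument lies entirely in the subsequent choice of $P_{\rvec{Y}}$ and the lower bound on $P_{\rvec{Y}}(\mathcal{B}(\rvec{X},d))$ via the $\mathsf{d}$-tilted information, together with the typical-set construction based on the MLE $\hat{a}(\rvec{u})$. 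The mild subtlety worth flagging is that one should take the infimum over $P_{\rvec{Y}}$ \emph{after} producing the code, so that a single codebook is chosen for the minimizing (or near-minimizing) $P_{\rvec{Y}}$; alternatively, for each $P_{\rvec{Y}}$ one gets a code achieving the corresponding bound, and the infimum is then a bound on $M^\star(n,d,\epsilon)$ by optimizing the choice of $P_{\rvec{Y}}$.
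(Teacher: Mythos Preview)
Your proposal is correct and matches the paper's approach exactly: the paper's proof consists of the single sentence ``A direct application of \cite[Cor.~11]{kostina2012fixed} to $\rvec{X}$,'' and you have simply spelled out the standard random-coding argument underlying that corollary together with the unitary equivalence between $\rvec{U}$ and $\rvec{X}$.
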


\begin{proof}
A direct application of {\cite[Cor. 11]{kostina2012fixed}} to $\rvec{X}$.
\end{proof}

The next lemma provides a lower bound on the probability of the distortion $d$-balls using the $\mathsf{d}$-tilted information $\jmath_{\rvec{X}}(\rvec{X}, d)$. 
\begin{lemma}[Lossy AEP for the Gauss-Markov sources]
Fix any $\alpha > 0$ and let $\eta_n$ be in~\eqref{rhon} in Section~\ref{subsubsec:Teit} above. For any $d\in (0,\dmax)$ and $\epsilon\in (0,1)$, there exists a constant $K>0$ such that for all $n$ large enough,
\begin{align}
& \mathbb{P}\left[\log \frac{1}{P_{\rvec{Y}^\star}\left(\mathcal{B}(\rvec{X}, d)\right)} \leq \jmath_{\rvec{X}}(\rvec{X},d) + \beta_1\log^q n +\beta_2 \right] \notag\\
& \geq 1 - \frac{K}{(\log n)^{\kappa\alpha}},
 \label{eqn:lemma_2}
\end{align}
where $\rvec{X}$ is the decorrelation of $\rvec{U}$ in~\eqref{def:decorrelation}; $(\rvec{X},\rvec{Y}^\star)$ forms a RDF-achieving pair in $\rdf_{\rvec{X}}(n, d)$, and $q>1, \beta_1 > 0,\beta_2$ are constants, see~\eqref{choice:beta_1} and~\eqref{choice:beta_2} in Appendix~\ref{sec:main_lemma} below. The constant $\kappa > 0$ is in~\eqref{kappa} in Section~\ref{subsubsec:Teit} above.
\label{lemma:lemma_2}
\end{lemma}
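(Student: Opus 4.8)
The plan is to take the auxiliary distribution in the random coding bound (Lemma~\ref{lemma:random}) to be the RDF-achieving output marginal $P_{\rvec{Y}^\star}$ of $\rdf_{\rvec{X}}(n,d)$, change measure onto the RDF-achieving conditional $P_{\rvec{F}^\star\mid\rvec{X}}$, reduce the claim to a one-dimensional lower bound on the probability that a sum of $n$ independent random variables lands in a thin shell, and then verify that bound uniformly over a typical set of source realizations defined through the MLE $\hat{a}(\rvec{u})$ in~\eqref{MLEformula}.

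\textbf{Change of measure and shell restriction.} The $n$-dimensional instance of~\eqref{relation:generaltilted} (together with~\eqref{rel:dtiltedandtilted} and~\eqref{def:d_tilted_def}) gives, for $\lambda^\star$ as in~\eqref{lambda_double_star},
\begin{align}
\log\frac{dP_{\rvec{F}^\star\mid\rvec{X}}(\rvec{y}\mid\rvec{x})}{dP_{\rvec{Y}^\star}(\rvec{y})}=\jmath_{\rvec{X}}(\rvec{x},d)-\lambda^\star n\LRB{\dis{\rvec{x}}{\rvec{y}}-d},
\label{plan:tilt}
\end{align}
and integrating $dP_{\rvec{Y}^\star}$ over $\mathcal{B}(\rvec{x},d)$ against~\eqref{plan:tilt} yields
\begin{align}
\log\frac{1}{P_{\rvec{Y}^\star}\LRB{\mathcal{B}(\rvec{x},d)}}=\jmath_{\rvec{X}}(\rvec{x},d)-\log\myE{e^{\lambda^\star n\LRB{\dis{\rvec{x}}{\rvec{F}^\star}-d}}\mathbbm{1}\lrbb{\dis{\rvec{x}}{\rvec{F}^\star}\leq d}\mid\rvec{X}=\rvec{x}}.
\label{plan:identity}
\end{align}
Since $\lambda^\star>0$, restricting the expectation in~\eqref{plan:identity} to the shell $\lrbb{d-\delta_n\leq\dis{\rvec{x}}{\rvec{F}^\star}\leq d}$ gives
\begin{align}
\myE{e^{\lambda^\star n\LRB{\dis{\rvec{x}}{\rvec{F}^\star}-d}}\mathbbm{1}\lrbb{\dis{\rvec{x}}{\rvec{F}^\star}\leq d}\mid\rvec{X}=\rvec{x}}\geq e^{-\lambda^\star n\delta_n}\,\prob{d-\delta_n\leq\dis{\rvec{x}}{\rvec{F}^\star}\leq d\mid\rvec{X}=\rvec{x}},
\label{plan:shell}
\end{align}
and we take $\delta_n=\Theta\LRB{\log^q n/n}$ for a suitable constant $q>1$, so that $\lambda^\star n\delta_n=\Theta(\log^q n)$; note $\lambda^\star=\Theta(1)$ by~\eqref{expression:dtilted} and~\eqref{constant_eig_bound}. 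It thus suffices to lower-bound the shell probability in~\eqref{plan:shell} by $(\log n)^{-O(1)}$ uniformly over a typical set; the resulting $O(\log\log n)$ and $\tfrac12\log n$ losses are then dominated by $\beta_1\log^q n$ once $q>1$, and the remaining constants go into $\beta_2$.

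\textbf{The typical set and the local estimate.} Conditioned on $\rvec{X}=\rvec{x}$, $n\dis{\rvec{x}}{\rvec{F}^\star}=\sum_{i=1}^n\LRB{x_i-F_i^\star}^2$ is a sum of $n$ independent terms: for an active coordinate ($\sigma_i^2>\theta_n$), $F_i^\star$ is Gaussian with mean $(1-\theta_n/\sigma_i^2)x_i$ and variance $\theta_n(1-\theta_n/\sigma_i^2)$, so the term is a scaled non-central $\chi^2$ whose conditional mean and variance are explicit quadratics in $x_i$; for an inactive coordinate ($\sigma_i^2\leq\theta_n$), $F_i^\star=0$ and the term is the deterministic value $x_i^2$. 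Averaging over $\rvec{X}$ and using $\sum_i\min(\theta_n,\sigma_i^2)=nd$ (the defining identity~\eqref{eqn:para_d} of $\theta_n$), the conditional mean $\myE{n\dis{\rvec{X}}{\rvec{F}^\star}\mid\rvec{X}}$ has expectation exactly $nd$, while its fluctuation around $nd$, the conditional variance, and the summed conditional third absolute moments are quadratic functionals of $\rvec{X}$ (equivalently of $\rvec{U}$) with bounded coefficients --- precisely the type controlled, via the Hanson--Wright inequality, in the proofs of Theorems~\ref{thm:ECG} and~\ref{thm:error_concentration}. We define the typical set to consist of the realizations $\rvec{u}$ with $\lrabs{\hat{a}(\rvec{u})-a}\leq\eta_n$ (for $\eta_n$ as in~\eqref{rhon}) together with the analogous polylogarithmic concentration of the quadratic forms $\sum_iu_i^2$ and $\sum_iu_iu_{i+1}$ entering $\hat{a}$; by Theorem~\ref{thm:error_concentration} and a union bound over the finitely many auxiliary events, its complement has probability $O\LRB{(\log n)^{-\kappa\alpha}}$. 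On the typical set the conditional mean lies within $O(\sqrt{n\log\log n})$ of $nd$, the conditional variance is $\Theta(n)$ --- here we use that $\Theta(n)$ coordinates are active when $d<\dmax$, which follows from the eigenvalue location provided by Lemma~\ref{lemma:eig_approx} and Theorem~\ref{thm:n_LimitingThm} --- and the summed third absolute moments are $O(n)$; the Berry--Esseen theorem (Appendix~\ref{app:classicalthms}) then lower-bounds the shell probability in~\eqref{plan:shell} by $(\log n)^{-O(1)}$ uniformly over the typical set. Combining this with~\eqref{plan:identity}--\eqref{plan:shell} shows that on the typical set $\log\frac{1}{P_{\rvec{Y}^\star}(\mathcal{B}(\rvec{x},d))}\leq\jmath_{\rvec{X}}(\rvec{x},d)+\beta_1\log^q n+\beta_2$ for suitable $q>1,\beta_1>0,\beta_2$ and all $n$ large; since the typical set has probability $1-O\LRB{(\log n)^{-\kappa\alpha}}$, this is~\eqref{eqn:lemma_2}.

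\textbf{Main obstacle.} The crux is the uniform lower bound on the shell probability: the inactive coordinates inject no randomness but shift the conditional mean by $\sum_{\text{inactive}}x_i^2$, so the conditional mean must be pinned within $O(\sqrt{n}\cdot\mathrm{polylog}\,n)$ of $nd$, and the active coordinates must supply a genuinely $\Theta(n)$-scale, CLT-regular fluctuation whose Berry--Esseen remainder is controlled \emph{simultaneously} for every $\rvec{x}$ in the typical set --- so $q$ has to be chosen large enough (a constant $>1$) that the width-$\delta_n$ Gaussian window, shrunk by the $(\log n)^{-O(1)}$ density factor coming from the polylogarithmic mean fluctuation, still beats the $O(n^{-1/2})$ Berry--Esseen error. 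It is exactly at this point that the sharp eigenvalue estimates (Lemma~\ref{lemma:eig_approx}, Theorem~\ref{thm:n_LimitingThm}) and the Hanson--Wright concentration of the quadratic forms underlying $\hat{a}(\rvec{u})$ (Theorems~\ref{thm:ECG}, \ref{thm:error_concentration}) are indispensable, and matching the $O\LRB{(\log n)^{-\kappa\alpha}}$ failure probability of the typical set against the $\log^q n$ slack afforded by the shell restriction is what forces the choice $q>1$.
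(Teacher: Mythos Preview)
Your plan is correct and shares the paper's high-level structure: lower-bound $P_{\rvec{Y}^\star}(\mathcal{B}(\rvec{x},d))$ by a tilted shell probability (the paper does this via Lemma~\ref{lemma:lowerbound}), then control that shell probability by Berry--Esseen on a typical set whose complement has probability $O\!\big((\log n)^{-\kappa\alpha}\big)$, with the shell width $\delta_n=\Theta(\log^q n/n)$ calibrated (as in the paper's Lemma~\ref{lemma:lemma_4} and~\eqref{defq}) so that the $\Theta(\log^q n/\sqrt n)$ Gaussian window beats both the $O(n^{-1/2})$ Berry--Esseen remainder and the $(\log n)^{-O(1)}$ density loss from the $O(\sqrt{\log\log n})$ mean drift. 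Where you differ is in the choice of auxiliary law in the lower bound: the paper instantiates Lemma~\ref{lemma:lowerbound} with the MLE-based proxy $P_{\hat{\rvec{X}}(\rvec{x})}$ of~\eqref{eqn:Xhat}, which produces $\Lambda_{\rvec{Y}^\star}(\rvec{x},\hat\lambda^\star(\rvec{x}),d)$ in the exponent and therefore needs the extra Lemma~\ref{lemma:lemma_5} to get back to $\jmath_{\rvec{X}}(\rvec{x},d)$; you effectively take $\hat{\rvec{X}}=\rvec{X}$, so $\hat\lambda^\star=\lambda^\star$ and~\eqref{plan:identity} already lands on $\jmath_{\rvec{X}}(\rvec{x},d)$. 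The shell summands are then exactly the $\bar m_i(x_i)$'s of~\eqref{bar_m_i_simplified}, and the typical-set conditions you actually need --- $\big|\tfrac1n\sum_i\bar m_i(X_i)-d\big|\le p\eta_n$ and the moment bounds~\eqref{moments_bd} --- are sums of \emph{independent} variables in the decorrelated coordinates and concentrate by Berry--Esseen alone (this is~\eqref{BE:m_bar} and~\eqref{bound_moments_pf} in the paper). So your references to the MLE and Hanson--Wright are superfluous in your own route; they are precisely what the paper's proxy detour requires. What the paper's route buys is a close structural parallel to the Kostina--Verd\'u i.i.d.\ argument (the proxy stands in for the empirical distribution) and it packages the MLE concentration (Theorems~\ref{thm:ECG}--\ref{thm:error_concentration}) as a standalone tool; your route is more economical for this specific lemma.
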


\begin{proof}
Appendix~\ref{sec:main_lemma}.
\end{proof}

Together with $\log\frac{1}{P_{Y^\star}(\mathcal{B}(x, d))}\geq \jmath_X(x, d)$ in~\cite[Eq. (26)]{kostina2012fixed}, obtained by applying Markov's inequality to~\eqref{def:d_tilted_def}, Lemma~\ref{lemma:lemma_2} establishes the link between the probability of distortion $d$-ball and the $\mathsf{d}$-tilted information: $\log \frac{1}{P_{\rvec{Y}^\star}\left(\mathcal{B}(\rvec{X}, d)\right)} \approx \jmath_{\rvec{X}}(\rvec{X},d)$ for the Gauss-Markov source. Results of this kind were referred to as lossy asymptotic equipartition property (AEP) in~\cite[Sec. I.B]{dembo2002source}. 

Lemma~\ref{lemma:lemma_2} is the key lemma in our achievability proof for the Gauss-Markov sources. The proof of Lemma~\ref{lemma:lemma_2} is one of the main technical contributions of this paper. An analog of Lemma~\ref{lemma:lemma_2} for the stationary memoryless sources~\cite[Lem. 2]{kostina2012fixed} has been used to prove the non-asymptotic achievability result~\cite[achievability proof of Th. 12]{kostina2012fixed}. Showing a lower bound on the probability of distortion $d$-balls in terms of $\jmath_{\rvec{X}}(\rvec{X}, d)$, that is, in the form of~\eqref{eqn:lemma_2}, is technical even for i.i.d. sources. To derive such a bound for the Gauss-Markov sources, we rely on fundamentally new ideas, including the maximum likelihood estimator $\hat{a}(\rvec{u})$ defined in~\eqref{MLEformula} and analyzed in Theorem~\ref{thm:error_concentration} in Section~\ref{subsubsec:Teit} above. We proceed to discuss the estimator and its role in the proof of Lemma~\ref{lemma:lemma_2} next.

A major step in proving~\cite[Lem. 2]{kostina2012fixed} for the i.i.d. source $\{X_i\}$ with $X_i \sim P_X$ involves the empirical probability distribution $P_{\hat{X}}$: given a source sequence $\rvec{x}$, $P_{\hat{X}}(x) \triangleq \frac{1}{n}\sum_{i = 1}^n \mathbbm{1}\left\{x_i = x\right\}$. The product of the empirical distributions $P_{\hat{\rvec{X}}} \triangleq P_{\hat{X}} \times \ldots P_{\hat{X}}$ was used in the proof of~\cite[Lem. 2]{kostina2012fixed} for the i.i.d. sources~\cite[Eq. (270)]{kostina2012fixed} to form a typical set of source outcomes.

To describe a typical set of outcomes of the Gauss-Markov source, to each source outcome $\rvec{x}$ (equivalently, $\rvec{u}$) we associate a proxy random variable $\hat{\rvec{X}}(\rvec{x})$ as follows. We first estimate the parameter $a$ in~\eqref{def:GaussMarkov} from the source outcome $\rvec{u}$ using the maximum likelihood estimator $\hat{a}(\rvec{u})$ in~\eqref{MLEformula} in Section~\ref{subsubsec:Teit} above. Then, the proxy random variable $\hat{\rvec{X}}(\rvec{x})$ is defined as a Gaussian random vector with independent (but not identical) coordinates $\hat{X}_i(\rvec{x}) \sim \mathcal{N}(0, \hat{\sigma}_i^2(\rvec{x}))$, where $\hat{\sigma}_i^2(\rvec{x})$'s are the proxy variances defined using $\hat{a}(\rvec{u})$:
\begin{align}
\hat{\sigma}_i^2(\rvec{x}) & \triangleq \frac{\sigma^2}{1+\hat{a}(\rvec{u})^2 -2 \hat{a}(\rvec{u})\cos\LRB{i\pi/(n+1)}}.\label{eqn:XhatVar}
\end{align}
Equivalently, $\hat{\rvec{X}}(\rvec{x})$ is a zero-mean Gaussian random vector whose distribution is given by 
\begin{align}
\hat{\rvec{X}}(\rvec{x})\sim\mathcal{N}\LRB{\rvec{0}, \mathrm{diag}(\hat{\sigma}_1^2(\rvec{x}),...,\hat{\sigma}_n^2(\rvec{x}))}. \label{eqn:Xhat} 
\end{align}

To simplify notations, when there is no ambiguity, we will write $\hat{\rvec{X}}$ and $\hat{\sigma}_i^2$ for $\hat{\rvec{X}}(\rvec{x})$ and $\hat{\sigma}_i^2(\rvec{x})$, respectively. Intuitively, the formula~\eqref{eqn:XhatVar} approximates the eigenvalues of the covariance matrix of $\rvec{U}$ (or equivalently, that of $\rvec{X}$) for a typical $\rvec{x}$. Due to Theorem~\ref{thm:error_concentration}, with probability approaching 1, we have $\hat{a} (\rvec{U})\approx a$, which implies $\hat{\sigma}_i^2 \approx \sigma_i^2$ and $\hat {\rvec{X}}\approx \rvec{X}$. The accuracy of these approximations is quantified in Theorem~\ref{thm:typicalset} below, which is the main tool in the proof of Lemma~\ref{lemma:lemma_2}.

We need a few notations before presenting Theorem~\ref{thm:typicalset}. First, we particularize the CREM problem~\eqref{eqn:cvx_crem}-\eqref{CREM_slope} to the Gauss-Markov source. Let $\rvec{X}$ be the decorrelation of $\rvec{U}$ in~\eqref{def:decorrelation}. For any random vector $\rvec{Y}$ with density, replacing $X$ by $\rvec{X}$ in~\eqref{eqn:cvx_crem} and normalizing by $n$, we define
\begin{align}
\rdf\left(\rvec{X}, \rvec{Y}, d\right) \triangleq  \inf_{P_{\rvec{F}|\rvec{X}}: \mathbb{E}[\mathsf{d}(\rvec{X}, \rvec{F})] \leq d} ~\frac{1}{n}D(P_{\rvec{F}|\rvec{X}} || P_{\rvec{Y}}|P_{\rvec{X}}).
\label{crem:GMdef}
\end{align}
Properties of the CREM~\eqref{crem:GMdef} for the two special cases: when $(i)$ $\rvec{Y}$ is a Gaussian random vector with independent coordinates and $(ii)$ $(\rvec{X}, \rvec{Y})$ forms a RDF-achieving pair, are presented in~Appendix~\ref{app:properties_CREM}. Let $\hat{\rvec{F}}^\star$ be the optimizer of $\mathbb{R}(\hat{\rvec{X}}, \rvec{Y}^\star, d)$, where $\hat{\rvec{X}}$ is defined in~\eqref{eqn:Xhat} and $\rvec{Y}^\star$ in~\eqref{Ystar}. For $\rvec{x}\in\mathbb{R}^n$, define $m_i(\rvec{x})$ as
\begin{align}
m_i(\rvec{x}) \triangleq \mathbb{E}\left[ (\hat{F}_i^\star - x_i )^2~|\hat{X}_i = x_i\right].\label{def:m_i}
\end{align}

\begin{definition}[MLE-typical set]
Fix any $d\in (0,\dmax)$. Given a constant $\alpha>0$, let $\eta_n$ be in~\eqref{rhon} in Section~\ref{subsubsec:Teit} above. For any constant $p>0$ and any $n\in\mathbb{N}$, define $\TS (n, \alpha, p)$ as the set of vectors $\rvec{u}\in \mathbb{R}^n$ satisfying the following conditions:
\begin{align}
\left|\hat{a}(\rvec{u})- a\right| &\leq \eta_n, \label{estimation_error_small}\\
\lrabs{\frac{1}{n}\sum_{i = 1}^n m_i(\rvec{x}) - d} &\leq p\eta_n, \label{mean_near_d}\\
\lrabs{\frac{1}{n}\sum_{i = 1}^n \LRB{\frac{x_i^2}{\sigma_i^2}}^k - (2k-1)!!} &\leq 2,~\text{for } k = 1,2,3, \label{moments_bd}
\end{align}
where $\rvec{x} = \mathsf{S}^\top\rvec{u}$, and $m_i(\rvec{x})$'s are functions of $\rvec{x}$ defined in~\eqref{def:m_i} above.
\label{def:TS}
\end{definition}

The condition~\eqref{estimation_error_small} requires that $\rvec{u}\in \TS(n,\alpha,p)$ should yield a small estimation error, which holds with probability approaching 1 due to Theorem~\ref{thm:error_concentration}. We will explain the condition~\eqref{mean_near_d} in Appendix~\ref{app:proof_lemma_4} below. To gain insight into the condition~\eqref{moments_bd}, note that due to~\eqref{eig:sigma_i}, we have $\frac{X_i}{\sigma_i}\sim\mathcal{N}(0,1)$ and 
\begin{align}
\mathbb{E}\left[\left(\frac{X_i^2}{\sigma_i^2}\right)^k\right] = (2k-1)!!.
\end{align}
Therefore, the condition~\eqref{moments_bd} bounds the variations of $\rvec{X}$, up to its sixth moments, and this condition holds with probability approaching 1 by the Berry-Esseen theorem. Theorem~\ref{thm:typicalset} below summarizes the properties of the typical set $\TS (n, \alpha, p)$ used in the proof of Lemma~\ref{lemma:lemma_2}.

\begin{theorem}[Properties of the MLE-typical set]
\label{thm:typicalset}
For any $d\in (0,\dmax)$ and any constant $\alpha >0$, let $\eta_n$ be given in~\eqref{rhon} in Section~\ref{subsubsec:Teit} above and $p$ be a sufficiently large constant (specifically, $p\geq \eqref{value_p}$ in Appendix~\ref{app:typical_set} below), then we have:
\begin{itemize}
\item [(1).] The probability mass of $\TS(n,\alpha, p)$ is large: there exists a constant $A_1>0$ such that for all $n$ large enough, 
\begin{align}
\mathbb{P}\left[\rvec{U}\in \TS(n,\alpha, p) \right] \geq 1 - \frac{A_1}{\left(\log n\right)^{\kappa\alpha}}, 
\label{TS_large}
\end{align}
where the constant $\kappa$ is defined in~\eqref{kappa} in Section~\ref{subsubsec:Teit} above.

\item [(2).] The proxy variances are good approximations: there exists a constant $A_2>0$ such that for all $n$ large enough, for any $\rvec{u}\in \TS(n,\alpha, p)$, it holds that 
\begin{align}
\left|\hat{\sigma}_i^2(\rvec{x}) - \sigma_i^2\right|\leq A_2\eta_n, \quad~\forall i\in [n], 
\label{eqn:error_sigma}
\end{align}
where $\hat{\sigma}_i^2(\rvec{x})$'s are defined in~\eqref{eqn:XhatVar}.

\item [(3).] Let $\theta > 0$ be the water level matched to $d$ via the limiting reverse waterfilling~\eqref{eqn:para_d_inf}. For all $n$ large enough, for any $\rvec{u}\in\TS(n,\alpha, p)$, it holds that 
\begin{align}
\left|\hat{\lambda}^\star(\rvec{x})  - \lambda^\star \right|\leq \frac{9A_2}{4\theta^2}\eta_n,
\label{eqn:error_lambda}
\end{align}
where $\rvec{x} = \mathsf{S}^\top \rvec{u}$ with $\mathsf{S}$ in~\eqref{eqn:B_eigen_decom}; $\lambda^\star$ is given by~\eqref{lambda_double_star}; 
\begin{align}
\hat{\lambda}^\star(\rvec{x}) = -\rdf'\LRB{\hat{\rvec{X}},\rvec{Y}^\star,d};
\label{lambda_x_hat_star}
\end{align}
$\rvec{X}$ is the decorrelation of $\rvec{U}$ in~\eqref{def:decorrelation}; $(\rvec{X},\rvec{Y}^\star)$ forms a RDF-achieving pair in $\rdf_{\rvec{X}}(n, d)$; and $\hat{\rvec{X}}$ is the proxy Gaussian random variable defined in~\eqref{eqn:Xhat}.
\end{itemize} 
\end{theorem}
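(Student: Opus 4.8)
The plan is to establish the three parts in the stated order, building each on the previous one and on the tools already developed in Sections~\ref{subsubsec:Teig} and~\ref{subsubsec:Teit}.

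For part~(1), I would bound the probability of the complement of $\TS(n,\alpha,p)$ by a union bound over the three defining conditions~\eqref{estimation_error_small}, \eqref{mean_near_d}, \eqref{moments_bd}. Condition~\eqref{estimation_error_small} fails with probability at most $\frac{2}{(\log n)^{\kappa\alpha}}$ by Theorem~\ref{thm:error_concentration} directly. For~\eqref{moments_bd}, note that $\frac{1}{n}\sum_i (X_i^2/\sigma_i^2)^k$ is an average of $n$ i.i.d.\ random variables each with mean $(2k-1)!!$ and finite variance (for $k=1,2,3$ the relevant moments of a standard Gaussian squared are finite), so by Chebyshev (or Berry-Esseen) the probability that it deviates from its mean by more than $2$ is $O(1/n)$, which is $o\big((\log n)^{-\kappa\alpha}\big)$. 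The delicate part is~\eqref{mean_near_d}: one must show that on the event $\{|\hat a(\rvec U)-a|\le\eta_n\}$ intersected with the sixth-moment event, the quantity $\frac1n\sum_i m_i(\rvec x)$ is within $p\eta_n$ of $d$. This is where part~(2) (and the CREM properties in Appendix~\ref{app:properties_CREM}) will be invoked: $m_i$ is a smooth function of $\hat\sigma_i^2$ and $\hat\lambda^\star$, both of which are $O(\eta_n)$-close to $\sigma_i^2$ and $\lambda^\star$; since $\frac1n\sum_i m_i(\rvec x)$ evaluated at the true parameters equals exactly $d_n$ (which is $O(1/n)$-close to $d$ by Theorem~\ref{thm:n_LimitingThm}), a first-order perturbation bound with an appropriately large constant $p$ controls the difference. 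So part~(1) will in fact consume parts~(2) and~(3) as sub-lemmas, and one should prove~(2) and~(3) first and then close the loop on~(1); I would present them in the theorem's stated order but with forward references.

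For part~(2), the key input is Lemma~\ref{lemma:eig_approx}: $\sigma_i^2 = \sigma^2/\mu_i$ and $\mu_i$ is within $\frac{2a\pi}{n}$ of $\xi_i = g(i\pi/(n+1))$, while $\hat\sigma_i^2(\rvec x) = \sigma^2/g_{\hat a}(i\pi/(n+1))$ with $g_{\hat a}(w)=1+\hat a^2-2\hat a\cos w$. I would write $\hat\sigma_i^2-\sigma_i^2$ as $\sigma^2\big(\frac{1}{g_{\hat a}(w_i)} - \frac{1}{g_a(w_i)}\big) + \sigma^2\big(\frac{1}{g_a(w_i)} - \frac{1}{\mu_i}\big)$ where $w_i = i\pi/(n+1)$. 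The second term is $O(1/n)$ by Lemma~\ref{lemma:eig_approx} and the fact that $g_a(w_i)=\xi_i$ is bounded below by $(1-a)^2>0$. The first term: $g_{\hat a}$ is Lipschitz in $\hat a$ uniformly over $w$, and $g_a(w)\ge(1-a)^2>0$ is bounded away from zero, so $\frac{1}{g}$ is also Lipschitz in the parameter; combined with $|\hat a - a|\le\eta_n$ on $\TS$, this term is $O(\eta_n)$. Since $\eta_n = \sqrt{\alpha\log\log n/n}$ dominates $1/n$, the sum is $\le A_2\eta_n$ for a suitable $A_2$ depending only on $a$, $\sigma^2$.

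For part~(3), I would express $\hat\lambda^\star(\rvec x)$ and $\lambda^\star$ via the reverse-waterfilling relations: $\lambda^\star = 1/(2\theta_n)$ (from differentiating~\eqref{eqn:para_r_n}--\eqref{eqn:para_d}, as in the proof of Theorem~\ref{thm:limiting_variance}), and similarly $\hat\lambda^\star(\rvec x)=1/(2\hat\theta)$ where $\hat\theta$ is the water level matched to $d$ in the reverse waterfilling over the proxy variances $\{\hat\sigma_i^2\}$. So the task reduces to bounding $|\hat\theta-\theta|$ (note $\theta_n$ is also $O(1/n)$-close to $\theta$). The water level is defined implicitly by $\frac1n\sum_i\min(\theta,\sigma_i^2)=d$; since this function of $\theta$ has derivative bounded below by a positive constant (as long as $\theta$ stays in a neighborhood of the true value where a positive fraction of the $\sigma_i^2$ exceed $\theta$ — guaranteed for $d<\dmax$), and since replacing $\sigma_i^2$ by $\hat\sigma_i^2$ perturbs the left side by at most $A_2\eta_n$ uniformly by part~(2), the implicit function theorem / monotonicity argument gives $|\hat\theta-\theta|\le A_2\eta_n/(\text{slope lower bound})$. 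Tracking the constants — the slope lower bound near $\theta$ involves $\theta$ itself and the geometry of the spectrum — yields the explicit constant $\frac{9A_2}{4\theta^2}$ quoted in~\eqref{eqn:error_lambda} after bounding $\lambda^\star - \hat\lambda^\star = \frac{1}{2}\big(\frac1\theta - \frac1{\hat\theta}\big)\cdot(1+o(1))$ and using $\hat\theta,\theta$ bounded below.

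The main obstacle is the verification of condition~\eqref{mean_near_d} inside part~(1): unlike the other two conditions, it is not a direct concentration statement about an i.i.d.\ average but a statement about a functional $m_i$ of the proxy distribution, which is itself random through $\hat a$. One must carefully propagate the $O(\eta_n)$ errors from parts~(2)–(3) through the definition~\eqref{def:m_i} of $m_i$ — which involves the CREM optimizer $\hat{\rvec F}^\star$ of $\mathbb{R}(\hat{\rvec X},\rvec Y^\star,d)$, hence the generalized tilted information and its derivative — and combine this with the fact that at the true parameters the corresponding average is exactly $d_n=d+O(1/n)$. Making the constant $p$ explicit (the value $\eqref{value_p}$ referenced in Appendix~\ref{app:typical_set}) requires keeping track of the Lipschitz constants of $m_i$ in $(\hat\sigma_i^2,\hat\lambda^\star)$, which is the most computation-heavy part of the argument and is best deferred to the appendix.
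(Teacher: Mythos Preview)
Your outline for parts~(1) and~(2) matches the paper's approach closely. The decomposition $\hat\sigma_i^2-\sigma_i^2 = (\hat\sigma_i^2-\sigma^2/\xi_i)+(\sigma^2/\xi_i-\sigma_i^2)$ and the use of Lemma~\ref{lemma:eig_approx} plus Lipschitz continuity of $t\mapsto\sigma^2/g_t(w)$ in $t$ is exactly what the paper does.

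There is, however, a genuine gap in your part~(3). You write $\hat\lambda^\star(\rvec x)=1/(2\hat\theta)$ with $\hat\theta$ the water level from reverse waterfilling over the proxy variances $\{\hat\sigma_i^2\}$. That identity would hold if $\hat\lambda^\star$ were the slope of the \emph{matched} problem $\rdf_{\hat{\rvec X}}(n,d)=\inf_{\rvec Y}\rdf(\hat{\rvec X},\rvec Y,d)$. But by definition~\eqref{lambda_x_hat_star}, $\hat\lambda^\star(\rvec x)=-\rdf'(\hat{\rvec X},\rvec Y^\star,d)$ is the slope of a \emph{mismatched} CREM in which the output distribution $\rvec Y^\star$ has variances $\nu_i^2=\max(0,\sigma_i^2-\theta_n)$ built from the \emph{true} $\sigma_i^2$, not from $\hat\sigma_i^2$. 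For this mismatched problem there is no clean water-level formula; differentiating the explicit expression~\eqref{eqn:expression} in Lemma~\ref{lemma:GaussianCREM} with $(\alpha_i^2,\beta_i^2)=(\hat\sigma_i^2,\nu_i^2)$ gives an implicit equation for $\hat\lambda^\star$ that does not reduce to $1/(2\hat\theta)$. The paper handles this via a dedicated perturbation lemma (Lemma~\ref{lemma:bound_slope}), which differentiates~\eqref{eqn:expression} directly and shows $\hat\lambda^\star\in\big[\tfrac{1}{2(\theta_n+t)},\tfrac{1}{2(\theta_n-t)}\big]$ whenever $|\hat\sigma_i^2-\sigma_i^2|\le t$; combining this with part~(2) yields~\eqref{eqn:error_lambda} with the stated constant $9A_2/(4\theta^2)$.

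A secondary issue in part~(1): for condition~\eqref{mean_near_d} you write that ``$\frac1n\sum_i m_i(\rvec x)$ evaluated at the true parameters equals exactly $d_n$'', but $m_i(\rvec x)$ depends on $x_i$ not only through $\hat\lambda^\star(\rvec x)$ but also explicitly via $x_i^2$ (see~\eqref{m_i_expression}). Replacing $\hat\lambda^\star$ by $\lambda^\star$ yields $\bar m_i(x_i)$ in~\eqref{bar_m_i_expression}, whose \emph{expectation} over $X_i$ averages to $d$, but whose realized average still fluctuates randomly. The sixth-moment condition~\eqref{moments_bd} only pins this average to within $O(1)$ of $d$, not $O(\eta_n)$. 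The paper closes this by a separate Berry--Esseen step on $\frac1n\sum_i\bar m_i(X_i)$ at the scale $\sqrt{\log\log n/n}$ (see~\eqref{BE:m_bar}), and the required size of $p$ in~\eqref{value_p} is driven precisely by making the exponent in that bound at least $\kappa\alpha$.
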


\begin{proof}
Appendix~\ref{app:typical_set}.
\end{proof}

\subsection{Achievability proof}

\begin{proof}[Proof of Theorem~\ref{thm:achievability}]
The proof is based on the random coding bound Lemma~\ref{lemma:random} and the lower bound in Lemma~\ref{lemma:lemma_2}. Fix any $d\in (0,\dmax)$ and $\epsilon\in (0,1)$, and let $\theta>0$ be the water level matched to $d$ via the limiting reverse waterfilling~\eqref{eqn:para_d_inf}. We reuse the notations in~\eqref{d_d_n}-\eqref{approx:dis_2}. Similar to the event $\mathcal{E}$ in~\eqref{E_t}, we define the event $\mathcal{F}$ as
\begin{align}
\mathcal{F} \triangleq \lrbb{ \jmath_{\rvec{X}}\LRB{\rvec{X}, d}\leq \jmath_{\rvec{X}}\LRB{\rvec{X}, d_n} + \frac{4C_d}{\tilde{c}\theta}}.
\end{align}
Theorem~\ref{thm:tilted_info_concentrate} implies that 
\begin{align}
\prob{\mathcal{F}} \geq 1 - \frac{1}{n}.
\label{F_t}
\end{align}
Define $\epsilon_n$ as
\begin{align}
\epsilon_n \triangleq \epsilon -  \frac{C_{\textsf{BE}}+1}{\sqrt{n}} - \frac{K}{(\log n)^{\kappa\alpha}} -\frac{1}{n}.
\label{choice:epsilon_n}
\end{align}
Since $\epsilon \in (0,1)$, we have $\epsilon_n\in (0,1)$ for all $n$ large enough. Choose $M$ as
\begin{align}
\log M &\triangleq n\rdf_{U}(d) + \sqrt{n\mathbb{V}_U(d)}Q^{-1}(\epsilon_n) + \log\frac{\log n}{2}+ \notag \\
& \beta_1\log^q n + \beta_2 + c_r +c_v\lrabs{Q^{-1}(\epsilon_n)} +\frac{4C_d}{\tilde{c}\theta}, 
\label{eqn:M}
\end{align}
where $q>1, \beta_1>0, \beta_2$ are the constants in Lemma~\ref{lemma:lemma_2}; and $c_r,c_v$ are the positive constants in~\eqref{approx:rate} and~\eqref{approx:dis_2}. Define the random variable $G_n$ as 
\begin{align}
G_n \triangleq \log M - \jmath_{\rvec{X}}\LRB{\rvec{X}, d_n} - \beta_1\log^q n - \beta_2 - \frac{4C_d}{\tilde{c}\theta}, 
\label{eqn:Gn}
\end{align}
where $\jmath_{\rvec{X}}\LRB{\rvec{X}, d_n}$ is in~\eqref{sumofnrv}. By~\eqref{approx:rate},~\eqref{approx:dis_2}, and~\eqref{eqn:Gn}, we have
\begin{align}
G_n \geq \sum_{i = 1}^n \mathbb{E}_i +Q^{-1}(\epsilon_n)\sqrt{\sum_{i = 1}^n\mathbb{V}_i} - \jmath_{\rvec{X}}\LRB{\rvec{X}, d_n}+ \log\frac{\log n}{2},
\label{bound_G_n}
\end{align}
where $\mathbb{E}_i$'s and $\mathbb{V}_i$'s are defined in~\eqref{conversepf:mean} and~\eqref{conversepf:var}, respectively. Define the event $\mathcal{G}$ as 
\begin{align}
\mathcal{G}\triangleq \left\{G_n < \log \frac{\log n}{2}\right\}.
\end{align}
By~\eqref{bound_G_n},~\eqref{sumofnrv} and the Berry-Esseen Theorem, we have 
\begin{align}
\prob{\mathcal{G}} &\leq \prob{\jmath_{\rvec{X}}\LRB{\rvec{X}, d_n}-\sum_{i = 1}^n \mathbb{E}_i > Q^{-1}(\epsilon_n)\sqrt{\sum_{i = 1}^n\mathbb{V}_i}} \\
&\leq \epsilon_n + \frac{C_{\mathsf{BE}}}{\sqrt{n}}.
\label{BE_En}
\end{align}
Define the event $\mathcal{L}$ as:
\begin{align}
& \mathcal{L} \notag \\
\triangleq & \left\{\log \frac{1}{P_{\rvec{Y^\star}}\left(\mathcal{B}(\rvec{X},d)\right)} \leq \log M  - G_n\right\}\\
= &\left\{\log \frac{1}{P_{\rvec{Y^\star}}\left(\mathcal{B}(\rvec{X},d)\right)} \leq \jmath_{\rvec{X}}\LRB{\rvec{X}, d_n} + \beta_1\log^q n + \beta_2+ \frac{4C_d}{\tilde{c}\theta}\right\}, 
\end{align}
where $\rvec{Y}^\star$ is given in~\eqref{Ystar}. Combining Lemma~\ref{lemma:lemma_2} and~\eqref{F_t} yields
\begin{align}
\prob{\mathcal{L}} \geq 1 - \frac{1}{n} - \frac{K}{\LRB{\log n}^{\kappa\alpha}}.
\label{gap_step}
\end{align}
Indeed, denoting the probability on the left-hand side of~\eqref{eqn:lemma_2} by $\prob{\mathcal{H}}$, we have 
\begin{align}
\prob{\mathcal{H}} &= \prob{\mathcal{H} \cap \mathcal{F}} + \prob{\mathcal{H}\cap \mathcal{F}^c} \\
&\leq \prob{\mathcal{L}} + \frac{1}{n},\label{gapstep}
\end{align}
where~\eqref{gapstep} holds since $\mathcal{H}\cap \mathcal{F}\subseteq \mathcal{L}$.

We have now gathered all the ingredients to prove Theorem~\ref{thm:achievability}. Replacing $\rvec{Y}$ by $\rvec{Y}^\star$ in Lemma~\ref{lemma:random}, we conclude that there exists an $(n, M, d, \epsilon')$ code with 
\begin{align}
& \epsilon' \notag \\
\leq & \mathbb{E}_{\rvec{X}}\left[e^{-MP_{\rvec{Y}^\star}(\mathcal{B}(\rvec{X},d))}\right] \label{steps:1}\\ 
= & \mathbb{E}_{\rvec{X}}\left[e^{-MP_{\rvec{Y}^\star}(\mathcal{B}(\rvec{X},d))} \mathbbm{1}\left\{\mathcal{L}\right\}\right]  + \mathbb{E}_{\rvec{X}}\left[e^{-MP_{\rvec{Y}^\star}(\mathcal{B}(\rvec{X},d))} \mathbbm{1}\left\{\mathcal{L}^c\right\}\right]  \label{steps:2}\\
\leq & \mathbb{E}_{\rvec{X}}\left[e^{-e^{G_n}} \right] + \frac{K}{(\log n)^{\kappa\alpha}} + \frac{1}{n}\label{steps:3}\\
= & \mathbb{E}_{\rvec{X}}\left[e^{-e^{G_n}} \mathbbm{1}\left\{\mathcal{G}\right\}\right]  +\mathbb{E}_{\rvec{X}}\left[e^{-e^{G_n}} \mathbbm{1}\left\{\mathcal{G}^c\right\}\right] + \frac{K}{(\log n)^{\kappa\alpha}} + \frac{1}{n}  \label{steps:4}\\
\leq & \mathbb{P}(\mathcal{G}) + \frac{1}{\sqrt{n}}\mathbb{P}(\mathcal{G}^c)  + \frac{K}{(\log n)^{\kappa\alpha}} + \frac{1}{n} \label{steps:5}\\
\leq &\epsilon_n + \frac{\bec+1}{\sqrt{n}} +\frac{K}{(\log n)^{\kappa\alpha}} + \frac{1}{n}\label{steps:6}\\
= &\epsilon, \label{steps:7}
\end{align}
where~\eqref{steps:1} is by weakening~\eqref{bound:rcb} using $\rvec{Y} = \rvec{Y}^\star$;~\eqref{steps:3} holds by~\eqref{gap_step} and $\mathbbm{1}\left\{\mathcal{L}\right\}M P_{\rvec{Y}^\star}(\mathcal{B}(\rvec{X},d))\geq e^{G_n}$;~\eqref{steps:5} holds since $e^{-e^{G_n}}\leq 1$ and $\mathbbm{1}\left\{\mathcal{G}^c\right\}e^{-e^{G_n}}\leq \frac{1}{\sqrt{n}}$;~\eqref{steps:6} is by~\eqref{BE_En}; and~\eqref{steps:7} is by the choice of $\epsilon_n$ in~\eqref{choice:epsilon_n}.  Consequently, since there exists an $(n,M,d,\epsilon')$ code with $\epsilon'\leq \epsilon$, we must have 
\begin{align}
R(n,d,\epsilon) \leq \frac{\log M}{n},
\label{achi_last}
\end{align}
where $\log M$ is given by~\eqref{eqn:M}. Similar to the converse proof, plugging~\eqref{eqn:M} into~\eqref{achi_last} and then using the Taylor expansion of $Q^{-1}(\epsilon_n)$ yields~\eqref{eqn:achievability}.
\end{proof}

\section{Conclusion}
\label{sec:conclusion}
In this paper, we derived the reverse waterfilling characterization~\eqref{eqn:wf_dis} of the dispersion for lossy compression of the Gauss-Markov source~\eqref{def:GaussMarkov} with $|a|<1$ (Theorem~\ref{thm:main_thm_intro}). This is the first dispersion result for lossy compression of sources with memory. In doing so, we developed several novel technical tools, which are highlighted below.
\begin{itemize}
\item We derived the expression for the limiting variance of the $\mathsf{d}$-tilted information for the Gauss-Markov source in Theorem~\ref{thm:limiting_variance}. Its proof relies on our parametric representation for the $\mathsf{d}$-tilted information, presented in Lemma~\ref{lemma:para_tilted} in Appendix~\ref{app:derivation_vd}.

\item Theorem~\ref{thm:n_LimitingThm} presented a nonasymptotic refinement of Gray's result~\cite{gray1970information} (restated in Theorem~\ref{thm:LimitingThm}) on the eigenvalue distribution of the covariance matrix of the random vector $\rvec{U}$ from the Gauss-Markov source. The key tool we developed to prove Theorem~\ref{thm:n_LimitingThm} is Lemma~\ref{lemma:eig_approx} in Section~\ref{subsubsec:Teig}, which is a sharp bound relating the eigenvalues of two sequences of symmetric tridiangonal matrices.

\item The maximum likelihood estimator $\hat{a}(\rvec{u})$, defined in~\eqref{MLEformula} and analyzed in Theorems~\ref{thm:ECG} and~\ref{thm:error_concentration}, is of independent interest as it allows one to estimate the distribution of $\rvec{u}$ drawn from the class of the Gauss-Markov sources with unknown $a$. The error bounds in Theorem~\ref{thm:ECG} rely on the Hanson-Wright inequality~\cite[Th. 1.1]{rudelson2013hanson}. That inequality applies beyond the case when $Z_i$'s are Gaussian, which means that our approach can be applied to other sources with memory.

\item To prove achievability, we constructed a typical set in Definition~\ref{def:TS} based on the maximum likelihood estimator. This idea of constructing typical sets via estimators could also find its use in other problems.
\end{itemize}

Finally, we discuss several open problems. 
\begin{itemize}
\item The dispersion for Gauss-Markov sources with $|a| \geq 1$ is unknown. This paper treats the asymptotically stationary case, i.e., $|a| <1$. The case $|a| \geq 1$ is fundamentally different, since that source is nonstationary. The rate-distortion functions for nonstationary Gaussian autoregressive processes were first derived by Gray~\cite[Eq. (22)]{gray1970information} in 1970, and later in 1980 by Hashimoto and Arimoto~\cite[Eq.(6)]{hashimoto1980rate} in an equivalent but distinct form; that equivalence was shown by Gray and Hashimoto~\cite{gray2008note} in 2008. Gray's reverse waterfilling~\cite[Eq. (22)]{gray1970information} is different from Kolmogorov's reverse waterfilling~\eqref{eqn:para_rate_inf} in the nonstationary case, where the later does not apply. Therefore, in order to characterize the dispersion for the case $|a|\geq 1$, one would need to use Gray's reverse waterfilling~\cite[Eq. (22)]{gray1970information} for $\rdf_{U}(d)$.

\item A natural generalization of this work would be to consider the dispersion for the general stationary Gaussian autoregressive processes~\eqref{eqn:GeneralGaussianAR}. The geometric converse proof in Section~\ref{sec:converse} already yields a converse bound on $R(n,d,\epsilon)$, which is tight in the low distortion regime $d\in (0,d_c]$ in the first-order term; we conjecture it is also tight in the second-order term. A possible way to show a matching achievability bound for the Gaussian AR processes of order $m$, inspired by the estimation idea in this paper, is to analyze an estimator which estimates the vector $\rvec{a} = (a_1, ..., a_m)^\top$ in~\eqref{eqn:GeneralGaussianAR} instead of the scalar $a$. To deal with large distortions, i.e. $d > d_c$, sharp bounds on eigenvalues of $\mathsf{A}^\top\mathsf{A}$ with $\mathsf{A}$ given by~\eqref{def:A_general} need to be derived, similar to Lemma~\ref{lemma:eig_approx} in Section~\ref{subsubsec:Teig}; the tools in Appendix~\ref{app:EigenApprox} might be useful.

\item A formula (analogous to~\eqref{eqn:main_GA}) for the channel dispersion of the Gaussian intersymbol interference (ISI) channels, see~\cite[Eq. (29)]{polyanskiy2009dispersion},  was presented in~\cite[Th. 5]{polyanskiy2009dispersion} without proof. The channel capacity of the Gaussian ISI channel is well-known, e.g.~\cite[Theorem]{tsybakov1970capacity} and~\cite[Th. 1]{hirt1988capacity}. The tools in this paper might be useful in obtaining a proof of the channel dispersion for Gaussian ISI channels in~\cite[Th. 5]{polyanskiy2009dispersion}.

\item A fundamental problem left open is how widely the limiting formula for the dispersion
\begin{align}
V(d) = \limsup_{n\rightarrow\infty }\frac{1}{n}\var{\jmath_{\rvec{X}}(\rvec{X},d)}
\end{align}
applies. Theorem~\ref{thm:main_thm_intro} and Theorem~\ref{thm:limiting_variance} established its validity for the Gauss-Markov source. We conjecture that it continues to apply whenever the central limit theorem type of results can be derived for $\jmath_{\rvec{X}}(\rvec{X},d)$.
\end{itemize}

\section*{Acknowledgment} 
\label{sec:ack}
We would like to thank the associate editor Dr. Shun Watanabe and the anonymous reviewers for their insightful comments that are reflected in the final version. 
\appendices

\section{}
\label{app:ATools}

\subsection{A roadmap of the paper}
\label{app:roadmap}
The relations of our main theorems, lemmas, corollaries are presented in Fig.~\ref{fig:roadmap}.
\begin{figure*}
\centering
\includegraphics[scale=0.31]{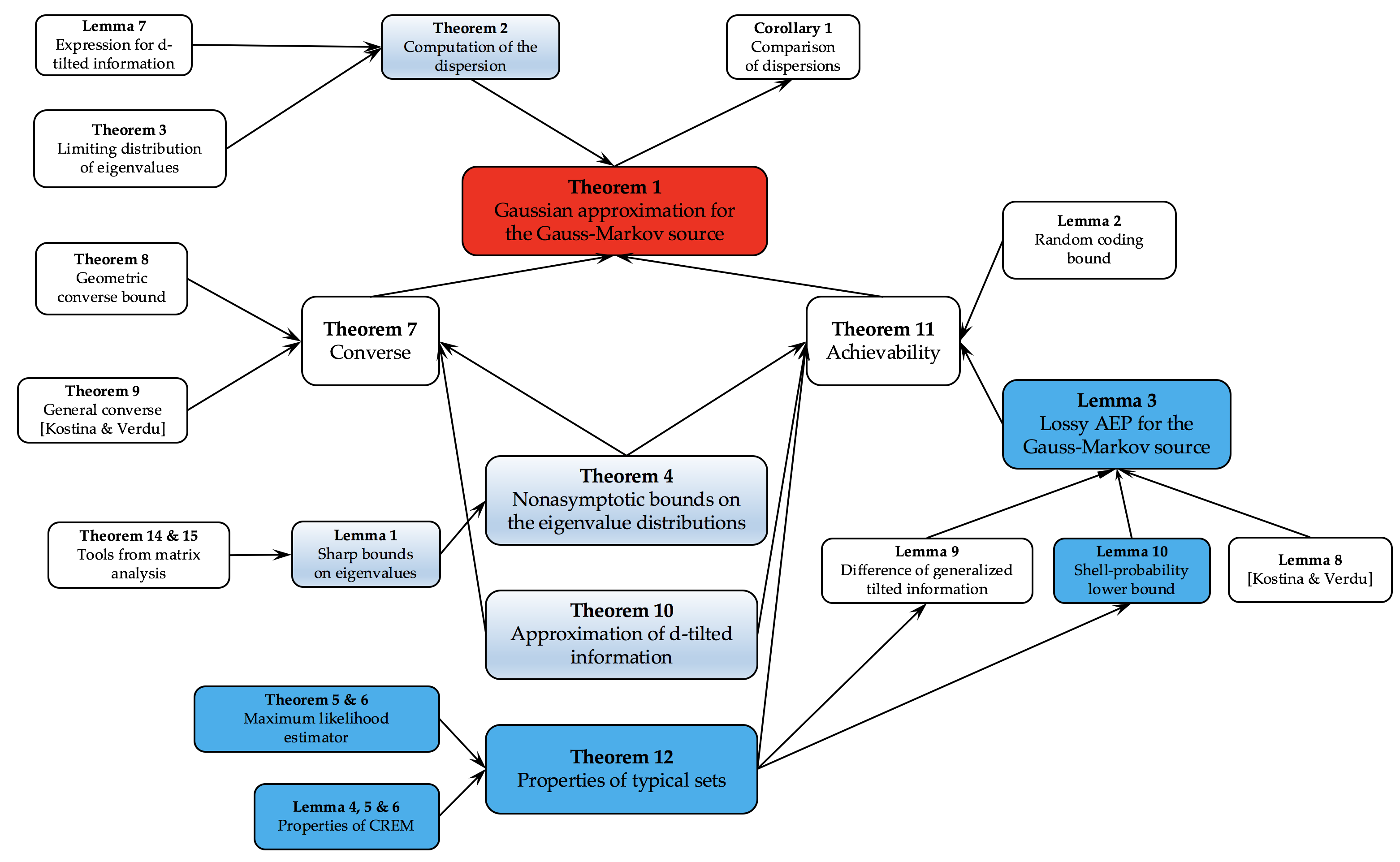}
\caption{A roadmap of the paper: an arrow from block $A$ to block $B$ means that the derivation of block $B$ is based on block $A$. Our main result is Theorem~\ref{thm:main_thm_intro}. The lightly shaded blocks consist of results that are novel and relatively easier to obtain, while the heavily shaded blocks consist of our main technical contributions.}
\label{fig:roadmap}
\end{figure*}

\subsection{Classical theorems}
\label{app:classicalthms}
\begin{theorem}[Berry-Esseen Theorem, e.g. {\cite[Chap. 16.5]{feller1971intro}}]
Let $W_1,\ldots,W_n$ be a collection of independent zero-mean random variables with variances $V_i^2 > 0$ and finite third absolute moment $T_i\triangleq \mathbb{E}[|W_i|^3] < +\infty$. Define the average variance $V^2$ and average third absolute moment $T$ as 
\begin{align}
V^2\triangleq \frac{1}{n}\sum_{i = 1}^n V_i^2,  \quad T \triangleq \frac{1}{n}\sum_{i = 1}^n T_i.
\end{align}
Then for $n\in\mathbb{N}$, we have 
\begin{align}
\sup_{t\in\mathbb{R}}\left |\prob{\frac{1}{V \sqrt{n}}\sum_{i = 1}^n W_i < t}  - \Phi(t) \right | \leq \frac{6T}{V^3\sqrt{n}},
\end{align}
where $\Phi$ is the cdf of the standard normal distribution $\mathcal{N}(0,1)$.
\label{thm:BE}
\end{theorem}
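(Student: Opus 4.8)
The plan is to prove this by the Fourier-analytic (characteristic function) method, following Esseen. To lighten notation write $\rho \triangleq T = \tfrac1n\sum_i T_i$ for the average third absolute moment, set $S_n \triangleq \tfrac{1}{V\sqrt n}\sum_{i=1}^n W_i$, and let $F_n$ be its distribution function; the goal is to bound $\|F_n-\Phi\|_\infty$. Denote by $\phi_j$ the characteristic function of $W_j/(V\sqrt n)$, so $f_n \triangleq \prod_{j=1}^n \phi_j$ is the characteristic function of $S_n$, while $\Phi$ has characteristic function $u\mapsto e^{-u^2/2}$. The first ingredient is Esseen's smoothing inequality: for every $\tau>0$,
\[
\|F_n-\Phi\|_\infty \;\le\; \frac{1}{\pi}\int_{-\tau}^{\tau}\left|\frac{f_n(u)-e^{-u^2/2}}{u}\right|\,du \;+\; \frac{c_0}{\tau},
\]
where $c_0$ is an absolute constant governed by $\sup_t\Phi'(t)=1/\sqrt{2\pi}$. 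This reduces matters to estimating $|f_n(u)-e^{-u^2/2}|$ on the window $|u|\le\tau$.

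Second, I would Taylor-expand each factor. Since $\mathbb{E}[W_j]=0$, $\mathbb{E}[W_j^2]=V_j^2$, and $\mathbb{E}[|W_j|^3]=T_j$, the standard remainder bound for characteristic functions gives, for each $j$,
\[
\phi_j(u) \;=\; 1-\frac{V_j^2}{2nV^2}u^2 + \theta_j(u),\qquad |\theta_j(u)|\le \frac{T_j}{6\,n^{3/2}V^3}\,|u|^3,
\]
and in particular $|\phi_j(u)|\le 1$. Using the telescoping inequality $\bigl|\prod_j a_j-\prod_j b_j\bigr|\le\sum_j|a_j-b_j|$ (valid when all $|a_j|,|b_j|\le 1$), together with $|e^z-1-z|\le\tfrac12|z|^2e^{|z|}$ to pass from $\phi_j(u)$ to $\exp(-V_j^2u^2/(2nV^2))$, and then the identity $\prod_j\exp(-V_j^2u^2/(2nV^2))=e^{-u^2/2}$ (since $\sum_j V_j^2=nV^2$), one obtains for $|u|\le\tau$ a bound of the shape
\[
\bigl|f_n(u)-e^{-u^2/2}\bigr|\;\le\; C\,\frac{\rho}{V^3\sqrt n}\,|u|^3\,e^{-u^2/4},
\]
where the Lyapunov inequality $\rho\ge V^3$ and the restriction $\tau\asymp V^3\sqrt n/\rho$ are exploited to absorb the lower-order and higher-power terms into the Gaussian envelope.

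Finally, substitute this estimate into Esseen's inequality: the integral is bounded by $C\frac{\rho}{V^3\sqrt n}\int_{\mathbb{R}}u^2e^{-u^2/4}\,du=O\!\bigl(\rho/(V^3\sqrt n)\bigr)$, and taking $\tau\asymp V^3\sqrt n/\rho$ makes $c_0/\tau=O\!\bigl(\rho/(V^3\sqrt n)\bigr)$ as well, yielding $\|F_n-\Phi\|_\infty\le c\,\rho/(V^3\sqrt n)$ for an absolute constant $c$. The qualitative $O(1/\sqrt n)$ conclusion falls out of this scheme with essentially no extra effort; the only genuinely delicate point — and the main obstacle — is the bookkeeping required to reach the \emph{explicit} constant $c=6$ claimed in the statement. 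That demands the sharp third-moment form of the characteristic-function remainder, a careful two-range split of the $u$-integral (small $u$, where the quadratic term controls everything, versus $\delta<|u|\le\tau$, where the exponential decay does the work), and the best available constant in the smoothing lemma. None of these steps is conceptually hard, but each must be executed carefully. One should also note that the bound is vacuous unless $\rho/(V^3\sqrt n)$ is small, consistent with $\rho\ge V^3$, which is exactly why such estimates are informative only once $n$ is large relative to the normalized third moment.
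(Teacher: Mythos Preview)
The paper does not prove this statement at all: it is stated in the appendix under ``Classical theorems'' and simply cited from Feller's textbook, with no proof given. So there is no ``paper's own proof'' to compare against.

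Your sketch is the standard Esseen characteristic-function argument and is broadly correct as an outline. A few remarks are in order. First, the telescoping inequality you quote requires $|b_j|\le 1$, but $\exp(-V_j^2 u^2/(2nV^2))$ is indeed $\le 1$, so that is fine; however you should be a bit more careful about the range on which the key estimate $|f_n(u)-e^{-u^2/2}|\le C\frac{\rho}{V^3\sqrt n}|u|^3 e^{-u^2/4}$ holds --- it is not automatic for all $|u|\le\tau$ and typically requires $|u|\le \delta V^3\sqrt n/\rho$ for a suitable absolute $\delta$, which is exactly why one chooses $\tau$ of that order. Second, as you yourself acknowledge, your argument does not actually deliver the explicit constant $6$; it gives some absolute constant. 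Since the paper only uses the theorem as a black box (and in fact immediately weakens it further in the remark following the statement, replacing $6T/V^3$ by an unspecified constant $C_{\mathsf{BE}}$), this is harmless for the paper's purposes, but it means your write-up proves something slightly weaker than the stated inequality.

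In short: the proposal is a reasonable proof sketch of a classical theorem that the paper merely quotes; there is nothing to compare, and the only genuine gap is the explicit constant, which the paper never actually uses.
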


\begin{remark}
Since in this paper, we only consider random variables $W_i$'s with bounded $p$-th moment for any finite $p$, it is easy to check that there exists a constant $\bec > 0 $ such that 
\begin{align}
\sup_{t\in\mathbb{R}}\left |\prob{\frac{1}{V \sqrt{n}}\sum_{i = 1}^n W_i < t}  - \Phi(t) \right | \leq \frac{\bec}{\sqrt{n}}.
\end{align}
While the constant $\bec$ depends on the random variables $W_i$'s, to simplify notations, we use $\bec$ in all applications of the Berry-Esseen Theorem.
\end{remark}


\subsection{Justification of~\eqref{relation:mean_rdf}}
\label{app:just}
We provide a short justification of how~\eqref{relation:mean_rdf} follows from~\cite{csiszar1974extremum}. We use the same notations in~\cite{csiszar1974extremum}. We denote $P$ the distribution of the source $X$, and $Q_0$ the optimal reproduction distribution in $\mathbb{R}_X(d)$, that is, $Q_0$ is the $Y$-marginal of a minimizer $\Tilde{P}_0$. First, \cite[Corrollary, Eq. (1.25)]{csiszar1974extremum} shows that 
\begin{align}
    \mathbb{R}_X(d) = \max_{\alpha (x),~s}~\mathbb{E}_P\left[\log\alpha (X)\right]-sd.\label{eqn:max}
\end{align}
Next, the proof of \cite[Lem. 1.4]{csiszar1974extremum} in \cite[Eq. (1.27)-(1.32)]{csiszar1974extremum} shows that the maximizer $(\alpha^\star(\cdot), s^\star)$ is given by \cite[Eq. (1.15)]{csiszar1974extremum} with $s^\star = \lambda^\star$ due to \cite[Eq. (1.12)]{csiszar1974extremum}. For convenience, we write down \cite[Eq. (1.15)]{csiszar1974extremum}: 
\begin{align}
    \alpha^\star (x) =  \frac{1}{\mathbb{E}_{Q_0}[\exp(-\lambda^\star\mathsf{d}(x,Y^\star))]},\label{eqn:alphax}
\end{align}
where $Y^\star \sim Q_0$. Finally, plugging~\eqref{eqn:alphax} into~\eqref{eqn:max} yields~\eqref{relation:mean_rdf}. \qed

\section{Proofs in Section~\ref{sec:MainResults}}
\label{app:MainResults}

\subsection{Corner points on the dispersion curve}
\label{app:cornerpoints}
We derive~\eqref{corner_points} using the residue theorem from complex analysis~\cite[Th. 17]{ahlfors1979complex}. Similar ideas have been applied by Berger~\cite[Chap. 6, p. 232]{berger1971rate} and Gray~\cite[Eq. (12)]{gray2008note} to study the rate-distortion functions for the nonstationary Gaussian AR processes~\eqref{eqn:GeneralGaussianAR}. The coordinate of $P_1$ in Fig.~\ref{fig:dispersion} can be easily obtained as follows. The water level matched to $d_c$ via~\eqref{eqn:para_d_inf} is $\theta_{\min}$ in~\eqref{def:general_d_c}. Hence,~\eqref{eqn:wf_dis} is simplified as 
\begin{align}
V_U(d_c) = \frac{1}{4\pi}\int_{-\pi}^{\pi}~1~dw = \frac{1}{2}.
\end{align}
To treat $P_2$, note that the water level matched to $\dmax$ via~\eqref{eqn:para_d_inf} is $\theta_{\max}$ in~\eqref{theta_max}, which, due to~\eqref{spectrum}, equals, 
\begin{align}
\theta_{\max} = \frac{\sigma^2}{(1-a)^2}.
\end{align}
This implies that~\eqref{eqn:wf_dis} evaluates as
\begin{align}
V_U(\dmax)= \frac{\sigma^4}{4\pi\theta_{\mathrm{max}}^2}\int_{-\pi}^{\pi}\frac{1}{\left(g(w)\right)^2}~dw.
\end{align}
Invoking the residue theorem~\cite[Th. 17]{ahlfors1979complex}, we will obtain the integral 
\begin{align}
\mathcal{I} \triangleq \int_{-\pi}^{\pi}\frac{1}{\left(g(w)\right)^2}~dw = \frac{2\pi (1+a^2)}{(1-a^2)^3}, 
\end{align}
which will complete the derivation. To that end, change variables using $z = e^{\mathrm{j}w}$ and rewrite 
\begin{align}
g(w) &= 1+a^2 - a(z + z^{-1}) \\
&= (z^{-1}-a)(z-a).
\end{align}
The integral $\mathcal{I}$ is then 
\begin{align}
\mathcal{I} &= \oint_{|z| = 1} \frac{z}{\mathrm{j}a^2(z - a^{-1})^2(z-a)^2}~dz \label{RT1}\\
&= 2\pi\mathrm{j}~\mathrm{Res}_{z = a} \frac{z}{\mathrm{j}a^2(z - a^{-1})^2(z-a)^2} \label{RT2}\\
&=  \frac{2\pi}{a^2}\lim_{z\rightarrow a}\frac{d}{dz}\frac{z}{(z-a^{-1})^2} \label{RT3}\\
&= \frac{2\pi (1+a^2)}{(1-a^2)^3},\label{RT4}
\end{align}
where~\eqref{RT1} is by the change of variable $z = e^{\mathrm{j}w}$;~\eqref{RT2} is due to the residue theorem and $a\in [0,1)$; and~\eqref{RT3} is the standard method of computing residues.\qed

\subsection{Two interpretations of the maximum distortion}
\label{app:int_dmax}
We present the computation details of~\eqref{variance:U_lim} and how~\eqref{def:dmax} leads to~\eqref{d_max}. Using the same technique as in~\eqref{RT1}-\eqref{RT4}, we compute~\eqref{def:dmax} as
\begin{align}
\dmax &= \frac{\sigma^2}{2\pi}\oint_{|z| = 1} \frac{1}{\mathrm{j}z(z^{-1} - a)(z-a)}~dz \\
& =  \sigma^2 \mathrm{Res}_{z = a} \frac{1}{-a(z - a^{-1})(z-a)} \\
& =  \sigma^2\lim_{z\rightarrow a} \frac{1}{-a(z - a^{-1})} \\
& = \frac{\sigma^2}{1-a^2}.
\end{align}
To compute the stationary variance, take the variance on both sides of~\eqref{def:GaussMarkov},
\begin{align}
\var{U_i} = a^2\var{U_{i-1}} + \sigma^2, 
\label{eqn:finite_k}
\end{align}
then taking the limit on both sides of~\eqref{eqn:finite_k}, we have
\begin{align}
\lim_{i\rightarrow\infty} \var{U_i} = a^2\lim_{i\rightarrow\infty} \var{U_{i-1}} + \sigma^2, 
\end{align}
which implies 
\begin{align}
\lim_{i\rightarrow\infty} \var{U_i} = \frac{\sigma^2}{1-a^2}.
\end{align}
\qed

\section{Proofs in Section~\ref{sec:model}}
\label{app:model}

\subsection{Eigenvalues of nearly Toeplitz tridiagonal matrices}
\label{app:EigenApprox}
For convenience, we record two import results from matrix theory.
\begin{theorem}[{Cauchy Interlacing Theorem for eigenvalues~\cite[p.59]{bhatia2013matrix}}]
Let $\mathsf{H}$ be an $n\times n$ Hermitian matrix partitioned as $\mathsf{H} = \begin{pmatrix}
\mathsf{P} & \star\\
\star & \star
\end{pmatrix},$ where $\mathsf{P}$ is an $(n-1)\times (n-1)$ principal submatrix of $\mathsf{H}$. Let $\lambda_1(\mathsf{P})\leq \lambda_2(\mathsf{P}) \ldots\leq \lambda_{n-1}(\mathsf{P})$ be the eigenvalues of $\mathsf{P}$, and $\lambda_1(\mathsf{H})\leq \lambda_2(\mathsf{H}) \ldots\leq \lambda_{n}(\mathsf{H})$ be the  eigenvalues of $\mathsf{H}$, then $\lambda_{i}(\mathsf{H})\leq \lambda_i(\mathsf{P})\leq \lambda_{i+1}(\mathsf{H})$ for $i = 1,...,n-1$. 
\label{thm:cauchy}
\end{theorem}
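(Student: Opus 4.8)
The plan is to derive Theorem~\ref{thm:cauchy} from the Courant--Fischer min-max characterization of the eigenvalues of a Hermitian matrix, taken as the single external tool. Writing $R_{\mathsf{M}}(x)\triangleq (x^{\ast}\mathsf{M}x)/(x^{\ast}x)$ for the Rayleigh quotient of a Hermitian matrix $\mathsf{M}$, where $x^{\ast}$ denotes the conjugate transpose, recall that for an $m\times m$ Hermitian $\mathsf{M}$ with eigenvalues $\lambda_1(\mathsf{M})\leq\cdots\leq\lambda_m(\mathsf{M})$ one has the two dual formulas
\begin{align}
\lambda_i(\mathsf{M}) = \min_{\dim V = i}\ \max_{0\neq x\in V}R_{\mathsf{M}}(x) = \max_{\dim V = m-i+1}\ \min_{0\neq x\in V}R_{\mathsf{M}}(x),
\label{eqn:courantfischer}
\end{align}
the extrema being over subspaces $V\subseteq\mathbb{C}^m$ of the stated dimension.

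The partitioned form in the statement already places $\mathsf{P}$ as the leading $(n-1)\times(n-1)$ block of $\mathsf{H}$, so I would identify $\mathbb{C}^{n-1}$ with the hyperplane $W\triangleq\{x\in\mathbb{C}^n:x_n=0\}$: for $x\in W$ whose first $n-1$ coordinates form $y\in\mathbb{C}^{n-1}$ we have $x^{\ast}\mathsf{H}x=y^{\ast}\mathsf{P}y$ and $x^{\ast}x=y^{\ast}y$, hence $R_{\mathsf{H}}(x)=R_{\mathsf{P}}(y)$. Transporting \eqref{eqn:courantfischer} for $\mathsf{P}$ onto $W$ gives, on the one hand,
\begin{align}
\lambda_i(\mathsf{P}) = \min_{V\subseteq W,\ \dim V=i}\ \max_{0\neq x\in V}R_{\mathsf{H}}(x) \geq \min_{V\subseteq\mathbb{C}^n,\ \dim V=i}\ \max_{0\neq x\in V}R_{\mathsf{H}}(x) = \lambda_i(\mathsf{H}),
\end{align}
since restricting the feasible $i$-dimensional subspaces to those lying in $W$ can only raise a $\min$-over-$\max$; and, on the other hand, using the $\max$-$\min$ form of \eqref{eqn:courantfischer} for the size-$(n-1)$ matrix $\mathsf{P}$ (so the relevant dimension is $(n-1)-i+1=n-i$),
\begin{align}
\lambda_i(\mathsf{P}) = \max_{V\subseteq W,\ \dim V=n-i}\ \min_{0\neq x\in V}R_{\mathsf{H}}(x) \leq \max_{V\subseteq\mathbb{C}^n,\ \dim V=n-i}\ \min_{0\neq x\in V}R_{\mathsf{H}}(x) = \lambda_{i+1}(\mathsf{H}),
\end{align}
where the last equality is \eqref{eqn:courantfischer} for $\mathsf{H}$ with $m-i+1$ read as $n-(i+1)+1=n-i$. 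Concatenating the two displays yields $\lambda_i(\mathsf{H})\leq\lambda_i(\mathsf{P})\leq\lambda_{i+1}(\mathsf{H})$ for every $i\in\{1,\dots,n-1\}$, which is the assertion.

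I do not expect a genuine obstacle here: the argument is routine once \eqref{eqn:courantfischer} is granted, and the only thing to watch is the bookkeeping of dimension indices when switching between the $n\times n$ and $(n-1)\times(n-1)$ versions of the min-max principle, together with the elementary monotonicity of $\min$-over-$\max$ and $\max$-over-$\min$ under shrinking the feasible set of subspaces. If one wishes to avoid invoking Courant--Fischer as a black box, an equivalent route is a direct dimension count: for the lower bound, intersect the span of the eigenvectors of $\mathsf{H}$ for $\lambda_i(\mathsf{H}),\dots,\lambda_n(\mathsf{H})$ (dimension $n-i+1$) with $W$ (dimension $n-1$) to obtain a subspace of dimension at least $n-i$ on which $R_{\mathsf{H}}\geq\lambda_i(\mathsf{H})$, whence $R_{\mathsf{P}}\geq\lambda_i(\mathsf{H})$ on the corresponding subspace of $\mathbb{C}^{n-1}$ and so $\lambda_i(\mathsf{P})\geq\lambda_i(\mathsf{H})$; the upper bound follows symmetrically from the span of the eigenvectors for $\lambda_1(\mathsf{H}),\dots,\lambda_{i+1}(\mathsf{H})$. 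I would present the min-max version as the cleaner of the two.
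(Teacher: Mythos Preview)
Your proof is correct and is the standard Courant--Fischer argument; the dimension bookkeeping and the monotonicity of $\min$-$\max$ / $\max$-$\min$ under restriction of the feasible set are handled accurately. Note, however, that the paper does not actually prove this theorem: it is quoted verbatim as a known result from Bhatia's \emph{Matrix Analysis} and used as a black box in the proof of Lemma~\ref{lemma:eig_approx}, so there is no ``paper's own proof'' to compare against.
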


\begin{theorem}[{Gershgorin circle theorem~\cite[p.16, Th. 1.11]{varga2009matrix}}]
Let $\mathsf{M}$ be any $n\times n$ matrix, with entries $m_{ij}$. Define $r_i \triangleq \sum_{j\neq i} \left|m_{ij}\right|,~\forall~i\in [n]$, then for any eigenvalue $\lambda$ of $\mathsf{M}$, there exists $i\in [n]$ such that $|\lambda - m_{ii}|\leq r_i.$
\label{thm:GCT}
\end{theorem}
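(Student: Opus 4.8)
The plan is to run the classical largest-coordinate argument on an eigenvector. First I would fix an eigenvalue $\lambda$ of $\mathsf{M}$ together with an associated eigenvector $\mathbf{v} = (v_1,\ldots,v_n)^\top$, which by definition is nonzero. I would then select an index $i\in[n]$ at which the coordinate magnitude is maximal, i.e. $|v_i| = \max_{j\in[n]} |v_j|$; since $\mathbf{v}\neq 0$ this forces $|v_i|>0$, and at the same time it guarantees $|v_j|/|v_i| \le 1$ for every $j\in[n]$. These two consequences of the choice of $i$ are the only structural facts the proof needs.

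Next I would write out the $i$-th scalar equation of $\mathsf{M}\mathbf{v} = \lambda\mathbf{v}$, namely $\sum_{j\in[n]} m_{ij}v_j = \lambda v_i$, and move the diagonal term to the left to obtain $(\lambda - m_{ii})v_i = \sum_{j\neq i} m_{ij}v_j$. Dividing through by $v_i$ (legitimate because $|v_i|>0$) and applying the triangle inequality yields
\begin{align}
|\lambda - m_{ii}| = \left| \sum_{j\neq i} m_{ij}\frac{v_j}{v_i} \right| \le \sum_{j\neq i} |m_{ij}|\,\frac{|v_j|}{|v_i|} \le \sum_{j\neq i} |m_{ij}| = r_i,
\end{align}
where the final inequality uses the maximality of $|v_i|$. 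This is exactly the asserted membership of $\lambda$ in the $i$-th Gershgorin disc, which completes the proof.

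Since the whole argument is elementary linear algebra, I do not expect a genuine obstacle. The one point that must not be glossed over is that choosing $i$ to maximize $|v_j|$ simultaneously delivers $|v_i|>0$ — which is what makes the division by $v_i$ valid — and the bounds $|v_j|\le|v_i|$ used in the last step; everything else is just the triangle inequality.
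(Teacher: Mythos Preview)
Your proof is the standard, correct argument for the Gershgorin circle theorem. Note that the paper itself does not supply a proof of this statement: it is quoted as a classical result from Varga's book~\cite[p.16, Th.~1.11]{varga2009matrix} and used as a tool in the proof of Lemma~\ref{lemma:eig_approx}, so there is no ``paper's own proof'' to compare against. What you wrote is exactly the textbook derivation one would expect.
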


\begin{proof}[Proof of Lemma~\ref{lemma:eig_approx}]
To indicate the dimension, denote by $\mathsf{A}_n$ the matrix $\mathsf{A}$ defined in~\eqref{def:A}, and denote 
\begin{align}
\mathsf{B}_n & \triangleq  \mathsf{A}_n^\top\mathsf{A}_n \\
& = 
\begin{pmatrix}
1+a^2 & -a          &   0      & 0    &  \ldots & 0\\
-a       & 1+a^2    &  -a      & 0   & \ldots   & 0 \\
0        & -a          & 1+a^2  & -a &  \ddots & \vdots \\
\vdots & \ddots & \ddots & \ddots &\ddots & \vdots \\
 \vdots & \ddots & 0 & -a &1+a^2 & -a \\
 0 & \ldots & \ldots & 0 & -a &1
\end{pmatrix}.
\label{matrix:B_n}
\end{align} 
Notice that we obtain a tridiagonal Toeplitz matrix $\mathsf{W}_n$ if the $(n,n)$-th entry of $\mathsf{B}_{n}$ is replaced by $1+a^2$: 
\begin{align}
\mathsf{W}_n = 
\begin{pmatrix}
1+a^2 & -a          &   0      & 0    &  \ldots & 0\\
-a       & 1+a^2    &  -a      & 0   & \ldots   & 0 \\
0        & -a          & 1+a^2  & -a &  \ddots & \vdots \\
\vdots & \ddots & \ddots & \ddots &\ddots & \vdots \\
 \vdots & \ddots & 0 & -a &1+a^2 & -a \\
 0 & \ldots & \ldots & 0 & -a &1 +a^2
\end{pmatrix},
\label{matrix:W_n}
\end{align}
whose eigenvalues $\xi^{(n)}_1\leq \xi^{(n)}_2\ldots\leq \xi^{(n)}_n$ are given by~\eqref{eig:W}, see~\cite[Eq. (4)]{noschese2013tridiagonal}. At an intuitive level, we expect $\xi^{(n)}_i$'s to approximate $\mu_i$'s well since $\mathsf{B}_n$ and $\mathsf{W}_n$ differ in only one entry. The first part of the proof applies the Cauchy interlacing theorem (Theorem~\ref{thm:cauchy}) to show~\eqref{bound:ximu} for $2\leq i \leq n$. The bound~\eqref{bound:ximu} for $i = 1$ is proved via the Gershgorin circle theorem (Theorem~\ref{thm:GCT}) in the second part.

Applying Theorem~\ref{thm:cauchy} by partitioning $\mathsf{B}_n$ as
\begin{align}
\mathsf{B}_n = \begin{pmatrix}
\mathsf{W}_{n-1} & \star\\
\star & 1
\end{pmatrix},
\end{align}
we obtain 
\begin{align}
\mu_i\leq\xi_{i}^{(n-1)}\leq \mu_{i+1},\quad\forall~i\in [n-1].
\label{bound:CIT}
\end{align}
On the other hand, since $\mathsf{W}_n \succeq \mathsf{B}_n$ in the  semidefinite order, we have 
\begin{align}
\xi_i^{(n)} \geq \mu_i,\quad\forall~i\in [n]. 
\label{bound:SD}
\end{align}
Combining~\eqref{bound:CIT} and~\eqref{bound:SD} yields
\begin{align}
 \xi^{(n-1)}_{i-1}\leq\mu_i\leq \xi^{(n)}_i,\quad\forall~i = 2,..., n.
 \label{bound:sandwich}
\end{align}
Simple algebraic manipulations using~\eqref{eig:W} and~\eqref{bound:sandwich} lead to
\begin{align}
\xi_i^{(n)} - \mu_i &\leq \xi_i^{(n)} -\xi_{i-1}^{(n-1)}  \leq \frac{2\pi a}{n},\quad\forall~i = 2,...,n. 
\end{align}

To bound the difference $\xi_1^{(n)} - \mu_1$, we apply Theorem~\ref{thm:GCT} to $\mathsf{B}_n$. Note that for $\mathsf{B}_n$, we have $r_1 = r_n = a$ and $r_i = 2a,~\forall i =  2,...,n-1$ (recall $r_i$'s defined in Theorem~\ref{thm:GCT}). For the eigenvalue $\mu_1$, there exists $j\in [n]$ such that $|\mu_1 - \mathsf{B}_{jj}| \leq r_{j}$. The following analyses lead to $\mu_1\geq (1 - a)^2$:
\begin{itemize}
\item If $2\leq j\leq n-1$, then $|\mu_1 - (1+a^2)| \leq 2a$, which implies that $\mu_1 \geq 1+a^2 - 2a$.
\item If $j= 1$, then $|\mu_1 - (1+a^2)| \leq a$, which implies that $\mu_1 \geq 1+a^2 - a\geq 1+a^2 - 2a$.
\item If $j = n$, then $|\mu_1 - 1| \leq a$, which implies $\mu_1 \geq 1- a\geq (1- a)^2$.
\end{itemize}
Recall from~\eqref{eig:W} that $\xi_1^{(n)} = 1+a^2 - 2a\cos\left(\frac{\pi}{n+1}\right)$. Hence, 
\begin{align}
\xi_1^{(n)} - \mu_1 &\leq 2a\left[1 - \cos\left(\frac{\pi}{n+1}\right)\right]\\
 &\leq \frac{a\pi^2}{(n+1)^2},\label{mu_1_bd}\\
 &\leq \frac{2a\pi}{n}
\end{align}
where~\eqref{mu_1_bd} is by the inequality $\cos(x)\geq 1- x^2/2$.
\end{proof}

\subsection{Proof of Theorem~\ref{thm:n_LimitingThm}}
\label{app:pf_n_LimitingThm}
\begin{proof}
Since $S(w)$ in~\eqref{spectrum} is even in $w\in[-\pi, \pi]$, we have
\begin{align}
I &\triangleq\frac{1}{2\pi}\int_{-\pi}^{\pi} F\left[S(w)\right]~dw \\
& = \frac{1}{\pi}\int_{0}^{\pi} F\left[S(w)\right]~dw.
\end{align}
We bound the integral $I$ by Riemann sums over intervals of width $\frac{\pi}{n+1}$, see Fig.~\ref{fig:integration}. Since $F\left[S(w)\right]$ is a nonincreasing function in $w\in [0,\pi]$, we have 
\begin{align}
I \geq \frac{1}{\pi}\sum_{i = 1}^n F\left[S\left(\frac{i\pi }{n+1}\right)\right]\frac{\pi}{n+1}.
\label{riemannsum}
\end{align}
\begin{figure}[!ht]
\centering
\includegraphics[scale=0.4]{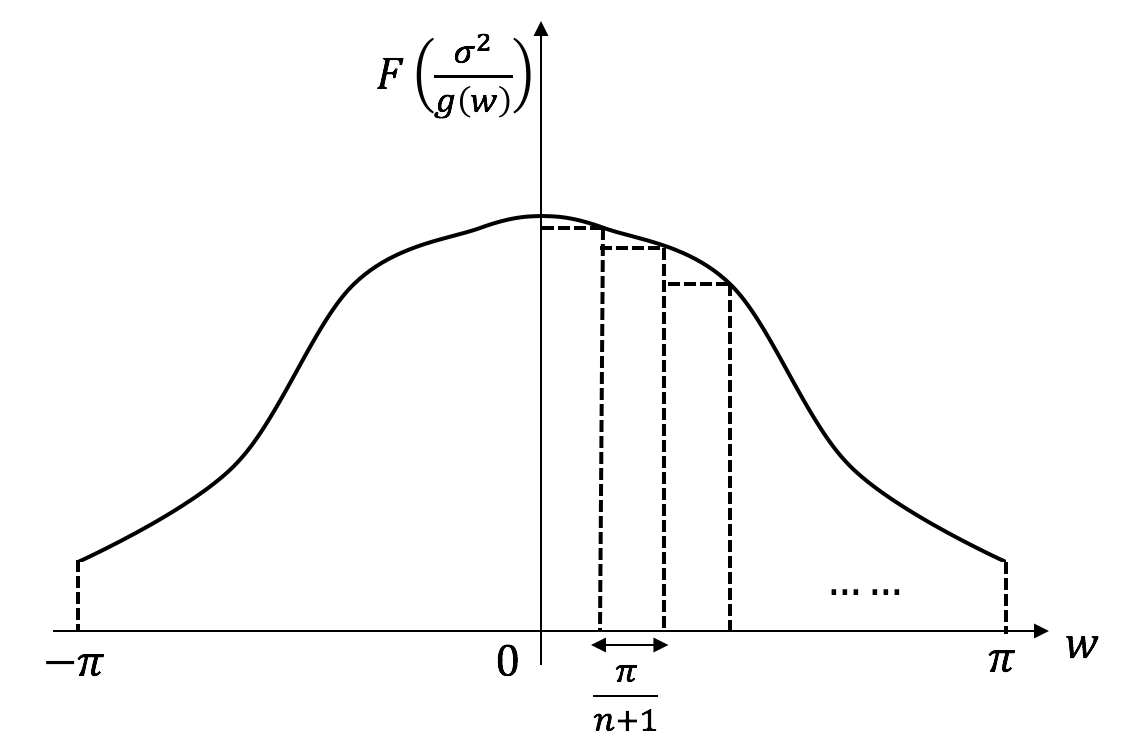}
\caption{Bound the integral $I$ by Riemann sums.}
\label{fig:integration}
\end{figure}
Using Lemma~\ref{lemma:eig_approx}, we can further bound~\eqref{riemannsum} from below as
\begin{align}
I &\geq \frac{1}{n+1} \sum_{i = 1}^n F\left(\frac{\sigma^2}{\mu_i + 2a\pi/n}\right).
\label{bound:riemansum1}
\end{align}
Since $F$ is $L$-Lipschitz, we have for $i\in [n]$,
\begin{align}
F\left(\frac{\sigma^2}{\mu_i + 2a\pi/n}\right) &\geq F\left(\frac{\sigma^2}{\mu_i}\right) - L \left( \frac{\sigma^2}{\mu_i} - \frac{\sigma^2}{\mu_i + 2a\pi/n}\right) \\
&\geq F\left(\frac{\sigma^2}{\mu_i}\right) - \frac{2a\pi L\sigma^2}{n\mu^2_{i}}.
\label{bound:riemannsum2}
\end{align}
Plugging~\eqref{bound:riemannsum2} into~\eqref{bound:riemansum1}, we obtain
\begin{align}
I\geq \frac{1}{n+1}\sum_{i = 1}^n F\left(\frac{\sigma^2}{\mu_i}\right)  - \frac{2aL\pi\sigma^2}{n(n+1)}\sum_{i = 1}^n \frac{1}{\mu_i^2}.
\end{align}
From~\eqref{constant_eig_bound}, we see that
\begin{align}
\frac{1}{n}\sum_{i = 1}^n \frac{1}{\mu_i^2}\leq \frac{1}{(1-a)^4}.
\end{align}
Let $\|F\|_{\infty}$ be the sup norm of $F$ over the interval~\eqref{interval_t}, then 
\begin{align}
I &\geq \frac{1}{n+1}\sum_{i=1}^n F\left(\frac{\sigma^2}{\mu_i}\right) - \frac{2aL\pi\sigma^2}{(n+1)(1-a)^4} \\
&\geq \frac{1}{n}\sum_{i=1}^n F\left(\frac{\sigma^2}{\mu_i}\right)  - \frac{ \|F\|_{\infty}+ 2aL\pi\sigma^2 /(1-a)^4 }{n}.
\end{align}
Similarly, we can derive the upper bound
\begin{align}
I \leq \frac{1}{n} \sum_{i = 1}^{n} F\left(\frac{\sigma^2}{\mu_i}\right) + \frac{2\|F\|_{\infty}}{n}.
\end{align}
Therefore, setting 
\begin{align}
C_L \triangleq \max\left\{\|F\|_{\infty} + \frac{2aL\pi\sigma^2}{(1-a)^4},~2\|F\|_{\infty}\right\}
\end{align}
completes the proof. 
\end{proof}

\subsection{Properties of the conditional relative entropy minimization problem}
\label{app:properties_CREM}
This section presents three results on the CREM problem~\eqref{crem:GMdef}, all of which are necessary to the proof of Theorem~\ref{thm:achievability}.

\subsubsection{Gaussian CREM}
\label{subsec:gaussian_crem}
The optimization problem~\eqref{crem:GMdef} is referred to as the Gaussian CREM when $\rvec{X}$ and $\rvec{Y}$ are Gaussian random vectors with independent coordinates. The optimizer and optimal value of the Gaussian CREM are characterized by the following lemma.
\begin{lemma}
Let $\rvec{X}$ and $\rvec{Y}$ be Gaussian random vectors with independent coordinates, i.e., 
\begin{align}
\rvec{X}&\sim\mathcal{N}(\rvec{0}, \mathsf{\Sigma}_\mathbf{X}), \quad \text{where } \mathsf{\Sigma}_\mathbf{X} = \mathrm{diag}\LRB{\alpha_1^2,\ldots,\alpha_n^2},\label{GCREM:X}\\
\rvec{Y}&\sim\mathcal{N}(\rvec{0}, \mathsf{\Sigma}_\mathbf{Y}), \quad\text{where }\mathsf{\Sigma}_\mathbf{Y} = \mathrm{diag}\LRB{\beta_1^2,\ldots,\beta_n^2}.\label{GCREM:Y}
\end{align}
Then, the optimizer $P_{\rvec{F}^\star | \rvec{X}}$ in the Gaussian CREM~\eqref{crem:GMdef} $\rdf\LRB{\rvec{X},\rvec{Y}, d}$ is
\begin{align}
P_{\rvec{F}^\star | \rvec{X}} =\prod_{i = 1}^n P_{F_i^\star | X_i},
\label{opt:prod}
\end{align}
where for any $\rvec{x}\in\mathbb{R}^n$, the conditional distribution of $F_i^\star$ given $X_i = x_i$ is\footnote{When $\beta_i^2 = 0$ for some $i\in [n]$, the random variable in~(\ref{opt:prod_i}) degenerates to a deterministic random variable taking value 0, and the notation $\mathcal{N}(0,0)$ denotes the Dirac delta function.}
\begin{align}
F_i^\star | \lrbb{X_i = x_i} \sim \mathcal{N}\LRB{\frac{2\delta^\star \beta_i^2x_i}{1+2\delta^\star \beta_i^2}, \frac{\beta_i^2}{1+2\delta^\star \beta_i^2}}, 
\label{opt:prod_i}
\end{align}
and the optimal value is 
\begin{align}
\rdf\LRB{\rvec{X},\rvec{Y}, d} & = -\delta^\star d  +  \frac{1}{2n}\sum_{i  =1}^n \log\left(1+2\delta^\star \beta_i^2\right) + \notag \\ 
&\quad\quad \frac{1}{n}\sum_{i = 1}^n \frac{\delta^\star \alpha_i^2}{1+2\delta^\star \beta_i^2},
\label{eqn:expression}
\end{align} 
where $\delta^\star$ is the negative slope defined as
\begin{align}
\delta^\star = -\rdf'\left(\rvec{X}, \rvec{Y}, d\right).
\end{align}
\label{lemma:GaussianCREM}
\end{lemma}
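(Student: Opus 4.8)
The plan is to reduce the $n$-dimensional minimization in~\eqref{crem:GMdef} to $n$ decoupled scalar Gaussian CREM problems and to solve each one in closed form. First I would invoke the characterization~\eqref{relation:generaltilted} of the CREM optimizer, particularized to $X=\rvec{X}$ and $Y=\rvec{Y}$ with the diagonal covariances in~\eqref{GCREM:X}--\eqref{GCREM:Y}. The key structural observation is that $P_{\rvec{Y}}=\prod_{i=1}^n P_{Y_i}$ is a product law and the normalized squared error $\mathsf{d}(\rvec{x},\rvec{f})=\frac1n\sum_{i=1}^n (x_i-f_i)^2$ is additive, so the generalized tilted information~\eqref{def:generalizeddtilted} splits as a sum over coordinates and the density ratio in~\eqref{relation:generaltilted} factorizes:
\begin{align}
\log \frac{dP_{\rvec{F}^\star|\rvec{X}}(\rvec{f}\mid\rvec{x})}{dP_{\rvec{Y}}(\rvec{f})} &= \sum_{i=1}^n \left[c_i(x_i) - \delta^\star (x_i - f_i)^2\right],
\end{align}
where $c_i(x_i)\triangleq -\log \mathbb{E}\left[\exp\LRB{-\delta^\star (x_i - Y_i)^2}\right]$ with $Y_i\sim\mathcal{N}(0,\beta_i^2)$, and $\delta^\star=-\rdf'(\rvec{X},\rvec{Y},d)\geq 0$. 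After keeping track of the $\frac1n$ normalization in~\eqref{crem:GMdef} relative to~\eqref{eqn:cvx_crem}--\eqref{CREM_slope}, the per-coordinate tilting parameter is exactly the $\delta^\star$ of the statement. This already exhibits $P_{\rvec{F}^\star|\rvec{X}}$ as the product~\eqref{opt:prod}.

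Next I would perform the scalar Gaussian computation. For $Y_i\sim\mathcal{N}(0,\beta_i^2)$ and $\delta^\star\geq 0$, the elementary Gaussian integral gives $\mathbb{E}\left[\exp\LRB{-\delta^\star(x_i-Y_i)^2}\right]=(1+2\delta^\star\beta_i^2)^{-1/2}\exp\LRB{-\frac{\delta^\star x_i^2}{1+2\delta^\star\beta_i^2}}$, hence $c_i(x_i)=\frac12\log(1+2\delta^\star\beta_i^2)+\frac{\delta^\star x_i^2}{1+2\delta^\star\beta_i^2}$. Completing the square in $f_i$ inside $-\frac{f_i^2}{2\beta_i^2}-\delta^\star(x_i-f_i)^2$ then identifies $P_{F_i^\star\mid X_i=x_i}$ as the Gaussian in~\eqref{opt:prod_i}. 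The degenerate case $\beta_i^2=0$ is consistent with this formula, since finiteness of the conditional divergence forces $F_i^\star=0$ almost surely, matching the Dirac convention in the footnote to~\eqref{opt:prod_i}.

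Finally, to obtain the optimal value~\eqref{eqn:expression} I would write $\rdf(\rvec{X},\rvec{Y},d)=\frac1n\sum_{i=1}^n D(P_{F_i^\star|X_i}\,\|\,P_{Y_i}\,|\,P_{X_i})$ and evaluate each term with the log-density ratio above, so that $D(P_{F_i^\star|X_i}\,\|\,P_{Y_i}\,|\,P_{X_i})=\mathbb{E}[c_i(X_i)]-\delta^\star\,\mathbb{E}[(X_i-F_i^\star)^2]$ for $X_i\sim\mathcal{N}(0,\alpha_i^2)$. Substituting $\mathbb{E}[X_i^2]=\alpha_i^2$ evaluates $\mathbb{E}[c_i(X_i)]$, while $\frac1n\sum_i\mathbb{E}[(X_i-F_i^\star)^2]=\mathbb{E}[\mathsf{d}(\rvec{X},\rvec{F}^\star)]=d$ by activeness of the distortion constraint; collecting the pieces yields~\eqref{eqn:expression}. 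The individual steps are routine Gaussian calculus, so the hard part is not any single computation but the two bookkeeping issues: (i) tracking the $\frac1n$ normalization that relates the $\delta^\star$ in the statement to the slope appearing in~\eqref{CREM_slope}, and (ii) justifying that the distortion constraint is active (equivalently $\delta^\star>0$) in the regime in which the lemma is applied, which is what legitimizes the $-\delta^\star d$ term in~\eqref{eqn:expression}; this is where I expect to spend the most care.
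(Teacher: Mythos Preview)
Your proposal is correct and follows essentially the same route as the paper: both start from the characterization~\eqref{relation:generaltilted}, exploit the product form of $P_{\rvec{Y}}$ together with the separability of $\mathsf{d}$ to factorize the optimizer, complete the square in each coordinate to obtain~\eqref{opt:prod_i}, and then compute the value~\eqref{eqn:expression} by invoking $\mathbb{E}[\mathsf{d}(\rvec{X},\rvec{F}^\star)]=d$. The two bookkeeping points you flag (the $1/n$ normalization and activeness of the constraint) are exactly the ones the paper handles implicitly, so your caution there is well placed but will not cause difficulty.
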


\begin{proof}
We particularize~\eqref{relation:generaltilted} to the Gaussian CREM. For any fixed $\rvec{x}\in\mathbb{R}^n$, rearranging~\eqref{relation:generaltilted} yields
\begin{align}
& f_{\rvec{F}^\star|\rvec{X}}\LRB{\rvec{y}|\rvec{x}} \notag \\
=& f_{\rvec{Y}}(\rvec{y}) \exp\left\{\Lambda_{\rvec{Y}}(\rvec{x},\delta^\star,d) -  \delta^\star n \dis{\rvec{x}}{\rvec{y}} + \delta^\star n d\right\} \label{gaussianCREM:step1}\\
\propto & f_{\rvec{Y}}(\rvec{y})  \exp\left\{-\delta^\star n \dis{\rvec{x}}{\rvec{y}}\right\} \label{gaussianCREM:step2}\\
\propto & \exp\left\{ -\delta^\star \sum_{i = 1}^n(y_i-x_i)^2 - \sum_{i = 1}^n \frac{y_i^2}{2\beta_i^2}\right\} \label{gaussianCREM:step3}\\
= & \prod_{i = 1}^n \exp\left\{-\frac{\left(y_i - \frac{2\delta^\star \beta_i^2x_i}{1+2\delta^\star \beta_i^2}\right)^2}{\frac{2\beta_i^2}{1+2\delta^\star \beta_i^2}}\right\},\label{gaussianCREM:step4}
\end{align}
where $p_1\propto p_2$ means that $p_1 = c' p_2$ for a positive constant $c'$;~\eqref{gaussianCREM:step2} is by keeping only terms containing $\rvec{y}$ (since $\rvec{x}$ is fixed);~(\ref{gaussianCREM:step3}) is by plugging the pdf of $\rvec{Y}$ into~\eqref{gaussianCREM:step2}; and~\eqref{gaussianCREM:step4} is by completing the squares in $y_i$. Hence,~\eqref{opt:prod} and~\eqref{opt:prod_i} follow. Next, the expression~\eqref{eqn:expression} is obtained by a direct computation using~\eqref{relation:generaltilted},~\eqref{opt:prod} and~\eqref{opt:prod_i}.
\begin{align}
& \mathbb{R}(\rvec{X},\rvec{Y}, d) \notag \\
= & \frac{1}{n} \int_{\mathbb{R}^n} f_{\rvec{X}} (\rvec{x})  \int_{\mathbb{R}^n} f_{\rvec{F}^\star| \rvec{X}} (\rvec{y} | \rvec{x} ) \Big [\Lambda_{\rvec{Y}}(\rvec{x},\delta^\star,d) - \notag\\
&\quad \delta^\star n \dis{\rvec{x}}{\rvec{y}} + \delta^\star n d\Big ]~d\rvec{y}d\rvec{x}\label{gaussianCREM:step5}\\
=& \frac{1}{n} \int_{\mathbb{R}^n} f_{\rvec{X}} (\rvec{x})  \int_{\mathbb{R}^n} f_{\rvec{F}^\star| \rvec{X}} (\rvec{y} | \rvec{x} ) \Lambda_{\rvec{Y}}(\rvec{x},\delta^\star,d)~d\rvec{y}d\rvec{x}\label{gaussianCREM:step6}\\
=&-\delta^\star d  +  \frac{1}{2n}\sum_{i  =1}^n \log\left(1+2\delta^\star \beta_i^2\right) + \frac{1}{n}\sum_{i = 1}^n \frac{\delta^\star \alpha_i^2}{1+2\delta^\star \beta_i^2}, \label{gaussianCREM:step7}
\end{align}
where~\eqref{gaussianCREM:step5} follows by substituting~\eqref{relation:generaltilted} into~\eqref{crem:GMdef};~\eqref{gaussianCREM:step6} holds since $\mathbb{E}[\dis{\rvec{X}}{\rvec{F}^\star}] = d$ by the optimality of $\rvec{F}^\star$; and~\eqref{gaussianCREM:step7} is by direct integration of~\eqref{gaussianCREM:step6}, which relies on the definition of the generalized tilted information~\eqref{def:generalizeddtilted} and the well-known formula for the moment generating function (MGF) of a noncentral $\chi^2_1$-distribution. 
\end{proof}

\subsubsection{Gauss-Markov CREM}
\label{app:GM_CREM}
The optimization problem~\eqref{crem:GMdef} is referred to as the Gauss-Markov CREM if $\rvec{X}$ is the decorrelation of $\rvec{U}$ in~\eqref{def:decorrelation}, and $(\rvec{X}, \rvec{Y}) = (\rvec{X}, \rvec{Y}^\star)$ forms a RDF-achieving pair in $\rdf_{\rvec{X}}(n, d)$. Recall from~\eqref{eig:sigma_i} that $\rvec{X}\sim\mathcal{N}(0,\mathsf{\Sigma}_{\rvec{X}})$, where 
\begin{align}
\mathsf{\Sigma}_{\rvec{X}} = \mathrm{diag}(\sigma_1^2,\ldots, \sigma_n^2), \label{GMCREM:X}
\end{align}
and $\sigma_i^2$'s are given by~\eqref{eig:sigma_i}. Recall from~\eqref{Ystar} that $\rvec{Y}^\star\sim\mathcal{N}(0,\mathsf{\Sigma}_{\rvec{Y}^\star})$, where
\begin{align}
\mathsf{\Sigma}_{\rvec{Y}^\star} = \mathrm{diag}(\nu_1^2,\ldots, \nu_n^2), \label{GMCREM:Y}
\end{align}
and we denote $\nu_i^2$ as
\begin{align}
\nu_i^2 \triangleq \max\left(0, \sigma_i^2 - \theta_n\right). \label{nu_i_app}
\end{align}
And $\theta_n > 0$ is the water level matched to $d$ via the $n$-th order reverse waterfilling~\eqref{eqn:para_d}. From~\eqref{rel:BA}, we have
\begin{align}
\mathbb{R}(\rvec{X}, \rvec{Y}^\star, d) &= \rdf_{\rvec{U}}(n, d),
\label{eqn:equivalence}
\end{align}
and $\rdf_{\rvec{U}}(n, d)$ is given by~\eqref{eqn:para_r_n}. Lemma~\ref{lemma:GaussianCREM} is also applicable to the special case of the Gauss-Markov CREM. Furthermore, the next lemma characterizes the negative slope in the Gauss-Markov CREM.

\begin{lemma}
In the Gauss-Markov CREM, for any $d\in\LRB{0,\dmax}$ and $n\in\mathbb{N}$, let $\theta_n > 0$ be the water level matched to $d$ via the $n$-th order reverse waterfilling~\eqref{eqn:para_d}, the negative slope $\lambda^\star$ defined in~\eqref{lambda_double_star} satisfies
\begin{align}
\lambda^\star  = \frac{1}{2\theta_n}.
\label{para:slope}
\end{align} 
\label{lemma:slope_properties}
\end{lemma}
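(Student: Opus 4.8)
The plan is to compute the slope $\lambda^\star = -\rdf'_{\rvec{U}}(n,d)$ directly from the $n$-th order reverse waterfilling parametrization~\eqref{eqn:para_r_n}--\eqref{eqn:para_d}, treating $\theta_n$ as the parameter along the curve. First I would observe that both $\rdf_{\rvec{U}}(n,d)$ and $d$ are expressed as functions of the water level $\theta_n$ via~\eqref{eqn:para_r_n} and~\eqref{eqn:para_d}, with the $\sigma_i^2$'s fixed. So the derivative $\rdf'_{\rvec{U}}(n,d)$ can be obtained by the chain rule as the ratio $\frac{d\rdf_{\rvec{U}}/d\theta_n}{dd/d\theta_n}$, provided $dd/d\theta_n \neq 0$.

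The key computational steps are as follows. Differentiating~\eqref{eqn:para_d}, namely $d = \frac1n\sum_{i=1}^n \min(\theta_n,\sigma_i^2)$, with respect to $\theta_n$ gives $\frac{dd}{d\theta_n} = \frac1n \#\{i : \sigma_i^2 > \theta_n\}$ (the set of ``active'' coordinates), which is strictly positive for $d \in (0,\dmax)$ since at least one $\sigma_i^2$ exceeds the water level. Differentiating~\eqref{eqn:para_r_n}, namely $\rdf_{\rvec{U}}(n,d) = \frac1n\sum_{i=1}^n \max(0, \tfrac12\log\tfrac{\sigma_i^2}{\theta_n})$, with respect to $\theta_n$, only the active terms ($\sigma_i^2 > \theta_n$) contribute, each giving $-\frac{1}{2\theta_n}$; hence $\frac{d\rdf_{\rvec{U}}}{d\theta_n} = -\frac{1}{2\theta_n}\cdot\frac1n\#\{i:\sigma_i^2>\theta_n\}$. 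Taking the ratio, the counting factors cancel and we obtain $\rdf'_{\rvec{U}}(n,d) = -\frac{1}{2\theta_n}$, i.e.\ $\lambda^\star = \frac{1}{2\theta_n}$, which is~\eqref{para:slope}.

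The main subtlety (rather than obstacle) is the non-differentiability of the individual terms $\min(\theta_n,\sigma_i^2)$ and $\max(0,\tfrac12\log\tfrac{\sigma_i^2}{\theta_n})$ at the points where $\theta_n = \sigma_i^2$. I would handle this by noting that the set of $\theta_n$ values equal to some $\sigma_i^2$ is finite, so on each open subinterval between consecutive $\sigma_i^2$ values the counting set $\{i:\sigma_i^2>\theta_n\}$ is constant and both $d$ and $\rdf_{\rvec{U}}$ are smooth functions of $\theta_n$; the identity $\lambda^\star = \frac{1}{2\theta_n}$ then holds on each such subinterval, and by continuity of $\rdf_{\rvec{U}}(n,\cdot)$ (guaranteed by the standing assumption that $\rdf_X(d)$ is differentiable in $d$, applied to $X = \rvec{U}$) it extends to all $d\in(0,\dmax)$. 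Alternatively, and perhaps more cleanly, one can invoke the general relation~\eqref{para:slope}-analogue that the negative slope of an RDF equals the Lagrange multiplier in~\eqref{eqn:cvx_rdf}, which in the Gaussian reverse-waterfilling is exactly $\frac{1}{2\theta_n}$ by the standard KKT analysis of the parallel-Gaussian problem (see~\cite[Ch.\ 10.3.3]{cover2012elements}); this is in fact the route implicit in~\eqref{Ystar}. Either argument is short and routine.
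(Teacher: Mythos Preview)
Your proposal is correct and follows essentially the same approach as the paper: differentiate the parametric representations~\eqref{eqn:para_r_n} and~\eqref{eqn:para_d} with respect to the water level $\theta_n$, observe that the factor $\frac{1}{n}\#\{i:\sigma_i^2>\theta_n\}$ cancels in the ratio, and handle the finitely many non-differentiable points $\theta_n=\sigma_i^2$ by working on the open subintervals (the paper phrases this as taking left derivatives). Your alternative Lagrange-multiplier remark is also valid but is not the route the paper takes.
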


\begin{proof}
We directly compute the negative slope using the parametric representation~\eqref{eqn:para_r_n} and~\eqref{eqn:para_d}. Taking the derivative with respect to $d$ on both sides of~\eqref{eqn:para_r_n} yields
\begin{align}
\lambda^{\star} = \frac{1}{n} \sum_{i = 1}^n 
\frac{1}{2\theta_n}\frac{d\theta_n}{dd} \mathbbm{1} \left\{\sigma_i^2 > \theta_n\right\}.\label{slope_prop_step1}
\end{align}
Differentiating~(\ref{eqn:para_d}), we obtain
\begin{align}
\frac{dd}{d\theta_n} = \frac{1}{n} \sum_{i = 1}^n\mathbbm{1}\left\{\sigma_i^2 > \theta_n\right\}, \label{slope_prop_step2}
\end{align}
which is independent of $i$. Plugging~(\ref{slope_prop_step2}) into~(\ref{slope_prop_step1}) yields~\eqref{para:slope}.

To justify the formal differentiation in~\eqref{slope_prop_step2}, observe using~\eqref{eqn:para_d} that $d$ is a continuous piecewise linear function of $\theta_n$, and $d$ is differentiable with respect to $\theta_n$ except at the $n$ points: $\theta_n = \sigma_i^2, ~i\in [n]$. The above proof goes through as long as the derivatives at those $n$ points are understood as the left derivatives. Indeed, $\rdf_{\rvec{U}}(n,d)$ is differentiable w.r.t. $d$ for any $d\in (0,\dmax)$, e.g.~\cite[Eq. (16)]{kostina2012fixed}.
\end{proof}

\subsubsection{Sensitivity of the negative slope}
\label{subsec:sen_crem}
The following theorem is a perturbation result, which bounds the change in the negative slope when the variances of the input $\rvec{X}$ to $\rdf\left(\rvec{X}, \rvec{Y}^\star, d\right)$ are perturbed. It is related to lossy compression using mismatched codebook: the codewords are drawn randomly according to the distribution $P_{\rvec{Y}^\star}$ while the source distribution is $\hat{\rvec{X}}$ instead of $\rvec{X}$.
\begin{lemma}
Let $\rvec{X}$ be the decorrelation of $\rvec{U}$ in~\eqref{def:decorrelation}, and let $(\rvec{X}, \rvec{Y}^\star)$ be a RDF-achieving pair in $\rdf_{\rvec{X}}(n, d)$ (recall~\eqref{eqn:nthorderOp}). For any fixed distortion $d\in\LRB{0,\dmax}$, let $\theta>0$ be the water level matched to $d$ via the limiting reverse waterfilling in~\eqref{eqn:para_d_inf}. For any $t\in (0,\theta / 3)$, let $\hat{\sigma}_i^2$'s be such that 
\begin{align}
|\hat{\sigma}_i^2 - \sigma_i^2 |\leq t, \quad\forall~i\in [n].
\label{diff:sigma_hat}
\end{align}
Let the Gaussian random vector $\hat{\rvec{X}}$ be $\hat{\rvec{X}}\sim\mathcal{N}(\rvec{0}, \mathrm{diag}(\hat{\sigma}_1^2, \ldots, \hat{\sigma}_n^2))$, and let $\hat{\lambda}^\star$ be the negative slope of $\rdf (\hat{\rvec{X}}, \rvec{Y}^\star, d )$. Then, for all $n$ large enough, the negative slope $\hat{\lambda}^\star$ satisfies
\begin{align}
|\lambda^\star - \hat{\lambda}^\star|\leq \frac{9t}{4\theta^2},
\label{eqn:bound_slope}
\end{align}
where $\lambda^\star = -\rdf'\left(\rvec{X}, \rvec{Y}^\star, d\right)$ is given by~\eqref{para:slope}.
\label{lemma:bound_slope}
\end{lemma}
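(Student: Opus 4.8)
The plan is to reduce the claim to a perturbation analysis of the scalar equation that pins down the negative slope. Since $\rvec{Y}^\star\sim\mathcal{N}(\rvec{0},\mathrm{diag}(\nu_1^2,\ldots,\nu_n^2))$ with $\nu_i^2=\max(0,\sigma_i^2-\theta_n)$ is held fixed in both $\rdf(\rvec{X},\rvec{Y}^\star,d)$ and $\rdf(\hat{\rvec{X}},\rvec{Y}^\star,d)$, I would apply Lemma~\ref{lemma:GaussianCREM} (the Gaussian CREM) to each. It shows that $\lambda^\star=-\rdf'(\rvec{X},\rvec{Y}^\star,d)$ is the unique root of
\[
\psi(\delta)\triangleq\frac1n\sum_{i=1}^n\left[\frac{\nu_i^2}{1+2\delta\nu_i^2}+\frac{\sigma_i^2}{(1+2\delta\nu_i^2)^2}\right]=d,
\]
and that $\hat\lambda^\star$ is the unique root of $\phi(\delta)=d$, where $\phi$ is obtained from $\psi$ by replacing each $\sigma_i^2$ in the numerators by $\hat\sigma_i^2$. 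Uniqueness holds because $\psi$ and $\phi$ are strictly decreasing on $[0,\infty)$; that the root coincides with the negative slope is exactly~\eqref{relation:generaltilted}--\eqref{CREM_slope}, and Lemma~\ref{lemma:slope_properties} additionally identifies $\lambda^\star=\tfrac1{2\theta_n}$. (Existence of a minimizer, hence of $\hat\lambda^\star$, for $\rdf(\hat{\rvec{X}},\rvec{Y}^\star,d)$ is guaranteed because $d\in(0,\dmax)$ and $t<\theta/3$ keep the problem non-degenerate for all $n$ large.)

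The second step is a uniform comparison of the two equations. Because $(1+2\delta\nu_i^2)^2\ge1$ and $|\hat\sigma_i^2-\sigma_i^2|\le t$,
\[
|\phi(\delta)-\psi(\delta)|=\left|\frac1n\sum_{i=1}^n\frac{\hat\sigma_i^2-\sigma_i^2}{(1+2\delta\nu_i^2)^2}\right|\le t\qquad\text{for every }\delta\ge0 .
\]
Hence $\psi(\hat\lambda^\star)=d-(\phi(\hat\lambda^\star)-\psi(\hat\lambda^\star))$ lies within $t$ of $\psi(\lambda^\star)=d$, so by the mean value theorem there is a $\zeta$ between $\lambda^\star$ and $\hat\lambda^\star$ with $|\lambda^\star-\hat\lambda^\star|\le t/|\psi'(\zeta)|$. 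Differentiating,
\[
-\psi'(\delta)=\frac1n\sum_{i=1}^n\left[\frac{2\nu_i^4}{(1+2\delta\nu_i^2)^2}+\frac{4\sigma_i^2\nu_i^2}{(1+2\delta\nu_i^2)^3}\right],
\]
which at $\delta=\lambda^\star=\tfrac1{2\theta_n}$ (where $1+2\lambda^\star\nu_i^2=\sigma_i^2/\theta_n$ on the active set and $\nu_i^2=0$ otherwise) collapses to $2\theta_n^2\big(1-\tfrac1n\sum_{i}\min(1,\theta_n^2/\sigma_i^4)\big)$. It then remains to lower-bound $|\psi'|$ by $\tfrac{4\theta^2}{9}$ over the range of $\delta$ in which $\zeta$ can lie, and to trade $\theta_n$ for $\theta$: by~\eqref{d_d_n} and monotonicity of the reverse-waterfilling map, $|\theta_n-\theta|=O(1/n)$, which perturbs the bound only by $o(1)$.

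The heart of the matter -- and the step I expect to be hardest -- is the lower bound on $|\psi'(\zeta)|$. Inactive coordinates contribute nothing to $-\psi'$, so the bound must come from the coordinates that are \emph{strictly} active, i.e.\ those with $\sigma_i^2-\theta_n$ (equivalently $S(i\pi/(n+1))-\theta$) bounded away from $0$. To see that these form a positive fraction for every fixed $d\in(0,\dmax)$ and all $n$ large, I would feed Lemma~\ref{lemma:eig_approx} and Theorem~\ref{thm:n_LimitingThm} into $\tfrac1n\sum_i\min(1,\theta^2/\sigma_i^4)$, identifying its limit with $\tfrac1{2\pi}\int_{-\pi}^{\pi}\min\!\big(1,\theta^2/S(w)^2\big)\,dw<1$. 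The hypothesis $t<\theta/3$ is used to keep every $\hat\sigma_i^2$ -- and hence $\hat\lambda^\star$, and so also $\zeta$ -- inside a fixed compact set bounded away from the degeneracies, on which $|\psi'|$ and $|\phi'|$ stay bounded below. Once the active-set margin is quantified in terms of $\theta$ and $\sigma^2$ the stated estimate should emerge; pinning down precisely the factor $\tfrac94$ (rather than a weaker constant) is the part of the bookkeeping that requires care, and I would sanity-check the two extreme regimes $d\le d_c$ (all coordinates active, $\theta=d$, $\nu_i^2=\sigma_i^2-\theta$) and $d\uparrow\dmax$ ($\theta\uparrow\theta_{\max}$, active fraction shrinking) separately, since the behaviour of $-\psi'$ is most delicate there.
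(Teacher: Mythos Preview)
Your setup is sound: the slope $\hat\lambda^\star$ is indeed characterized as the root of $\phi(\delta)=d$, and the uniform bound $|\phi(\delta)-\psi(\delta)|\le t$ is correct. The mean-value-theorem reduction to a lower bound on $|\psi'(\zeta)|$ is also valid. However, this route cannot deliver the constant $\tfrac{9}{4\theta^2}$ in the lemma. Your own formula gives $-\psi'(\lambda^\star)=2\theta_n^2\bigl(1-\tfrac1n\sum_i\min(1,\theta_n^2/\sigma_i^4)\bigr)$, and the parenthesized factor converges (by Theorem~\ref{thm:n_LimitingThm}) to $1-\tfrac1{2\pi}\int_{-\pi}^\pi\min(1,\theta^2/S(w)^2)\,dw$, which tends to $0$ as $d\uparrow d_{\max}$ because the active set $\{w:S(w)>\theta\}$ shrinks. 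Hence your bound $t/|\psi'(\zeta)|$ is of the form $C(d)\,t$ with $C(d)\to\infty$ as $d\to d_{\max}$, strictly weaker than $\tfrac{9t}{4\theta^2}$. The difficulty you flagged at $d\uparrow d_{\max}$ is not just delicate bookkeeping---it is an actual obstruction for this method.

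The paper circumvents the derivative lower bound entirely. Differentiating the closed-form value $\rdf(\hat{\rvec{X}},\rvec{Y}^\star,d)$ in~\eqref{eqn:expression} with respect to $d$ and rearranging yields the self-consistent ratio
\[
\hat\lambda^\star=\frac{\sum_i (1+2\hat\lambda^\star\nu_i^2)^{-2}}{\sum_i 2\,(\hat\sigma_i^2-\nu_i^2)\,(1+2\hat\lambda^\star\nu_i^2)^{-2}},
\]
and since $\hat\sigma_i^2-\nu_i^2=\hat\sigma_i^2-\sigma_i^2+\theta_n\in[\theta_n-t,\theta_n+t]$, this exhibits $\hat\lambda^\star$ as a weighted average of numbers in $[\tfrac1{2(\theta_n+t)},\tfrac1{2(\theta_n-t)}]$, giving the sandwich $\hat\lambda^\star\in[\tfrac1{2(\theta_n+t)},\tfrac1{2(\theta_n-t)}]$ regardless of how few terms are active. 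Combining with $\lambda^\star=\tfrac1{2\theta_n}$, $t<\theta/3$, and $\theta_n\ge\tfrac{2\theta}{3}$ for large $n$ then gives $|\lambda^\star-\hat\lambda^\star|\le t/\theta_n^2\le 9t/(4\theta^2)$ directly. The weighted-average structure is what your MVT argument misses; if you only need \emph{some} constant $C(d)$ for a fixed $d<d_{\max}$ (which in fact suffices for the downstream uses in Theorem~\ref{thm:typicalset}), your approach works, but it does not establish the lemma as stated.
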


\begin{proof}
Consider the Gaussian CREM $\mathbb{R}(\hat{\rvec{X}}, \rvec{Y}^\star,d)$. Let $\hat{\theta}_n > 0$ be the water level matched to $d$ via the $n$-th order reverse waterfilling~\eqref{eqn:para_d} over $\hat{\sigma}_i^2$'s, and let $\theta_n > 0$ be the water level matched to $d$ via the $n$-th order reverse waterfilling~\eqref{eqn:para_d} over $\sigma_i^2$'s. In~\eqref{eqn:expression}, replacing $(\rvec{X},\rvec{Y})$ by $(\hat{\rvec{X}},\rvec{Y}^\star)$, and then taking the derivative with respect to $d$ on both sides yields 
\begin{align}
-\hat{\lambda}^\star & = -\hat{\lambda}^\star + \frac{1}{2n}\sum_{i:\hat{\sigma}_i^2>\hat{\theta}_n} \frac{-2\hat{\lambda}^\star}{1 + 2\hat{\lambda}^\star\nu_i^2}\frac{d\hat{\theta}_n}{dd} + \notag\\
&\quad  \frac{1}{n}\sum_{i:\hat{\sigma}_i^2>\hat{\theta}_n} \frac{2 \hat{\sigma}_i^2\hat{\lambda}^{\star 2}}{(1+2\hat{\lambda}^\star\nu_i^2)^2}\frac{d\hat{\theta}_n}{dd},
\end{align}
where $\nu_i^2$'s are defined in~\eqref{nu_i_app}. Rearranging terms yields
\begin{align}
\hat{\lambda}^{\star} =  \sum_{i:\hat{\sigma}_i^2>\hat{\theta}_n} \frac{1}{(1+2\hat{\lambda}^\star\nu_i^2)^2}  ~\Bigg /  \sum_{i:\hat{\sigma}_i^2>\hat{\theta}_n}  \frac{2 (\hat{\sigma}_i^2 - \sigma_i^2 + \theta_n)}{(1+2\hat{\lambda}^\star\nu_i^2)^2}.
\label{eqn:slope_hat}
\end{align}
Substituting the bound~\eqref{diff:sigma_hat} into~\eqref{eqn:slope_hat}, we obtain
\begin{align}
\hat{\lambda}^{\star} \in \left[\frac{1}{2(\theta_n+t)}, \frac{1}{2(\theta_n-t)}\right].
\label{lambdahat_range}
\end{align}  
Since $\lim_{n\rightarrow\infty} \theta_n = \theta$, for all $n$ large enough, we have 
\begin{align}
\frac{2\theta}{3} \leq  \theta_n \leq \frac{4\theta}{3}.
\label{bound:theta_n_13}
\end{align}
Since $t \in (0,  \theta /3)$,~\eqref{bound:theta_n_13} implies that $0 < t < \theta_n /2 $. From~\eqref{para:slope},~\eqref{lambdahat_range} and~\eqref{bound:theta_n_13}, we see that
\begin{align}
& |\lambda^{\star} - \hat{\lambda}^\star| \notag \\ 
\leq & \max\left\{\lrabs{ \frac{1}{2(\theta_n+t)} - \frac{1}{2\theta_n}}, \lrabs{\frac{1}{2(\theta_n-t)} - \frac{1}{2\theta_n}}\right\}  \\ 
\leq & \frac{t}{\theta_n^2}\\
\leq & \frac{9t}{4\theta^2}.
\end{align}
\end{proof}

\subsection{Proof of Theorem~\ref{thm:limiting_variance}}
\label{app:derivation_vd}
Theorem~\ref{thm:limiting_variance} is a direct consequence of the following lemma.
\begin{lemma}[Parametric representation for the $\mathsf{d}$-tilted information]
Let $\rvec{X}$ be the decorrelation of $\rvec{U}$~\eqref{def:decorrelation}, and let $(\rvec{X}, \rvec{Y}^\star)$ be a RDF-achieving pair in $\rdf_{\rvec{X}}\left(n,d\right)$. For any $d\in \LRB{0,\dmax}$, let $\theta_n > 0$ be the water level matched to $d$ via the $n$-th order reverse waterfilling~\eqref{eqn:para_d} over $\sigma_i^2,~i\in [n]$. Then, for all $\rvec{x}\in\mathbb{R}^n$,
\begin{align}
& \Lambda_{Y^\star_i}(x_i, \lambda^{\star}, \min(\theta_n, \sigma_i^2))  \notag \\
= & \frac{\min(\theta_n,\sigma_i^2)}{2\theta_n}\left(\frac{x_i^2}{\sigma_i^2} - 1\right) + \frac{1}{2}\log\frac{\max(\theta_n,\sigma_i^2)}{\theta_n},
\label{para:tilted_2}
\end{align}
where $\lambda^{\star}$ defined in~\eqref{lambda_double_star} is given by~\eqref{para:slope}.
\label{lemma:para_tilted}
\end{lemma}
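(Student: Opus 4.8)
The plan is to directly evaluate the generalized tilted information $\Lambda_{Y_i^\star}(x_i,\lambda^\star,\min(\theta_n,\sigma_i^2))$ using its definition~\eqref{def:generalizeddtilted}, treating the two cases $\sigma_i^2 > \theta_n$ (active coordinates) and $\sigma_i^2 \leq \theta_n$ (inactive coordinates) separately. Recall from~\eqref{Ystar} that $Y_i^\star \sim \mathcal{N}(0,\max(\sigma_i^2-\theta_n,0))$, and from Lemma~\ref{lemma:slope_properties} that $\lambda^\star = \frac{1}{2\theta_n}$. First I would write $\Lambda_{Y_i^\star}(x_i,\lambda^\star,d_i) = -\lambda^\star d_i - \log \mathbb{E}\exp(-\lambda^\star (x_i - Y_i^\star)^2)$ with $d_i \triangleq \min(\theta_n,\sigma_i^2)$; note that the distortion here is the per-letter squared error $(x_i - Y_i^\star)^2$, since the $n$-letter normalization cancels against the factor $n$ appearing in~\eqref{def:d_tilted_def} when the sum~\eqref{nsum} is formed.

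For an active coordinate ($\sigma_i^2 > \theta_n$, so $d_i = \theta_n$ and $Y_i^\star \sim \mathcal{N}(0,\sigma_i^2-\theta_n)$), the expectation $\mathbb{E}\exp(-\lambda^\star(x_i-Y_i^\star)^2)$ is the MGF of a noncentral $\chi^2_1$ variable, which evaluates in closed form: completing the square in the Gaussian integral gives $\mathbb{E}\exp(-\lambda^\star(x_i-Y_i^\star)^2) = (1+2\lambda^\star(\sigma_i^2-\theta_n))^{-1/2}\exp\!\left(-\frac{\lambda^\star x_i^2}{1+2\lambda^\star(\sigma_i^2-\theta_n)}\right)$. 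Substituting $\lambda^\star = \frac{1}{2\theta_n}$ simplifies $1 + 2\lambda^\star(\sigma_i^2-\theta_n) = \sigma_i^2/\theta_n$, so the exponent becomes $-\frac{x_i^2}{2\sigma_i^2}$ and the prefactor becomes $(\theta_n/\sigma_i^2)^{1/2}$. Then $\Lambda_{Y_i^\star} = -\lambda^\star\theta_n + \frac{1}{2}\log\frac{\sigma_i^2}{\theta_n} + \frac{x_i^2}{2\sigma_i^2} = -\frac12 + \frac{x_i^2}{2\sigma_i^2} + \frac12\log\frac{\sigma_i^2}{\theta_n}$, which is exactly~\eqref{para:tilted_2} in this case since $\min(\theta_n,\sigma_i^2) = \theta_n$ and $\max(\theta_n,\sigma_i^2) = \sigma_i^2$.

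For an inactive coordinate ($\sigma_i^2 \leq \theta_n$, so $d_i = \sigma_i^2$ and $Y_i^\star \equiv 0$), the expectation collapses to $\mathbb{E}\exp(-\lambda^\star x_i^2) = \exp(-\lambda^\star x_i^2)$, a deterministic quantity, so $\Lambda_{Y_i^\star} = -\lambda^\star\sigma_i^2 + \lambda^\star x_i^2 = \frac{1}{2\theta_n}(x_i^2 - \sigma_i^2) = \frac{\sigma_i^2}{2\theta_n}\!\left(\frac{x_i^2}{\sigma_i^2} - 1\right)$, and the logarithmic term $\frac12\log\frac{\max(\theta_n,\sigma_i^2)}{\theta_n} = \frac12\log 1 = 0$ vanishes, again matching~\eqref{para:tilted_2}. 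Combining the two cases, and using $\min(\theta_n,\sigma_i^2)/\theta_n = 1$ when $\sigma_i^2 > \theta_n$, both expressions unify into the single formula~\eqref{para:tilted_2}. Finally, Theorem~\ref{thm:limiting_variance} follows by summing~\eqref{para:tilted_2} over $i$, using~\eqref{nsum} (equivalently~\eqref{rel:dtiltedandtilted} applied coordinatewise), to obtain~\eqref{expression:dtilted}, then computing the mean (which recovers~\eqref{eqn:para_r_n}) and the variance termwise — the variance of each summand being $\min\!\left(\tfrac12,\tfrac{\sigma_i^4}{2\theta_n^2}\right)$ since $\mathrm{Var}(x_i^2/\sigma_i^2) = 2$ for $x_i \sim \mathcal{N}(0,\sigma_i^2)$ — and passing to the limit via Theorem~\ref{thm:n_LimitingThm} applied to $t\mapsto\min(\tfrac12,\tfrac{t^2}{2\theta^2})$ together with $\theta_n\to\theta$. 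The only mild subtlety, rather than a genuine obstacle, is bookkeeping the substitution $\lambda^\star = \frac{1}{2\theta_n}$ consistently and handling the degenerate $\mathcal{N}(0,0)$ factor in the inactive case cleanly; the MGF computation itself is entirely routine.
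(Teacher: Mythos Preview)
Your proposal is correct and follows essentially the same route as the paper: both split into the active ($\sigma_i^2>\theta_n$) and inactive ($\sigma_i^2\le\theta_n$) cases, use $\lambda^\star=\tfrac{1}{2\theta_n}$ from Lemma~\ref{lemma:slope_properties}, evaluate the expectation via the noncentral $\chi^2_1$ MGF (respectively the degenerate $Y_i^\star\equiv 0$), and unify. Your subsequent sketch of Theorem~\ref{thm:limiting_variance} also matches the paper's argument, with the minor remark that the paper invokes the asymptotic Theorem~\ref{thm:LimitingThm} rather than the nonasymptotic Theorem~\ref{thm:n_LimitingThm} (either works here, and the passage $\theta_n\to\theta$ you note is indeed needed).
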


\begin{proof}
The proof relies on the Gaussianity of $\rvec{Y}^\star$. For each $i\in [n]$, from~\eqref{def:generalizeddtilted} and~\eqref{para:slope}, we have 
\begin{align}
& \Lambda_{Y^\star_i}(x_i, \lambda^{\star}, \min(\theta_n, \sigma_i^2)) \notag \\
= & -\frac{\min(\theta_n, \sigma_i^2)}{2\theta_n} -\log\mathbb{E}\left[\exp\lrbb{- \lambda^{\star}\LRB{Y^\star_i - x_i}^2}\right]. \label{eqn:unified}
\end{align}
Substituting $Y_i^\star \equiv 0$ a.s. for all $i$ such that $\sigma_i^2 \leq \theta_n$ (recall~\eqref{GMCREM:Y}) into~\eqref{eqn:unified}, we obtain
\begin{align}
\Lambda_{Y^\star_i}(x_i, \lambda^{\star}, \min(\theta_n, \sigma_i^2)) = \frac{x_i^2 - \sigma_i^2}{2\theta_n}. \label{step:low}
\end{align}
Substituting $Y^\star_i\sim\mathcal{N}(0,\sigma_i^2 - \theta_n)$ for all $i$ such that $\sigma_i^2>\theta_n$ (recall~\eqref{GMCREM:Y}) into~\eqref{eqn:unified} and applying the formula for the moment generating function of a noncentral $\chi^2$-distribution with one degree of freedom, we obtain
\begin{align}
\Lambda_{Y^\star_i}(x_i, \lambda^{\star}, \min(\theta_n, \sigma_i^2)) = \frac{1}{2} \LRB{\frac{x_i^2}{\sigma_i^2} - 1} + \frac{1}{2}\log\frac{\sigma_i^2}{\theta_n}.\label{step:para_tilted}
\end{align}
Unifying~\eqref{step:low} and~\eqref{step:para_tilted}, we obtain~\eqref{para:tilted_2}. 
\end{proof}

\begin{proof}[Proof of Theorem~\ref{thm:limiting_variance}]
For any fixed distortion $d\in\LRB{0,\dmax}$, let $\theta > 0$ be the water level matched to $d$ via the limiting reverse waterfilling~\eqref{eqn:para_d_inf}. By the independence of $Y^\star_1,\ldots, Y^\star_n$ and~\eqref{eqn:para_d}, we have for any $\rvec{x}$, 
\begin{align}
\jmath_{\rvec{X}}(\rvec{x}, d) = \sum_{i = 1}^n \Lambda_{Y^\star_i}(x_i, \lambda^{\star}, \min(\theta_n, \sigma_i^2)), \label{para:tilted_1}
\end{align}
where $\lambda^\star = -\rdf'_{\rvec{X}}(n, d)$. Taking the expectation and the variance of~\eqref{para:tilted_1} using~\eqref{para:tilted_2} yields\footnote{The result on expectations was implicitly established by Gray~\cite{gray1970information}, which we recover here. The result on variances is new.}
\begin{align}
\mathbb{E}[\jmath_{\rvec{X}}(\rvec{X}, d) ] &= \sum_{i = 1}^n  \frac{1}{2}\max\left(0, ~\log\frac{\sigma_i^2}{\theta_n}\right), \label{tilted_mean} \\
\var{\jmath_{\rvec{X}}(\rvec{X}, d) } &= \sum_{i = 1}^n \frac{1}{2}\min\left(1,~\left(\frac{\sigma_i^2}{\theta_n}\right)^2\right).
\label{tilted_variance} 
\end{align}
An application of Theorem~\ref{thm:LimitingThm} to~\eqref{tilted_mean} on the function $t\mapsto \frac{1}{2}\max\left(0,\log\frac{t}{\theta}\right)$ yields~\eqref{limiting_expectation_relation}. Similarly, an application of Theorem~\ref{thm:LimitingThm} to~\eqref{tilted_variance} on the function $t\mapsto \frac{1}{2}\min\left[1, \left(\frac{t}{\theta}\right)^2\right]$ yields~\eqref{limiting_var}.
\end{proof}

\section{Proofs in Section~\ref{sec:converse}}
\label{app:pf_converse}

\subsection{Proof of Theorem~\ref{thm:geometric}}
\label{app:proofGeoBd}

\begin{proof}
The result follows from a geometric argument, illustrated in Fig.~\ref{fig:converse}. Let $\mathcal{C}\subset \mathbb{R}^n$ be the set of codewords of an arbitrary $(n,M,d,\epsilon)$ code, and $\mathcal{B}(\rvec{c},d) $ be the distortion $d$-ball centered at a codeword $\rvec{c}\in \mathcal{C}$ (recall~\eqref{def:d_ball}). By the definition of an $(n,M,d,\epsilon)$ code, we know that the union of the distortion $d$-balls centered at codewords in $\mathcal{C}$ has probability mass at least $1-\epsilon$:
\begin{align}
\mathbb{P}\left[\rvec{U} \in \mathcal{B}\right] \geq 1-\epsilon, \label{excess_code_1}
\end{align}
where $\mathcal{B}$ denotes the union of the distortion $d$-balls centered at the codewords in $\mathcal{C}$:
\begin{align}
\mathcal{B} \triangleq \bigcup_{\rvec{c}\in\mathcal{C}}\mathcal{B}(\rvec{c},d). 
\end{align}
For a set $\mathcal{S}\subseteq \mathbb{R}^n$, denote by 
\begin{align}
\mathsf{A}\mathcal{S} \triangleq \left\{\mathsf{A}\rvec{s}: \rvec{s}\in\mathcal{S}\right\}
\end{align}
the linear transformation of $\mathcal{S}$ by the matrix $\mathsf{A}$. Recall from~\eqref{def:A} that $\mathsf{A}$ is invertible and the innovation is $\rvec{Z} = \mathsf{A}\rvec{U}$. Changing variable $\rvec{U} = \mathsf{A}^{-1}\rvec{Z}$ in~\eqref{excess_code_1} yields
\begin{align}
\mathbb{P}\left[\rvec{Z} \in \mathsf{A}\mathcal{B}\right] \geq 1-\epsilon.\label{excess_code_2}
\end{align}   

Next, we give a geometric interpretation of the set $\mathsf{A}\mathcal{B}$. Consider the set $\mathsf{A}\mathcal{C}$, that is, the transformation of the codebook $\mathcal{C}$ by $\mathsf{A}$. For any $\rvec{x}\in\mathbb{R}^n$, notice that the set 
\begin{align}
\mathsf{A}\mathcal{B}(\mathsf{A}^{-1}\rvec{x},d)  = \left\{\rvec{x}'\in\mathbb{R}^n : (\rvec{x}'-\rvec{x})^\top(\mathsf{AA}^\top)^{-1} (\rvec{x}'-\rvec{x}) \leq nd \right\}
\end{align}
is the set of points bounded by the ellipsoid centered at $\rvec{x}$ with principal axes being the eigenvectors of $\mathsf{AA}^\top$. It follows that 
\begin{align}
\mathsf{A}\mathcal{B} &= \mathsf{A} \bigcup_{\rvec{c}\in\mathcal{C}}\mathcal{B}(\rvec{c},d)\\
&= \bigcup_{\rvec{c}'\in \mathsf{A}\mathcal{C}}\mathsf{A}\mathcal{B}(\mathsf{A}^{-1}\rvec{c}',d),
\end{align}
i.e., $\mathsf{A}\mathcal{B}$ is the union of ellipsoids centered at transformed codewords. See Fig.~\ref{fig:mouse} for an illustration of the set $\mathsf{A}\mathcal{B}$.

Finally, the following volumetric argument completes the proof of Theorem~\ref{thm:geometric}. Since the volume of a union of sets is less than or equal to the sum of the sets' volumes, we have
\begin{align}
M \geq \frac{\text{Vol}(\mathsf{A}\mathcal{B})}{\text{Vol}(\mathsf{A}\mathcal{B}(\rvec{0}, d))}.
\label{bound:union_bd}
\end{align}
Moreover, $\text{Vol}(\mathsf{A}\mathcal{B}(\rvec{0}, d)) = \text{Vol}(\mathcal{B}(\rvec{0}, d))$ due to $\det \mathsf{A} = 1$. On the other hand, due to the spherical symmetry of the distribution of $\rvec{Z}$, the ball $\mathcal{B}(\rvec{0}, r(n,\epsilon))$, where $ r(n,\epsilon)$ satisfies~\eqref{r_n_epsilon}, has the smallest volume among all sets in $\mathbb{R}^n$ with probability greater than or equal to $1-\epsilon$, and so
\begin{align}
\text{Vol}(\mathsf{A}\mathcal{B}) \geq \text{Vol}(\mathcal{B}(\rvec{0}, r(n,\epsilon))).
\end{align} 
Therefore, we can weaken \eqref{bound:union_bd} as 
\begin{align}
M \geq \frac{\text{Vol}(\mathcal{B}(\rvec{0}, r(n,\epsilon)))}{\text{Vol}(\mathcal{B}(\rvec{0}, d))} = \left(\frac{r(n,\epsilon)}{d}\right)^{n/2}.
\label{bound:vol_bd}
\end{align}

\begin{figure*}[!htbp]
    \centering
    \begin{subfigure}[t]{0.31\textwidth}
        \includegraphics[width=\textwidth]{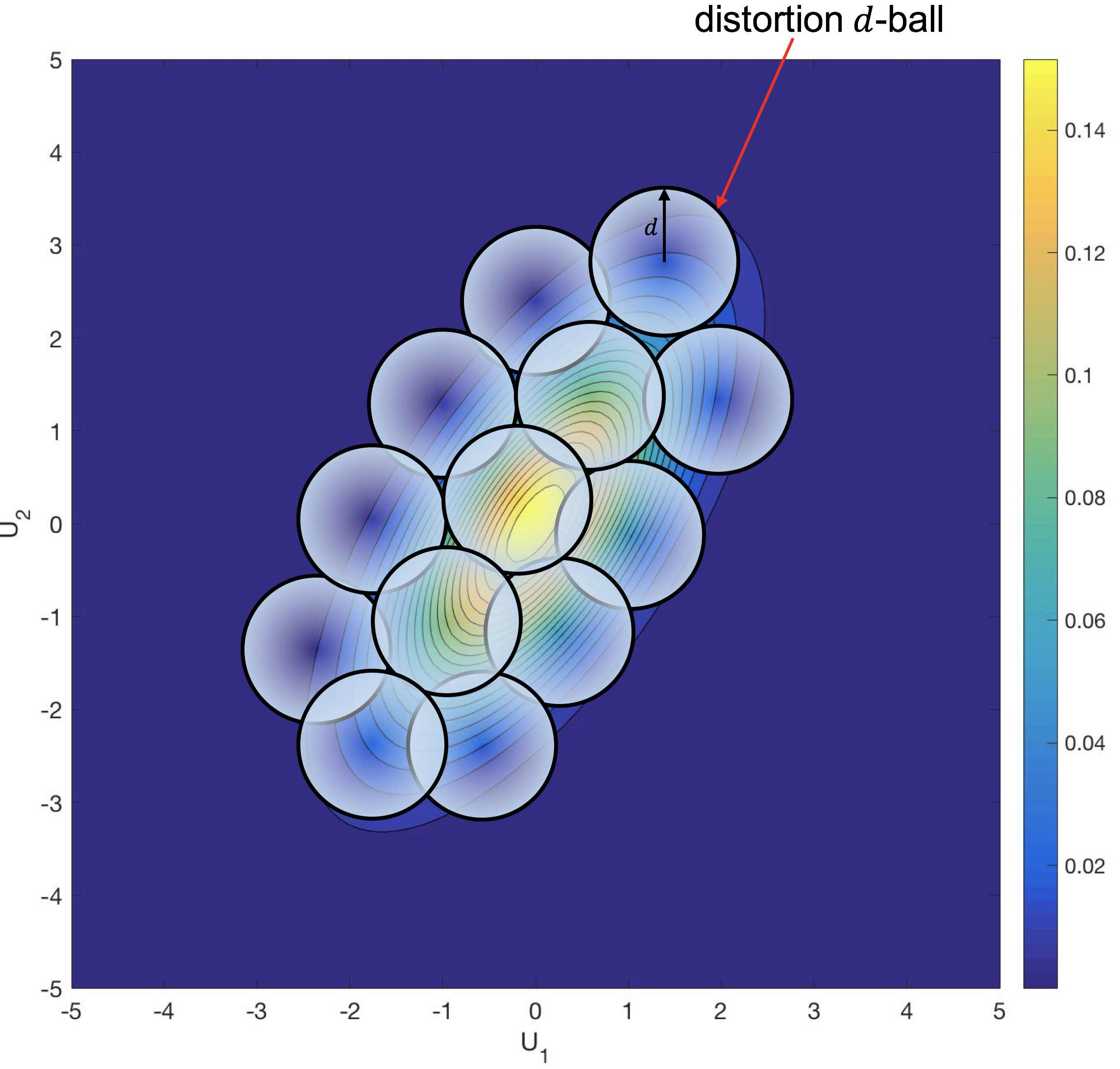}
        \caption{The $\rvec{U}$ space: Given $d,\epsilon$, the goal is to cover at least $1-\epsilon$ probability mass, under distribution of $\rvec{U}$, using the least number of distortion $d$-balls.}
        \label{fig:gull}
    \end{subfigure}
    ~ 
    \begin{subfigure}[t]{0.31\textwidth}
        \includegraphics[width=\textwidth]{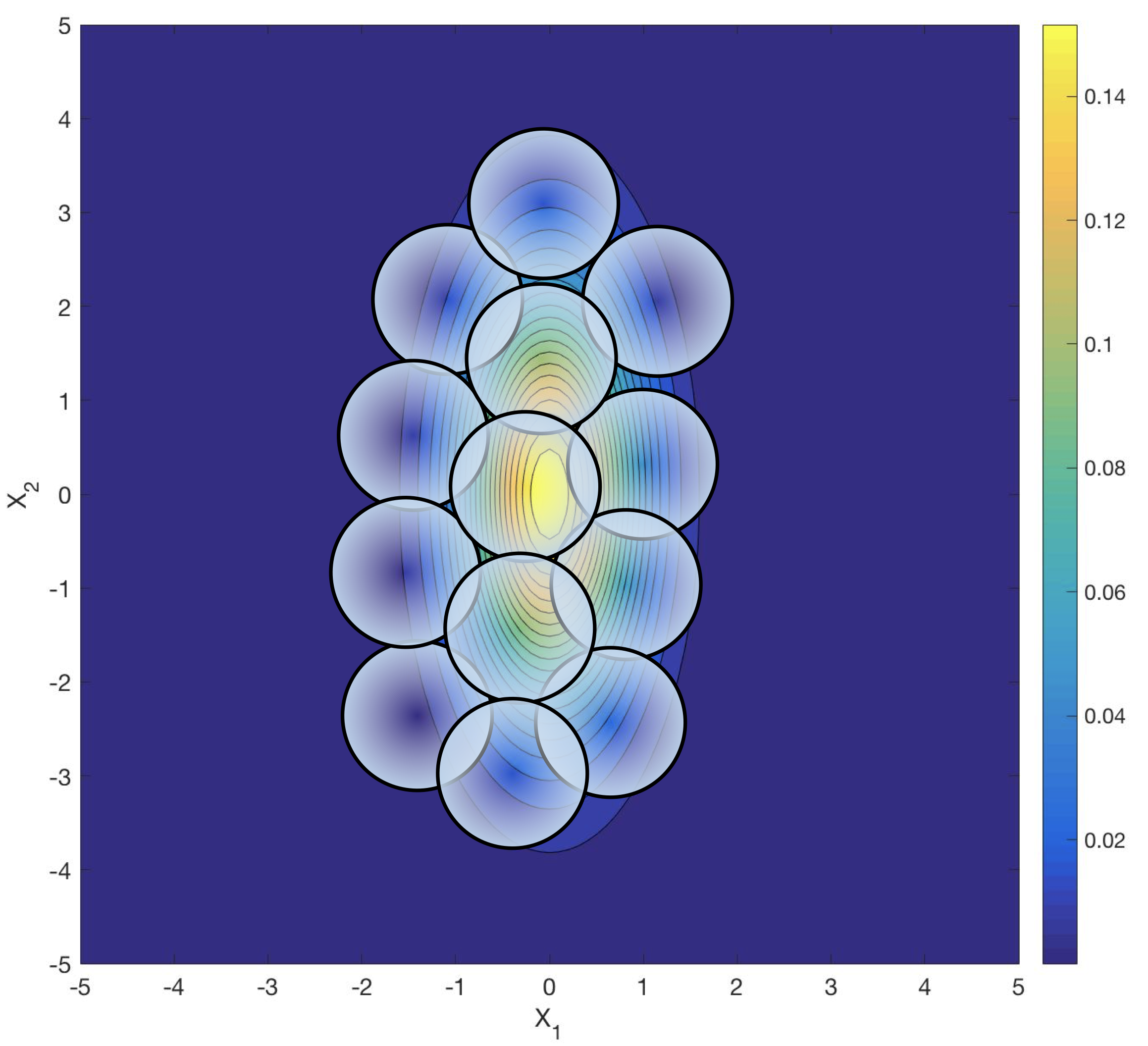}
        \caption{The $\rvec{X}$ space is simply a unitary transformation of the $\rvec{U}$ space.}
        \label{fig:tiger}
    \end{subfigure}
    ~ 
    \begin{subfigure}[t]{0.31\textwidth}
        \includegraphics[width=\textwidth]{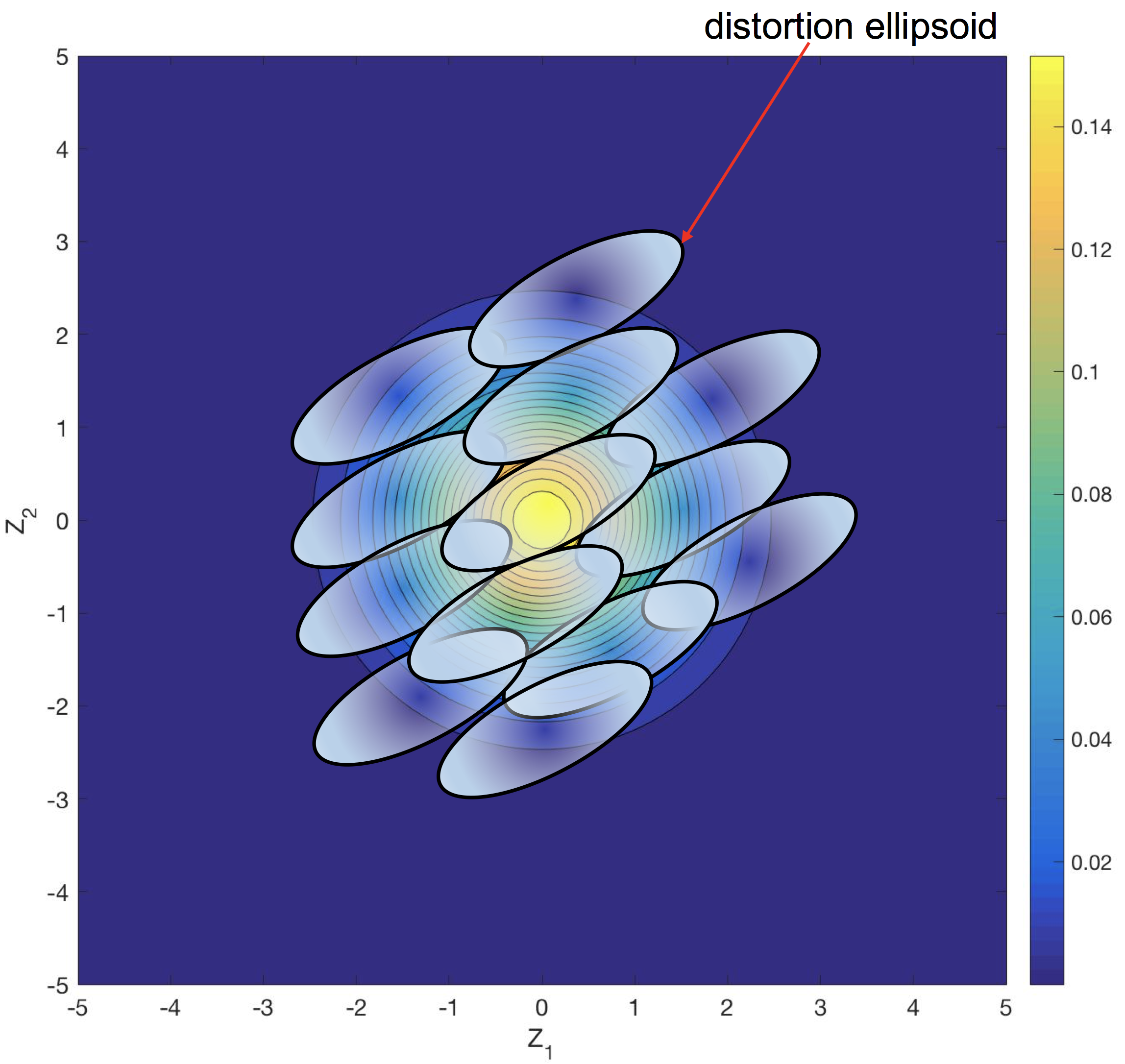}
        \caption{The $\rvec{Z}$ space: Given $d,\epsilon$, the goal is to cover at least $1-\epsilon$ probability mass, under distribution of $\rvec{Z}$, using the least number of distortion ellipsoids, each of which has the same volume as the distortion $d$-ball since $\det(\mathsf{AA}^\top) = 1$.}
        \label{fig:mouse}
    \end{subfigure}
    \caption{Converse proof in figures. The contour plot in each figure shows the underlying probability distribution.}
    \label{fig:converse}
\end{figure*}
\end{proof}

\subsection{Proof of Theorem~\ref{thm:tilted_info_concentrate}}
\label{app:tilted_info_concentrate}
\begin{proof}
The proof is based on Chebyshev's inequality. Fix $d\in\LRB{0,\dmax}$. For each fixed $n\in\mathbb{N}$, let $\theta_1,\theta_2>0$ be the water levels matched to $d$ and $d_n$, respectively, in the $n$-th order reverse waterfilling~\eqref{eqn:para_d} over $\sigma_i^2,~i\in [n]$, that is, 
\begin{align}
d &= \frac{1}{n}\sum_{i = 1}^n \min\LRB{\theta_1, \sigma_i^2}, \label{d_theta_1} \\
d_n &= \frac{1}{n}\sum_{i = 1}^n \min\LRB{\theta_2, \sigma_i^2}.
\end{align}
Obviously, both $\theta_1$ and $\theta_2$ depend on $n$. We now proceed to show that there exists a constant $h_2>0$ such that for all $n$ large enough,
\begin{align}
\lrabs{\theta_1 - \theta_2}\leq \frac{h_2}{n}.
\label{diff:theta}
\end{align}
Indeed, without loss of generality, assume $d<d_n$\footnote{Otherwise, switch $\theta_1$ and $\theta_2$ in the rest of the proof.}, then $\theta_1 < \theta_2$ by the mononicity of the reverse waterfilling~\eqref{eqn:para_d}. Define the following index sets 
\begin{align}
I_1 &\triangleq \lrbb{i\in [n]: \sigma_i^2 \leq \theta_1},\\
I_2 &\triangleq \lrbb{i\in [n]: \theta_1 < \sigma_i^2 < \theta_2},\\
I_3 &\triangleq \lrbb{i\in [n]: \theta_2 \leq \sigma_i^2}.
\end{align}
Then,
\begin{align}
d_n-d &= \frac{1}{n}\sum_{i = 1}^n \LRB{\min\LRB{\theta_2,\sigma_i^2} - \min\LRB{\theta_1,\sigma_i^2} } \\
& = \frac{1}{n} \sum_{i\in I_1} 0 + \frac{1}{n} \sum_{i\in I_2} \LRB{\sigma_i^2 - \theta_1} +  \frac{1}{n} \sum_{i\in I_3} \LRB{\theta_2 - \theta_1} \\
& \geq \frac{|I_3|}{n} \LRB{\theta_2 - \theta_1}.
\end{align}
Since $d_n < \dmax$, there exists a constant $\tilde{c}\in (0,1)$ such that for all $n$ large enough, $|I_3| \geq \tilde{c} n$, hence~\eqref{diff:theta} holds with $h_2 =  \frac{h_1}{\tilde{c}}$. 

Now, let $G_1, \ldots, G_n$ be i.i.d. $\mathcal{N}(0,1)$. To simplify notations, we denote the random variable as
\begin{align}
\Delta(d,d_n) \triangleq  \jmath_{\rvec{X}}\LRB{\rvec{X},d} - \jmath_{\rvec{X}}\LRB{\rvec{X},d_n}.
\end{align}
From~\eqref{para:tilted_1} and~\eqref{para:tilted_2}, we have 
\begin{align}
\Delta(d,d_n) &=  \sum_{i = 1}^n \Bigg \{\left[\frac{\min\LRB{\theta_1, \sigma_i^2}}{2\theta_1} - \frac{\min\LRB{\theta_2, \sigma_i^2}}{2\theta_2}\right] (G_i^2 - 1) \notag \\
& + \frac{1}{2}\log\left[\frac{\max\LRB{\theta_1, \sigma_i^2}}{2\theta_1}\cdot \frac{2\theta_2}{\max\LRB{\theta_2, \sigma_i^2}}\right]\Bigg\}.
\end{align}
To apply Chebyshev's inequality, we bound the mean and the variance of $\Delta(d,d_n)$ as follows. 
\begin{align}
& \mathbb{E}\left[\Delta(d,d_n)\right] \notag \\
= & \sum_{i = 1}^n \frac{1}{2}\log\left[\frac{\max\LRB{\theta_1, \sigma_i^2}}{2\theta_1}\cdot \frac{2\theta_2}{\max\LRB{\theta_2, \sigma_i^2}}\right]  \\
=& \sum_{i\in I_1} 0 + \sum_{i\in I_2} \frac{1}{2}\log\frac{\sigma_i^2}{\theta_1}+  \sum_{i\in I_3}\frac{1}{2}\log\frac{\theta_2}{\theta_1} \\
\leq & \frac{h_2}{\theta_1},\label{eqn:I3}
\end{align}
where~\eqref{eqn:I3} holds since for $i\in I_2$, we have $\frac{1}{2}\log\frac{\sigma_i^2}{\theta_1} \leq \frac{1}{2}\log\frac{\theta_2}{\theta_1}$, while for $i\in I_3$, due to~\eqref{diff:theta}, we have 
\begin{align}
\frac{1}{2}\log\frac{\theta_2}{\theta_1} \leq \frac{1}{2}\log\left(1+\frac{h_2}{n\theta_1}\right) \leq \frac{h_2}{2n\theta_1}.
\end{align}
By a similar argument, we can bound the variance as 
\begin{align}
\var{\Delta(d,d_n)}\leq \frac{h_2^2}{\theta_1^2n}.
\label{chebyshev:var_bound}
\end{align}
In conjunction with~\eqref{eqn:I3},~\eqref{chebyshev:var_bound}, Chebyshev's inequality yields that for all $n$ large enough and $\forall\ell > 0$,
\begin{align}
\prob{\lrabs{\Delta(d,d_n) - \mathbb{E}\left[\Delta(d,d_n)\right]} \geq\ell }\leq \frac{h_2^2}{\theta_1^2n\ell^2}.
\label{eqn:chebyshev}
\end{align}
Choosing $\ell = \frac{u h_2}{\theta_1}$ in~\eqref{eqn:chebyshev} and applying~\eqref{eqn:I3} yields that $\forall u > 0$,
\begin{align}
\prob{\lrabs{\Delta(d,d_n)} \geq \frac{(1+u)h_2}{\theta_1} }\leq \frac{1}{nu^2}.
\end{align}
Let $\theta > 0$ be the water level matched to $d$ via the limiting reverse waterfilling~\eqref{eqn:para_d_inf}, then $\lim_{n\rightarrow\infty} \theta_1 = \theta$ by~\eqref{d_theta_1} and~\eqref{eqn:para_d_inf}. Therefore, we have $\theta_1\geq \frac{\theta}{2}$ for all $n$ large enough. Hence,  for all $n$ large enough and $\forall u > 0$, we have 
\begin{align}
\prob{\lrabs{\Delta(d,d_n)} \geq \frac{2(1+u)h_2}{\theta} }\leq \frac{1}{nu^2}. \label{eqn:concen_app_pf}
\end{align}
Rearranging terms in~\eqref{eqn:concen_app_pf} completes the proof.
\end{proof}

\section{Proofs in Section~\ref{sec:achievability}}
\label{app:pf_achievability}

\subsection{Proof of Lemma~\ref{lemma:lemma_2}}
\label{sec:main_lemma}
In addition to new concentration inequalities, shown in Lemma~\ref{lemma:lemma_5} and Lemma~\ref{lemma:lemma_4} below, the proof leverages the following bound, which is a direct application of~\cite[Lem. 1]{kostina2012fixed} to the random vector $\rvec{X}$.
\begin{lemma}[Lower bound on probability of distortion balls]
Fix $d\in\LRB{0,\dmax}$,  $n\in\mathbb{N}$, and the distribution $P_{\rvec{Y}}$ on $\mathbb{R}^n$. Then for any $\rvec{x}\in \mathbb{R}^n$,  it holds that
\begin{align}
P_{\rvec{Y}}\left(\mathcal{B}(\rvec{x},d)\right) &\geq \sup_{ P_{\hat{\rvec{X}}}, \gamma > 0} \exp\left\{-\hat{\lambda}^\star n \gamma - \Lambda_{\rvec{Y}}(\rvec{x}, \hat{\lambda}^\star, d)\right\} \times \notag \\
&\quad \prob{d-\gamma \leq \dis{\rvec{x}}{\hat{\rvec{F}}^\star}\leq d ~| \hat{\rvec{X}} = \rvec{x}}, \label{eqn:lower_bound}
\end{align}
where the supremum is over all pdfs $P_{\hat{\rvec{X}}}$ on $\mathbb{R}^n$; $\Lambda_{\rvec{Y}}(\rvec{x}, \hat{\lambda}^\star, d)$ is the generalized tilted information defined in~\eqref{def:generalizeddtilted} with 
\begin{align}
\hat{\lambda}^\star =-\rdf'(\hat{\rvec{X}}, \rvec{Y}, d);
\label{eqn:lambda_hat_star}
\end{align}
and the random variable $\hat{\rvec{F}}^\star$ achieves $\rdf(\hat{\rvec{X}}, \rvec{Y}, d)$.
\label{lemma:lowerbound}
\end{lemma}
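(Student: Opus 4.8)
The plan is to obtain the bound by a change of measure; as the surrounding text already indicates, Lemma~\ref{lemma:lowerbound} is nothing but the abstract estimate of~\cite[Lem. 1]{kostina2012fixed} specialized to the decorrelated source $\rvec{X}$, the normalized squared-error distortion $\dis{\rvec{x}}{\rvec{y}}=\frac1n\|\rvec{x}-\rvec{y}\|_2^2$, and the auxiliary law $P_{\rvec{Y}}$, so the shortest route is to check that the hypotheses of~\cite[Lem. 1]{kostina2012fixed} hold here and invoke it verbatim. For completeness I would also reproduce the self-contained argument, which is short.

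First I would fix $\rvec{x}\in\mathbb{R}^n$, an arbitrary proxy pdf $P_{\hat{\rvec{X}}}$ on $\mathbb{R}^n$, and $\gamma>0$, and let $P_{\hat{\rvec{F}}^\star|\hat{\rvec{X}}}$ be the minimizer of the CREM problem $\rdf(\hat{\rvec{X}},\rvec{Y},d)$ in~\eqref{crem:GMdef}, with $\hat{\lambda}^\star$ as in~\eqref{eqn:lambda_hat_star}. The key input is the optimality characterization~\eqref{relation:generaltilted} of the CREM minimizer, which (with the factor-of-$n$ bookkeeping matching the normalization in~\eqref{crem:GMdef} and the appearance of $\hat{\lambda}^\star n\gamma$ in~\eqref{eqn:lower_bound}) reads, for $P_{\rvec{Y}}$-almost every $\rvec{y}$,
\begin{align}
\log\frac{dP_{\hat{\rvec{F}}^\star|\hat{\rvec{X}}}(\rvec{y}\mid\rvec{x})}{dP_{\rvec{Y}}(\rvec{y})} = \Lambda_{\rvec{Y}}(\rvec{x},\hat{\lambda}^\star,d) - \hat{\lambda}^\star n\,\dis{\rvec{x}}{\rvec{y}} + \hat{\lambda}^\star n d .
\end{align}
Then I would write $P_{\rvec{Y}}(\mathcal{B}(\rvec{x},d))$ as an integral of the reciprocal Radon-Nikodym derivative $\bigl(\tfrac{dP_{\hat{\rvec{F}}^\star|\hat{\rvec{X}}=\rvec{x}}}{dP_{\rvec{Y}}}\bigr)^{-1}$ against $P_{\hat{\rvec{F}}^\star|\hat{\rvec{X}}=\rvec{x}}$ over $\mathcal{B}(\rvec{x},d)$ (keeping only the part of $P_{\rvec{Y}}$ absolutely continuous with respect to $P_{\hat{\rvec{F}}^\star|\hat{\rvec{X}}=\rvec{x}}$, which can only weaken the inequality), shrink the integration region to the shell $\{\rvec{y}:d-\gamma\leq\dis{\rvec{x}}{\rvec{y}}\leq d\}\subseteq\mathcal{B}(\rvec{x},d)$, and substitute the displayed identity. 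On that shell $\dis{\rvec{x}}{\rvec{y}}-d\geq-\gamma$, and since $\hat{\lambda}^\star\geq 0$ (the CREM value $\rdf(\hat{\rvec{X}},\rvec{Y},\cdot)$ is non-increasing because enlarging $d$ enlarges the feasible set in~\eqref{crem:GMdef}), the factor $e^{\hat{\lambda}^\star n(\dis{\rvec{x}}{\rvec{y}}-d)}$ is bounded below by the deterministic constant $e^{-\hat{\lambda}^\star n\gamma}$, which I pull out of the integral; the residual integral is exactly $\prob{d-\gamma\leq\dis{\rvec{x}}{\hat{\rvec{F}}^\star}\leq d\mid\hat{\rvec{X}}=\rvec{x}}$. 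This gives~\eqref{eqn:lower_bound} for the fixed pair $(P_{\hat{\rvec{X}}},\gamma)$, and taking the supremum over all pdfs $P_{\hat{\rvec{X}}}$ and all $\gamma>0$ finishes the argument.

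I do not expect any genuinely hard step: the bound is elementary once~\eqref{relation:generaltilted} is available. The only two points needing a little care are (i) the existence of the CREM minimizer and the validity of the optimality identity~\eqref{relation:generaltilted} over the abstract alphabet $\mathbb{R}^n$, which is supplied by~\cite[Th. 2]{dembo2002source} and~\cite[Property 1]{yang1999redundancy} (and, for Gaussian $P_{\rvec{Y}}$, by the explicit Gaussian kernel of Lemma~\ref{lemma:GaussianCREM}), and (ii) the sign $\hat{\lambda}^\star\geq 0$, used to convert the lower bound $\dis{\rvec{x}}{\rvec{y}}-d\geq-\gamma$ on the shell into a lower bound on $e^{\hat{\lambda}^\star n(\dis{\rvec{x}}{\rvec{y}}-d)}$. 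Both are standard, so the whole proof should amount to a few lines once these facts are cited.
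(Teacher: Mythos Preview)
Your proposal is correct and matches the paper's approach exactly: the paper's entire proof is ``A direct application of~\cite[Lem.~1]{kostina2012fixed} to the random vector $\rvec{X}$,'' and you both identify that citation and spell out the underlying change-of-measure argument (via the CREM optimality identity~\eqref{relation:generaltilted}, restriction to the shell, and $\hat{\lambda}^\star\geq 0$) that constitutes its proof. There is nothing to add.
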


The high-level idea in proving Lemma~\ref{lemma:lemma_2} is the following. In Lemma~\ref{lemma:lowerbound}, we replace $\rvec{Y}$ by $\rvec{Y}^\star$ defined in~\eqref{Ystar} and~\eqref{GMCREM:Y}, and we choose $\hat{\rvec{X}}$ to be the proxy Gaussian random variable $\hat{\rvec{X}}(\rvec{x})$ defined in~\eqref{eqn:Xhat}. With such choices of $\hat{\rvec{X}}$ and $\rvec{Y}$, the next two lemmas provide further lower bounds on the two factors on the right side of~\eqref{eqn:lower_bound}. The first one is a concentration inequality on the generalized tilted information.
\begin{lemma}
For any fixed $d\in (0,\dmax)$ and excess-distortion probability $\epsilon\in (0,1)$, there exist constants $C$ and $C_2 >0$ such that for all $n$ large enough,
\begin{align}
\mathbb{P}\left[ \Lambda_{\rvec{Y}^\star}(\rvec{X}, \hat{\lambda}^{\star}(\rvec{X}), d) \leq \Lambda_{\rvec{Y}^\star}(\rvec{X}, \lambda^{\star}, d)  + C\log n \right] \geq 1 - \frac{C_2}{\sqrt{n}},
\label{con_gen_inf}
\end{align}
where $\hat{\lambda}^{\star}(\rvec{x})$ is given by~\eqref{lambda_x_hat_star} with $\hat{\rvec{X}}$ defined in~\eqref{eqn:Xhat}, and $\lambda^{\star}$ is in~\eqref{para:slope}.
\label{lemma:lemma_5}
\end{lemma}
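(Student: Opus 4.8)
The plan is to combine the concavity of $\delta\mapsto\Lambda_{\rvec{Y}^\star}(\rvec{x},\delta,d)$ with the fact that the MLE-based slope $\hat{\lambda}^\star(\rvec{x})$ lies within $O(\sqrt{\log n/n})$ of $\lambda^\star$ with probability $1-O(1/\sqrt{n})$. Since $\delta\mapsto\log\myE{\exp(-\delta\|\rvec{x}-\rvec{Y}^\star\|_2^2)}$ is a cumulant generating function, hence convex, the map $\delta\mapsto\Lambda_{\rvec{Y}^\star}(\rvec{x},\delta,d)=-\delta nd-\log\myE{\exp(-\delta\|\rvec{x}-\rvec{Y}^\star\|_2^2)}$ is concave on $[0,\infty)$, and both $\lambda^\star=\tfrac{1}{2\theta_n}$ (Lemma~\ref{lemma:slope_properties}) and $\hat{\lambda}^\star(\rvec{x})$ (positive by~\eqref{eqn:slope_hat}) lie there. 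The tangent-line inequality for concave functions then gives, pointwise in $\rvec{x}$,
\begin{align*}
\Lambda_{\rvec{Y}^\star}(\rvec{x},\hat{\lambda}^\star(\rvec{x}),d)-\Lambda_{\rvec{Y}^\star}(\rvec{x},\lambda^\star,d)\leq\lrabs{\tfrac{\partial}{\partial\delta}\Lambda_{\rvec{Y}^\star}(\rvec{x},\lambda^\star,d)}\cdot\lrabs{\hat{\lambda}^\star(\rvec{x})-\lambda^\star},
\end{align*}
so it suffices to control each factor on an event of probability $1-O(1/\sqrt{n})$: the derivative by $O(\sqrt{n\log n})$ and the slope gap by $O(\sqrt{\log n/n})$, whose product is $O(\log n)$.

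For the derivative I would use the Gaussian moment generating function to write $\Lambda_{\rvec{Y}^\star}(\rvec{x},\delta,d)=-\delta nd+\sum_{i=1}^n\bigl[\tfrac12\log(1+2\delta\nu_i^2)+\tfrac{\delta x_i^2}{1+2\delta\nu_i^2}\bigr]$ with $\nu_i^2=\max(0,\sigma_i^2-\theta_n)$, differentiate in $\delta$, and substitute $\delta=\lambda^\star=\tfrac{1}{2\theta_n}$ together with $nd=\sum_i\min(\theta_n,\sigma_i^2)$. After cancellation this collapses to $\tfrac{\partial}{\partial\delta}\Lambda_{\rvec{Y}^\star}(\rvec{X},\lambda^\star,d)=\sum_{i=1}^n c_i\bigl(\tfrac{X_i^2}{\sigma_i^2}-1\bigr)$ with coefficients $0<c_i\leq\theta_n$ (namely $c_i=\theta_n^2/\sigma_i^2$ for the active indices and $c_i=\sigma_i^2$ otherwise), uniformly bounded since $\theta_n\to\theta$ and $\sigma_i^2\ge\sigma^2/(1+a)^2$ by~\eqref{constant_eig_bound}; equivalently this derivative is $\myE{\|\rvec{X}-\rvec{F}^\star\|_2^2\mid\rvec{X}}-nd$ for the matched CREM optimizer $\rvec{F}^\star$, so it is centered because $\myE{\|\rvec{X}-\rvec{F}^\star\|_2^2}=nd$. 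Since $\tfrac{X_i^2}{\sigma_i^2}-1$ is a centered $\chi^2_1$ variable, the summands are independent sub-exponential, and Bernstein's inequality yields $\prob{\lrabs{\sum_i c_i(\tfrac{X_i^2}{\sigma_i^2}-1)}>C_3\sqrt{n\log n}}\leq 2n^{-1/2}$ once $C_3$ is taken large enough.

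For the slope gap, rather than the polylog-probability bound of Theorem~\ref{thm:error_concentration}, I would invoke the stronger Theorem~\ref{thm:ECG} with the \emph{deterministic} radius $\eta=\eta_n'\triangleq\sqrt{\beta\log n/n}$; choosing $\beta$ large makes $2\exp[-c\min(c_1\eta_n'^2n,c_2\eta_n'n)]=2\exp(-cc_1\beta\log n)\leq 2n^{-1/2}$ for $n$ large. On the event $\{\lrabs{\hat{a}(\rvec{U})-a}\leq\eta_n'\}$, combining Lemma~\ref{lemma:eig_approx} (to replace $\sigma_i^2$ by $\sigma^2/g(i\pi/(n+1))$ at cost $O(1/n)$) with the Lipschitz dependence of $w\mapsto\sigma^2/(1+w^2-2w\cos(\cdot))$ on $w$, as in the proof of Theorem~\ref{thm:typicalset} part (2), gives $\lrabs{\hat{\sigma}_i^2(\rvec{X})-\sigma_i^2}\leq C_5\eta_n'$ for all $i$; then Lemma~\ref{lemma:bound_slope} with $t=C_5\eta_n'<\theta/3$ gives $\lrabs{\hat{\lambda}^\star(\rvec{X})-\lambda^\star}\leq\tfrac{9C_5}{4\theta^2}\eta_n'$. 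A union bound over this event and the Bernstein event, followed by plugging the two bounds into the tangent-line inequality, produces the claim with $C=\tfrac{9C_3C_5\sqrt{\beta}}{4\theta^2}$ and $C_2=4$.

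The hard part will be the cancellation in the first step. The naive estimate—using only that $\delta\mapsto\Lambda_{\rvec{Y}^\star}(\rvec{x},\delta,d)$ is Lipschitz in $\delta$ with constant $\Theta(n)$ (because of the term $\sum_i\delta x_i^2/(1+2\delta\nu_i^2)$)—would give only $O(n)\cdot O(\sqrt{\log n/n})=O(\sqrt{n\log n})$, which is far too weak. The point of using concavity at $\lambda^\star$ together with the identity $\myE{\partial_\delta\Lambda_{\rvec{Y}^\star}(\rvec{X},\lambda^\star,d)}=0$ is precisely that this $\Theta(n)$-Lipschitz rate is realized at a point where the derivative is mean-zero, so its typical size is only $\Theta(\sqrt{n\log n})$ and the product lands at the required $\Theta(\log n)$; matching these orders while keeping every error event of probability $O(1/\sqrt{n})$ is the delicate bookkeeping.
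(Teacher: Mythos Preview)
Your argument is correct, but it proceeds by a genuinely different route from the paper's. The paper follows the Kostina--Verd\'u template from~\cite[Lem.~5]{kostina2012fixed}: a second-order Taylor expansion of $\delta\mapsto\Lambda_{\rvec{Y}^\star}(\rvec{x},\delta,d)$ about $\lambda^\star$ gives $\Lambda_{\rvec{Y}^\star}(\rvec{x},\hat\lambda^\star,d)-\Lambda_{\rvec{Y}^\star}(\rvec{x},\lambda^\star,d)\le (S'(\rvec{x}))^2/(2S''(\rvec{x}))$, where $S'$ is exactly your derivative $\sum_i c_i(X_i^2/\sigma_i^2-1)$ and $S''(\rvec{x})=-\Lambda''_{\rvec{Y}^\star}(\rvec{x},\lambda^\star+\xi,d)$ is the curvature at an intermediate point. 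The paper then shows $|S'(\rvec{X})|\le\sqrt{V'n\log n}$ and $S''(\rvec{X})\ge tn$ each with probability $1-O(1/\sqrt n)$, and the ratio is $O(\log n)$. In this approach the specific value of $\hat\lambda^\star$ enters only through the intermediate point $\lambda^\star+\xi$ at which $S''$ is evaluated, so only the crude bound $|\hat\lambda^\star-\lambda^\star|\le n^{-1/4}$ is needed (to keep the curvature uniformly of order $n$). Your first-order tangent-line bound $|S'|\cdot|\hat\lambda^\star-\lambda^\star|$ is more elementary---no second derivative to control from below---but it trades that simplification for a sharper requirement on the slope gap, namely $|\hat\lambda^\star-\lambda^\star|=O(\sqrt{\log n/n})$, which you obtain by invoking Theorem~\ref{thm:ECG} at the finer scale $\eta_n'=\sqrt{\beta\log n/n}$. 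Both routes rest on the same ingredients (the zero-mean structure of $S'$ at $\lambda^\star$, and the MLE error bound feeding through Lemma~\ref{lemma:bound_slope}); the paper's quadratic bound decouples the two factors at the cost of an extra curvature estimate, while yours ties them together linearly and leans harder on the parameter-estimation rate.
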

\begin{proof}
Appendix~\ref{app:proof_lemma_5}.
\end{proof}

The second bound, presented in Lemma~\ref{lemma:lemma_4} below, is referred to as the shell-probability lower bound. For any $\rvec{x}\in\mathbb{R}^n$ and any $\gamma\in (0,d)$, define the shell 
\begin{align}
\mathcal{S}(\rvec{x}, d, \gamma) \triangleq \lrbb{\rvec{x}'\in\mathbb{R}^n: d-\gamma\leq \dis{\rvec{x}}{\rvec{x}'}\leq d}.
\end{align}
Geometrically, Lemma~\ref{lemma:lowerbound} provides a quantitative connection between the probability of a distortion $d$-ball and the probability of its shell, and Lemma~\ref{lemma:lemma_4} below gives a lower bound on the probability of the shell $\mathcal{S}(\rvec{x}, d, \gamma)$  for ``typical'' sequences $\rvec{x}$.

\begin{lemma}[Shell-probability lower bound]
Fix any distortion $d\in (0,\dmax)$ and any excess-distortion probability $\epsilon\in (0,1)$. For any constant $\alpha > 0$ and any $n\in \mathbb{N}$, consider the set $\TS(n,\alpha, p)$ defined in~Definition~\ref{def:TS}, where $p$ is the constant in Theorem~\ref{thm:typicalset}. Let 
\begin{align}
\gamma \triangleq \frac{\log^q n}{n} \label{choice:gamma}, 
\end{align}
where $q>1$ is a constant defined in~\eqref{defq} in Appendix~\ref{app:proof_lemma_4} below. Then, there exists a constant $C_1 > 0 $ such that for all $n$ large enough, for any $\rvec{u}\in \TS(n,\alpha, p)$ and $\rvec{x} = \mathsf{S}^\top\rvec{u}$ with $\mathsf{S}$ in~\eqref{diag:B}, it holds that
\begin{align}
\prob{\hat{\rvec{F}}^\star\in \mathcal{S}(\rvec{x}, d, \gamma) ~| \hat{\rvec{X}} = \rvec{x}} \geq \frac{C_1}{\sqrt{n}},
\label{shell_lower_bd}
\end{align}
where $\hat{\rvec{X}}$ is given in~\eqref{eqn:Xhat}.
\label{lemma:lemma_4}
\end{lemma}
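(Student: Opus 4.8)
\textbf{Proof proposal for Lemma~\ref{lemma:lemma_4} (shell-probability lower bound).}

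The plan is to exploit the fact that, given $\hat{\rvec{X}}=\rvec{x}$, the distortion $\mathsf{d}(\rvec{x},\hat{\rvec{F}}^\star)=\frac{1}{n}\|\rvec{x}-\hat{\rvec{F}}^\star\|_2^2$ is a weighted sum of $n$ independent noncentral $\chi^2_1$ random variables, so that the shell probability is a probability that such a sum lands in an interval of width $\gamma=\log^q n/n$ around $d$. First I would invoke Lemma~\ref{lemma:GaussianCREM} (the Gaussian CREM) with $\rvec{X}$ replaced by $\hat{\rvec{X}}$ and $\rvec{Y}=\rvec{Y}^\star$: for each coordinate, $\hat{F}_i^\star\mid\{\hat{X}_i=x_i\}\sim\mathcal{N}\!\left(\frac{2\hat{\lambda}^\star\nu_i^2 x_i}{1+2\hat{\lambda}^\star\nu_i^2},\frac{\nu_i^2}{1+2\hat{\lambda}^\star\nu_i^2}\right)$ with $\nu_i^2=\max(0,\sigma_i^2-\theta_n)$, where $\hat{\lambda}^\star=\hat{\lambda}^\star(\rvec{x})$ and $\theta_n$ is the $n$-th order water level. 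Consequently $W_i\triangleq(x_i-\hat{F}_i^\star)^2$ is (a scaled) noncentral $\chi^2_1$, with mean $m_i(\rvec{x})$ as in~\eqref{def:m_i} and a variance that is $\Theta(1)$ uniformly in $i$ because, on the typical set, $\hat{\sigma}_i^2\in[(1-a)^2\sigma^2/\cdots,\ldots]$ and $\hat\lambda^\star$ is close to $\lambda^\star=1/(2\theta)$ by Theorem~\ref{thm:typicalset}(3); in particular each $W_i$ has finite third absolute moment bounded uniformly in $i$ and $n$.

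The heart of the argument is then a Berry--Esseen estimate. Writing $\mathsf{d}(\rvec{x},\hat{\rvec{F}}^\star)=\frac{1}{n}\sum_{i=1}^n W_i$, the event $\{\hat{\rvec{F}}^\star\in\mathcal{S}(\rvec{x},d,\gamma)\}$ is $\{nd-n\gamma\le\sum_i W_i\le nd\}$. Its mean is $\sum_i m_i(\rvec{x})$, which by condition~\eqref{mean_near_d} of the MLE-typical set differs from $nd$ by at most $np\eta_n = O(\sqrt{n\log\log n})$, and its standard deviation is $\sqrt{\sum_i \var{W_i}}=\Theta(\sqrt n)$. Applying the Berry--Esseen theorem in Appendix~\ref{app:classicalthms} to the normalized sum, the probability that $\sum_i W_i$ lies in an interval of length $n\gamma=\log^q n$ located within $O(\sqrt{n\log\log n})$ of the mean is at least $\Phi(t_2)-\Phi(t_1)-2C_{\textsf{BE}}/\sqrt n$, where $t_1,t_2$ are the two endpoints rescaled by the standard deviation. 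The key point is that $t_1,t_2$ are both $O(\sqrt{\log\log n})$ in absolute value (from the mean offset $O(\sqrt{n\log\log n})/\Theta(\sqrt n)$), and $t_2-t_1=n\gamma/\Theta(\sqrt n)=\Theta(\log^q n/\sqrt n)$. Thus $\Phi(t_2)-\Phi(t_1)\ge \phi(t_{\max})(t_2-t_1)\ge c\, e^{-c'\log\log n}\cdot\frac{\log^q n}{\sqrt n}=c\,\frac{\log^q n}{(\log n)^{c'}\sqrt n}$, which for $q>1$ chosen large enough (specifically $q$ exceeding the exponent $c'$ controlling the Gaussian-density decay, as will be fixed in~\eqref{defq}) dominates both $2C_{\textsf{BE}}/\sqrt n$ and decays no faster than $C_1/\sqrt n$ up to a polylogarithmic factor; choosing $C_1$ small absorbs the constants, giving~\eqref{shell_lower_bd}.

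\textbf{Main obstacle.} The delicate part is not the Berry--Esseen application itself but ensuring that every constant in the lower bound $c\,\phi(t_{\max})(t_2-t_1)$ is \emph{uniform over all} $\rvec{u}\in\TS(n,\alpha,p)$ and controlling how the mean offset $|\sum_i m_i(\rvec x)-nd|$ interacts with the Gaussian tail: since $t_{\max}=O(\sqrt{\log\log n})$, we get $\phi(t_{\max})=\Theta((\log n)^{-O(1)})$, so the shell probability is genuinely $\Theta(\log^q n/((\log n)^{O(1)}\sqrt n))$ rather than $\Theta(1/\sqrt n)$; hence $q$ must be picked strictly larger than that $O(1)$ exponent for~\eqref{shell_lower_bd} to hold with a fixed constant $C_1$. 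This is exactly why condition~\eqref{mean_near_d} in Definition~\ref{def:TS} is phrased with the slack $p\eta_n$ (rather than, say, $O(1/n)$): it is tight enough to keep $t_{\max}$ only polyloglog-large, yet loose enough to hold with probability $1-O((\log n)^{-\kappa\alpha})$ by Theorem~\ref{thm:typicalset}(1). Verifying that $\var{W_i}$ is bounded away from $0$ and $\infty$ uniformly, and that the third moments are uniformly bounded --- using Theorem~\ref{thm:typicalset}(2)--(3) together with~\eqref{constant_eig_bound} --- is routine but must be done carefully to make the Berry--Esseen constant $C_{\textsf{BE}}$ genuinely $n$- and $\rvec{x}$-independent.
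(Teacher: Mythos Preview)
Your proposal is correct and follows essentially the same route as the paper's proof: decompose $\mathsf{d}(\rvec{x},\hat{\rvec{F}}^\star)$ via the Gaussian CREM (Lemma~\ref{lemma:GaussianCREM}) into a sum of independent scaled noncentral $\chi^2_1$ variables, apply Berry--Esseen, use condition~\eqref{mean_near_d} to bound the centering offset by $O(\sqrt{n\log\log n})$, and then choose $q$ large enough so that the $\log^q n$ width of the shell beats the $(\log n)^{O(1)}$ decay of the Gaussian density at distance $O(\sqrt{\log\log n})$ from the mean. The paper makes the exponent explicit via the mean value theorem, obtaining $e^{-b^2/2}\ge(\log n)^{-2p^2\alpha/\mathsf{c}_s^2}$ and then setting $q=2p^2\alpha/\mathsf{c}_s^2+1$, which is exactly your ``$q$ exceeding the exponent $c'$'' prescription.

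One small imprecision: you write that $\mathrm{Var}[W_i]$ is $\Theta(1)$ uniformly in $i$, but for indices with $\nu_i^2=0$ (i.e., $\sigma_i^2\le\theta_n$) the variable $W_i=x_i^2$ is deterministic given $\rvec{x}$, so its variance is zero. What you actually need, and what the paper verifies, is that the \emph{average} variance $s^2=\frac{1}{n}\sum_i\mathrm{Var}[W_i\mid\hat{\rvec{X}}=\rvec{x}]$ is bounded below by a positive constant $\mathsf{c}_s^2$ uniformly over $\rvec{u}\in\TS(n,\alpha,p)$; this holds because for $d<\dmax$ a constant fraction of indices have $\nu_i^2$ bounded away from zero, and on the typical set~\eqref{close:deno_sim} and~\eqref{moments_bd} control the remaining factors. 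This does not affect your argument, but the distinction matters when you invoke Berry--Esseen.
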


\begin{proof}
Appendix~\ref{app:proof_lemma_4}.
\end{proof}

We now present the proof of Lemma~\ref{lemma:lemma_2}.
\begin{proof}[Proof of Lemma~\ref{lemma:lemma_2}]
Let $\rvec{X}$ be the decorrelation of $\rvec{U}$ in~\eqref{def:decorrelation}. Replace $\rvec{Y}$ by $\rvec{Y}^\star$ in Lemma~\ref{lemma:lowerbound}. Let $\TS(n,\alpha, p)$ be the set defined in Definition~\ref{def:TS}, and let $p$ be the constant in Theorem~\ref{thm:typicalset}. Let $C, C_1, C_2, q$ be the constants in Lemmas~\ref{lemma:lemma_5} and~\ref{lemma:lemma_4}. Consider any $n$ that is large enough such that Theorem~\ref{thm:typicalset}, Lemma~\ref{lemma:lemma_5} and Lemma~\ref{lemma:lemma_4} hold. Let $\theta > 0$ be the water level matched to $d$ via the limiting reverse waterfilling~\eqref{eqn:para_d_inf}. Denote the event
\begin{align}
\mathcal{E} \triangleq \left\{\log \frac{1}{P_{\rvec{Y}^\star}\LRB{\mathcal{B}(\rvec{X}, d)} } > \jmath_{\rvec{X}}\LRB{\rvec{X},d} + \beta_1\log^q n + \beta_2\right\},
\end{align}
where $\beta_1$ and $\beta_2$ are constants defined by 
\begin{align}
\beta_1 &\triangleq \frac{1}{2\theta} + C_{d} + \frac{1}{2} + C, \label{choice:beta_1}\\
\beta_2 &\triangleq -\log C_1,\label{choice:beta_2}
\end{align}
and $C_d >0$ is a constant such that 
\begin{align}
\lrabs{\hat{\lambda}^\star(\rvec{x}) - \frac{1}{2\theta}}\leq C_{d}
\label{C_d_choice}
\end{align}
for any $\rvec{u}\in \TS(a,\alpha,p)$ and $\rvec{x} = \mathsf{S}^\top\rvec{u}$. The existence of such $C_d$ is guaranteed by~\eqref{eqn:error_lambda} in Theorem~\ref{thm:typicalset} and the fact that $\lim_{n\rightarrow\infty} \theta_n = \theta$.

Using elementary probability rules, we write  
\begin{align}
&\prob{\log \frac{1}{P_{\rvec{Y}^\star}\LRB{\mathcal{B}(\rvec{X}, d)} } > \jmath_{\rvec{X}}\LRB{\rvec{X},d} + \beta_1\log^q n +\beta_2} \notag\\
=& \prob{\mathcal{E},~\rvec{U}\in \TS(n,\alpha, p)} + \prob{\mathcal{E},~\rvec{U}\not\in \TS(n,\alpha, p)} \label{lemma2:step1}\\
\leq & \mathbb{P}\Bigg [ \rvec{U}\in \TS(n,\alpha, p),~\hat{\lambda}^\star(\rvec{X}) n \gamma + \Lambda_{\rvec{Y}^\star}(\rvec{X}, \hat{\lambda}^\star(\rvec{X}), d) - \notag \\
& \quad\quad \log\prob{\hat{\rvec{F}}^\star\in \mathcal{S}(\hat{\rvec{X}}, d, \gamma) ~| \hat{\rvec{X}} = \rvec{X}} > \notag \\
&\quad\quad\quad\quad \jmath_{\rvec{X}}\LRB{\rvec{X},d} + \beta_1\log^q n +\beta_2\Bigg ] \notag\\ 
&\quad\quad + \prob{\rvec{U}\not\in \TS(n,\alpha, p)} \label{lemma2:step2}\\
\leq & \mathbb{P}\Bigg [  \rvec{U}\in \TS(n,\alpha, p),~\Lambda_{\rvec{Y}^\star}(\rvec{X}, \hat{\lambda}^\star(\rvec{X}), d) - \jmath_{\rvec{X}}\LRB{\rvec{X},d} > \notag \\
&\quad\quad  -\hat{\lambda}^\star(\rvec{X}) n\gamma + \log \frac{C_1}{\sqrt{n}}  + \beta_1\log^q n + \beta_2 \Bigg ] \notag \\
& \quad\quad + \prob{\rvec{U}\not\in \TS(n,\alpha,p)} \label{lemma2:step3}\\ 
\leq & \mathbb{P}\Bigg [ \rvec{U}\in \TS(n,\alpha, p),~\Lambda_{\rvec{Y}^\star}(\rvec{X}, \hat{\lambda}^\star(\rvec{X}), d) - \jmath_{\rvec{X}}\LRB{\rvec{X},d} > \notag \\
&  \quad\quad C\log n\Bigg ] + \prob{\rvec{U}\not\in \TS(n,\alpha,p)} \label{lemma2:step34}\\ 
\leq & \frac{K}{(\log n)^{\kappa\alpha}} \label{lemma2:step4},
\end{align}
where~\eqref{lemma2:step2} is by Lemma~\ref{lemma:lowerbound};~\eqref{lemma2:step3} is by~\eqref{shell_lower_bd};~\eqref{lemma2:step34} is by the choice of $\gamma$ in~\eqref{choice:gamma} and $q>1$;~\eqref{lemma2:step4} is by Lemma~\ref{lemma:lemma_5} and~\eqref{TS_large}; and $K>0$ is a constant.
\end{proof}

\subsection{Proof of Theorem~\ref{thm:typicalset}}
\label{app:typical_set}
\begin{proof}
We first prove the property (1). First, Theorem~\ref{thm:error_concentration} states that for all $n$ large enough the condition~\eqref{estimation_error_small} is violated with probability at most $\frac{2}{\left(\log n\right)^{\kappa\alpha}}.$ Second, we bound the probability of violating condition~\eqref{moments_bd}. Note that since $X_i\sim\mathcal{N}(0,\sigma_i^2)$ by~\eqref{eig:sigma_i}, we have $G_i\triangleq \frac{X_i}{\sigma_i}\sim\mathcal{N}(0,1)$ for all $i\in [n]$. For each $k = 1,2,3$, applying the Berry-Esseen theorem\footnote{The Berry-Esseen theorem suffices here, though tighter bounds are possible via other concentration inequalities, say Chernoff's bound.} to the zero-mean random variables $G_i^{2k} - (2k-1)!!$, we obtain
\begin{align}
\mathbb{P}\left[\left| \frac{1}{n}\sum_{i = 1}^n G_i^{2k} - (2k-1)!!\right| > 2 \right] \leq 2Q\left(\frac{2\sqrt{n}}{r_k}\right) + \frac{12 T_k}{r_k^3\sqrt{n}}, 
\end{align}
where $r_k^2$ and $T_k$ are the variance and the third absolute moment of $G_i^{2k}- (2k-1)!!$, respectively; $r_k$ and $T_k$ are both positive constants since $G_i$'s have bounded finite-order moments. Therefore, there exists a constant $A_1'>0$ such that for all $n$ large enough, 
\begin{align}
\mathbb{P}\left[\left| \frac{1}{n}\sum_{i = 1}^n G_i^{2k} - (2k-1)!!\right| > 2 \right] \leq  \frac{A_1'}{\sqrt{n}}.
\label{bound_moments_pf}
\end{align}
The bound~\eqref{bound_moments_pf} implies that the condition~\eqref{moments_bd} is violated with probability at most $\frac{3A_1'}{\sqrt{n}}$ by the union bound.

Verifying that the condition~\eqref{mean_near_d} is satisfied with high probability is more involved. The high-level procedure is the following. The expressions for $m_i(\rvec{x})$'s in~\eqref{def:m_i} can be directly obtained from~\eqref{opt:prod_i} in Lemma~\ref{lemma:GaussianCREM} in Appendix~\ref{subsec:gaussian_crem}. Then we approximate $m_i(\rvec{x})$ using carefully crafted $\bar{m}_i(x_i)$, for which it is easier to obtain a concentration bound of the form~\eqref{mean_near_d}. At the end, the approximation gaps between $m_i(\rvec{x})$ and $\bar{m}_i(x_i)$ are shown to be sufficiently small, and~\eqref{mean_near_d} ensues. 

We now present the details. We start with a closer look at the optimizer $\hat{\rvec{F}}^\star$ in $\mathbb{R}(\hat{\rvec{X}},\rvec{Y}^\star, d)$. Recall the distributions of $\rvec{X}$ and $\rvec{Y}^\star$ in~\eqref{GMCREM:X} and \eqref{GMCREM:Y}, and the distribution of $\hat{\rvec{X}}$ in~\eqref{eqn:Xhat}. An application of Lemma~\ref{lemma:GaussianCREM} in Appendix~\ref{subsec:gaussian_crem} to $\mathbb{R}(\hat{\rvec{X}},\rvec{Y}^\star, d)$ yields that for all $\rvec{x}\in\mathbb{R}^n$,
\begin{align}
P_{\hat{\rvec{F}}^{\star}|\hat{\rvec{X}} = \rvec{x}} &= \prod_{i = 1}^n P_{\hat{F}_i^{\star}|\hat{X}_i = x_i}, \label{optimizer_Xhat_Y_prod}\\ 
\hat{F}^{\star}_i~|~\hat{X}_i = x_i  &~\sim \mathcal{N}\left(\frac{2\hat{\lambda}^\star(\rvec{x})\nu_i^2x_i}{1+2\hat{\lambda}^\star(\rvec{x})\nu_i^2}, \frac{\nu_i^2}{1+2\hat{\lambda}^\star(\rvec{x})\nu_i^2}\right),\label{optimizer_Xhat_Y}
\end{align}
where $\hat{\lambda}^\star(\rvec{x})$ is given by~\eqref{lambda_x_hat_star} and $\nu_i^2$'s are defined in~\eqref{nu_i_app}. Then, from~\eqref{optimizer_Xhat_Y} and the definition of $m_i(\rvec{x})$ in~\eqref{def:m_i}, it is straightforward to obtain the expression 
\begin{align}
m_i(\rvec{x}) = \frac{\nu_i^2}{1+2\hat{\lambda}^\star(\rvec{x})\nu_i^2} +  \frac{x_i^2}{(1+2\hat{\lambda}^\star(\rvec{x})\nu_i^2)^2}. \label{m_i_expression}
\end{align}
The quantity $m_i(\rvec{x})$ in the form of~\eqref{m_i_expression} is hard to analyze since there is no simple formula for $\hat{\lambda}^\star(\rvec{x})$. We instead consider $\bar{m}_i(x_i)$'s, defined as 
\begin{align}
\bar{m}_i(x_i) \triangleq \frac{\nu_i^2}{1+2\lambda^{\star}\nu_i^2} +  \frac{x_i^2}{(1+2\lambda^{\star}\nu_i^2)^2}, \label{bar_m_i_expression}
\end{align}
which is obtained from~\eqref{m_i_expression} by replacing $\hat{\lambda}^\star(\rvec{x})$ with $\lambda^{\star}$. The random variable $\bar{m}_i(X_i)$ is much easier to analyze, since $\lambda^{\star} = \frac{1}{2\theta_n}$ by Lemma~\ref{lemma:slope_properties} in Appendix~\ref{app:GM_CREM}, with which~\eqref{bar_m_i_expression} is simplified as 
\begin{equation}
\bar{m}_i(x_i) = \frac{\LRB{\min\LRB{\sigma_i^2,\theta_n}}^2}{\sigma_i^2}\LRB{\frac{x_i^2}{\sigma_i^2} - 1} + \min\LRB{\sigma_i^2,\theta_n}.
\label{bar_m_i_simplified}
\end{equation}
We will control the difference between $m_i(\rvec{x})$ and $\bar{m}_i(x_i)$ by bounding $|\lambda^{\star} - \hat{\lambda}^\star(\rvec{x}) |$. Indeed, a lengthy but elementary calculation, deferred to the end of the proof, shows that there exists a constant $A_1'' > 0$ (depending only on $d$) such that for all $n$ large enough, $\forall \rvec{x}\in\mathbb{R}^n$ satisfying~\eqref{estimation_error_small} and~\eqref{moments_bd}, we have
\begin{align}
\lrabs{\frac{1}{n}\sum_{i = 1}^n \bar{m}_i(x_i)  - \frac{1}{n}\sum_{i = 1}^n m_i(\rvec{x}) } \leq A_1''\eta_n.
\label{close:m_mbar}
\end{align}
With~\eqref{close:m_mbar}, we proceed to explain how to apply the Berry-Esseen theorem to obtain the following bound: there exists a constant $A_1''' > 0$ such that for all $n$ large enough and $\forall\omega > 0$, 
\begin{align}
\prob{\lrabs{\frac{1}{n}\sum_{i = 1}^n \bar{m}_i(X_i) - d}\geq \omega\sqrt{\frac{\log\log n}{n}} }\leq \frac{A_1'''}{\LRB{\log n}^{\frac{\omega^2}{2\beta^2}}}, 
\label{BE:m_bar}
\end{align}
where 
\begin{align}
\beta^2 \triangleq \frac{1}{n}\sum_{i=1}^n\var{\bar{m}_i(X_i)}.
\end{align}
To that end, first note from~\eqref{bar_m_i_simplified} and~\eqref{eqn:para_d} that
\begin{align} 
\frac{1}{n}\sum_{i = 1}^n \mathbb{E}[\bar{m}_i(X_i)] = d,
\end{align}
then an application of the Berry-Esseen theorem to $\bar{m}_i(X_i) - \min(\sigma_i^2, \theta_n)$ yields 
\begin{align}
& \prob{\lrabs{\frac{1}{n}\sum_{i = 1}^n \bar{m}_i(X_i) - d}\geq \omega\sqrt{\frac{\log\log n}{n}} } \notag \\
\leq & 2Q\left(\frac{\omega\sqrt{\log\log n}}{\beta}\right) + \frac{12T}{\beta^3\sqrt{n}} \\
\leq & \frac{2}{\left(\log n\right)^{\frac{\omega^2}{2\beta^2}}} + \frac{12T}{\beta^3\sqrt{n}}, \label{293}
\end{align}
where $T \triangleq \frac{1}{n}\sum_{i = 1}^n \mathbb{E}[ |\bar{m}_i(X_i) - \min(\sigma_i^2, \theta_n) |^3]$ is bounded. Using~\eqref{bar_m_i_simplified}, it is easy to check that there exists a constant $\beta_d > 0$ (depending only on $d$) such that $0 < \beta_d< \beta \leq \frac{\sqrt{2}\sigma^2}{(1-a)^2}$. Therefore,~\eqref{BE:m_bar} follows from~\eqref{293}. Now, we combine~\eqref{close:m_mbar} and~\eqref{BE:m_bar} to conclude that the condition~\eqref{mean_near_d} is satisfied with high probability. Define the set $\mathcal{L}\subset\mathbb{R}^n$ as
\begin{align}
\mathcal{L} \triangleq \left\{\rvec{u}\in\mathbb{R}^n : \rvec{u}~\mathrm{satisfies }~\eqref{estimation_error_small}~\mathrm{and}~\eqref{moments_bd}\right\}.
\end{align} 
Then, by Theorem~\ref{thm:error_concentration},~\eqref{bound_moments_pf} and the union bound, we have
\begin{align}
\prob{\rvec{U}\in \mathcal{L}^c} \leq \frac{2}{\left(\log n\right)^{\kappa\alpha}} + \frac{3A_1'}{\sqrt{n}}.
\end{align}
Hence, we have 
\begin{align}
&\prob{\lrabs{\frac{1}{n}\sum_{i = 1}^n m_i(\rvec{X})  - d}\geq p\eta_n} \notag \\ 
\leq & \mathbb{P}\Bigg [ \lrabs{\frac{1}{n}\sum_{i = 1}^n \bar{m}_i(X_i)  - \frac{1}{n} \sum_{i  = 1}^n m_i(\rvec{X})} + \notag \\
&\quad\quad\quad\quad \lrabs{\frac{1}{n}\sum_{i = 1}^n \bar{m}_i(X_i)  - d}\geq  p\eta_n \Bigg ] \label{step_near_d_1}\\
= & \prob{\cdot, \rvec{U}\in\mathcal{L}} + \prob{\cdot, \rvec{U}\in\mathcal{L}^c} \label{step_near_d_2}\\
\leq & \prob{\lrabs{\frac{1}{n}\sum_{i = 1}^n \bar{m}_i(X_i)  - d}\geq (p - A_1'') \eta_n } + \prob{\rvec{U}\in\mathcal{L}^c} \label{step_near_d_3}\\
\leq &\frac{A_1'''}{\LRB{\log n}^{\frac{(p-A_1'')^2\alpha}{2\beta^2}}} + \frac{2}{\left(\log n\right)^{\kappa\alpha}} + \frac{3A_1'}{\sqrt{n}},\label{step_near_d_4}
\end{align}
where~\eqref{step_near_d_1} is due to the triangle inequality;~\eqref{step_near_d_3} holds by~\eqref{close:m_mbar};~\eqref{step_near_d_4} follows from~\eqref{BE:m_bar} for $p>A_1''$. Hence, for any $p$ such that
\begin{align}
p \geq A_1'' + \frac{2\sqrt{\kappa}\sigma^2}{(1-a)^2}, 
\label{value_p}
\end{align}
we conclude from~\eqref{step_near_d_4} that there exists a constant $\tilde{A_1} > 0$ such that for all $n$ large enough, 
\begin{align}
\prob{\lrabs{\frac{1}{n}\sum_{i = 1}^n m_i(\rvec{X})  - d}\geq p\eta_n } \leq \frac{\tilde{A_1}}{\left(\log n\right)^{\kappa\alpha}}.\label{near}
\end{align}
Therefore,~Theorem~\ref{thm:error_concentration},~\eqref{bound_moments_pf} and~\eqref{near}  altogether imply the property (1) in Theorem~\ref{thm:typicalset}.

Next, we show property (2) in Theorem~\ref{thm:typicalset}. By the triangle inequality, we have $\forall\rvec{u}\in\mathbb{R}^n$ and $\forall i\in [n]$,
\begin{align}
\lrabs{\hat{\sigma}_i^2 - \sigma_i^2} \leq \lrabs{\hat{\sigma}_i^2 - \frac{\sigma^2}{\xi_i}} + \lrabs{\frac{\sigma^2}{\xi_i} - \sigma_i^2},
\label{step_sigma_i_hat_1}
\end{align}
where $\xi_i$ is given in~\eqref{eig:W}. We bound the two terms in~\eqref{step_sigma_i_hat_1} separately. From~\eqref{eig:sigma_i},~\eqref{bound:ximu} and~\eqref{constant_eig_bound}, we have
\begin{align}
\lrabs{\frac{\sigma^2}{\xi_i} - \sigma_i^2} \leq \frac{2a\pi\sigma^2}{(1-a)^4 n}.
\label{step_sigma_i_hat_2}
\end{align}
To simplify notations, let $\phi_i\triangleq \frac{i\pi}{n+1}$ and denote by $\varphi(t)$ the function
\begin{align}
\varphi(t) \triangleq \frac{\sigma^2}{1 + t^2 -2t\cos\phi_i}.
\end{align}
It is easy to see that the derivatives $\phi ' (a)$ and $\phi''(a)$ are bounded for any fixed $a\in [0,1)$. By the Taylor expansion and the triangle inequality, we have
\begin{align}
\lrabs{\hat{\sigma}_i^2 - \frac{\sigma^2}{\xi_i}} &= \lrabs{\varphi(\hat{a}(\rvec{u})) - \varphi(a)} \\
& \leq \lrabs{\varphi'(a)}\lrabs{\hat{a}(\rvec{u}) - a} + o\left(\lrabs{\hat{a}(\rvec{u}) - a}\right).
\label{step_sigma_i_hat_3}
\end{align}
Hence, combining~\eqref{step_sigma_i_hat_1},~\eqref{step_sigma_i_hat_2} and~\eqref{step_sigma_i_hat_3}, we conclude that there exists a constant $A_2>0$ such that for all $n$ large enough~\eqref{eqn:error_sigma} holds for any $\rvec{u}\in\TS(n,\alpha, p)$. 

Finally, the bound~\eqref{eqn:error_lambda} follows immediately from a direct application of Lemma~\ref{lemma:bound_slope} to~\eqref{eqn:error_sigma}.

\emph{Calculations to show~\eqref{close:m_mbar}:} 
From~\eqref{m_i_expression} and~\eqref{bar_m_i_expression}, we have
\begin{align}
&\frac{1}{n}\sum_{i = 1}^n \bar{m}_i(x_i)  - \frac{1}{n}\sum_{i = 1}^n m_i(\rvec{x}) \\
= & \frac{1}{n}\sum_{i = 1}^n \frac{2\nu_i^4\LRB{\hat{\lambda}^\star(\rvec{x}) - \lambda^{\star}}}{\LRB{1 + 2\hat{\lambda}^\star(\rvec{x})\nu_i^2}\LRB{1 + 2\lambda^{\star}\nu_i^2}} + \notag \\
&\quad \frac{1}{n}\sum_{i = 1}^n \frac{2x_i^2\nu_i^2\LRB{2 + 2\hat{\lambda}^\star(\rvec{x})\nu_i^2 +2\lambda^{\star}\nu_i^2}\LRB{\hat{\lambda}^\star(\rvec{x}) - \lambda^{\star}}}{\LRB{1 + 2\hat{\lambda}^\star(\rvec{x})\nu_i^2}^2\LRB{1 + 2\lambda^{\star}\nu_i^2}^2}.\label{close_m_derivation}
\end{align}
By~\eqref{eqn:error_lambda}, for all $n$ large enough, $\forall \rvec{u}\in\TS(n,\alpha,p)$ and $\rvec{x} = \mathsf{S}^\top\rvec{u}$, we have
\begin{align}
\lrabs{\LRB{1 + 2\hat{\lambda}^\star(\rvec{x})\nu_i^2} - \LRB{1 + 2\lambda^{\star}\nu_i^2}} \leq \frac{1 + 2\lambda^{\star}\nu_i^2}{2}. 
\label{close:deno}
\end{align}
Using~\eqref{nu_i_app} and~\eqref{para:slope}, we deduce that $1 \leq 1 + 2\lambda^{\star}\nu_i^2 \leq  \frac{\sigma_i^2}{\theta_n}$. Therefore,~\eqref{close:deno} implies that 
\begin{align}
\frac{1}{2} \leq 1 + 2\hat{\lambda}^\star(\rvec{x})\nu_i^2 \leq \frac{3\sigma_i^2}{2\theta_n}. \label{close:deno_sim}
\end{align}
We continue to bound~\eqref{close_m_derivation} as
\begin{align}
&\lrabs{\frac{1}{n}\sum_{i = 1}^n \bar{m}_i(x_i)  - \frac{1}{n}\sum_{i = 1}^n m_i(\rvec{x}) } \label{close_m_step0}\\
\leq  & \frac{1}{n}\sum_{i = 1}^n 4\nu_i^4 \lrabs{\hat{\lambda}^\star(\rvec{x}) - \lambda^{\star}} + \frac{1}{n}\sum_{i = 1}^n \frac{20 x_i^2\nu_i^2\sigma_i^2}{\theta_n} \lrabs{\hat{\lambda}^\star(\rvec{x}) - \lambda^{\star}} \label{close_m_step1}\\
\leq & \left[ \frac{1}{n}\sum_{i = 1}^n 4\nu_i^4+ \frac{1}{n}\cdot \frac{20 \sigma^6 }{\theta_n (1-a)^6}\sum_{i  =1}^n\frac{x_i^2}{\sigma_i^2}\right]  \lrabs{\hat{\lambda}^\star(\rvec{x}) - \lambda^{\star}} \label{close_m_step2}\\
\leq & A_1''\eta_n, \label{close_m_step3}
\end{align}
where~\eqref{close_m_step1} is by plugging~\eqref{close:deno_sim} into~\eqref{close_m_derivation};~\eqref{close_m_step2} is by $\nu_i^2 \leq \sigma_i^2$ and  
\begin{align}
\sigma_i^6 = \frac{\sigma^6}{\mu_i^3} \leq \frac{\sigma^6}{(1-a)^6},\quad \forall i\in [n], 
\end{align}
which is due to~\eqref{constant_eig_bound}; and~\eqref{close_m_step3} holds for some constant $A_1'' > 0$ (depending on $d$ only) by~\eqref{moments_bd},~\eqref{eqn:error_lambda} and~\eqref{bound:theta_n_13}. 
\end{proof}

\subsection{Proof of Lemma~\ref{lemma:lemma_5}}
\label{app:proof_lemma_5}
\begin{proof}
We sketch the proof, which is similar to~\cite[Lem. 5]{kostina2012fixed} except for some slight changes. Since $Y_1^\star,\ldots, Y_n^\star$ are independent and the distortion measure $\dis{\cdot}{\cdot}$ is separable, we use the definition~\eqref{def:generalizeddtilted}, the distribution formula~\eqref{GMCREM:Y} for $\rvec{Y}^\star$, and the formula for the moment generating function of a noncentral $\chi_1^2$-distributed random variable to obtain for $\delta > 0$ and $\rvec{x}\in\mathbb{R}^n$,
\begin{align}
\Lambda_{\rvec{Y}^\star }\LRB{\rvec{x}, \delta, d} = -n\delta d + \sum_{i = 1}^n \frac{\delta x_i^2}{1 + 2\delta \nu_i^2} + \sum_{i = 1}^n \frac{1}{2} \log\LRB{1 + 2\delta \nu_i^2}, \label{eqn:app_generalized_dtilted}
\end{align}
where $\nu_i^2$'s are in~\eqref{nu_i_app}. Let $\hat{\lambda}^\star(\rvec{x})$ be defined in~\eqref{lambda_x_hat_star}. Similar to~\cite[Eq. (315)-(320)]{kostina2012fixed}, by the Taylor expansion of $\Lambda_{\rvec{Y}^\star }\LRB{\rvec{x}, \delta, d}$ in $\delta$ at the point $\delta = \lambda^\star$, we have for any $\rvec{x}\in\mathbb{R}^n$, 
\begin{align}
\Lambda_{\rvec{Y}^\star}\LRB{\rvec{x}, \hat{\lambda}^\star(\rvec{x}),d} - \Lambda_{\rvec{Y}^\star}\LRB{\rvec{x}, \lambda^{\star},d}  \leq \frac{(S'(\rvec{x}))^2}{2S''(\rvec{x})},  \label{bd:quadratic}
\end{align}
where we denoted 
\begin{align}
S'(\rvec{x}) \triangleq \Lambda'_{\rvec{Y}^\star }\LRB{\rvec{x}, \lambda^\star, d} = \sum_{i = 1}^n \frac{\left[\min(\theta_n, \sigma_i^2)\right]^2}{\sigma_i^2} \left(\frac{x_i^2}{\sigma_i^2} - 1\right),\label{first_derivative}
\end{align}
and 
\begin{align}
S''(\rvec{x}) \triangleq -\Lambda''_{\rvec{Y}^\star }\LRB{\rvec{x}, \lambda^\star + \xi(\rvec{x}), d}, \label{second_derivative}
\end{align}
where~\eqref{first_derivative} is by first taking derivatives of~\eqref{eqn:app_generalized_dtilted} with respect to $\delta$ and then plugging $\lambda^\star = \frac{1}{2\theta_n}$; $\theta_n > 0$ is the water level matched to $d$ via the $n$-th order reverse waterfilling~\eqref{eqn:para_d} over $\sigma_i^2$'s; and 
\begin{align}
\xi(\rvec{x}) \triangleq \rho (\hat{\lambda}^\star(\rvec{x}) - \lambda^\star),
\end{align}
for some $\rho \in [0,1]$. Note from the definition~\eqref{crem:GMdef} that for any $\rvec{x}$, $\rdf(\hat{\rvec{X}}(\rvec{x}), \rvec{Y}^\star, d)$ is a nonincreasing function in $d$, hence $\hat{\lambda}^\star(\rvec{x}) \geq 0$, which, combined with direct computations using~\eqref{eqn:app_generalized_dtilted}, implies $S''(\rvec{x}) \geq 0$, see~\eqref{SSS} below. This fact was used in deriving the inequality in~\eqref{bd:quadratic}.

Next we show concentration bounds for the two random variables $S'(\rvec{X})$ and $S''(\rvec{X})$. From~\eqref{first_derivative}, we see that $S'(\rvec{X})$ is a sum of $n$ zero-mean and bounded-variance random variables. Then, by the Berry-Esseen theorem, we have 
\begin{align}
\prob{|S'(\rvec{X})| \geq \sqrt{V'n\log n} } \leq \frac{K_1'}{\sqrt{n}}, \label{bound:SS}
\end{align}
where $K_1' > 0$ is a constant and $V'$ is a constant such that 
\begin{align}
V' \geq \frac{1}{n}\sum_{i = 1}^n \frac{2\left[\min(\theta_n, \sigma_i^2)\right]^4}{\sigma_i^4}.
\end{align}
To treat $S''(\rvec{X})$, consider the following event $\mathcal{E}$: 
\begin{align}
\mathcal{E} \triangleq \{\rvec{x}\in\mathbb{R}\colon | \hat{\lambda}^\star(\rvec{x}) - \lambda^\star | \leq n^{-1/4}\}.
\end{align}
Using~Lemma~\ref{lemma:bound_slope} in Appendix~\ref{subsec:sen_crem},~\eqref{step_sigma_i_hat_1}-\eqref{step_sigma_i_hat_3} and Theorem~\ref{thm:ECG} in Section~\ref{subsubsec:Teit}, one can show (similar to the proof of Theorem~\ref{thm:error_concentration}) that there exists a constant $c' > 0$ such that for all $n$ large enough, 
\begin{align}
\prob{\mathcal{E}^c} \leq \exp\{ - c'\sqrt{n}\}.
\end{align}
Computing the derivatives using~\eqref{eqn:app_generalized_dtilted} yields
\begin{align}
S''(\rvec{x}) & = \sum_{i = 1}^n 4x_i^2\nu_i^2\left[1 + 2\nu_i^2(\lambda^\star + \xi(\rvec{x}))\right]^{-3} + \notag \\
&\quad 2\nu_i^4\left[1 + 2\nu_i^2(\lambda^\star + \xi(\rvec{x}))\right]^{-2}. \label{SSS}
\end{align}
By conditioning on $\mathcal{E}$ and $\mathcal{E}^c$, we see that for any $t>0$, 
\begin{align}
&\prob{S''(\rvec{X}) \leq nt} \notag \\
= & \prob{S''(\rvec{X}) \leq nt,~\mathcal{E}} + \prob{S''(\rvec{X}) \leq nt,~\mathcal{E}^c} \\
\leq &\prob{S''(\rvec{X}) \leq nt,~\mathcal{E}} + \prob{\mathcal{E}^c}. \label{bound:SSS1}
\end{align}
Using~\eqref{SSS} and the simple bound $\frac{1}{x + y} \geq \frac{\exp(-y/x)}{x}$ for any $x, y>0$, we have for any $\rvec{x} \in \mathcal{E}$,
\begin{align}
& S''(\rvec{x}) \notag \\
\geq &  \exp(-3n^{-1/4}/\lambda^\star) \times \notag \\ 
&  \sum_{i = 1}^n \left[4x_i^2\nu_i^2\left(1 + 2\nu_i^2\lambda^\star \right)^{-3} + 2\nu_i^4\left(1 + 2\nu_i^2\lambda^\star \right)^{-2}\right]  \\
= & \sum_{i = 1}^n A_i x_i^2 + B_i,
\end{align}
where $A_i, B_i \geq 0$ are defined as
\begin{align}
A_i &\triangleq \exp(-3n^{-1/4}/\lambda^\star)4 \nu_i^2\left(1 + 2\nu_i^2\lambda^\star\right)^{-3}, \label{def:Ai}\\
B_i &\triangleq \exp(-3n^{-1/4}/\lambda^\star)2\nu_i^4\left(1 + 2\nu_i^2\lambda^\star \right)^{-2}.\label{def:Bi}
\end{align}
Let $V_n$ be defined as 
\begin{align}
V_n \triangleq \frac{1}{n}\sum_{i = 1}^n 2A_i\sigma_i^4,  
\end{align}
and choose the constant $t$ in~\eqref{bound:SSS1} such that for all $n$ large enough,
\begin{align}
0 < t < \frac{1}{n}\sum_{i = 1}^n ( A_i\sigma_i^2 + B_i ) - \sqrt{\frac{V_n\log n}{n}}. \label{choice:t}
\end{align}
By the Berry-Esseen theorem, there exists a constant $K_1'' >0$ such that
\begin{align}
\prob{S''(\rvec{X}) \leq nt,~\mathcal{E}} \leq \prob{ \sum_{i = 1}^n A_i X_i^2 + B_i \leq nt }  \leq \frac{K_1''}{\sqrt{n}}.\label{bound:SSS2}
\end{align}
The existence of such a constant $t$ satisfying~\eqref{choice:t} is guaranteed since for any $d<\dmax$ and for all $n$ sufficiently large, there is a constant fraction of strictly positive $A_i$'s and $B_i$'s, which can be verified by plugging~\eqref{nu_i_app} and~\eqref{para:slope} into~\eqref{def:Ai} and~\eqref{def:Bi}. Combining~\eqref{bound:SSS1} and~\eqref{bound:SSS2} implies that there exists a constant $K_1''' >0$ such that for all $n$ large enough
\begin{align}
\prob{S''(\rvec{X}) \leq nt} \leq \frac{K_1'''}{\sqrt{n}},\label{bound:SSS3}
\end{align}
where the constant $t$ satisfies~\eqref{choice:t}. Finally, similar to~\cite[Eq. (339)]{kostina2012fixed}, combining~\eqref{bd:quadratic},~\eqref{bound:SS} and~\eqref{bound:SSS3} yields~\eqref{con_gen_inf}.
\end{proof}

\subsection{Proof of Lemma~\ref{lemma:lemma_4}}
\label{app:proof_lemma_4}
\begin{proof} 
Due to Lemma~\ref{lemma:GaussianCREM} in Appendix~\ref{subsec:gaussian_crem}, for any $\rvec{x}\in\mathbb{R}^n$, we can write the random variable involved in~\eqref{shell_lower_bd} as a sum of independent random variables, to which the Berry-Esseen theorem is applied. The details follow. From~\eqref{optimizer_Xhat_Y_prod} and~\eqref{optimizer_Xhat_Y}, we have 
\begin{align}
\dis{\rvec{x}}{\hat{\rvec{F}^\star}}  = \frac{1}{n}\sum_{i = 1}^n M^2_i(\rvec{x}),
\label{shell_prob_sum}
\end{align}
where the random variables $M_i(\rvec{x})$'s are defined as
\begin{align}
M_i(\rvec{x})\triangleq (\hat{F}^\star_i - x_i). \label{M_i}
\end{align}
From~\eqref{optimizer_Xhat_Y}, we know that the conditional distribution satisfies 
\begin{align}
M_i(\rvec{x})~|~\hat{\rvec{X}} = \rvec{x} \sim \mathcal{N}\left( \frac{-x_i}{1+2\hat{\lambda}^\star(\rvec{x})\nu_i^2}, \frac{\nu_i^2}{1+2\hat{\lambda}^\star(\rvec{x})\nu_i^2}\right), 
\label{distribution:M_i}
\end{align}
where $\hat{\lambda}^\star(\rvec{x})$ is in~\eqref{lambda_x_hat_star}. Hence, conditioned on $\hat{\rvec{X}} = \rvec{x}$,  the random variable $\dis{\rvec{x}}{\hat{\rvec{F}^\star}}$ follows the noncentral $\chi^2$-distribution with (at most) $n$ degrees of freedom. Applying the Berry-Esseen theorem to~\eqref{shell_prob_sum} yields that $\forall \gamma >0$,
\begin{align}
&\prob{d - \gamma < \dis{\rvec{x}}{\hat{\rvec{F}^\star}} \leq d ~ | \hat{\rvec{X}} = \rvec{x} } \notag \\
= & \mathbb{P} \Bigg [ \frac{nd - n\gamma -\sum_{i = 1}^n m_i(\rvec{x})}{s\sqrt{n}} < \frac{1}{s\sqrt{n}}\sum_{i = 1}^n \LRB{M_i^2(\rvec{x}) - m_i(\rvec{x})} \notag \\
& \quad\quad \leq \frac{nd  -\sum_{i = 1}^n m_i(\rvec{x})}{s\sqrt{n}} ~ | \hat{\rvec{X}} = \rvec{x} \Bigg ]  \label{lemma_pf_step1}\\
\geq & \Phi\left(\frac{nd  -\sum_{i = 1}^n m_i(\rvec{x})}{s\sqrt{n}}\right) - \Phi\left(\frac{nd - n\gamma -\sum_{i = 1}^n m_i(\rvec{x})}{s\sqrt{n}}\right) \notag \\
&\quad\quad - \frac{12t}{s^3\sqrt{n}}, \label{lemma_pf_step2}
\end{align}
where $m_i(\mathbf{x})$, defined in~\eqref{def:m_i}, is the expectation of $M_i^2(\rvec{x})$ conditioned on $\hat{\rvec{X}} = \rvec{x}$; and 
\begin{align}
s^2 &\triangleq \frac{1}{n}\sum_{i = 1}^n \var{M_i^2(\rvec{x})~|~\hat{\rvec{X}} = \rvec{x}},\label{eqn:s}\\
t     &\triangleq \frac{1}{n}\sum_{i = 1}^n \mathbb{E}[\left|M_i^2(\rvec{x}) - m_i(\rvec{x})\right|^3~|~\hat{\rvec{X}} = \rvec{x}].\label{eqn:t}
\end{align}
By the mean value theorem,~\eqref{lemma_pf_step2} equals
\begin{align}
\frac{ n\gamma}{s\sqrt{n}\sqrt{2\pi}} e^{-\frac{b^2}{2}}  - \frac{12t}{s^{3}\sqrt{n}} \label{lemma_pf_step3}
\end{align}
for some $b$ satisfying 
\begin{align}
\frac{nd  - \sum_{i = 1}^n m_i(\rvec{x}) - n\gamma}{s\sqrt{n}}\leq b \leq \frac{nd  -\sum_{i = 1}^n m_i(\rvec{x})}{s\sqrt{n}}.\label{condition:b}
\end{align}
To further lower-bound \eqref{lemma_pf_step3}, we bound $b^2$ as follows.
\begin{align}
b^2 &\leq 2\LRB{\frac{nd  -\sum_{i = 1}^n m_i(\rvec{x})}{s\sqrt{n}}}^2 + 2\LRB{\frac{n\gamma}{s\sqrt{n}}}^2 \label{small_step0}\\
&\leq \frac{2p^2\alpha\log\log n}{s^2} + \frac{2\log^{2q}n}{s^2n} \label{small_step1}\\
&\leq \frac{4p^2\alpha\log\log n}{\mathsf{c}_s^2}, \label{small_step2}
\end{align}
where~\eqref{small_step0} is by~\eqref{condition:b} and the elementary inequality $(x+y)^2 \leq 2(x^2 + y^2)$; ~\eqref{small_step1} is by~\eqref{mean_near_d} and the choice of $\gamma$ in~\eqref{choice:gamma}. The constant $q$ in~\eqref{choice:gamma} is chosen to be 
\begin{align}
q \triangleq \frac{2p^2\alpha}{\mathsf{c}_s^2} + 1. \label{defq}
\end{align}
The constant $\mathsf{c}_s > 0$ is a lower bound of $s$, whose existence is justified below at the end of the proof. Finally,~\eqref{small_step2} holds for all sufficiently large $n$. Using~\eqref{small_step2}, we can further lower-bound~\eqref{lemma_pf_step3} as
\begin{align}
& \LRB{ \frac{\log^{q} n}{s\sqrt{2\pi}} \left(\log n\right)^{-\frac{2p^2\alpha}{\mathsf{c}_s^2}} - \frac{12t}{s^3} }\frac{1}{\sqrt{n}} \label{lemma_pf_step4}\\
 = & \LRB{ \frac{\log n}{s\sqrt{2\pi}} - \frac{12t}{s^3} }\frac{1}{\sqrt{n}} \label{lemma_pf_step5}\\ 
\geq &\frac{C_1}{\sqrt{n}} \label{lemma_pf_step6}, 
\end{align}
where~\eqref{lemma_pf_step4} is by plugging~\eqref{small_step2} into~\eqref{lemma_pf_step3};~\eqref{lemma_pf_step5} is by using~\eqref{defq}; and~\eqref{lemma_pf_step6} holds for all sufficiently large $n$ and some constant $C_1>0$. Therefore,~\eqref{shell_lower_bd} follows.

Finally, to justify that $s$ and $t$, defined in~\eqref{eqn:s} and~\eqref{eqn:t}, are bounded as we assumed in obtaining~\eqref{small_step2} and~\eqref{lemma_pf_step6}, we compute using  \eqref{distribution:M_i}
\begin{align}
& \var{M_i^2(\rvec{x})~|~\hat{\rvec{X}} = \rvec{x}} \notag \\
= & \frac{4x_i^2\nu_i^2}{(1+2\hat{\lambda}^\star(\rvec{x})\nu_i^2)^3} + \frac{2\nu_i^4}{(1+2\hat{\lambda}^\star(\rvec{x})\nu_i^2)^2}.
\end{align}
Then, using~\eqref{close:deno_sim} to bound $1+2\hat{\lambda}^\star(\rvec{x})\nu_i^2$ and~\eqref{moments_bd} to bound $x_i^2$, we can lower- and upper-bound $s$ by positive constants; $t$ can be bounded similarly.
\end{proof}

\section{}
\label{app:detailed_proofs}

\subsection{Derivation of the maximum likelihood estimator}
\label{app:MLEderivation}
This section presents the details in obtaining~\eqref{MLEformula}. The random vector $(U_1, U_2-aU_1, ..., U_n - aU_{n-1})^\top$ is distributed according to $\mathcal{N}(0,\sigma^2\mathsf{I})$. Let $p_a(\cdot)$ be the probability density function of $\rvec{U}$ with parameter $a$, then 
\begin{align}
& \hat{a}(\rvec{u}) \notag \\
\triangleq & \arg\max_{a} p_a(\rvec{u}) \label{MLE:step1}\\
 = &  \arg\max_{a} \prod_{i=1}^n \frac{1}{\sqrt{2\pi\sigma^2}}e^{-\frac{1}{2\sigma^2}(u_i-au_{i-1})^2} \label{MLE:step2}\\
= & \arg\min_{a} \sum_{i = 2}^n(u_i-au_{i-1})^2 + u_1^2 \label{MLE:step3}\\
= & \arg\min_{a} \sum_{i = 2}^n u_{i-1}^2 a^2 - 2\sum_{i = 2}^n u_{i-1}u_i a + \sum_{i = 1}^n u_i^2  \label{MLE:step4} \\
= & \frac{\sum_{i = 1}^{n-1} u_{i}u_{i+1} }{\sum_{i = 1}^{n-1} u_i^2},\label{MLE:step5}
\end{align}  
where~(\ref{MLE:step3}) is by collecting the terms in the exponent, and~(\ref{MLE:step5}) is by minimizing the quadratic function of $a$.

\subsection{Proof of Theorem~\ref{thm:ECG}}
\label{proof:thm_ECG}
\begin{proof}
We first bound $\mathbb{P}\left[\hat{a}(\rvec{U}) - a > \eta\right]$. Define the random variable $W(n, \eta)$ as 
\begin{align}
W(n, \eta) \triangleq \sum_{i=1}^{n-1} \left( U_i Z_{i+1} - \eta U_i^2 \right). \label{def:W}
\end{align}
Then, from~\eqref{MLEformula} in Section~\ref{subsubsec:Teit} above and~\eqref{def:GaussMarkov}, it is easy to see that 
\begin{align}
\mathbb{P}\left[\hat{a}(\rvec{U}) - a > \eta\right] = \mathbb{P}\left[W(n, \eta) > 0\right]. \label{eqn:errW}
\end{align}
From the definition of the Gauss-Markov source in~\eqref{def:GaussMarkov}, we have 
\begin{align}
U_i = \sum_{j = 1}^i a^{i-j} Z_j.  \label{eqn:uz}
\end{align}
Plugging~\eqref{eqn:uz} into~\eqref{def:W}, we obtain 
\begin{align}
W(n, \eta) & = \sum_{i = 2}^{n} \sum_{j = 1}^{i-1} a^{i-j -1} Z_{i} Z_{j} - \notag \\
& \quad \frac{\eta}{1 - a^2} \sum_{i = 1}^{n-1}\sum_{j = 1}^{n-1}\left(a^{|i - j|} - a^{2n - i-j}\right) Z_i Z_{j}. \label{eqn:Wexpan}
\end{align}
Notice that~\eqref{eqn:Wexpan} can be further rewritten as the following quadratic form in the $n$ i.i.d. random variables $Z_1, \ldots, Z_n$: 
\begin{align}
W(n, \eta) = \rvec{Z}^\top \mathsf{Q}(n, \eta) \rvec{Z}, \label{Quadratic}
\end{align}
where $\mathsf{Q}(n, \eta)$ is an $n\times n$ symmetric matrix defined as 
\begin{align}
\mathsf{Q}_{i,j}(n, \eta) = \begin{cases}
 -\eta\frac{1 - a^{2(n-i)}}{1 - a^2}, & i = j < n \\
0, & i = j = n \\
\frac{1}{2}a^{|i-j|-1} - \eta \frac{a^{|i-j|} - a^{2n - i - j}}{1 - a^2}, &\text{otherwise}.
\end{cases} \label{Q}
\end{align}
For simplicity, in the rest of the proof, we write $W$ and $\mathsf{Q}$ for $W(n, \eta)$ and $\mathsf{Q}(n, \eta)$, respectively. By the Hanson-Wright inequality~\cite[Th. 1.1]{rudelson2013hanson}, there exists a universal constant $c>0$ such that for any $t> 0$~\footnote{The sub-gaussian norm $\|Z_i\|_{\psi_2}$~\cite[Def. 5.7]{vershynin2010introduction} of $Z_i\sim\mathcal{N}(0,\sigma^2)$ satisfies $\|Z_i\|_{\psi_2} \leq C_1 \sigma$ for a universal constant $C_1 > 0$, see~\cite[Example 5.8]{vershynin2010introduction}. By~\cite[Th. 1.1]{rudelson2013hanson}, there exists a universal constant $C_2 > 0$ such that for any $t>0$, 
\begin{align*}
\prob{W - \mathbb{E} \left[ W\right] > t } \leq \exp\left[ - C_2 \min\left( \frac{t^2}{C_1^4 \sigma^4\| \mathsf{Q} \|_{F}^2},~ \frac{t}{C_1^2 \sigma^2\|\mathsf{Q}\|}\right)\right].
\end{align*} 
In~\eqref{HW}, we absorb $C_1$ and $C_2$ into a single universal constant $c$.
}
\begin{align}
\prob{ W - \mathbb{E} \left[ W\right] > t } \leq \exp\left[ - c \min\left( \frac{t^2}{\sigma^4\| \mathsf{Q} \|_{F}^2},~ \frac{t}{\sigma^2\|\mathsf{Q}\|}\right)\right],\label{HW}
\end{align} 
where $\| \mathsf{Q} \|_{F}$ and $\| \mathsf{Q} \|$ are the Frobenius and operator norms of $\mathsf{Q}$, respectively. Taking $t = - \mathbb{E} \left[ W \right]$ (which is $>0$ for all $n$ large enough, as shown in~\eqref{bound:E} below) in~\eqref{HW}, we can bound~\eqref{eqn:errW} as 
\begin{align}
\prob{ W  > 0} \leq \exp\left[ - c \min\left( \frac{\left(- \mathbb{E} \left[ W \right]\right)^2}{\sigma^4\| \mathsf{Q} \|_{F}^2},~ \frac{- \mathbb{E} \left[ W \right]}{\sigma^2\|\mathsf{Q}\|}\right)\right]. \label{E}
\end{align}

It remains to bound $\mathbb{E} \left[ W \right]$, $\| \mathsf{Q} \|_{F}^2$ and $\| \mathsf{Q} \|$. In the following, we show that  $- \mathbb{E} \left[ W\right] = \Theta( \eta n)$, $\| \mathsf{Q} \|_{F}^2 = \Theta ( n )$ and $\| \mathsf{Q} \| = O(1)$. Plugging these estimates into~\eqref{E} yields~\eqref{mle1} up to constants. The details follow. We first consider $\mathbb{E}\left[W \right]$. From~\eqref{Quadratic} and~\eqref{Q}, we have 
\begin{align}
\mathbb{E}\left[W\right] = \sigma^2 \text{tr}(\mathsf{Q}) =  -\frac{\eta\sigma^2 n}{1 - a^2} + \frac{\eta \sigma^2 (1 - a^{2n})}{(1-a^2)^2}.\label{comE}
\end{align}
Define the constant $K_1 > 0$ as 
\begin{align}
K_1\triangleq \frac{1}{2(1-a^2)}.\label{eqn:K1}
\end{align}
Then, for all $n$ large enough, we have
\begin{align}
-\mathbb{E}\left[W\right] \geq K_1 \sigma^2 \eta n. \label{bound:E}
\end{align}
We then consider $\|\mathsf{Q}\|_{F}^2$. Direct computations using~\eqref{Q} yield 
\begin{align}
\|\mathsf{Q}\|_{F}^2 &= \Bigg [\frac{1}{2(1-a^2)} + \frac{(1+a^2)\eta^2 - 2a(1-a^2)\eta}{(1-a^2)^3} + \notag \\
&\quad \frac{\left( 4a\eta^2 - 2(1-a^2)\eta\right)a^{2n}}{a(1-a^2)^3}\Bigg ] \cdot n  \notag \\
&\quad\quad + \frac{4a\eta}{(1-a^2)^3} - \frac{1}{2(1-a^2)^2}-\frac{\eta^2 (4a^2 + 1)}{(1-a^2)^4} \notag \\
&\quad\quad + \left[\frac{4a^2\eta^2}{(1-a^2)^4} + \frac{1}{2(1-a^2)^2} - \frac{4a\eta}{(1-a^2)^3}\right]\cdot a^{2n} \notag \\
&\quad\quad + \frac{\eta^2 }{(1-a^2)^4}\cdot a^{4n}. \label{compQF}
\end{align}
Define the constant $K_2 > 0$ as 
\begin{align}
K_2 \triangleq \frac{1}{1-a^2} + \frac{2(1+5a^2)\eta^2}{(1-a^2)^3}. \label{eqn:K2}
\end{align}
Since $\eta >0$ and $a \in [0, 1)$, from~\eqref{compQF}, we have for all $n$ large enough,  
\begin{align}
\|\mathsf{Q}\|_{F}^2 \leq K_2 n. \label{bound:QF}
\end{align}
Finally, we bound $\|\mathsf{Q}\|$. Using the Gershgorin circle theorem~\cite[p.16, Th. 1.11]{varga2009matrix}, one can easily show that  
\begin{align}
\|\mathsf{Q}\| \leq \max_{i\in [n]}~\|q_i\|_1, 
\end{align}
where $q_i$ denotes the $i$-th row in $\mathsf{Q}$. Direct computations using~\eqref{Q} yield that $\forall i\in [n]$, 
\begin{align}
\|q_i\|_1 \leq K_3,
\end{align}
where $K_3$ is a positive constant given by 
\begin{align}
K_3 \triangleq  \frac{1}{1 - a} + \frac{\eta}{1-a^2} + \frac{2\eta}{(1-a)^2}. \label{eqn:K3}
\end{align}
Therefore, $\forall n\geq 1$, we have 
\begin{align}
\|\mathsf{Q}\| \leq K_3. \label{bound:QO}
\end{align}
Plugging the bounds~\eqref{bound:E},~\eqref{bound:QF} and~\eqref{bound:QO} into~\eqref{E}, we have for all $n$ large enough 
\begin{align}
\mathbb{P}\left[W > 0 \right] \leq \exp\left[-c\min\left(\frac{K_1^2\eta^2 n}{K_2}, \frac{K_1\eta n}{ K_3}\right)\right].\label{bound:general}
\end{align}

Notice that all the arguments up to this point are valid for any $\eta >0$. However, the constants $K_2$ and $K_3$ in the bound~\eqref{bound:general} depend on $\eta$ via~\eqref{eqn:K2} and~\eqref{eqn:K3}, respectively. Since we are interested in small $\eta$, in the rest of the proof, we assume $\eta \in (0,1)$. Besides, with the restriction $\eta \in (0,1)$, we can get rid of the dependence of $K_2$ and $K_3$ on $\eta$, and simplify~\eqref{bound:general}. For any $\eta \in (0,1)$, we can bound $K_2$ and $K_3$ as follows:
\begin{align}
K_2 \leq K_2' \triangleq \frac{1}{1-a^2} + \frac{2(5+a^2)}{(1-a^2)^3},\label{K2p}
\end{align}
\begin{align}
K_3 \leq K_3' \triangleq \frac{a+4}{(1-a)^2 (1+a)}. \label{K3p}
\end{align}
Applying the bounds~\eqref{K2p} and~\eqref{K3p} to~\eqref{bound:general}, and then setting 
\begin{align}
c_1 \triangleq \frac{K_1^2}{K_2'} \quad\text{and}\quad  c_2 \triangleq \frac{K_1}{K_3'} \label{c1c2}
\end{align}
yields
\begin{align}
\mathbb{P}\left[\hat{a}(\rvec{U}) - a > \eta \right] \leq \exp\left[-c\min\left( c_1 \eta^2 n, c_2 \eta n\right)\right]. \label{eqn:bigger}
\end{align}

Finally, to bound $\mathbb{P}\left[\hat{a}(\rvec{U}) - a < -\eta \right]$, in the above proof, we replace the random variable $W(n, \eta)$ by $V(n, \eta)$, defined as 
\begin{align}
V(n, \eta) \triangleq \sum_{i=1}^{n-1} \left( -U_i Z_{i+1} - \eta U_i^2 \right).
\end{align}
In quadratic forms, $V(n, \eta) = \rvec{Z}^\top \mathsf{S}(n, \eta) \rvec{Z}$, where $\mathsf{S}(n, \eta)$ is an $n\times n$ symmetric matrix, defined in a way similar to $\mathsf{Q}(n, \eta)$: 
\begin{align}
\mathsf{S}_{i,j}(n, \eta) = \begin{cases}
 -\eta\frac{1 - a^{2(n-i)}}{1 - a^2}, & i = j < n \\
0, & i = j = n \\
-\frac{1}{2}a^{|i-j|-1} - \eta \frac{a^{|i-j|} - a^{2n - i - j}}{1 - a^2}, &\text{otherwise}.
\end{cases} \label{S}
\end{align}  
With the same techniques as above, we obtain 
\begin{align}
\mathbb{P}\left[\hat{a}(\rvec{U}) - a < - \eta \right] \leq \exp\left[-c\min\left( c_1 \eta^2 n, c_2 \eta n\right)\right],
\end{align}
where $c$, $c_1$ and $c_2$ are the same constants as those in~\eqref{eqn:bigger}.
\end{proof}

\subsection{Proof of Theorem~\ref{thm:error_concentration}}
\label{proof:thm_error_concentration}
\begin{proof}
The proof is similar to that of Theorem~\ref{thm:ECG}. In particular,~\eqref{bound:general} still holds: 
\begin{align}
\mathbb{P}\left[ \hat{a}(\rvec{U}) - a > \eta_n \right] \leq \exp\left[-c \min\left(\frac{K_1^2\eta_n^2 n}{K_2}, \frac{K_1 \eta_n n}{K_3}\right)\right]. \label{rhonGeneral}
\end{align}
Instead of~\eqref{K2p} and~\eqref{K3p}, we bound $K_2$ and $K_3$ by 
\begin{align}
K_2 \leq K_2 '' \triangleq \frac{2}{1-a^2}, \label{K2dp}
\end{align}
and 
\begin{align}
K_3 \leq K_3'' \triangleq \frac{2}{1-a}, \label{K3dp}
\end{align}
where~\eqref{K2dp} and~\eqref{K3dp} hold for all $n$ large enough in view of~\eqref{rhon},~\eqref{eqn:K2} and~\eqref{eqn:K3}. Applying the bounds~\eqref{K2dp} and~\eqref{K3dp} to~\eqref{rhonGeneral} and using the fact that for all $n$ large enough, 
\begin{align}
\min\left(\frac{K_1^2\eta_n^2 n}{K_2''}, \frac{K_1 \eta_n n}{K_3''}\right) = \frac{\eta_n^2 n}{8(1-a^2)},
\end{align} 
we obtain 
\begin{align}
\mathbb{P}\left[\hat{a}(\rvec{U}) - a > \eta_n\right] \leq \frac{1}{\left(\log n\right)^{\kappa \alpha}},
\end{align}
where $\kappa$ is given in~\eqref{kappa}. Finally, we can bound $\mathbb{P}\left[\hat{a}(\rvec{U}) - a < -\eta_n\right]$ in the same way.
\end{proof}

\bibliographystyle{IEEEtran}
{\small
\bibliography{GaussMarkov}}

\begin{IEEEbiographynophoto}{Peida Tian}
is currently a PhD student in the Department of Electrical Engineering at California Institute of Technology. He received a B. Engg. in Information Engineering and a B. Sc. in Mathematics from the Chinese University of Hong Kong (2016), and a M.S. in Electrical Engineering from Caltech (2017). He is interested in optimization and information theory.
\end{IEEEbiographynophoto}

\begin{IEEEbiographynophoto}{Victoria Kostina}(S'12--M'14)
joined Caltech as an Assistant Professor of Electrical Engineering in the fall of 2014. She holds a Bachelor's degree from Moscow Institute of Physics and Technology (2004), where she was affiliated with the Institute for Information Transmission Problems of the Russian Academy of Sciences, a Master's degree from University of Ottawa (2006), and a PhD from Princeton University (2013). She received the Natural Sciences and Engineering Research Council of Canada postgraduate scholarship (2009--2012), the Princeton Electrical Engineering Best Dissertation Award (2013), the Simons-Berkeley research fellowship (2015) and the NSF CAREER award (2017).  Kostina's research spans information theory, coding, control and communications. 
\end{IEEEbiographynophoto}
\end{document}